\DeclareFontFamily{U}{mathx}{\hyphenchar\font45}
\DeclareFontShape{U}{mathx}{m}{n}{
      <5> <6> <7> <8> <9> <10>
      <10.95> <12> <14.4> <17.28> <20.74> <24.88>
      mathx10
      }{}
\DeclareSymbolFont{mathx}{U}{mathx}{m}{n}
\DeclareMathAccent{\wideparen}{0}{mathx}{"75}
\def\theHALC@line{\thealgorithm-\theALC@line}
\def\theHALC@rem{\thealgorithm-\theALC@rem}
\DeclareSymbolFont{frenchscript}{OMS}{ztmcm}{m}{n}
\DeclareMathSymbol{\T}{\mathord}{frenchscript}{84}   % translation
\DeclareMathSymbol{\pow}{\mathord}{frenchscript}{80} % powerset
\newcommand{\journalonly}[1]{}
\def\comesfrom{\@transition\leftarrowfill}
\def\goesto{\@transition\rightarrowfill}
\def\ngoesto{\@transition\nrightarrowfill}
\def\Goesto{\@transition\Rightarrowfill}
\def\nGoesto{\@transition\nRightarrowfill}
\def\xmapsto{\@transition\mapstofill}
\def\nxmapsto{\@transition\nmapstofill}
\def\@transition#1{\@@transition{#1}}
\newbox\@transbox
\newbox\@arrowbox
\newbox\@downbox
\def\@@transition#1#2%
\wd\@transbox{#1}
\@transbox\hbox{$\mathop{\box\@arrowbox}\limits^{\box\@transbox}$}
\def\nrightarrowfill{$\m@th\mathord-\mkern-6mu%
  \cleaders\hbox{$\mkern-2mu\mathord-\mkern-2mu$}\hfill
  \mkern-6mu\mathord\not\mkern-2mu\mathord\rightarrow$}
\def\Rightarrowfill{$\m@th\mathord=\mkern-6mu%
  \cleaders\hbox{$\mkern-2mu\mathord=\mkern-2mu$}\hfill
  \mkern-6mu\mathord\Rightarrow$}
\def\nRightarrowfill{$\m@th\mathord=\mkern-6mu%
  \cleaders\hbox{$\mkern-2mu\mathord=\mkern-2mu$}\hfill
  \mkern-6mu\mathord\not\mathord\Rightarrow$}
\def\mapstofill{$\m@th\mathord\mapstochar\mathord-\mkern-6mu%
  \cleaders\hbox{$\mkern-2mu\mathord-\mkern-2mu$}\hfill
  \mkern-6mu\mathord\rightarrow$}
\def\nmapstofill{$\m@th\mathord\mapstochar\mathord-\mkern-6mu%
  \cleaders\hbox{$\mkern-2mu\mathord-\mkern-2mu$}\hfill
  \mkern-6mu\mathord\not\mkern-2mu\mathord\rightarrow$}
\newcommand{\ar}[1]{\mathrel{\goesto{#1}}}            % arrow
\newcommand{\nar}[1]{\mathrel{\ngoesto{#1\;}}}        % negated arrow
\newcommand{\dar}[1]{\mathrel{\Goesto{\raisebox{.08em}{\scriptsize$#1$}}}}
\def\moverlay{\mathpalette\mov@rlay}
\def\mov@rlay#1#2{\leavevmode\vtop{%
   \baselineskip\z@skip \lineskiplimit-\maxdimen
   \ialign{\hfil$\m@th#1##$\hfil\cr#2\crcr}}}
\newcommand{\charfusion}[3][\mathord]{
    #1{\ifx#1\mathop\vphantom{#2}\fi
        \mathpalette\mov@rlay{#2\cr#3}
      }
    \ifx#1\mathop\expandafter\displaylimits\fi}
\newcommand{\dcup}{\charfusion[\mathbin]{\cup}{\mbox{\Large$\cdot$}}}
\renewcommand{\comma}{,}
\newcommand{\tawn}{T-AWN\xspace}
\newcommand{\awn}{AWN\xspace}
\newcommand{\otawn}{(T-)AWN\xspace}
\newcommand{\bis}{\raisebox{.3ex}{$\underline{\makebox[.7em]{$\leftrightarrow$}}$}}
\def\squareforqed{\hbox{\rlap{$\sqcap$}$\sqcup$}}
\def\endbox{\ifmmode\squareforqed\else{\unskip\nobreak\hfil
\penalty50\hskip1em\null\nobreak\hfil\squareforqed
\parfillskip=0pt\finalhyphendemerits=0\endgraf}\fi}
\newcommand{%
  \algsetup{linenodelimiter=.,linenosize=\tiny}
  \begin{algorithm}[H]
    {\scriptsize
      %\algsetup{indent=2em}
      \caption{}
      \label{pro:}
      \begin{algorithmic}[1]\scriptsize
        \input{processes/.tex}
	\end{algorithmic}
    }%end{footnotesize}
  \end{algorithm}
}[2][]{%
  \algsetup{linenodelimiter=.,linenosize=\tiny}
  \begin{algorithm}[H]
    {\scriptsize
      %\algsetup{indent=2em}
      \caption{#1}
      \label{pro:#2}
      \begin{algorithmic}[1]\scriptsize
        \input{processes/#2.tex}
	\end{algorithmic}
    }%end{footnotesize}
  \end{algorithm}
}
\newcommand{%
  \algsetup{linenodelimiter=.,linenosize=\tiny}
  \begin{algorithm}[t]
    {\scriptsize
      %\algsetup{indent=2em}
      \caption{}
      \label{pro:}
      \begin{algorithmic}[1]\scriptsize
        \input{processes/.tex}
	\end{algorithmic}
    }%end{footnotesize}
  \end{algorithm}%
  	        
}[2][]{%
  \algsetup{linenodelimiter=.,linenosize=\tiny}
  \begin{algorithm}[t]
    {\scriptsize
      %\algsetup{indent=2em}
      \caption{#1}
      \label{pro:#2}
      \begin{algorithmic}[1]\scriptsize
        \input{processes/#2.tex}
	\end{algorithmic}
    }%end{footnotesize}
  \end{algorithm}%
  	        
}
\newcommand{\highlightspec}[1]{{\color{red}#1}}
\newcommand{\highlightUPD}[1]{\STATE{\textcolor{red}{\ensuremath{\mbox{\bf [\![}#1\mbox{\bf ]\!]}}}}}%
\newcommand{\plat}[1]{\raisebox{0pt}[0pt][0pt]{#1}} % no vertical space
\newcommand{\spaces}[1]{\ #1\ }
\newcommand{\ans}{\spaces{\wedge}}
\newcommand{\ors}{\spaces{\vee}}
\newcommand{\ims}{\spaces{\Rightarrow}}
\newcommand{\now}{\keyw{now}\xspace}
\newcommand{\NN}{% natural numbers
    \ensuremath{%
        \mathop{\rm I\mkern-2.5mu N}%
        \nolimits%
    }%
}
\newcommand{\B}{\mathrel{\mathcal{B}}} % a bisimulation
\newcommand{\Sim}{\mathrel{\mathcal{S}}} % a bisimulation
\newcommand{\U}{\mathrel{\mathcal{U}}} % similarity of actions
\newcommand{\hopsc}{\dval{hops}_c}
\newcommand{\dipc}{\dval{dip}_{\hspace{-1pt}c}}
\newcommand{\ripc}{\dval{rip}_{\hspace{-1pt}c}}
\newcommand{\oipc}{\dval{oip}_{\hspace{-1pt}c}}
\newcommand{\dsnc}{\dval{dsn}_c}
\newcommand{\rsnc}{\dval{rsn}_c}
\newcommand{\destsc}{\dval{dests}_c}
\newcommand{\osnc}{\dval{osn}_c}
\newcommand{\ipc}{\dval{ip}_{\hspace{-1pt}c}}
\newcommand{\xiN}[2][N]{\xi_{#1}^{#2}}
\newcommand{\rtord}[1][\dval{dip}]{\ensuremath{\sqsubseteq_{#1}}}
\newcommand{\rtequiv}[1][\dval{dip}]{\ensuremath{\approx_{#1}}}
\newcommand{\rtsord}[1][\dval{dip}]{\ensuremath{\sqsubset_{#1}}}
\newcommand{\rte}{routing table entry\xspace}
\newcommand{\rtes}{routing table entries\xspace}
\newcommand{\W}{\mathcal{W}}
\newcommand{\R}{\mathcal{R}}
\newcommand{\nil}{\mathbf{0}}
\newcommand{\RW}{\R\mathop{:}\W}
\newcommand{\Rw}[2][\mathop]{R#1{:}#2}
\newcommand{\exclnow}[1][\xi]{#1_{\backslash\now}}
\providecommand{\@thirdoffive}[5]{#3}
\def\fixstatement#1{%
  \AtEndEnvironment{#1}{%
    \xdef\pat@label{\expandafter\expandafter\expandafter
      \@thirdoffive\csname#1\endcsname\space\@currentlabel}}}
\globtoksblk\prooftoks{1000}
\newcounter{proofcount}
\newcommand{\prf}{\setboolean{qedatend}{true}\begin{proof}}
\newcommand{\prfnobox}{\setboolean{qedatend}{false}\begin{proof}}
\newcommand{\eprf}{\ifqedatend\qed\fi\end{proof}}
\long\def\prfatend#1\eprf{%
  \edef\next{\noexpand\begin{theopargself}\noexpand\prf[\noexpand\hspace{-1.5pt}of \pat@label\noexpand\hspace{1.75pt}]}%
  \toks\numexpr\prooftoks+\value{proofcount}\relax=\expandafter{\next#1\eprf\end{theopargself}}
  \stepcounter{proofcount}}
  \long\def\prfnoboxatend#1\eprf{%
  \edef\next{\noexpand\begin{theopargself}\noexpand\prfnobox[\noexpand\hspace{-1.5pt}of \pat@label\noexpand\hspace{1.5pt}]}%
  \toks\numexpr\prooftoks+\value{proofcount}\relax=\expandafter{\next#1\eprf\end{theopargself}}
  \stepcounter{proofcount}}
\long\def\toApp#1\etoApp{%
  \edef\next{\noexpand\begin{theopargself}}%
  \toks\numexpr\prooftoks+\value{proofcount}\relax=\expandafter{\next#1\end{theopargself}}
  \stepcounter{proofcount}}
\def\printproofs{%
  \count@=\z@
  \loop
    \the\toks\numexpr\prooftoks+\count@\relax
    \ifnum\count@<\value{proofcount}%
    \advance\count@\@ne
  \repeat}
\spnewtheorem*{proofsketch}{Proof Sketch}{\itshape}{\rm}
\spnewtheorem*{toAppendix}{}{}{\rm}
\newcounter{applemmacount}
\newcounter{apptheoremcount}
\newcounter{appdefcount}
\newcommand{\highlight}[2][red]{\textcolor{#1}{#2}}
\newcommand{\deltime}{\keyw{DELETE\_PERIOD}}
\newcommand{\valtime}{\keyw{ACTIVE\_ROUTE\_TIMEOUT}}
\newcommand{\traversal}{\keyw{NET\_TRAVERSAL\_TIME}}
\newcommand{\nodetraversal}{\keyw{NODE\_TRAVERSAL\_TIME}}
\newcommand{\retries}{\keyw{RREQ\_RETRIES}}
\newcommand{\pathdiscoverytime}{\keyw{PATH\_DISCOVERY\_TIME}}
\newcommand{\myroute}{\keyw{MY\_ROUTE\_TIMEOUT}}
\newcommand{\fnexprt}{\keyw{exp\_rt}}
\newcommand{\fnexpq}{\keyw{exp\_store}}
\newcommand{\exprt}[3]{\fnexprt(#1\comma #2\comma #3)}
\newcommand{\expq}[3]{\fnexpq(#1\comma #2)}
\newcommand{\fnsettimert}{\keyw{setTime\_rt}}
\newcommand{\settimert}[3]{\fnsettimert(#1\comma #2\comma #3)}
\newcommand{\fnsettimequeues}{\keyw{setTime\_store}}
\newcommand{\settimequeues}[3]{\fnsettimequeues(#1\comma #2\comma #3)}
\newcommand{\fnsetretries}{\keyw{resetRetries}}
\newcommand{\setretries}[2]{\fnsetretries(#1\comma #2)}
\newcommand{\fnincretries}{\keyw{incRetries}}
\newcommand{\incretries}[2]{\fnincretries(#1\comma #2)}
\newcommand{\fnexprreqs}{\keyw{exp\_rreqs}}
\newcommand{\exprreqs}[2]{\fnexprreqs(#1\comma #2)}
\newcommand{\lifetime}{\keyw{ltime}}
\newcommand{\tTIME}{\keyw{TIME}}
\newcommand\tablelabel[2][]{%
  \ifstrempty{#1}{%
    \hypertarget{#2}{\mbox{\color{gray}\scriptsize({\sf #2})}}%
  }{%
    \hypertarget{#1}{\mbox{\color{gray}\scriptsize({\sf #2})}}%
  }%
}
\newcommand\tableref[2][]{%
  \ifstrempty{#1}{%
    \hyperlink{#2}{\mbox{({\sf #2})}}\xspace%
  }{%
    \hyperlink{#1}{\mbox{({\sf #2})}}\xspace%
  }%
}
\newcounter{Hequation}
\g@addto@macro\equation{\stepcounter{Hequation}}
\begin{document}%
\titlerunning{A Timed Process Algebra for Wireless Networks}
\title{A Timed Process Algebra for Wireless Networks with an Application in Routing\texorpdfstring{\thanks{An
    extended abstract of this paper---everything but the appendices---appeared as~\cite{ESOP16}.}}{}}
\authorrunning{E.\ Bres, R.J.\ van Glabbeek and P.\ H\"ofner}
  \author{
  Emile Bres\inst{1,3}\and
  Rob van Glabbeek\inst{1,2}\and
  Peter H\"ofner\inst{1,2} 
  }
\institute{
  NICTA, Australia\and
  Computer Science and Engineering, University of New South Wales, Australia\and
  \'Ecole Polytechnique, Paris, France
}

\maketitle 
\setcounter{footnote}{0}
\begin{abstract}
 % !TEX root = tawn.tex
%%%%%%%%%%%%%%%%%%%%%%%%%%%%%%%%%%%%%%%%%%%%%%%%%%%%%%%%%%%%%%%%%%%%%%%%%%%%%%
This paper proposes a timed process algebra for wireless networks,
an extension of the Algebra for Wireless Networks. It combines
treatments of local broadcast, conditional unicast and data structures,
which are essential features for the modelling of network protocols.
In this framework we model and analyse the Ad hoc On-Demand Distance
Vector routing protocol, and show that, contrary to claims in the
literature, it fails to be loop free. We also present boundary
conditions for a fix ensuring
that the resulting protocol is indeed loop free.

\end{abstract}
% \addframe
%%%%%%%%%%%%%%%%%%%%%
\section{Introduction}\label{sec:intro} 
%%%%%%%%%%%%%%%%%%%%%
In 2011 we developed the \emph{Algebra for Wireless Networks} (\awn)~\cite{ESOP12},
a process algebra particularly tailored for Wireless Mesh Networks (WMNs) and Mobile Ad Hoc Networks (MANETs).
Such networks are currently being used in a wide range of application areas, such as public safety and mining. 
They are self-organising wireless multi-hop networks that provide network communication without relying on a wired backhaul infrastructure.
A significant characteristic of such networks is that they allow highly dynamic network topologies, meaning that network nodes can 
join, leave, or move within the network at any moment. 
As a consequence routing protocols have constantly to check for broken links, and to replace invalid routes by better ones. 

To capture the typical characteristics of WMNs and MANETs, \awn offers
a unique set of features: 
\emph{conditional unicast} (a message transmission attempt with different follow-up behaviour depending on its success), 
\emph{groupcast} (communication to a specific set of nodes),
\emph{local broadcast} (messages are received only by nodes within transmission range of the sender), 
and \emph{data structure}. 
We are not aware of any other process algebra that provides all these features, and hence
could not use any other algebra to model certain protocols for WMNs or MANETs in a straightforward fashion.%
\footnote{A comparison between \awn and other process algebras can be found in \cite[Sect.~11]{TR13}.}
Case studies~\cite{ESOP12,TR13,GHPT16,EHWripe12} have shown that \awn provides the right level of abstraction to model 
full IETF protocols, such as the Ad hoc On-Demand Distance Vector {(AODV)} routing protocol~\cite{rfc3561}.
\awn has been employed to formally model this
  protocol---thereby eliminating ambiguities\pagebreak[3] and contradictions from the official specification, written in English Prose---and
 to reason about protocol behaviour and provide rigorous proofs of key protocol
properties such as loop freedom and route correctness.

However, \awn\ abstracts from time. 
Analysing routing protocols without considering timing issues is useful in its own right; for AODV it has revealed many shortcomings in drafts as well as in the standard  (e.g.,~\cite{BOG02,MSWIM12,AODVloop}).
Including time in a formal analysis, however, will pave the way to analyse protocols that repeat some procedures every couple of time units; 
examples are OLSR~\cite{rfc3626} and B.A.T.M.A.N.~\cite{batman}.
Even for a reactive protocol such as AODV, which does not schedule tasks regularly, 
it has been shown that timing aspects are important: 
if timing parameters are chosen poorly, 
some routes are not established since data that is stored locally at network nodes 
expires too soon and is erased~\cite{CK05}.
Besides such shortcomings in ``performance'', also fundamental
  correctness properties like loop freedom can be affected by the treatment of time---as we will
  illustrate.

To enable time analyses of WMNs and MANETs, 
this paper proposes a \emph{Timed (process) Algebra for Wireless Networks} (\tawn),
an extension of \awn. It combines \awn's unique set of 
features, such as local broadcast, with~time. 

In this framework we model and analyse the AODV routing protocol, and show that, contrary to claims in the
literature, e.g., \cite{AODV99}, it fails to be loop free, as data required for routing can expire. 
We also present boundary conditions for a fix ensuring
that the resulting protocol is loop free.
%We also discuss a fix and show that the resulting protocol is indeed loop free.
%\commP{is this too strong a statement? what if we want to write a paper about timed analysis of \tawn}

\subsection*{Design Decisions}
Prior to the development of \tawn\ we had to make a couple of decisions.

\paragraph{Intranode computations.}
In wireless networks sending a packet from one node to another takes multiple microseconds.
Compared to these ``slow'' actions, time spent for internal (intranode) computations, such as variable assignments or evaluations of expressions,
is negligible. We therefore postulate that only transmissions from one node to another take time.

%probably still too long
This decision is debatable for processes that can perform infinite
sequences of intranode computations without ever performing a
durational action.  In this paper (and in all applications), we
restrict ourselves to \emph{well-timed} processes in the spirit of
\cite{NS94}, i.e., to processes where any infinite sequence of actions
contains infinitely many time steps or infinitely many input actions,
such as receiving an incoming packet.

But, in the same spirit as \tawn assigns time to internode communications, 
it is more or less straightforward to assign times to other operations as well.

\paragraph{Guaranteed Message Receipt and Input Enabledness.}
A fundamental assumption underlying the semantics of \otawn is that any broadcast message \emph{is}
received by all nodes within transmission range \cite[\textsection1]{TR13}.%
\footnote{\label{guaranteed receipt}In reality, communication is only
half-duplex: a single-interface network node cannot receive messages while sending and hence
messages can be lost.
However, the CSMA protocol used at the link layer---not 
modelled by {\otawn}---keeps the probability of packet loss due to two nodes (within
range) sending at the same time rather low.
\journalonly{Since we are examining imperfect protocols, we first of all want to
establish how they behave under optimal conditions. For this reason we
abstract from probabilistic reasoning by assuming no message loss at
all, rather than working with a lossy broadcast formalism that offers
no guarantees that any message will ever arrive.}
}
This abstraction enables us to interpret a failure of route discovery (as documented for AODV in
\cite[\textsection9]{TR13}) as an imperfection in the protocol, rather than as a result of a chosen formalism not
ensuring guaranteed receipt. 

A consequence of this design decision is that in the operational semantics of \otawn a broadcast of
one node in a network needs to synchronise with some (in)activity of all other nodes in the network~\cite[\textsection11]{TR13}.
If another node is within transmission range of the broadcast, the broadcast synchronises with a
receive action of that node, and otherwise with a non-arrive transition, which signals that the node
is out of range for this broadcast~\cite[\textsection4.3]{TR13}.

A further consequence is that we need to specify our nodes in such a way that they are
\emph{input-enabled}, meaning that in any state they are able to receive messages from any other node
within transmission range.
\journalonly{\outP{For allowing a node to be in a state where it cannot receive a new
message of another node within transmission range, in combination with the requirement of guaranteed
receipt, leads to \emph{blocking}, the situation where a node is (temporarily) unable to to perform
a broadcast due to the fact that some node within transmission range is not ready to receive it~\cite[\textsection4.5]{TR13}.
This is a non-realistic model of WMNs.} \comsP{move backwards to the model of AODV or skip for conference version}}

Since a transmission (broadcast, groupcast, or unicast) takes 
multiple units of time, we postulate 
that another node can only receive a message if it remains
within transmission range during the whole period of sending.%
\footnote{To be precise, we forgive very short
  interruptions in the connection between two nodes---those that begin
  and end within the same unit of time.} 
A possible way to model the receive action that synchronises with a
transmission such as a broadcast is to let it take
the same amount of time as the broadcast action. However, a process that is busy
executing a durational receive action would fail to be input-enabled, for it would not be able to start
receiving another message before the ongoing message receipt is finished.
For this reason, we
model the receipt of a message as an instantaneous action that synchronises with the very end of a broadcast action.%
\footnote{Another solution would be to assume that a broadcast-receiving process can receive multiple messages in parallel.
In case the process is meant to add incoming messages to a message queue (as happens in our application
to AODV), one can assume that a message that is being received in parallel is added to that queue as
soon as its receipt is complete. However, such a model is equivalent to one in which only the very
last stage of the receipt action is modelled.}

\paragraph{\tawn Syntax.} 
When designing or formalising a protocol in \tawn, an engineer should
not be bothered with timing aspects; except for functions and procedures that schedule tasks depending on the current time. 
%what about wait actions?
Because of this, we use the syntax of \awn also for \tawn;
``extended'' by a local timer \now.
% By this, the syntax remains easy to understand.
Hence we can perform a 
timed analysis of any specification written in \awn, since they are
also \tawn specifications.

%%%%%%%%%%%%%%%%%%%%%
% !TEX root = tawn.tex
%%%%%%%%%%%%%%%%%%%%%%%%%%%%%%%%%%%%%%%%%%%%%%%%%%%%%%%%%%%%%%%%%%%%%%%%%%%%%%
\section{A Timed Process Algebra for Wireless Networks}
\label{sec:process_algebra}
%%%%%%%%%%%%%%%%%%%%%%%%%%%%%%%%%%%%%%%%%%%%%%%%%%%%%%%%%%%%%%%%%%%%%%%%%%%%%%
In this section we propose \tawn (Timed Algebra for Wireless Networks), 
an extension of the process algebra AWN \cite{ESOP12,TR13} with time.
AWN itself is a variant of standard process algebras \cite{Mi89,Ho85,BK86,BB87}, 
tailored to protocols in wireless mesh networks, such as the 
Ad-hoc on Demand Distance Vector (AODV) routing protocol.
In \otawn, a WMN is modelled as an encapsulated
parallel composition of network nodes.  On each node several
sequential processes may be running in parallel.  Network nodes
communicate with their direct neighbours---those nodes that are in
transmission range---using either broadcast, groupcast or unicast.  Our
formalism maintains for each node the set of nodes that are currently
in transmission range.  Due to mobility of nodes and variability of
wireless links, nodes can move in or out of transmission range.  The
encapsulation of the entire network inhibits communications between
network nodes and the outside world, with the exception of the receipt
and delivery of data packets from or to clients\footnote{The
application layer that initiates packet sending and/or awaits receipt of
a packet.}  of the modelled protocol that may be hooked up to various
nodes.

In \tawn we apply a discrete model of time, where each sequential process maintains a local variable
\now holding its local clock value---an integer. We employ only one clock for each sequential
process. All sequential processes in a network synchronise in taking time steps, and at each time
step all local clocks advance by one unit.  For the rest, the variable \now behaves as
any other variable maintained by a process: its value can be read when evaluating guards, thereby
making progress time-dependant, and any value can be assigned to it, thereby resetting the local
clock.\journalonly{\comsP{shall we discuss clock drift?}}

In our model of a sequential process $p$ running on a node, time can elapse only when $p$ is
transmitting a message to another node, or when $p$ currently has no way to proceed---for instance,
when waiting on input, or for its local clock to reach a specified value. All other actions of $p$, such as assigning values to variables,
evaluating guards, communicating with other processes running on the same node, or communicating
with clients of the modelled protocol hooked up at that node, are assumed to be an
order of magnitude faster, and in our model take no time at all. Thus they are executed in preference to time steps.

\subsection{The Syntax of \tawn}\label{ssec:syntax}

The syntax of \tawn is the same as the syntax of AWN \cite{ESOP12,TR13}, except for the presence of
the variable \now of the new type \keyw{TIME}.
This brings the advantage that any specification written in AWN can be interpreted and 
analysed in a timed setting. The rest of this Section
\ref{ssec:syntax} is almost copied verbatim from the original articles about AWN \cite{ESOP12,TR13}.

\subsubsection{A Language for Sequential Processes.}

The internal state of a process is determined, in part, by the values
of certain data variables that are maintained by that process.  To
this end, we assume a data structure with several types, variables
ranging over these types, operators and predicates. First order
predicate logic yields terms (or \emph{data expressions}) and formulas
to denote data values and statements about them.\label{pg:undefvalues}\footnote{As
    \label{fn:undefvalues}%
    operators we also allow \emph{partial} functions with the
    convention that any atomic formula containing an undefined subterm
    evaluates to {\tt false}.} Our data structure
always contains the types \keyw{TIME}, \tDATA, \tMSG, {\tIP} and $\pow(\tIP)$ of \emph{time values}, which
we take to be integers (together with the special value $\infty$),
\emph{application layer data}, \emph{messages}, \emph{IP addresses}---or any
other node identifiers---and \emph{sets of IP addresses}. We further assume that 
there is a variable \now of type \keyw{TIME} and a function $\newpktID:\tDATA \times \tIP \rightarrow \tMSG$
that generates a message with new application layer data for a particular destination. 
The purpose of this function is to inject data into the protocol; details will be given later.

In addition, we assume a type \keyw{SPROC} of \emph{sequential processes},
and a collection of \emph{process names}, each being an operator of type
$\keyw{TYPE}_1 \times \cdots \times \keyw{TYPE}_n \rightarrow \keyw{SPROC}$
for certain data types $\keyw{TYPE}_i$.
Each process name $X$ comes with a \emph{defining equation}\vspace{-1pt}
\[X(\keyw{var}_1\comma\ldots\comma\keyw{var}_n) \stackrel{{\it def}}{=} p\ ,\]
\vspace{-1pt}in which, for each $i=1,\ldots,n$, $\keyw{var}_i$ is a variable of type
$\keyw{TYPE}_i$ and $p$ a \emph{guarded}\footnote{\label{guarded}%
An expression $p$ is \emph{guarded} if each call of a process
name $X(\dexp{exp}_1,\ldots,\dexp{exp}_n)$  occurs with a subexpression
$\cond{\varphi}q$, $\assignment{\keyw{var}:=\dexp{exp}}q$,
$\alpha.q$ or $\unicast{\dexp{dest}}{\dexp{ms}}.q \prio r$ of $p$.}
\emph{sequential process expression}
defined by the grammar below. The expression $p$ may contain the variables
$\keyw{var}_i$ as well as $X$; however, all occurrences of data
variables in $\p$ have to be \emph{bound}.%
\journalonly{\footnote{An occurrence of a
  data variable in $p$ is \emph{bound} if it is one of the variables
  $\keyw{var}_i$, a variable {\msg} occurring in a subexpression
  $\receive{\msg}.\q$, a variable \keyw{var} occurring in a subexpression
  $\assignment{\keyw{var}:=\dexp{exp}}\q$, or an occurrence in a
  subexpression $\cond{\varphi}\q$ of a variable occurring free in $\varphi$.
  Here $\q$ is an arbitrary sequential process expression.}}
The choice of the underlying data structure and the process names with
their defining equations can be tailored to any particular application
of our language; our decisions made for modelling AODV are presented
in Section~\ref{sec:aodv}.  The process
names are used to denote the processes that feature in this
application, with their arguments $\keyw{var}_i$ binding the current
values of the data variables maintained by these processes.

The \emph{sequential process expressions} are given by the following grammar:
\begin{align*}
\SP\ ::=\ &X(\dexp{exp}_1,\ldots,\dexp{exp}_n) ~\mid~ \cond{\varphi}\SP ~\mid~ \assignment{\keyw{var}:=\dexp{exp}} \SP~\mid~ \SP+\SP ~\mid\\
     &\alpha.\SP ~\mid~ \unicast{\dexp{dest}}{\dexp{ms}}.\SP \prio \SP \\
\alpha\ ::=\ &
  \broadcastP{\dexp{ms}} ~\mid~ \groupcastP{\dexp{dests}}{\dexp{ms}}
  ~\mid~ \send{\dexp{ms}} ~\mid\\
  &\deliver{\dexp{data}} ~\mid~\receive{\msg}
\end{align*}
Here $X$ is a process name, $\dexp{exp}_i$ a data expression of the
same type as $\keyw{var}_i$, $\varphi$ a data formula,
$\keyw{var}\mathop{:=}\dexp{exp}$ an assignment of a data expression
\dexp{exp} to a variable \keyw{var} of the same type, \dexp{dest},
\dexp{dests}, \dexp{data} and \dexp{ms} data expressions of types
{\tIP}, $\pow(\tIP)$, {\tDATA} and {\tMSG}, respectively, and $\msg$ a
data variable of type \tMSG.

The internal state of a sequential process described by an expression
$\p$ in this language is determined by $\p$, together with a
\emph{valuation} $\xi$ associating data values $\xi(\keyw{var})$ to
the data variables \keyw{var} maintained by this process.
Valuations naturally extend to \emph{$\xi$-closed} data
expressions---those in which all variables are either bound or in the
domain of $\xi$.

Given a valuation of the data variables by concrete data values, the
sequential process $\cond{\varphi}\p$ acts as $\p$ if $\varphi$
evaluates to {\tt true}, and deadlocks if $\varphi$ evaluates to
{\tt false}. In case $\varphi$ contains free variables that are not
yet interpreted as data values, values are assigned to these variables
in any way that satisfies $\varphi$, if possible.
The sequential process $\assignment{\keyw{var}:=\dexp{exp}}\p$
acts as $\p$, but under an updated valuation of the data variable \keyw{var}.
The sequential process $\p+\q$ may act either as $\p$ or as
$\q$, depending on which of the two processes is able to act at all.  In a
context where both are able to act, it is not specified how the choice 
is made. \pagebreak The sequential process $\alpha.\p$ first performs the action
$\alpha$ and subsequently acts as $\p$.  The action
$\broadcastP{\dexp{ms}}$ broadcasts (the data value bound to the
expression) $\dexp{ms}$ to the other network nodes within transmission range,
whereas $\unicast{\dexp{dest}}{\dexp{ms}}.\p \prio \q$ is a sequential process
that tries to unicast the message $\dexp{ms}$ to the destination
\dexp{dest}; if successful it continues to act as $\p$ and otherwise
as $\q$. In other words, $\unicast{\dexp{dest}}{\dexp{ms}}.\p$ is
prioritised over $\q$; only if the action $\unicast{\dexp{dest}}{\dexp{ms}}$
is not possible, the alternative $q$ will happen. It models an abstraction
of an acknowledgment-of-receipt mechanism that is typical for unicast
communication but absent in broadcast communication, as implemented by
the link layer of relevant wireless standards such as IEEE 802.11~\cite{IEEE80211s}.
The process $\groupcastP{\dexp{dests}}{\dexp{ms}}.\p$ tries
to transmit \dexp{ms} to all destinations $\dexp{dests}$, and proceeds
as $\p$ regardless of whether any of the transmissions is successful.
Unlike {\bf unicast} and  {\bf broadcast}, the expression {\bf groupcast} 
does not have a unique counterpart in networking.
Depending on the protocol and the implementation it can 
be an iterative unicast, a broadcast, or a multicast;
thus  {\bf groupcast} abstracts from implementation details. The
action $\send{\dexp{ms}}$ synchronously transmits a message to another
process running on the same network node; this action can occur only
when this other sequential process is able to receive the message.
The sequential process $\receive{\msg}.\p$ receives any message $m$
(a data value of type \tMSG) either from another node, from another
sequential process running on the same node or from the client hooked
up to the local node.  It then proceeds as $\p$, but with the data
variable $\msg$ bound to the value $m$. 
The submission of data from a client
is modelled by the receipt of a message $\newpkt{\dval{d}}{\dval{dip}}$,
where the function $\newpktID$ generates a message containing the
data $\dval{d}$ and the intended destination $\dval{dip}$. 
Data is delivered to the client by \deliver{\dexp{data}}.%

\subsubsection{A Language for Parallel Processes.}

\emph{Parallel process expressions} are given by the grammar\vspace{-6pt}
$$\PP ~::=~ \xi,\SP ~\mid~ \PP \parl \PP\ ,$$
where $\SP$ is a sequential process expression and $\xi$ a valuation.
An expression $\xi,\p$ denotes a sequential process expression
equipped with a valuation of the variables it maintains.
The process $P\parl Q$ is a parallel composition of $P$ and $Q$,
running on the same network node. An action $\receive{\dval{m}}$ of~$P$
synchronises with an action $\send{\dval{m}}$ of $Q$ into an internal
action $\tau$, as formalised in Table~\ref{tab:sos parallel}. 
These receive actions of $P$ and send actions of $Q$
cannot happen separately. All other actions of $P$ and $Q$, except time steps, including
receive actions of $Q$ and send actions of $P$, occur interleaved in $P\parl Q$.
\journalonly{Thus, in an expression $(P\parl Q)\parl R$, for example,
the send and receive actions of $Q$ can communicate only with $P$ and
$R$, respectively, but the receive actions of $R$, as well as the send
actions of $P$, remain available for communication with the environment.}%
Therefore, a parallel process expression denotes a parallel composition
of sequential processes $\xi,P$ with information flowing from right to
left. The variables of different sequential processes running on the
same node are maintained separately, and thus cannot be shared.

Though $\parl$ only allows information flow in one direction, it reflects reality of 
WMNs. Usually two sequential processes run on the same node:
$
P \parl Q
$.
The main process $P$ deals with all protocol details  of the node, e.g., message handling 
and maintaining the data such as routing tables.
The process $Q$ manages the queueing of messages as they arrive; it is always able to
receive a message even if $P$ is busy. 
The use of message queueing in combination with $\parl$ is crucial in order to create input-enabled
nodes (cf.~\Sect{intro}).\pagebreak

\subsubsection{A Language for Networks.}\label{ssec:networks}

We model network nodes in the context of a wireless mesh network by
\emph{node expressions} of the form $\dval{ip}:\PP:R$. Here $\dval{ip}
\in \tIP$ is the \emph{address} of the node, $\PP$ is a parallel process
expression, and $R\subseteq \tIP$ is the \emph{range} of the node---the
set of nodes that are currently within transmission range of $\dval{ip}$.

A \emph{partial network} is then modelled by a \emph{parallel
  composition} $\|$ of node expressions, one for every node in the
network, and a \emph{complete network} is a partial network within an
\emph{encapsulation operator} $[\_]$ that limits the communication of
network nodes and the outside world to the receipt and the delivery
of data packets to and from the application layer attached to the
modelled protocol in the network nodes.
This yields the following grammar for network expressions:
\[N ::= [M] \qquad\qquad M::= ~~ \dval{ip}:\PP:R ~~\mid~~ M \| M\ .\vspace{-3pt}\]

%%%%%%%%%%%%%%%%%%%%%%%%%%%%%%%%%%%%%%%%%%%%%%%%%%%%%%%%%%%%%%

\subsection{The Semantics of \tawn}\label{sec:tawnsos}

\newcommand{\tick}{{\rm tick}}
\newcommand{\timestep}{{\rm ts}}
\newcommand{\tr}{{\rm t}}
\newcommand{\w}{{\rm w}}
\newcommand{\wtr}{{\rm wr}}
\newcommand{\ws}{{\rm ws}}
\newcommand{\wrs}{{\rm wrs}}
\newcommand{\timeinc}{[\now\mathord+\mathord{\mkern-1mu}\mathord+]}
\newcommand{\defined}{\mathord\downarrow}
\newcommand{\range}{\dval{dsts}\xspace}

\newcommand{\LB}{\keyw{LB}\xspace}
\newcommand{\LG}{\keyw{LG}\xspace}
\newcommand{\LU}{\keyw{LU}\xspace}
\newcommand{\DB}{\keyw{\Delta B}\xspace}
\newcommand{\DG}{\keyw{\Delta G}\xspace}
\newcommand{\DU}{\keyw{\Delta U}\xspace}
\newcommand{\UB}{\chP{\keyw{UB}\xspace}}
\newcommand{\UG}{\chP{\keyw{UG}\xspace}}
\newcommand{\UU}{\chP{\keyw{UU}\xspace}}

As mentioned in the introduction, the transmission of a message 
takes time.  Since our main application assumes wireless links and node mobility, 
the packet delivery time varies. Hence we assume a minimum time that is required 
to send a message, as well as an optional extra transmission time. 
In \tawn the values of these parameters are given for each type of sending separately: 
\LB, \LG, and \LU, satisfying $\LB,\LG,\LU > 0$,
specify the minimum bound, in units of time, on the duration of a broadcast, groupcast
and unicast transmission; the optional additional transmission times are denoted by 
$\DB$, $\DG$ and $\DU$, satisfying $\DB,\DG,\DU \geq 0$.
Adding up these parameters (e.g.\ $\LB$ and $\DB$) yields maximum
transmission times. We allow any execution consistent with these parameters.
For all other actions our processes can take we postulate execution times of 0.

\begin{table}[tp]
\caption{Structural operational semantics for sequential process expressions
\label{tab:sos sequential}}

\vspace{-5ex}
$$\begin{array}{@{}l@{\ }r@{~}l@{\hspace{-3em}}r@{}}
\tablelabel{bc} &
\xi,\broadcastP{\dexp{ms}}.\p &\ar{\tau} \xi,\stargcast{\tIP}{\xi(\dexp{ms})}[\LB,\DB].\p \prio \p
  \hspace{-7em}
& \mbox{\small(if $\xi(\dexp{ms})\defined$)}
\\[3pt]
\tablelabel{gc} &
\xi,\groupcastP{\dexp{dests}}{\dexp{ms}}.\p &\ar{\tau} \xi,\stargcast{\xi(\dexp{dests})}{\xi(\dexp{ms})}[\LG,\DG].\p \prio \p
\hspace{-10em}
 \\&&& \hspace{-10em}\mbox{\small(if $\xi(\dexp{dests})\defined$ and $\xi(\dexp{ms})\defined$)}
\\[3pt]
\tablelabel{uc} &
  \xi,\unicast{\dexp{dest}}{\dexp{ms}}.\p \prio \q &\ar{\tau} \xi,\stargcast{\{\xi(\dexp{dest})\}}{\xi(\dexp{ms})}[\LU,\DU].\p \prio \q
\hspace{-10em}
\\
&
&& \mbox{\small(if $\xi(\dexp{dest})\defined$ and $\xi(\dexp{ms})\defined$)}
\\[3pt]
\multicolumn{2}{l}{\tablelabel{tr}\,
  \xi,\hspace{-.5pt}\stargcast{\range}{m}[n\mathord+1,o].\p \prio \q}
  &\ar{\Rw[\mathbin]{\w}}
  \xi\timeinc,\stargcast{(\range\mathop\cap R)}{m}[n,o].\p \prio \q
\hspace{-10em}
\\
&
&& \hspace{-20em}\mbox{\small($\forall R\subseteq \tIP$)}
\\[3pt]
\tablelabel{tr-o}&\\
\multicolumn{2}{@{}r@{\ }}{
  \xi,\stargcast{\range}{m}[n\mathord+1,o\mathord+1].\p \prio \q}
  &
  \ar{\Rw[\mathbin]{\w}}
  \xi\timeinc,\stargcast{(\range\mathop\cap R)}{m}[n\mathord+1,o].\p \prio \q 
\hspace{-7em}
\\
&
&& \hspace{-20em}\mbox{\small($\forall R\subseteq \tIP$)}
\\[3pt]
\tablelabel{sc} &
  \xi,\stargcast{\range}{m}[0,o].\p \prio \q & \ar{\colonact{\range}{\starcastP{m}}} \xi,\p
& \hspace{-3pt}\mbox{\small(if $\range\neq \emptyset$)}
\\[3pt]
\tablelabel[nsc]{$\neg$sc} &
    \xi,\stargcast{\range}{m}[0,o].\p \prio \q & \ar{\colonact{\range}{\starcastP{m}}} \xi,\q
  & \hspace{-3pt}\mbox{\small(if $\range=\emptyset$)}
\\[3pt]
\tablelabel{snd} &
  \xi,\send{\dexp{ms}}.\p &\ar{\send{\xi(\dexp{ms})}} \xi,\p
& \mbox{\small(if $\xi(\dexp{ms})\defined$)}
\\[3pt]
\tablelabel{ws}&
  \xi,\send{\dexp{ms}}.\p &\ar{\ws~} \xi\timeinc,\send{\dexp{ms}}.\p
& \mbox{\small(if $\xi(\dexp{ms})\defined$)}
\\[3pt]
\tablelabel{del}&
  \xi,\deliver{\dexp{data}}.\p &\ar{\deliver{\xi(\dexp{data})}} \xi,\p
& \mbox{\small(if $\xi(\dexp{data})\defined$)}
\\[3pt]
\tablelabel{rcv}&
  \xi,\receive{\keyw{msg}}.\p &\ar{\receive{m}} \xi[\keyw{msg}:=m],\p
  & \mbox{\small($\forall m\in \tMSG$)}
\\[3pt]
\tablelabel{wr}&
  \xi,\receive{\keyw{msg}}.\p &\ar{\wtr~} \xi\timeinc,\receive{\keyw{msg}}.\p
\\[3pt]
\tablelabel{ass}&
  \xi,\assignment{\keyw{var}:=\dexp{exp}}\p &\ar{\tau} \xi[\keyw{var}:=\xi(\dexp{exp})],\p
& \mbox{\small(if $\xi(\dexp{exp})\defined$)}
\\[3pt]
\tablelabel{w}&
  \xi,\p & \ar{\w}  \xi\timeinc,p  & \mbox{\small(if $\xi(p)\mathord\uparrow$)}
\\[6pt]
\tablelabel{rec}&
\multicolumn{2}{c}{\displaystyle
  \frac{\emptyset[\keyw{var}_i:=\xi(\dexp{exp}_i)]_{i=1}^n,\p \ar{a} \xii,\p'}
  {\xi,X(\dexp{exp}_1,\ldots,\dexp{exp}_n) \ar{a} \xii,\p'}
  ~\mbox{(\small$X(\keyw{var}_1,\ldots,\keyw{var}_n) \stackrel{{\it def}}{=} \p$)}}
   \hspace{-10em}
\\[-8pt]
&
  && \hspace{-10em}
  \mbox{\small($\forall a\mathbin\in \act-\W$, if $\xi(\dexp{exp}_i)\defined$)}
\\[6pt]
\tablelabel{rec-w}\hspace{-2.5pt}&
\multicolumn{2}{c}{\displaystyle
  \frac{\emptyset[\keyw{var}_i:=\xi(\dexp{exp}_i)]_{i=1}^n,\p \ar{w_1} \xii,\p'}
  {\xi,X(\dexp{exp}_1,...,\dexp{exp}_n) \ar{w_1} \xi\timeinc,X(\dexp{exp}_1,...,\dexp{exp}_n)}
  ~\mbox{(\small$X(\keyw{var}_1,...,\keyw{var}_n) \mathop{\stackrel{{\it def}}{=}} \p$)}\hspace{-10em}}
\\[-8pt]
&
  && \hspace{-8em}\mbox{$(\forall w_1\mathord\in \W\mbox{, if~}\xi(\dexp{exp}_i)\defined)$}
\\[6pt]
\multicolumn{3}{@{}l@{}}{
\tablelabel{grd}\hspace{3mm} 
{\displaystyle\
  \frac{ \xi \stackrel{\varphi}{\rightarrow}\xii}
       {\xi,\cond{\varphi}\p \ar{\tau} \xii,\p} \qquad\qquad
\tablelabel[ngrd]{$\neg$grd}\hspace{3mm} 
  \frac{ \xi \nar{\varphi}}
       {\xi,\cond{\varphi}\p \ar{\w} \xi\timeinc,\cond{\varphi}\p}
       \hspace{-20em}
  }}
\\[14pt]
\multicolumn{3}{@{}l@{}}{
\tablelabel{alt-l}\hspace{3mm} 
{\displaystyle
  \frac{\xi,\p \ar{a} \xii,\p'}{\xi,\p+\q \ar{a} \xii,\p'} \qquad\qquad
\tablelabel{alt-r}\hspace{3mm}  
  \frac{\xi,\q \ar{a} \xii,\q'}{\xi,\p+\q \ar{a} \xii,\q'}
         
  }}\hspace{-15em}
  & \mbox{\small($\forall a\in \act-\W$)}
\\[14pt]
\tablelabel{alt-w}\hspace{-2.5mm}&
\multicolumn{2}{c}{\displaystyle
  \frac{\xi,\p \ar{w_1} \xii,\p' \quad \xi,\q \ar{w_2} \xii,\q'}{\xi,\p+\q \ar{w_1\wedge w_2} \xii,\p'+\q'}
  }
  & \hspace{-5em}\mbox{\small$(\forall w_1,w_2\in \W)$}
\end{array}$$
%\vspace{-3ex}
\end{table}

\subsubsection{Sequential Processes.}

The structural operational semantics of \tawn, given in
Tables~\ref{tab:sos sequential}--\ref{tab:sos network}, is in the style of Plotkin \cite{Pl04}
and describes how one internal state can evolve into another by performing an \emph{action}.

A difference with AWN is that some of the transitions are time steps.
On the level of node and network expressions they are labelled ``$\tick$''
and the parallel composition of multiple nodes can perform such a transition iff each of those
nodes can---see the third rule in Table~\ref{tab:sos network}.
On the level of sequential and parallel process expressions, time-consuming
transitions are labelled with \emph{wait actions} from $\W = \{\w,\ws,\wtr,\wrs\}\subseteq\act$ and
\emph{transmission actions} from $\RW=\{\Rw{w_1} \mid w_1 \mathbin\in \W \wedge R \subseteq \tIP\}\subseteq\act$.
Wait actions $w_1\mathbin\in\W$ indicate that the system is waiting, possibly only as long as it
fails to synchronise on a \textbf{receive} action ($\wtr$), a
\begin{wrapfigure}[6]{r}{0.26\textwidth}\vspace{-4ex}
\centering
\small $\begin{array}{c|cccc}
\wedge& \w   & \wtr & \ws  & \wrs \\
\hline
 \w   & \w   & \wtr & \ws  & \wrs \\
 \wtr & \wtr & \wtr & \wrs & \wrs \\
 \ws  & \ws  & \wrs & \ws  & \wrs \\
 \wrs & \wrs & \wrs & \wrs & \wrs
\end{array}$
\end{wrapfigure}%
 \textbf{send} action ($\ws$) or both of those ($\wrs$); actions $\Rw{w_1}$ indicate that the system
is transmitting a message
%---possibly under the same conditions---
while the current transmission range of the node is $R \subseteq \tIP$.
In the operational rule for choice (+) we combine any two
wait actions $w_1,w_2\in\W$ with the operator $\wedge$, which 
joins the conditions under which these wait actions can occur. 

In Table~\ref{tab:sos sequential}, which gives the semantics of sequential process expressions,
a state is given as a pair $\xi,p$ of a sequential process expression $p$ and a valuation $\xi$ of
the data variables maintained by $p$.
The set $\act$ of actions that can be executed by sequential and parallel process expressions, and
thus occurs as transition labels, consists of
$\colonact{R}{\starcastP{m}}$,
$\send{m}$,
$\deliver{\dval{d}}$,
$\receive{m}$,
durational actions $w_1$ and $\Rw{w_1}$,
and internal actions~$\tau$, for each choice of $R\subseteq \tIP$, $m \mathop{\in}\tMSG$, $\dval{d}\mathop{\in}\tDATA$ and $w_1\in\W$.
Here $\colonact{R}{\starcastP{m}}$ is the action of transmitting the message $m$, to be
received by the set of nodes $R$, which is the intersection of the set of intended
destinations with the nodes that are within transmission range throughout the transmission.
We do not distinguish whether this message has been broadcast, groupcast or unicast---the
differences show up merely in the value of $R$.

In Table~\ref{tab:sos sequential} $\xi[\keyw{var}:= v]$ denotes the
valuation that assigns the value $v$ to the variable \keyw{var}, and
agrees with $\xi$ on all other variables.
We use $\xi\timeinc$ as an abbreviation for $\xi[\now:=\xi(\now)\mathord+1]$,
the valuation $\xi$ in which the variable \now is incremented by 1. This describes the state
of data\linebreak[3] variables after 1 unit of time elapses, while no other changes in data occurred.
The empty valuation $\emptyset$ assigns values to no variables. Hence
$\emptyset[\keyw{var}_i:=v_i]_{i=1}^n$ is the valuation that
\emph{only} assigns the values $v_i$ to the variables $\keyw{var}_i$ for $i=1,\ldots,n$.
Moreover, $\xi(\dval{exp})\defined$, with \dval{exp} a data expression, is the
statement that $\xi(\dval{exp})$ is defined; this might fail because \dexp{exp} contains a variable
that is not in the domain of $\xi$ or because \dexp{exp} contains a partial function that is given
an argument for which it is not defined.

A state $\xi,r$ is \emph{unvalued}, denoted by $\xi(r)\mathord\uparrow$, if $r$ has the form
$\broadcastP{\dexp{ms}}.\p$,
$\groupcastP{\dexp{dests}}{\dexp{ms}}.\p$, \hfill
$\unicast{\dexp{dest}}{\dexp{ms}}.\p$, \hfill
$\send{\dexp{ms}}.\p$, \hfill
$\deliver{\dexp{data}}.\p$, \linebreak[3]
$\assignment{\keyw{var}:=\dexp{exp}}\p\,$ or
$\,X(\dexp{exp}_1,\ldots,\dexp{exp}_n)$ with
either $\xi(\dexp{ms})$ or $\xi(\dexp{dests})$ or $\xi(\dexp{dest})$ or $\xi(\dexp{data})$ or
$\xi(\dexp{exp})$ or some $\xi(\dexp{exp}_i)$ undefined.
From such a state no progress is possible.
However, Rule \tableref{w} in Table~\ref{tab:sos sequential} does allow time to progress.
We use $\xi(r)\defined$ to denote that a state is not unvalued.

Rule \tableref{rec} for process names in Table~\ref{tab:sos sequential} is
motivated and explained in \cite[\textsection4.1]{TR13}.
The variant \tableref{rec-w} of this rule for wait actions $w_1\in\W$ has been modified such that the recursion
is not yet unfolded while waiting. 
  \journalonly{
  \newcommand{\XP}[1][\ensuremath{n}]{\keyw{X}(#1)}%
  \newcommand{\YP}[1][\ensuremath{n}]{\keyw{Y}(#1)}%
  This is necessary to keep a (non-deterministic) choice 
  in one single data state available if no action is enabled at the moment. 
  For example, the process $\XP[n\mathop+1] + \YP[n\mathop+2]$, with 
  \raisebox{0pt}[10pt]{\XP$\stackrel{{\it def}}{=}$[$\now\mathop> 42$]\XP} and 
  \raisebox{0pt}[10pt]{\YP$\stackrel{{\it def}}{=}$[$\now\mathop< 42$]\YP},
  would have 
  either to choose for either \XP\ or \YP\ in case \now equals $42$ (if the second last line \comP{refer to names} of 
  Table~\ref{tab:sos sequential} would be modified),
  or would lead the same data state (last line).
  The former choice could lead to a deadlock. 
   If the choice is postponed, neither the deadlock occurs nor the problem with multiple data states arises.
    Moreover, the rule also}
This simulates the behaviour of AWN where a process is only 
    unwound if the first action of the process can be performed.
  % all true, but too honest, I hope my argument is more ``convincing''
%  This is necessary for \Prop{time determinism} to hold, which
%  says that the valuation reached after a wait action is completely determined by the state
%  right before this wait action. This, in turn is a precondition for the last rule of
%  Table~\ref{tab:sos sequential} to make sense.

In the subsequent rules \tableref{grd} and \tableref[ngrd]{$\neg$grd} for variable-binding guards $[\varphi]$, the notation
\plat{$\xi\stackrel{\varphi}{\rightarrow}\xii$} says that
$\xii$ is an extension of $\xi$ that satisfies $\varphi$: a valuation that agrees with
$\xi$ on all variables on which $\xi$ is defined, and valuates the
other variables occurring free in $\varphi$, such that the formula
$\varphi$ holds under $\xii$. All variables not free in $\varphi$ and
not evaluated by $\xi$ are also not evaluated by $\xii$.
Its negation \plat{$\xi\nar{\varphi}$} says that no such extension exists, and thus that $\varphi$
is false in the current state, no matter how we interpret the variables whose values are still undefined.
If that is the case, the process $[\varphi]p$ will idle by performing the action $\w$ (of waiting)
without changing its state, except that the variable \now will be incremented.

\journalonly{%
\begin{example}\cite[\textsection4.1]{TR13}~
Let $\xi(\keyw{numa})=7$ and $\xi(\keyw{numb})$, $\xi(\keyw{numc})$
be undefined, with \keyw{numa}, \keyw{numb} and \keyw{numc} data variables
of type $\NN$. Then the sequential process given by the pair
$\xi,[\keyw{numa} = \keyw{numb}+\keyw{numc}]p$ admits several transitions of the form
$$\xi,[\keyw{numa} = \keyw{numb}+\keyw{numc}]p \ar{\tau} \xii,p$$
such as the one with $\xii(\keyw{numb})=2$ and $\xii(\keyw{numc})=5$.
On the other hand, since $\keyw{numb}\in\NN$, the process $\xi,[\keyw{numa} = \keyw{numb}+8]p$ only admits the transition
$$\xi,[\keyw{numa} = \keyw{numb}+8]p \ar{\w} \xi[\now\mathord+\mathord+],[\keyw{numa} = \keyw{numb}+8]p\;.$$
\end{example}
}

\begin{example}
The process $\assignment{\keyw{timeout}:=\now+2} [\now = \keyw{timeout}] p$ first
sets the variable \keyw{timeout} to 2 units after the current time. Then it encounters a guard that
evaluates to {\tt false}, and therefore takes a $\w$-transition, twice. After two time units,
the guard evaluates to {\tt true} and the process proceeds as $p$.
\pagebreak
\end{example}

The process $\receive{\msg}.p$ can receive any message $m$ from the environment in which this
process is running. As long as the environment does not provide a message, this process will wait.
This is indicated by the transition labelled $\wtr$ in Table~\ref{tab:sos sequential}.
The difference between a $\wtr$-and a $\w$-transition is that the former can be taken only when the
environment does not synchronise with the $\textbf{receive}$-transition. In our semantics any state with
an outgoing $\wtr$-transition also has an outgoing $\textbf{receive}$-transition (see \Thm{SP-classification}), which
conceptually has priority over the $\wtr$-transition. Likewise the transition labelled $\ws$ is only enabled in
states that also admit a $\textbf{send}$-transition, and is taken only in a context where the
$\textbf{send}$-transition cannot be taken.

Rules \tableref{alt-l} and \tableref{alt-r}, defining the behaviour of the choice operator for non-wait actions 
are standard. Rule \tableref{alt-w} for wait actions says that a process $p+q$ can wait only if both $p$ and $q$
can wait; if one of the two arguments can make real progress, the choice process $p+q$ always chooses
this progress over waiting.  This is a direct generalisation of the law $p+\nil=p$ of CCS \cite{Mi89}.
As a consequence, a condition on the possibility of $\p$ or $\q$ to wait is inherited by $\p+\q$.
This gives rise to the transition label $\wrs$, that makes waiting conditional on the environment
failing to synchronising with a $\textbf{receive}$ as well as a $\textbf{send}$-transition.
In understanding the target $\xii,\p'{+}\q'$ of this rule, it is helpful to realise that whenever
$\xi,\p \ar{w_1} \xii,\q$, then $\q = \p$ and $\xii = \xi[\now\mathord+\mathord+]$;
see \Prop{time determinism}.

In order to give semantics to the transmission constructs (broadcast, groupcast, unicast),
the language of sequential processes is extended with the auxiliary construct\vspace{-1ex}
$$\stargcast{\dval{\range}}{\dval{m}}[n,o].\SP \prio \SP\ ,$$ 
with $m\in\tMSG$, $n,o\in\NN$ and $\range\subseteq \tIP$.
This is a variant of the \textbf{broadcast}-, \textbf{groupcast}- and \textbf{unicast}-constructs,
describing intermediate states of the transmission of message \dval{m}.
The argument \range of \textbf{*cast} denotes those intended
destinations that were not out of transmission range during the part of the transmission that
already took place.

In a state $\stargcast{\range}{\dval{m}}[n,o].p \prio q$ with $n>0$ the transmission
still needs between $n$ and $n\mathord+o$ time units to complete.
If $n=0$ the actual \textbf{*cast}-transition will take place; resulting in state $p$ if
the message is delivered to at least one node in the network (\range is non-empty), and $q$ otherwise.

Rule \tableref{gc} says that once a process commits to a
\textbf{groupcast}-transmission, it is going to behave as
$\colonact{\range}{\textbf{*cast}}(m)[n,o]$
with time parameters $n:=\LG$ and $o:=\DG$. The transmitted message $m$ is calculated
by evaluating the argument \dexp{ms}, and the  transmission range \range of
this \textbf{*cast} is initialised by evaluating the argument \dexp{dests}, indicating the intended
destinations of the \textbf{groupcast}. 
Rules \tableref{bc} and \tableref{uc} for \textbf{broadcast} and \textbf{unicast} are the same, except that in the case of
\textbf{broadcast} the intended destinations are given by the set $\tIP$ of \emph{all} possible destinations,
whereas a \textbf{unicast} has only one intended destination.
Moreover, only \textbf{unicast} exploits the difference in the continuation process depending on
whether an intended destination is within transmission range.
Subsequently, Rules \tableref{tr} and \tableref{tr-o} come into force; they allow time-consuming transmission steps to take
place, each decrementing one of the time parameters $n$ or $o$. Each time step of a transmission corresponds
to a transition labelled $\Rw{\w}$, where $R$ records the current transmission range. Since
sequential processes store no information on transmission ranges---this information is added only
when moving from process expressions to node expressions---at this stage of the description all
possibilities for the transmission range need to be left open, and hence there is a transition
labelled $\Rw{\w}$ for each choice of $R$.\footnote{Similar to $\receive{\msg}.p$ having a
transition for each possible incoming message~$m$.} When transitions for process expressions are
inherited by node expressions, only one of the transitions labelled $\Rw{\w}$ is going to survive, namely the one
where $R$ equals the transmission range given by the node expression (cf.~Rule~\tableref{n-t} in \Tab{sos node}).
Upon doing a transition $\Rw\w$, the range \range of the \textbf{*cast} is restricted to $R$.
As soon as $n\mathbin=0$, regardless of the value of $o$, the transmission is completed by the execution of
the action $\colonact{\range}{\textbf{*cast}(m)}$ (Rules \tableref{sc} and \tableref[nsc]{$\neg$sc}). Here the actual message
$m$ is passed on for synchronisation with \textbf{receive}-transitions of all nodes $\dval{ip}\in\range$.

This treatment of message transmission is somewhat different from the one in AWN\@.
There, the rule $\xi,\groupcastP{\dexp{dests}}{\dexp{ms}}.\p \ar{\groupcastP{\xi(\dexp{dests})}{\xi(\dexp{ms})}}
\xi,\p$ describes the behaviour of the \textbf{groupcast} construct for sequential processes, and the rule
\vspace{-4ex} 

{\small$$  \frac{P \ar{\groupcastP{D}{m}} P'}
  {\rule[10pt]{0pt}{1pt}
   \dval{ip}\mathbin:P\mathbin:R \ar{\colonact{R\cap D}{\starcastP{m}}} \dval{ip}\mathbin:P'\mathbin:R}$$}%
lifts this behaviour from processes to nodes. In this last stage the \textbf{groupcast}-action is
unified with the \textbf{broadcast}- and \textbf{unicast}-action into a \textbf{*cast},
at which occasion the range of the \textbf{*cast} is calculated as the intersection of the intended
destinations $D$ of the \textbf{groupcast} and the ones in transmission range $R$. In \tawn, on the other hand,
the conversion of \textbf{groupcast} to \textbf{*cast} happens already at the level of sequential processes.

{%local redef of \parl
\renewcommand{\parl}{\mathop{\ensuremath{\!\langle\!\langle\!}}}
\begin{table}[t]
\caption{Structural operational semantics for parallel process expressions
\label{tab:sos parallel}}

\vspace*{-5ex}
$$\begin{array}{@{}c@{}}
\tablelabel{p-al}\hspace{1mm} 
\displaystyle
  \frac{P \ar{a} P'}{P\parl Q \ar{a} P'\parl Q}%
  \,\mbox{\small$\left(\begin{array}{@{}c@{}}\forall a\neq \receive{m},\\a \mathop{\not\in} \W, a \mathop{\not\in} \RW\\
  \end{array}\right)$}
  \quad\hspace{-1pt}
\tablelabel{p-ar}\hspace{1mm} 
  \frac{Q \ar{a} Q'}{P\parl Q \ar{a} P\parl Q'}%
\,\mbox{\small$\left(\begin{array}{@{}c@{}}\forall a\neq \send{m},\\a \mathop{\not\in} \W, a \mathop{\not\in} \RW\end{array}\right)$}
\\[15pt]
\tablelabel{p-a}\hspace{1mm} 
\displaystyle
  \frac{P \ar{\receive{m}} P'\quad Q \ar{\send{m}} Q'}
       {P\parl Q \ar{\tau} P'\parl Q'}
    ~\,\mbox{\small($\forall m\mathbin\in\tMSG$)}
\qquad
\tablelabel{p-w}\hspace{1mm} 
  \frac{P \ar{w_1} P'\quad Q \ar{w_2} Q'}
       {P\parl Q \ar{w_3} P'\parl Q'}
\\[15pt]
\tablelabel{p-tl}\hspace{1mm} 
\displaystyle
\frac{P\!\ar{R\mathbin{:}w_1} P'\quad\! Q\!\ar{w_2} Q'}
	       {P\parl Q \ar{R\mathbin{:}w_3} P'\hspace{-1pt}\parl Q'}
\quad
\tablelabel{p-tr}\hspace{1mm} 
\frac{P\! \ar{w_1} P'\quad\! Q\! \ar{R\mathbin{:}w_2} Q'}
       {P\parl Q \ar{R\mathbin{:}w_3} P'\parl Q'}
\quad
\tablelabel{p-t}\hspace{1mm} 
\frac{P\! \ar{R\mathbin{:}w_1} P'\quad\! Q\! \ar{R\mathbin{:}w_2} Q'}
       {P\parl Q \ar{R\mathbin{:}w_3} P'\parl Q'}
\\[12pt]
\hfill\!(\forall w_{1},w_{2},w_{3}\in\W, w_3=w_1\parl w_2)
\end{array}$$
\vspace*{-5.5ex}
\end{table}
}

\vspace*{-0ex}
\subsubsection{Parallel Processes.}
Rules \tableref{p-al}, \tableref{p-ar} and \tableref{p-a} of Table \ref{tab:sos parallel} are taken from AWN, and formalise
the description of the operator $\parl$ given in \SSect{syntax}.
Rule \tableref{p-w} stipulates under which conditions a process $P\parl Q$
can do a wait action, and 
\begin{wrapfigure}[5]{r}{0.25\textwidth}
\vspace{-5ex}
\label{parallel}
\hypertarget{parallel}{
\centering
\small $\begin{array}{c|cccc}
\parl
      & \w   & \wtr & \ws  & \wrs \\
\hline
 \w   & \w   & \wtr & \w   & \wtr \\
 \wtr & \w   & \wtr &  -   &  -   \\
 \ws  & \ws  & \wrs & \ws  & \wrs \\
 \wrs & \ws  & \wrs &  -   &  -
\end{array}$}
\end{wrapfigure}%
of which kind. 
Here $\parl$ is also
a partial binary function on the set $\W$, specified by
the table on the right.
% The rows correspond to the first argument and the columns to the second argument of $\parl$.
The process $P\parl Q$ can do a wait action only if both $P$ and $Q$ can do so.
In case $P$ can do a $\wtr$ or a $\wrs$-action, $P$ can also do a \textbf{receive}
and in case $Q$ can do a $\ws$ or a $\wrs$, $Q$ can  \pagebreak also do a \textbf{send}.
When both these possibilities  apply, the \textbf{receive} of $P$
synchronises with the \textbf{send} of $Q$ into a $\tau$-step, which
has priority over waiting. 
In the other 12 cases no synchronisation between $P$ and $Q$ is
possible, and we do obtain a wait action.
Since a \textbf{receive}-action of $P$ that does not synchronise with $Q$
is dropped, so is the corresponding side condition of a wait action of 
$P$. 
Hence (within the remaining 12 cases) a $\wtr$ of $P$ is treated as a $\w$,
and a $\wrs$ as a $\ws$. Likewise a $\ws$ of $Q$ is treated as a $\w$,
and a $\wrs$ as a $\wtr$. This leaves 4 cases to be decided.
In all four, we have $\w_1 \parl \w_2 = \w_1 \wedge \w_2$.

Time steps $\Rw{w_1}$ are treated exactly like wait actions from $\W$
(cf.\ Rules \tableref{p-tl}, \tableref{p-tr} and \tableref{p-t}).
If for instance $P$ can do a $\Rw{\w}$, meaning that it spends a unit of time
on a transmission, while $Q$ can do a $\wtr$, meaning that it waits a unit
of time only when it does not receive anything from another source,
the result is that $P\parl Q$ can spend a unit of time transmitting something,
but only as long as $P\parl Q$ does not receive any message; if it does,
the receive action of $Q$ happens with priority over the wait action of $Q$,
and thus occurs before $P$ spends a unit of time transmitting.
% Only as a result of applying $\parl$ does one obtain time steps of
% the form $\Rw\wtr$, $\Rw\ws$ or $\Rw\wrs$.
%\advance\textheight3pt

%\advance\textheight-3pt
\subsubsection{Node and Network Expressions.}
The operational semantics of node and network expressions of
Tables~\ref{tab:sos node} and~\ref{tab:sos network} uses transition labels
\tick,
$\colonact{R}{\starcastP{m}}$,
$\colonact{H\neg K}{\listen{m}}$,
$\colonact{\dval{ip}}{\deliver{\dval{d}}}$,
$\textbf{connect}(\dval{ip},\dval{ip}')$,
$\textbf{disconnect}(\dval{ip},\dval{ip}')$,
$\tau$
and $\colonact{\dval{ip}}{\textbf{newpkt}(\dval{d},\dval{dip})}$.
As before, $m\in\tMSG$, $d\in\tDATA$, $R\subseteq \tIP$, and $\dval{ip},\dval{ip}'\in\tIP$.
Moreover, $H,K\subseteq\tIP$ are sets of IP addresses.

The actions $\colonact{R}{\starcastP{m}}$ are inherited by nodes from the processes that
run on these nodes (cf.~Rule \tableref{n-sc}).
The action $\colonact{H\neg K}{\listen{m}}$ states that the message $m$
simultaneously arrives at all addresses $\dval{ip}\mathbin\in H$, and fails to
arrive at all addresses $\dval{ip}\mathbin\in K$.
The rules of Table~\ref{tab:sos network} let a $\colonact{R}{\starcastP{m}}$-action of one node
synchronise with an $\listen{m}$ of all other nodes, where this
$\listen{m}$ amalgamates the arrival of message $m$ at the nodes in
the transmission range $R$ of the $\starcastP{m}$, and the non-arrival at the
other nodes. Rules \tableref{n-rcv} and \tableref{n-dis}
state that arrival of a message at a node happens if and only if the
node receives it, whereas non-arrival can happen at any time.
This embodies our assumption that, at any time, any message that is
transmitted to a node within range of the sender is actually received by that node.
(Rule \tableref{n-dis}\journalonly{, having no
  premises,} may appear to say that any node \dval{ip} has the option to
  disregard any message at any time. However, the encapsulation
  operator (below) prunes away all such disregard transitions that do
  not synchronise with a cast action for which \dval{ip} is out of range.)

The action $\send{m}$ of a process does not give rise to any action of
the corresponding node---this action of a sequential process cannot
occur without communicating with a receive action of another sequential
process running on the same node.
Time-consuming actions $w_1$ and $\Rw{w_1}$, with $w_1\in\W$, of a process are renamed into $\tick$ on
the level of node expressions.\footnote{Rule \tableref{n-t} ensures that only those
  $R\mathbin{:}w_1$-transitions survive for which $R$ is the current transmission range of the node.}
 All we need to remember of these actions is that they take one unit
of time. Since on node expressions the actions $\send{m}$ have been dropped, 
\pagebreak the side condition
making the wait actions $\ws$ and $\wrs$ conditional on the absence of a \textbf{send}-action can be
dropped as well. The priority of \textbf{receive}-actions over the wait action $\wtr$ can now also be
dropped, for in the absence of \textbf{send}-actions, \textbf{receive}-actions are entirely
reactive.  A node can do a \textbf{receive}-action only when another node, or the application layer,
casts a message, and in
this case that other node is not available to synchronise with a $\tick$-transition.

\begin{table}[t]
\caption{Structural operational semantics for node expressions
\label{tab:sos node}}

\vspace{-5ex}
$$\begin{array}{@{}c@{\quad}c@{}}
\tablelabel{n-sc}\hspace{1mm} 
\displaystyle
  \frac{P \ar{\colonact{\range}{\starcastP{m}}} P'}
  {\rule[13pt]{0pt}{1pt}
   \dval{ip}\mathbin{:}P\mathord{:}R \ar{\colonact{\range}{\starcastP{m}}} \dval{ip}\mathbin{:}P'\mathord{:}R}
&
\tablelabel{n-rcv}\hspace{1mm} 
\displaystyle
  \frac{P \ar{\receive{m}} P'}
  {\rule[13pt]{0pt}{1pt}
   \dval{ip}\mathbin{:}P\mathord{:}R \ar{\colonact{\{\dval{ip}\}\neg\emptyset}{\listen{m}}} \dval{ip}\mathbin{:}P'\mathord{:}R}
\\[18pt]
\tablelabel{n-del}\hspace{1mm} 
\displaystyle
  \frac{P \ar{\deliver{\dval{d}}} P'}
  {\rule[13pt]{0pt}{1pt}
   \dval{ip}\mathbin{:}P\mathord{:}R \ar{\colonact{\dval{ip}}{\deliver{\dval{d}}}} \dval{ip}\mathbin{:}P'\mathord{:}R}
&
\tablelabel{n-dis}\hspace{1mm} 
 \dval{ip}\mathbin{:}P\mathord{:}R \ar{\colonact{\emptyset\neg\{\dval{ip}\}}{\listen{m}}} \dval{ip}\mathbin{:}P\mathord{:}R
\\[16pt]\multicolumn{2}{@{}c@{}}{
\tablelabel[n-t]{n-$\tau$}\hspace{1mm} 
\displaystyle
  \frac{P \ar{\tau} P'}
  {\dval{ip}\mathbin{:}P\mathord{:}R \ar{\tau} \dval{ip}\mathbin{:}P'\mathord{:}R}
\quad
\tablelabel{n-w}\hspace{1mm} 
\displaystyle
  \frac{P \ar{w_1} P'}
  {\rule[13pt]{0pt}{1pt}
   \dval{ip}\mathbin{:}P\mathord{:}R \ar{\tick} \dval{ip}\mathbin{:}P'\mathord{:}R}
\quad
\tablelabel{n-t}\hspace{1mm} 
\displaystyle
  \frac{P \ar{R\mathord{:}w_1} P'}
  {\rule[13pt]{0pt}{1pt}
   \dval{ip}\mathbin{:}P\mathord{:}R \ar{\tick} \dval{ip}\mathbin{:}P'\mathord{:}R}
%\quad\mbox{\footnotesize $\left(\!\forall w\chP{{}_1} \mathbin\in \W \right)\!\hspace{-.7pt}$}
}
\\[16pt]
\multicolumn{2}{@{}r@{}}{
\mbox{\footnotesize $\left(\!\forall w_1 \mathbin\in \W \right)$}
}
\\[5pt]
\tablelabel{con}\hspace{1mm} 
\displaystyle
  \dval{ip}\!\mathbin{:}\!P\!\mathbin{:}\!R \ar{\textbf{connect}(\dval{ip},\dval{ip}')} \dval{ip}\!\mathbin{:}\!P\!\mathbin{:}\!R\mathop\cup\{\dval{ip}'\}
  \hspace{-2pt}
&
\tablelabel{dis}\hspace{1mm} 
  \dval{ip}\!\mathbin{:}\!P\!\mathbin{:}\!R \ar{\textbf{disconnect}(\dval{ip},\dval{ip}')} \dval{ip}\!\mathbin{:}\!P\!\mathbin{:}\!R\mathop-\{\dval{ip}'\}
\end{array}$$
%\vspace{-0ex}
\end{table}

\begin{table}[t]
\caption{Structural operational semantics for network expressions
\label{tab:sos network}}

\vspace{-5ex}
{\small
\[\begin{array}{@{}c@{}}
\tablelabel{nw-tl/nw-tr}\hspace{2mm} 
\displaystyle
  \frac{M \ar{\colonact{R}{\starcastP{m}}} M' \quad N \ar{\colonact{H\neg K}{\listen{m}}} N'}
  {\rule[11pt]{0pt}{1pt}
   M \| N \ar{\colonact{R}{\starcastP{m}}} M' \| N'\qquad
   N \| M \ar{\colonact{R}{\starcastP{m}}} N' \| M'}
   \qquad\mbox{\footnotesize
  $\left(\begin{array}{@{}c@{}}H\subseteq R,\\K {\cap} R = \emptyset\end{array}\right)$}
\\[21pt]
\tablelabel{arr}\hspace{2mm} 
  \displaystyle
  \frac{M \ar{\colonact{H\neg K}{\listen{m}}} M' \quad N \ar{\colonact{H'\neg K'}{\listen{m}}} N'}
  {\rule[11pt]{0pt}{1pt}M \| N \ar{\colonact{(H\cup H')\neg(K\cup K')}{\listen{m}}} M' \| N'}
  \qquad\!
  \tablelabel{tck}\hspace{2mm} 
  \frac{M \ar{\tick} M' \quad N \ar{\tick} N'}
  {\rule[11pt]{0pt}{1pt}M \| N \ar{\tick} M' \| N'}
\\[16pt]
  \displaystyle
  \tablelabel{nw-al}\hspace{2mm} 
  \frac{M \ar{a} M'}{M \| N \ar{a} M' \| N}
  \qquad
  \tablelabel{nw-ar}\hspace{2mm} 
  \frac{N \ar{a} N'}{M \| N \ar{a} M \| N'}
  \qquad
  \tablelabel{e-a}\hspace{2mm} 
  \frac{M \ar{a} M'}{
   [M] \ar{a} [M']}
\\[14pt]
~\hfill\!\!
\mbox{\footnotesize $
(\forall a\mathbin\in\{
                        \colonact{\dval{ip}}{\deliver{\dval{d}}},\tau,     
                                     \textbf{connect}(\dval{ip},\dval{ip}'),
                                      \textbf{disconnect}(\dval{ip},\dval{ip}')
\})$}
\\[8pt]
  \displaystyle
  \tablelabel{e-tck}\hspace{1mm} 
  \frac{M \ar{\tick} M'}{[M] \ar{\tick} [M']}
  \quad\!
  \tablelabel{e-sc}\hspace{1mm} 
  \frac{M \mathbin{\ar{\colonact{R}{\starcastP{m}}}} M'}{[M] \ar{\tau} [M']}
  \quad\!
  \tablelabel{e-np}\hspace{1mm} 
  \frac{M \ar{\colonact{\{\dval{ip}\}\neg K}{\listen{\newpkt{\dval{d}}{\dval{dip}}}}} M'}
  {\rule[11pt]{0pt}{1pt}
   [M] \ar{\colonact{\dval{ip}}{\textbf{newpkt}(\dval{d},\dval{dip})}} [M']}
\end{array}\]}
\vspace*{-5ex}
\end{table}

Internal actions $\tau$ and the action $\colonact{\dval{ip}}{\deliver{\dval{d}}}$
are simply inherited by node expressions from the processes that run
on these nodes (Rules \tableref[n-t]{n-$\tau$} and \tableref{n-del}), and are interleaved in the parallel composition of
nodes that makes up a network. Finally, we allow actions
$\textbf{connect}(\dval{ip},\dval{ip}')$ and
$\textbf{disconnect}(\dval{ip},\dval{ip}')$ for
$\dval{ip},\dval{ip}'\mathop\in \tIP$ modelling a change in network
topology.  \journalonly{These actions can be thought of as occurring nondeterministically, or as
actions instigated by the environment of the modelled network protocol.}%
In this formalisation node $\dval{ip}'$ may be in the range of node
$\dval{ip}$, meaning that $\dval{ip}$ can send to $\dval{ip}'$,
even when the reverse does not hold. For some applications, in
particular the one to AODV in \Sect{aodv}, it is useful to assume that
$\dval{ip}'$ is in the range of $\dval{ip}$ if and only if $\dval{ip}$
is in the range of $\dval{ip}'$. 
This symmetry can be enforced by adding the following rules to Table~\ref{tab:sos node}:\pagebreak[3]

%\mbox{}\vspace{-3ex}%
\noindent
{\small
\[\begin{array}{@{}c@{\qquad}c@{}}
\displaystyle
  \dval{ip}\!:\!P\!:\!R \ar{\textbf{connect}(\dval{ip}',\dval{ip})} \dval{ip}\!:\!P\!:\!R\cup\{\dval{ip}'\}
&
  \dval{ip}\!:\!P\!:\!R \ar{\textbf{disconnect}(\dval{ip}',\dval{ip})} \dval{ip}\!:\!P\!:\!R-\{\dval{ip}'\}
\\[11pt]\displaystyle
  \frac{\dval{ip} \not\in \{\dval{ip}'\!,\dval{ip}''\}}
  {\rule[11pt]{0pt}{1pt}
   \dval{ip}\!:\!P\!:\!R \ar{\textbf{connect}(\dval{ip}'\!,\dval{ip}'')} \dval{ip}\!:\!P\!:\!R}
&\displaystyle
  \frac{\dval{ip} \not\in \{\dval{ip}'\!,\dval{ip}''\}}
  {\rule[11pt]{0pt}{1pt}
   \dval{ip}\!:\!P\!:\!R \ar{\textbf{disconnect}(\dval{ip}'\!,\dval{ip}'')} \dval{ip}\!:\!P\!:\!R}
\end{array}\]}%
and replacing the rules in the third line of \Tab{sos network} for (dis)connect actions by\vspace{-1ex}
{\small
\[
  \frac{M \ar{a} M' \quad
        N \ar{a} N'}
  {M \| N \ar{a} M' \| N'}
  \qquad
  \frac{M \ar{a} M'}{
   [M] \ar{a} [M']}
  \qquad\mbox{\footnotesize
  $
  \left(\forall a\in\left\{\begin{array}{c}\textbf{connect}(\dval{ip},\dval{ip}'),\\
                                           \textbf{disconnect}(\dval{ip},\dval{ip}')
                    \end{array}\right\}\right).$}
\]}%
The main purpose of the encapsulation operator is to ensure that no
messages will be received that have never been sent. In a parallel
composition of network nodes, any action $\receive{\dval{m}}$ of one
of the nodes \dval{ip} manifests itself as an action $\colonact{H\neg
K}{\listen{\dval{m}}}$ of the parallel composition, with $\dval{ip}\in
H$. Such actions can happen (even) if within the parallel composition
they do not communicate with an action $\starcastP{\dval{m}}$ of
another component, because they might communicate with a
$\starcastP{\dval{m}}$ of a node that is yet to be added to the
parallel composition. However, once all nodes of the network are
accounted for, we need to inhibit unmatched arrive actions,
as otherwise our formalism would allow any node at any time to receive
any message. One exception however are those arrive actions
that stem from an action $\receive{\newpkt{\dval{d}}{\dval{dip}}}$ of a
sequential process running on a node, as those actions represent
communication with the environment. Here, we use the function
$\newpktID$, which we assumed to exist.\footnote{%
\newcommand{\npkt}{\textbf{newpkt}\xspace}%
To avoid the function $\newpktID$ we could have introduced a new
primitive \npkt, which is dual to \textbf{deliver}.}
It models the injection of new data $\dval{d}$ for destination $\dval{\dip}$.

The encapsulation operator passes through internal actions, as well as
delivery of data to destination nodes, this being an interaction
with the outside world (Rule \tableref{e-a}). $\starcastP{m}$-actions are declared internal
actions at this level (Rule \tableref{e-sc}); they cannot be steered by the outside world.
The connect and disconnect actions are passed through in
Table~\ref{tab:sos network} (Rule \tableref{e-a}), thereby placing them under control of the
environment; to make them nondeterministic, their rules should have a
$\tau$-label in the conclusion, or alternatively
$\textbf{connect}(\dval{ip},\dval{ip}')$ and $\textbf{disconnect}(\dval{ip},\dval{ip}')$
should be thought of as internal actions. Finally, actions
$\listen{m}$ are simply blocked by the encapsulation---they
cannot occur without synchronising with a $\starcastP{m}$---except for
$\colonact{\{\dval{ip}\}\neg K}{\listen{\newpkt{\dval{d}}{\dval{dip}}}}$
with $\dval{d}\in\tDATA$ and $\dval{dip}\in \tIP$ (Rule \tableref{e-np}). This action
represents new data \dval{d} that is submitted by a
client of the modelled protocol to node $\dval{ip}$, for delivery at
destination \dval{dip}.

\subsubsection{Optional Augmentations to Ensure Non-Blocking Broadcast.}
\label{ssec:non-blocking}

Our process algebra, as presented above, is intended for networks in which
each node is \emph{input enabled} \cite{LT89}, meaning that it is
always ready to receive any message, i.e., able to engage in the
transition $\receive{m}$ for any $m\in \tMSG$---in the default
version of \tawn, network expressions are required to
have this property. In our model of AODV
(Section~\ref{sec:aodv}) we will ensure this by equipping each node
with a message queue that is always able to accept messages for later
handling---even when the main sequential process is currently busy.
This makes our model input enabled and hence \emph{non-blocking}, meaning that no sender can
be delayed in transmitting a message simply because one of the
potential recipients is not ready to receive it.

In \cite{ESOP12,TR13} we additionally presented\journalonly{\outP{ed}\comsP{?}}
two versions of AWN without the
requirement that all nodes need to be input enabled: one in which we
kept the same operational semantics and simply accept blocking,
and one were we added operational rules to avoid blocking,
thereby giving up on the requirement that any broadcast
message is received by all nodes within transmission range.

The first solution does not work for \tawn, as it would give rise to
\emph{time deadlocks}, reachable states where time is unable to
progress further. \journalonly{This happens for instance when one node can only do
a {\starcastP{m}} action, and another node, within transmission range,
needs to do a time step before it can receive it.}

The second solution is therefore our only alternative to
requiring input enabledness for \tawn. As in \cite{ESOP12,TR13},
it is implemented by the addition of the rule
\hypertarget{non-blocking}{$$\frac{P \nar{\receive{m}}}  {\rule[13pt]{0pt}{1pt}
  \dval{ip}:P:R \ar{\colonact{\{\dval{ip}\}\neg\emptyset}{\listen{m}}} \dval{ip}:P:R}\;.
  \label{non-blocking}$$}
It states that a message may arrive at a node \dval{ip} regardless whether
the node is ready to receive it or not; if it is not ready, the message is simply
ignored, and the process running on the node remains in the same state.

In \cite[\textsection4.5]{TR13} also a variant of this idea is presented that 
avoids negative premises, yet leads to the same transition system.
The same can be done to \tawn in the same way, we
skip the details and refer to \cite[\textsection4.5]{TR13}.

% However, the operational semantics does allow blocking if one would
% (mis)use the process algebra to model nodes that are not input enabled.
% This is a logical consequence of insisting that
% 
% Since the possibility of blocking can regarded as a bad property of
% broadcast formalisms, in \cite[\textsection4.5]{TR13} we proposed an
% optional augmentations of the operational semantics of AWN that takes
% away the expressiveness of the language that allows modelling a
% blocking broadcast. Since the same can be done to \tawn in exactly the same way, we
% skip the details and refer to \cite[\textsection4.5]{TR13}.

\journalonly{\color{orange}
 \subsubsection{Well-timed Processes.}

In our intended applications, the time spent on intranode computations (guard evaluation, variable
assignment, data delivery to the application layer and synchronous \textbf{send}-\textbf{receive}
communication between processes running on the same node) is negligible in comparison with
internode communications (\textbf{broadcast}, \textbf{groupcast} and \textbf{unicast}
transmissions). For this reason, we decided to model intranode computations as instantaneous actions.
The design decision is debatable for processes that can perform an infinite sequence of intranode
computations, not interspersed with durational actions. 
For such an infinite sequence would
be modelled as an infinite path in the transition system generated by the operational semantics of
\tawn that is executed in a finite amount of time. The question what happens after this time is
not answered in terms of our operational semantics.

In \cite{NS94} this problem is addressed for the timed process algebra ATP by restriction attention
to \emph{well-timed} processes. Define an \emph{infinite trace} of a process $P$
as the sequence of transition labels $a_1 a_2 \dots$ such that $P \ar{a_1} P_1 \ar{a_2} P_2 \cdots$.
In \cite{NS94} a process $P$ is called well-timed if in any such infinite trace infinitely many time steps occur.
This definition is too restrictive for \tawn, as any input-enabled process or network allows
infinite sequences of inputs from the environment (\textbf{receive}-, $\textbf{arrive}$- or
\textbf{newpkt}-actions). Such sequences are not the responsibility of the process design, and hence
need to be excused. We therefore define a \tawn process or network \emph{well-timed} if any infinite
trace contains infinite many time steps ($w$-, $R:w$- or $\tick$-actions) or infinitely many input actions.
When applying \tawn, we normally consider well-timed processes only.
is well-timed in the sense of \cite{NS94}}

\subsection{Results on the Process Algebra}

In this section we list a couple of useful properties of our timed process algebra. 
In particular, we show that wait actions do not change the data state, except for the value of \now.
Moreover, we show the absence of \emph{time deadlocks}:
a complete network $N$ described by T-AWN always admits a transition,\vspace{1pt} independently of the outside environment.
More precisely, either $N \ar{\tick}$, or $N \ar{\colonact{\dval{ip}}{\deliver{\dval{d}}}}$ or $N \ar{\tau}$.
We also show that
our process algebra admits a translation into one without data structure. 
The operational rules of the translated process algebra are in the de Simone format \cite{dS85}, which 
immediately implies that strong bisimilarity is a congruence, and
yields the associativity of
our parallel operators. Last, we show that \tawn and \awn are related by a simulation relation.
Due to lack of space, most of the proofs are omitted; they
are deferred to \SubApp{proofs}.

\begin{proposition} On the level of sequential processes, wait actions
change only the value of the variable \now, i.e.,
$\xi, \p \ar{w_1} \zeta, \q  \Rightarrow (\p=\q \wedge \zeta=\xi\timeinc)$.
\label{prop:time determinism}%
\label{w-caracterisation} %
\end{proposition}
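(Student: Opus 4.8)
The plan is to prove this by induction on the derivation of the transition $\xi,p \ar{w_1} \zeta,q$, i.e.\ by a case analysis on the last operational rule of Table~\ref{tab:sos sequential} applied. The crucial first step is the observation that only six rules can carry a label from $\W = \{\w,\ws,\wtr,\wrs\}$ in their conclusion: the axioms \tableref{w}, \tableref{ws}, \tableref{wr} and \tableref[ngrd]{$\neg$grd} (whose premises, if any, are side conditions rather than transitions), together with the rules \tableref{rec-w} and \tableref{alt-w}. All remaining rules are excluded at a glance: \tableref{bc}, \tableref{gc}, \tableref{uc}, \tableref{ass} and \tableref{grd} emit $\tau$; \tableref{snd}, \tableref{del}, \tableref{rcv}, \tableref{sc} and \tableref[nsc]{$\neg$sc} emit actions outside $\W$; the transmission steps \tableref{tr} and \tableref{tr-o} carry labels in $\RW$, which is disjoint from $\W$; and \tableref{rec}, \tableref{alt-l} and \tableref{alt-r} are explicitly restricted to labels $a\in\act-\W$. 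Thus exactly these six rules must be treated.

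For the four axioms the claim holds by inspection, since each has the shape $\xi,r \ar{w_1} \xi\timeinc,r$: the target process is literally the source process $r$ (for \tableref[ngrd]{$\neg$grd} this is $r=[\varphi]p$), and the target valuation is exactly $\xi\timeinc$, establishing both $p=q$ and $\zeta=\xi\timeinc$ at once. Rule \tableref{rec-w} is, for our purposes, as good as an axiom: irrespective of its premise, its conclusion reads $\xi,X(\dexp{exp}_1,\ldots,\dexp{exp}_n) \ar{w_1} \xi\timeinc,X(\dexp{exp}_1,\ldots,\dexp{exp}_n)$, so the process is again unchanged and the valuation is $\xi\timeinc$. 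No appeal to the induction hypothesis is needed here, which is precisely the point of the modified recursion rule: recursion is not unfolded while waiting.

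The only genuinely inductive case is \tableref{alt-w}, with conclusion $\xi,p+q \ar{w_1\wedge w_2} \zeta,p'+q'$ derived from the premises $\xi,p \ar{w_1} \zeta,p'$ and $\xi,q \ar{w_2} \zeta,q'$ (and $w_1\wedge w_2\in\W$ by the $\wedge$-table shown alongside the rules). I would apply the induction hypothesis to each premise separately, obtaining $p'=p$, $q'=q$ and $\zeta=\xi\timeinc$; hence the target is $\xi\timeinc,p+q$, as required. The only point demanding a moment's care is exactly this step: the rule presupposes a \emph{common} target valuation $\zeta$ for both premises, so one must check that the two instances of the induction hypothesis are consistent. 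They are, since each independently forces $\zeta=\xi\timeinc$, and the source valuation $\xi$ is shared by both premises by construction. Beyond this bookkeeping the argument is entirely routine; its real content is simply the structural fact that no $\W$-labelled rule ever rewrites the process term or alters any variable other than \now.
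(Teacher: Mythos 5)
Your proof is correct and follows essentially the same route as the paper's: identify the same six rules that can produce a $\W$-labelled transition, observe that the first five yield the claim directly from the shape of their conclusions, and handle \tableref{alt-w} by induction on the derivation. Your additional remarks---the explicit exclusion of the remaining rules and the consistency check on the shared target valuation in \tableref{alt-w}---are sound elaborations of what the paper leaves implicit.
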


% While proof-reading, I am preparing all proofs to be moved to the appendix; 
% we only have to uncomment the corresponding line;
% A proof sketch is intended to stay where it is. Let's move the proofs into an
% appendix as soon as they are colour free.
\begin{proofsketch}
One inspects all rules of Table~\ref{tab:sos sequential} that can
generate $w$-steps, and then reasons inductively on the derivation of
these steps.
\end{proofsketch}

%\prf
\prfatend
Only the six rules below generate a $w_1$-step (under certain conditions).
\begin{enumerate}\vspace{-3pt}
\item $\xi,\send{\dexp{ms}}.\p \ar{\ws~} \xi\timeinc,\send{\dexp{ms}}.\p$
\item $\xi,\receive{\keyw{msg}}.\p \ar{\wtr~} \xi\timeinc,\receive{\keyw{msg}}.\p$
\item $\xi,\p \ar{\w}  \xi\timeinc,\p $ % \qquad \mbox{\small(if$(\xi,p)\mathord\uparrow$)} 
\item $\displaystyle \frac{\emptyset[\keyw{var}_i:=\xi(\dexp{exp}_i)]_{i=1}^n,\p \ar{w_1} \xii,\p'}
  {\xi,X(\dexp{exp}_1,...,\dexp{exp}_n) \ar{w_1} \xi\timeinc,X(\dexp{exp}_1,...,\dexp{exp}_n)}
  ~\mbox{(\small$X(\keyw{var}_1,...,\keyw{var}_n) \stackrel{{\it def}}{=} p$)}$ % \hfill $(\forall w\chR{{}_1}\in \W)$
\item $\displaystyle \frac{\xi \nar{\varphi}}{\xi,\cond{\varphi}\p \ar{\w} \xi\timeinc,\cond{\varphi}\p}$\vspace{6pt}
\item $\displaystyle \frac{\xi,\p \ar{w_1} \xii,\p' \quad \xi,\q \ar{w_2} \xii,\q'}{\xi,\p+\q \ar{w_1\wedge w_2} \xii,\p'+\q'}$
\end{enumerate}

\noindent We reason inductively on the derivation of the $w_1$-step. If
one of the Rules 1--5 is applied
then the result follows by the form of the rule.
For Rule 6, by the induction hypothesis, 
$\p = \p'$, $\q = \q '$ and hence $\p+\q = \p'+\q'$. Moreover, $\zeta = \xi \timeinc$.
\eprf
Similarly, it can be observed that for transmission actions (actions from the set $\R\mathop{:}\W$)
the data state does not change either; 
the process, however, changes. That means \plat{$\xi, \p \ar{rw} \zeta, \q  \Rightarrow \zeta=\xi\timeinc$}
for all $rw \in \RW$.
Furthermore, this result can easily be lifted to all other layers of our process algebra (with minor adaptations: for example on node expressions one has
to consider $\tick$ actions).

To shorten the forthcoming definitions and properties we 
% introduce a new notation. 
% In case a sequential process expression $P$ enables a transition labelled with an action $a\in\act$
% we write $P{\ar{a}}$; that means $P{\ar{a}}$ iff $P \ar{a} Q$ for some $Q$. Moreover, we introduce 
% new labels as follows:
use the following abbreviations:
\noindent
% Below, we use $P$ and $Q$ to range over the states of \tawn: pairs $\xi,p$ of a valuation and a
% sequential process expression, allowing the auxiliary construct 
% $\stargcast{\range}{\dval{m}}[n,o].\p \prio \q$.
\newcommand{\arrec}{\ar{\makebox[22pt]{\scriptsize\bf rcv.}}}
\newcommand{\arsend}{\ar{\makebox[22pt]{\scriptsize\bf send}}}
\newcommand{\arwait}{\ar{\makebox[22pt]{\scriptsize\bf wait}}}
\newcommand{\arother}{\ar{\makebox[22pt]{\scriptsize\bf other}}}
\newcommand{\arinb}{\ar{\makebox[19.7pt]{\scriptsize\bf inb}}}
\newcommand{\narrec}{\nar{\makebox[19.7pt]{\scriptsize\bf rcv.}}}
\newcommand{\narsend}{\nar{\makebox[19.7pt]{\scriptsize\bf send}}}
\newcommand{\narwait}{\nar{\makebox[19.7pt]{\scriptsize\bf wait}}}
\newcommand{\narother}{\nar{\makebox[19.7pt]{\scriptsize\bf other}}}
\newcommand{\narinb}{\nar{\makebox[19.7pt]{\scriptsize\bf inb}}}
\begin{enumerate}
\item $P {\arrec}$ iff $P{\ar{\receive{m}}}$ for some $m\in\tMSG$,
\item $P {\arsend}$ iff $P{\ar{\send{m}}}$ for some $m\in\tMSG$,
\item $P {\arwait}$ iff $P{\ar{w_1}}$ for some $w_1\in\W$,
\item $P {\arother}$ iff $P{\ar{a}}$ for some $a\in\act$ not of the forms above,
\end{enumerate}
where $P$ is a parallel process expression---possibly
incorporating the construct
$\stargcast{\range}{\dval{m}}[n,o].\p$, but never in a $+$-context.
Note that the last line covers also transmission actions $rw\in\RW$.
The following result%
\journalonly{\rpPa{The following result}{\Thm{SP-classification}}} shows that the wait actions of a sequential
process (with data evaluation) $P$ are completely determined
by the other actions $P$ offers.

\toApp
{%local redefinition of \thelemma
\renewcommand{\thelemma}{A.\arabic{applemmacount}}
\stepcounter{applemmacount}
\addtocounter{lemma}{-1}
\begin{lemma}\label{lem:initials}
Let $X(\keyw{var}_1\comma\ldots\comma\keyw{var}_n) \stackrel{{\it def}}{=} p$.
Then\vspace{-1ex}
\begin{enumerate}
\item $\xi,X(\dexp{exp}_1,\ldots,\dexp{exp}_n) \arrec$ \quad iff \quad
  $\emptyset[\keyw{var}_i:=\xi(\dexp{exp}_i)]_{i=1}^n,\p \arrec$\,,
\item $\xi,X(\dexp{exp}_1,\ldots,\dexp{exp}_n) \arsend$ \quad iff \quad
  $\emptyset[\keyw{var}_i:=\xi(\dexp{exp}_i)]_{i=1}^n,\p \arsend$\,,
\item $\xi,X(\dexp{exp}_1,\ldots,\dexp{exp}_n) \arother$ \quad iff \quad
  $\emptyset[\keyw{var}_i:=\xi(\dexp{exp}_i)]_{i=1}^n,\p \arother$\,,
\item $\xi,X(\dexp{exp}_1,\ldots,\dexp{exp}_n) \ar{w_1}$\hspace{11.7pt}
\quad iff \quad
  $\emptyset[\keyw{var}_i:=\xi(\dexp{exp}_i)]_{i=1}^n,\p \ar{w_1}%
  $\,,\hfill$\forall w_1\mathop\in\W$\,.
\end{enumerate}
\end{lemma}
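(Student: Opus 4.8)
The plan is to argue directly from the two recursion rules of Table~\ref{tab:sos sequential}, with no induction at all: the claim is really just a reading of \emph{which} rules can fire from a state whose process component is a bare call $X(\dexp{exp}_1,\ldots,\dexp{exp}_n)$. First I would observe that the only rules whose conclusion carries such a call on the left are \tableref{rec}, \tableref{rec-w} and \tableref{w}. Since the valuation $\emptyset[\keyw{var}_i:=\xi(\dexp{exp}_i)]_{i=1}^n$ appearing on the right-hand side of the lemma is only meaningful when every $\xi(\dexp{exp}_i)$ is defined, I work under this assumption; it is exactly the side condition shared by \tableref{rec} and \tableref{rec-w}. Under it the state $\xi,X(\dexp{exp}_1,\ldots,\dexp{exp}_n)$ is not unvalued, so \tableref{w} cannot apply, and hence every transition of this state is produced either by \tableref{rec} (for labels in $\act-\W$) or by \tableref{rec-w} (for labels in $\W$).

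For parts 1--3 I would use that $\receive{m}$, $\send{m}$ and every ``other'' label all lie in $\act-\W$. Rule \tableref{rec} says that, for each $a\in\act-\W$, one has $\xi,X(\dexp{exp}_1,\ldots,\dexp{exp}_n)\ar{a}\xii,\p'$ precisely when $\emptyset[\keyw{var}_i:=\xi(\dexp{exp}_i)]_{i=1}^n,\p\ar{a}\xii,\p'$, and by the previous paragraph this is the only way such a transition arises. Existentially quantifying over $m\in\tMSG$ then yields parts 1 and 2, and quantifying over those $a\in\act-\W$ that are neither a receive nor a send (which are exactly the ``other'' labels, wait labels being excluded from $\act-\W$) yields part 3.

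For part 4 I would invoke \tableref{rec-w}: for each $w_1\in\W$ its premise is $\emptyset[\keyw{var}_i:=\xi(\dexp{exp}_i)]_{i=1}^n,\p\ar{w_1}\ldots$ and its conclusion is $\xi,X(\dexp{exp}_1,\ldots,\dexp{exp}_n)\ar{w_1}\ldots$, carrying the \emph{same} label $w_1$. As this is the only rule delivering a $\W$-labelled transition from the call, the two sides are equivalent for each fixed $w_1$, which is precisely the stated quantification.

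The one point that needs care---and the only real obstacle---is the bookkeeping of the label classes together with the definedness side condition. I would state explicitly that the ``other'' labels form a subset of $\act-\W$ disjoint from the receive- and send-labels, so that parts 1--3 are collectively exactly the content of \tableref{rec} and involve \tableref{rec-w} nowhere, while part 4 is the exclusive province of \tableref{rec-w}. I would also record that writing $\emptyset[\keyw{var}_i:=\xi(\dexp{exp}_i)]_{i=1}^n$ presupposes $\xi(\dexp{exp}_i)\defined$ for all $i$, which is simultaneously the side condition enabling \tableref{rec} and \tableref{rec-w} and the reason \tableref{w} is excluded (the call is then not unvalued). With these two remarks in place, each of the four equivalences is a one-line reading off the corresponding recursion rule.
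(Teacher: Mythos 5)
Your proof is correct and is essentially the paper's: the paper disposes of the lemma in one line, deriving parts 1--3 directly from Rule \tableref{rec} and part 4 from Rule \tableref{rec-w}. Your additional bookkeeping---that these are the only rules whose conclusion matches a bare call, and that the definedness of the $\xi(\dexp{exp}_i)$ both enables those rules and rules out \tableref{w}---is a harmless elaboration of the same reading-off argument.
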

}
\prf
The first three claims follow immediately from Rule \tableref{rec},
and the last claim from \tableref{rec-w}.
\eprf
\etoApp

\begin{theorem}
\label{thm:SP-classification}
Let $P$ be a state of a sequential process.\vspace{-1ex}
\begin{enumerate}
\item $P{\stackrel{\w}{\longrightarrow}}$   	\quad iff \quad $P {\narrec} \wedge P {\narsend} \wedge P {\narother}$\,.
\item $P{\stackrel{\wtr}{\longrightarrow}}$	\quad iff \quad $P {\arrec}   \wedge P {\narsend} \wedge P {\narother}$\,.
\item $P{\stackrel{\ws}{\longrightarrow}}$	\quad iff \quad $P {\narrec} \wedge P {\arsend}   \wedge P {\narother}$\,.
\item $P{\stackrel{\wrs}{\longrightarrow}}$ \quad iff \quad $P {\arrec}   \wedge P {\arsend}   \wedge P {\narother}$\,.
\end{enumerate}
\end{theorem}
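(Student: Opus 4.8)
The plan is to establish all four equivalences simultaneously by induction on the structure of the sequential process expression $p$ underlying the state $P=\xi,p$. In each case I would read off from the rules of Table~\ref{tab:sos sequential} the \emph{exact} set of transitions available, determine the $\arrec$/$\arsend$/$\arother$ status together with which wait action (if any) is enabled, and check this against the four right-hand sides. Since it will turn out that every state has at least one outgoing transition, and that a wait action is enabled precisely in the absence of any receive, send, or other transition, both directions of each biconditional come out of the same case analysis; the four right-hand sides, all sharing the conjunct $P\narother$, partition the $\narother$-states according to the two bits ``$P\arrec$'' and ``$P\arsend$''.

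For the prefix forms the argument is direct. For $\receive{\keyw{msg}}.p'$ only \tableref{rcv} and \tableref{wr} apply, yielding $P\arrec$, $P\narsend$, $P\narother$ and exactly the wait action $\wtr$, i.e.\ case~2; for $\send{\dexp{ms}}.p'$ with $\xi(\dexp{ms})\defined$, rules \tableref{snd} and \tableref{ws} give $P\narrec$, $P\arsend$, $P\narother$ and $\ws$, i.e.\ case~3. The remaining prefixes $\broadcastP{\dexp{ms}}$, $\groupcastP{\dexp{dests}}{\dexp{ms}}$, $\unicast{\dexp{dest}}{\dexp{ms}}$, $\deliver{\dexp{data}}$ and the assignment, when their subterms are defined, can only perform an ``other'' action ($\tau$ or $\deliver{\cdot}$ via \tableref{bc}, \tableref{gc}, \tableref{uc}, \tableref{del}, \tableref{ass}) and no wait, so every right-hand side fails. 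The uniform exception is an undefined subterm: the state is then unvalued, \tableref{w} is the only applicable rule, it produces $\w$ and nothing else, and we land in case~1. Guards behave the same way, splitting via \tableref{grd}/\tableref[ngrd]{$\neg$grd} on whether $\varphi$ is satisfiable under some extension of $\xi$. Finally, the auxiliary construct $\stargcast{\range}{m}[n,o].p'\prio q'$ always offers an ``other'' transition---a transmission step $\Rw{\w}$ via \tableref{tr}/\tableref{tr-o} when $n>0$, or the $\starcastP{m}$ via \tableref{sc}/\tableref[nsc]{$\neg$sc} when $n=0$---and never a wait, so again all four right-hand sides fail.

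The one genuinely inductive case is choice $p_1+p_2$. Writing $P_i=\xi,p_i$, rules \tableref{alt-l} and \tableref{alt-r} show that each of $P\arrec$, $P\arsend$, $P\arother$ holds iff it holds for $P_1$ or for $P_2$ (these being non-wait labels in $\act-\W$), while \tableref{alt-w} shows that $P$ can wait iff both $P_1$ and $P_2$ can, the combined action being $w_1\wedge w_2$. The crux is the observation that, under the encoding $\w\mapsto(0,0)$, $\wtr\mapsto(1,0)$, $\ws\mapsto(0,1)$, $\wrs\mapsto(1,1)$---where the first bit records whether a receive is available and the second whether a send is---the table defining $\wedge$ on $\W$ is exactly componentwise disjunction. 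Combining this with the induction hypothesis, which says that whenever $P_i\narother$ the unique enabled wait action of $P_i$ is the one whose code records the truth values of $P_i\arrec$ and $P_i\arsend$, one gets that $P$ can wait iff both $P_1\narother$ and $P_2\narother$ (equivalently $P\narother$), and that its wait action then records the truth values of $P\arrec$ and $P\arsend$---precisely the claim. I expect this verification, that the hand-crafted $\wedge$-table coincides with disjunction of the receive/send bits, to be the main point needing care.

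Process-name calls $X(\dexp{exp}_1,\ldots,\dexp{exp}_n)$ are handled via Lemma~\ref{lem:initials}. When all $\xi(\dexp{exp}_i)$ are defined, rules \tableref{rec} and \tableref{rec-w} make $P$ agree with the body state $\emptyset[\keyw{var}_i:=\xi(\dexp{exp}_i)]_{i=1}^n,p$ on each of $\arrec$, $\arsend$, $\arother$ and every wait action, so the result transfers from the body; when some $\xi(\dexp{exp}_i)$ is undefined the call is unvalued and only \tableref{w} fires, giving case~1. This reduction is not circular: guardedness forbids a bare process-name call at the head of a body or directly under a $+$, so the head structure of any body decomposes through $+$ only into prefix, guard, assignment and unicast leaves, all of which are settled by the cases above without any further unfolding.
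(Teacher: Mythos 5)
Your proof is correct and follows essentially the same route as the paper's: a structural case analysis over the rules of Table~\ref{tab:sos sequential}, with the choice case resting on the observation that $\wedge$ on $\W$ acts componentwise on the receive/send bits, and process-name calls handled via Lemma~\ref{lem:initials} together with guardedness of the defining body. The only organizational difference is that the paper runs the induction in two passes (first over guarded terms, then over arbitrary terms, invoking the first pass in the process-name case), whereas you inline that first pass as your non-circularity remark that a guarded body decomposes through $+$ into directly-settled leaves---the same well-founded argument, and it correctly sidesteps the pitfall (the body being a larger term than the call) that the paper explicitly notes would defeat a naive single structural induction.
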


\begin{proofsketch}
The proof is by structural induction.
It requires, however, a distinction between guarded terms (as defined in Footnote~\ref{guarded})
 and unguarded ones.
\end{proofsketch}
We could equivalently have omitted all transition rules involving wait actions from
\Tab{sos sequential}, and defined the wait transitions for sequential processes as described by 
\Thm{SP-classification} and \Prop{time determinism}.
That our transition rules give the same result constitutes a sanity check of our operational semantics.

%\prfnobox
\prfnoboxatend
Let $P=\xi,s$.
Let us first show the result for guarded terms~$s$ (as defined in Footnote~\ref{guarded}).
We reason inductively on the structure of $s$.
\vspace{-1ex}
\begin{itemize}
  \item $s = \unicast{\dexp{dest}}{\dexp{ms}}\p \prio \q $ or $s = \alpha.\p $ with
    $\alpha \in \{\groupcastP{\dexp{dests}}{\dexp{ms}},\linebreak[3]
    \broadcastP{\dexp{ms}}, \deliver{\dexp{data}}, \assignment{\keyw{var}\mathbin{:=}\dexp{exp}}\}$.
    In case $\xi(s)\defined$ we have $P {\arother}$ and $P {\narwait}$,
    using the rules of Table \ref{tab:sos sequential}.
    In case $\xi(s){\uparrow}$ we have $P{\ar{\w}}$ and
    $P {\narrec} \wedge P {\narsend} \wedge P {\narother}$.
  \item $s = \receive{\msg}.\p $.
    In this case $P{\ar{\wtr}}$ and
    $P {\arrec} \wedge P {\narsend} \wedge P {\narother}$.
  \item $s = \send{\dexp{ms}}.\p $.
    In case $\xi(s)\defined$ we have $P{\ar{\ws}}$ and
    $P {\narrec} \wedge P {\arsend} \wedge P {\narother}$.
    In case $\xi(s){\uparrow}$ we have $P{\ar{\w}}$ and
    $P {\narrec} \wedge P {\narsend} \wedge P {\narother}$.
  \item $s = \stargcast{\range}{\dval{m}}[n,o].\p$.
    In this case $P {\arother}$ and $P {\narwait}$.
  \item $s = [\varphi]\p$.
    In case \plat{$\xi\stackrel{\varphi}{\rightarrow}\xii$} for some $\xii$ we have
    $P{\ar{\tau}}$, hence $P {\arother}$, and $P {\narwait}$.
    In case \plat{$\xi\nar{\varphi}$} we have $P{\ar{\w}}$ and
    $P {\narrec} \wedge P {\narsend} \wedge P {\narother}$.
  \item $s = s_1 + s_2$. Since $s$ is a guarded term, also $s_1$ and $s_2$ must be guarded terms.\vspace{2pt}
    \begin{itemize}
    \item Assume $\xi,s_i{\arother}$ for $i=1$ or $2$.
          By induction, $\xi,s_i{\narwait}$, and hence $\xi,s{\narwait}$.
          Moreover, by Rules~\tableref{alt-l} and~\tableref{alt-r} of Table \ref{tab:sos sequential}, $\xi,s {\arother}$.\\
          For the remaining cases assume that $\xi,s_i{\narother}$ for $i=1$ and $2$.
    \item Depending on whether $\xi,s_i {\arrec}$ and $\xi,s_i {\arsend}$ for $i=1,2$
          there are $2^4{=}16$ cases left. As they all proceed in the same way, we show only one.\\
          Assume $\xi,s_1 {\narrec} \ans \xi,s_1 {\arsend} \ans \xi,s_2 {\arrec}
          \ans \xi,s_2 {\narsend}$. By induction $\xi,s_1 {\ar{\ws}}$ and $\xi,s_2 {\ar{\wtr}}$.
          By Rule~\tableref{alt-w} of Table \ref{tab:sos sequential} $\xi,s {\ar{\wrs}}$, and by
          Rules~\tableref{alt-l} and~\tableref{alt-r}
          $\xi,s {\arrec}$ and $\xi,s {\arsend}$.\vspace{2pt}
    \end{itemize}
  \item $s = X(\dexp{exp}_1,\dots,\dexp{exp}_n)$.
    This case cannot occur, as $s$ is not a guarded term.
\end{itemize}
Let us now show the result for all terms, again using structural induction on $s$.\linebreak[2]
All cases proceed exactly as above (but skipping the guardedness check in the case for $+$),
except for the case $s=X(\dexp{exp}_1,\dots,\dexp{exp}_n)$.\vspace{-2ex}
\begin{itemize}
  \item $s = X(\dexp{exp}_1,\dots,\dexp{exp}_n)$.
  In this case $X(\keyw{var}_1\comma\ldots\comma\keyw{var}_n) \stackrel{{\it def}}{=} p$ for a guarded term $p$.
  Now the claim is an immediate corollary of \Lem{initials} and the result for guarded terms $p$ obtained above.
\qed
\end{itemize}
\eprf

\toApp
%% use \toApp to move paragraphs to the appendix. 
It is tempting to integrate the two parts of the above proof into one treatment of guarded and
unguarded terms alike. A problem with that approach would be that in the very last case $p$ is a
bigger term than $s$, so that structural induction on $s$ does not work. It is not a priori clear
which inductive argument would take its place. In fact, there is no straightforward solution to
this, because if there were, the result would hold without the restriction of \tawn to guarded
recursion, considering that that this restriction is not needed for \Lem{initials} and is not used
anywhere else in the above proof than in the topmost case distinction. \journalonly{Yet,}
\etoApp

\Thm{SP-classification}
does not hold in the presence of unguarded recursion.\vspace{1pt}
A counterexample is given by the expression $X()$
with \plat{$X()\stackrel{{\it def}}{=}X()$}, for which we would have
$X() {\narrec} \wedge X() {\narsend} \wedge X() {\narother} \wedge X() {\narwait}$.

\begin{lemma}\label{lem:transmission actions}
Let $P$ be a state of a sequential or parallel process.
If \plat{$P \ar{\Rw w_1}$} for some $R\subseteq\tIP$ and $w_1\mathbin\in\W$ then
\plat{$P \ar{R'\!\mathop:w_1}$} for any $R'\subseteq \tIP$.
\end{lemma}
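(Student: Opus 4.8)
The plan is to induct on the derivation of the given transition $P \ar{\Rw{w_1}}$, exploiting the fact that the range $R$ is a genuinely free parameter of the transmission rules: it affects only the target state (through the replacement of the destination set \range by $\range\cap R$) and neither the existence of the step nor its wait-action component $w_1$. First I would isolate the rules whose conclusion can carry a transmission label $\Rw{w_1}$. On the sequential level these are the base rules \tableref{tr} and \tableref{tr-o}, together with the pass-through rules \tableref{rec}, \tableref{alt-l} and \tableref{alt-r}; on the parallel level they are \tableref{p-tl}, \tableref{p-tr} and \tableref{p-t}. The genuine wait-action rules \tableref{alt-w} and \tableref{p-w} never produce a transmission label and so need not be considered.

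The base cases are immediate: \tableref{tr} and \tableref{tr-o} are both stated with the quantifier ``$\forall R\subseteq\tIP$'', so a single instance for some $R$ at once yields an instance for every $R'$, in each case with wait component $\w$. For the pass-through rules the transmission label is copied verbatim from a strictly shorter premise derivation; the induction hypothesis supplies that premise for every $R'$ while keeping $w_1$ fixed, and re-applying the same rule re-establishes the conclusion for every $R'$. (One may note that \tableref{rec} cannot in fact fire on a transmission label, since the guarded body of a defining equation is a term of the source grammar, whose first step---a $\tau$, an input, a \textbf{send}, a \textbf{deliver}, or a wait---is never a transmission; but the induction does not rely on this observation.)

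The parallel rules carry the only real content. For \tableref{p-tl} the premises are $P \ar{\Rw{w_1}} P'$ and $Q \ar{w_2} Q'$ with $w_3 = w_1\parl w_2$; here the wait action $w_2$ of $Q$ does not mention $R$ and therefore survives unchanged, while the induction hypothesis yields $P \ar{R'\!\mathop{:}w_1}$ for every $R'$, so re-applying \tableref{p-tl} gives $P\parl Q \ar{R'\!\mathop{:}w_3}$ with the same $w_3$. Rule \tableref{p-tr} is symmetric. The case I would treat most carefully is \tableref{p-t}, whose two premises $P \ar{\Rw{w_1}} P'$ and $Q \ar{\Rw{w_2}} Q'$ must agree on a common range. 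Applying the induction hypothesis to each premise separately delivers $P \ar{R'\!\mathop{:}w_1}$ and $Q \ar{R'\!\mathop{:}w_2}$ for \emph{every} $R'$; hence for any fixed target range $R'$ I would select these two matching instances and recombine them through \tableref{p-t}, obtaining $P\parl Q \ar{R'\!\mathop{:}w_3}$. Since $w_1$ and $w_2$ are unchanged, the partial-function value $w_3 = w_1\parl w_2$ stays defined and equal, so the rule's side condition is never jeopardised.

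The main obstacle is thus purely one of bookkeeping: one must keep the wait component fixed while letting the range vary, and in the synchronising rule \tableref{p-t} ensure that the two independently produced transitions can be made to agree on the new range. This is exactly what the strengthened phrasing of the induction hypothesis---``for every $R'$'', rather than merely ``for some''---secures, so that the common choice of $R'$ in \tableref{p-t} is available without further work.
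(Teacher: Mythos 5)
Your proof is correct and follows the same route as the paper, which disposes of the sequential case by observing that Rules \tableref{tr} and \tableref{tr-o} are quantified over all $R\subseteq\tIP$ and handles the parallel case by a structural induction on the rules \tableref{p-tl}, \tableref{p-tr} and \tableref{p-t}. Your version merely spells out the bookkeeping (the fixed wait component, the common choice of $R'$ in \tableref{p-t}, and the vacuity of \tableref{rec}) that the paper leaves implicit in the word ``trivial''.
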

\prfatend
For sequential processes, this follows directly from 
Rules \tableref{tr} and \tableref{tr-o} of \Tab{sos sequential}.
For parallel processes, it is a trivial structural induction.
\eprf

\begin{observation}\label{obs:transmission actions}
Let $P$ be a state of a sequential process.
If \plat{$P \ar{\Rw w_1}$} for some $w_1\mathbin\in\W$ then $w_1$ must be $\w$ and all
outgoing transitions of $P$ are labelled $R'\mathop:\w$.
\end{observation}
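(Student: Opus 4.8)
The plan is to analyse the derivation of the transmission transition $P \ar{\Rw{w_1}}$ in Table~\ref{tab:sos sequential} and to show that it forces $P$ into a single syntactic shape from which only transmission steps emanate. Write $P = \xi,s$. The starting observation is that, among all rules of Table~\ref{tab:sos sequential}, only \tableref{tr} and \tableref{tr-o} have a conclusion whose label lies in $\RW$, and both introduce the label $\Rw{\w}$, i.e.\ with wait-component $\w$; the only rules that can carry an existing $\RW$-label to their conclusion, namely \tableref{alt-l}, \tableref{alt-r} and \tableref{rec} (which apply to arbitrary $a \in \act-\W$, and $\RW \subseteq \act-\W$), merely copy the label of their premise unchanged. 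A straightforward induction on the derivation of $P \ar{\Rw{w_1}}$ therefore yields $w_1 = \w$, establishing the first half of the claim independently of any shape analysis.

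For the second half I would pin down the shape of $s$. Since a transmission label can ultimately be produced only by \tableref{tr} or \tableref{tr-o}, any such derivation must reach a subterm of the form $\stargcast{\range}{m}[n{+}1,o].\p \prio \q$. I would argue by cases on the outermost operator of $s$ that this subterm can only be $s$ itself. The case $s = s_1 + s_2$ is excluded by the standing assumption that the auxiliary \textbf{*cast} construct never occurs in a $+$-context: neither $s_1$ nor $s_2$ contains a \textbf{*cast}, so neither can emit an $\RW$-label, and \tableref{alt-l}/\tableref{alt-r} have nothing to propagate. The case $s = X(\dexp{exp}_1,\dots,\dexp{exp}_n)$ is excluded because a defining body is written in the original AWN syntax, which does not contain the auxiliary \textbf{*cast} construct, so the premise of \tableref{rec} cannot carry an $\RW$-label. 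Every other shape of $s$ ($\cond{\varphi}\p$, $\assignment{\keyw{var}:=\dexp{exp}}\p$, $\alpha.\p$, $\unicast{\dexp{dest}}{\dexp{ms}}.\p \prio \q$) simply has no rule producing an $\RW$-label at all. Hence $s = \stargcast{\range}{m}[n{+}1,o].\p \prio \q$.

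It then remains to inspect the outgoing transitions of this state directly. With first time-parameter $n{+}1 \geq 1$, rules \tableref{sc} and \tableref[nsc]{$\neg$sc} are blocked, since they require that parameter to be $0$; rule \tableref{w} does not apply either, because a \textbf{*cast} state is not unvalued; and no other rule matches. Thus the only applicable rules are \tableref{tr} and (when $o \geq 1$) \tableref{tr-o}, each of which produces a label $R'\mathbin{:}\w$, giving exactly the desired conclusion that all outgoing transitions are labelled $R'\mathbin{:}\w$ (consistently with Lemma~\ref{lem:transmission actions}). The main obstacle is the second paragraph: the whole argument hinges on making precise, and then exploiting, the two structural facts that the auxiliary \textbf{*cast} construct is the unique generator of $\RW$-labels and that it can neither be hidden under a $+$ nor be introduced by unfolding a recursion. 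Once these are in hand, the classification of the outgoing transitions is immediate.
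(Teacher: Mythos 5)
Your argument is correct: the paper states this result as an \emph{Observation} and supplies no proof at all, the intended justification being precisely the inspection of the rules of Table~\ref{tab:sos sequential} that you carry out (only \tableref{tr} and \tableref{tr-o} generate $\RW$-labels, both with wait-component $\w$; a \textbf{*cast} state with positive first time-parameter is never unvalued and matches no other rule). The one point worth flagging is that your exclusion of the case $s = X(\dexp{exp}_1,\dots,\dexp{exp}_n)$ tacitly relies on the fact that no term of the original AWN grammar---in particular no defining body, nor any body reached by further unfoldings---can perform an $\RW$-step, which itself needs a small induction on derivations using that \textbf{*cast} is the unique source of $\RW$-labels; but this is exactly the structural fact you already isolate, so the proof stands.
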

\journalonly{\comsP{skip/split the following sentence}}
For $N$ a (partial) network expression, or a parallel process expression, write
$N {\arinb}$ iff $N{\ar{a}}$ with $a$ of the form\vspace{1pt}
$\colonact{R}{\starcastP{m}}$, $\colonact{\dval{ip}}{\deliver{\dval{d}}}$ (or
$\deliver{\dval{d}}$) or $\tau$---an \emph{instantaneous non-blocking action}.\vspace{1pt}
\journalonly{\chW{For a parallel process expression $P$, write
$P {\arinb}$ iff $P{\ar{a}}$ with $a$ of the form\vspace{1pt}
$\colonact{R}{\starcastP{m}}$, $\colonact{\dval{ip}}{\deliver{\dval{d}}}$ or $\tau$---an \emph{instantaneous non-blocking action}.
}}
Hence, for a parallel process expression $P$, \plat{$P {\arother}$} iff \plat{$P {\arinb}$} or \plat{$P \ar{\Rw w_1}$} for $w_1\mathbin\in\W$.
Furthermore, write $P {\ar{\bf time}}$ iff $P{\ar{w_1}}$ or $P{\ar{\Rw w_1}}$ for some $w_1\mathbin\in\W$.
We now lift Theorem~\ref{thm:SP-classification} to the level of parallel processes.\pagebreak[3]

\begin{theorem}
\label{thm:PP-classification}
Let $P$ be a state of a parallel process.\vspace{-1ex}
\begin{enumerate}
\item $P{\stackrel{\w}{\longrightarrow}} \vee P{\ar{\Rw\w~\,}}$\hspace{-0.7pt} \quad iff \quad $P {\narrec} \wedge P {\narsend} \wedge P {\narinb}$\,.
\item $P{\stackrel{\wtr}{\longrightarrow}} \vee P{\ar{\Rw\wtr\,}}$ \quad iff \quad $P {\arrec} \wedge P {\narsend} \wedge P {\narinb}$\,.
\item $P{\stackrel{\ws}{\longrightarrow}} \vee P{\ar{\Rw\ws\,}}$ \quad iff \quad $P {\narrec} \wedge P {\arsend} \wedge P {\narinb}$\,.
\item $P{\stackrel{\wrs}{\longrightarrow}} \vee P{\ar{\Rw\wrs}}$\! \quad iff \quad $P {\arrec} \wedge P {\arsend} \wedge P {\narinb}$\,.
\end{enumerate}
\end{theorem}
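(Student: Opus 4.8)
The plan is to prove \Thm{PP-classification} by structural induction on the parallel process expression $P$: the base case is a sequential process $P = \xi,s$, handled through \Thm{SP-classification}, and the inductive case is a composition $P = Q \parl R$. It helps to read the four clauses as one assertion: $P$ admits a time step---some $P \ar{w_1}$ or $P \ar{\Rw{w_1}}$ with $w_1\in\W$ and $R\subseteq\tIP$---if and only if $P \narinb$, and in that case the wait-component $w_1$ is the unique label whose \textbf{receive}-capability records whether $P\arrec$ and whose \textbf{send}-capability records whether $P\arsend$, following the pattern $\w,\wtr,\ws,\wrs$ of clauses 1--4. Because each left-hand side is a disjunction over a pure-wait and a transmission step, I never have to distinguish the two: throughout I track only the component $w_1$ of a time step and whether such a step exists at all.

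For the base case I use \Thm{SP-classification} together with the identity noted just before the theorem, that $P \arother$ iff $P \arinb$ or $P \ar{\Rw{w_1}}$ for some $w_1\in\W$, and split on whether $s$ is in the middle of a \textbf{*cast}. If it is not, then $s$ has no $\Rw{w_1}$-transition, so $\arother$ and $\arinb$ coincide, the transmission disjuncts on the left vanish, and every clause of \Thm{PP-classification} reduces exactly to the corresponding clause of \Thm{SP-classification}. If $s$ is transmitting, then by Observation~\ref{obs:transmission actions} all its transitions are labelled $\Rw\w$; hence $s\narrec$, $s\narsend$, $s\narinb$ and $s$ has no pure-wait step, so clause~1 holds via its $\Rw\w$ disjunct while clauses 2--4 are false on both sides.

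For the inductive step I first read the ``profiles'' of $Q \parl R$ off Rules \tableref{p-al}, \tableref{p-ar} and \tableref{p-a} of \Tab{sos parallel}: a \textbf{receive} of the composite comes only from $R$, a \textbf{send} only from $Q$, and an instantaneous non-blocking action comes from an $\arinb$ of either factor or from the synchronisation of a \textbf{receive} of $Q$ with a \textbf{send} of $R$. Thus $Q\parl R \arrec \Leftrightarrow R\arrec$, $\;Q\parl R \arsend \Leftrightarrow Q\arsend$, and $Q\parl R \arinb$ iff $Q\arinb$, or $R\arinb$, or ($Q\arrec$ and $R\arsend$). Rules \tableref{p-w}, \tableref{p-tl}, \tableref{p-tr} and \tableref{p-t} then show that $Q\parl R$ has a time step with component $w_3$ exactly when $Q$ and $R$ have time steps with components $w_1$ and $w_2$ such that $w_3 = w_1 \parl w_2$ is defined; here \Lem{transmission actions} removes any worry about the shared range in Rule \tableref{p-t}, since every range is available once one is. Applying the induction hypothesis to $Q$ and to $R$ rewrites ``$Q$ has a time step with component $w_1$'' as ``$Q\narinb$ and $w_1$ is the component dictated by $Q$'s \textbf{receive}/\textbf{send} profile'', and likewise for $R$.

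The main obstacle, and the conceptual heart of the argument, is to check that the partial function $\parl$ on $\W$ matches these profiles exactly. Over the $4\times4$ table I must verify two things: that $w_1\parl w_2$ is undefined precisely in the four entries where $Q$'s component can \textbf{receive} (is $\wtr$ or $\wrs$) and $R$'s component can \textbf{send} (is $\ws$ or $\wrs$)---i.e.\ exactly when $Q\arrec \wedge R\arsend$, the case in which Rule \tableref{p-a} contributes a preempting $\tau$ so that $Q\parl R\arinb$---and that in each defined entry the value $w_1\parl w_2$ is the label whose \textbf{receive}-capability is inherited from $w_2$ (hence from $R$) and whose \textbf{send}-capability is inherited from $w_1$ (hence from $Q$). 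Granting this finite check, the two regimes line up. If $Q\parl R\arinb$, then either some factor is $\arinb$ (and by the induction hypothesis that factor, hence the composite, has no time step) or we lie in an undefined entry (so again no time step), and all four right-hand sides fail. If instead $Q\parl R\narinb$, then both factors are $\narinb$ and we lie in a defined entry, so a unique time step exists whose component faithfully records $Q\parl R \arrec \Leftrightarrow R\arrec$ and $Q\parl R \arsend \Leftrightarrow Q\arsend$. This is exactly the statement of \Thm{PP-classification} for $Q\parl R$, closing the induction.
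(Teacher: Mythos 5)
Your proposal is correct and follows essentially the same route as the paper's proof: structural induction with the base case split on whether the sequential state has a transmission step (handled by Observation~\ref{obs:transmission actions}) or not (handled by Theorem~\ref{thm:SP-classification}), and the inductive step reduced to checking that the partial operation $\parl$ on $\W$ is undefined exactly when the left factor can \textbf{receive} and the right can \textbf{send}, and otherwise produces the label matching the composite's profile. Your explicit appeal to Lemma~\ref{lem:transmission actions} to reconcile the shared range in Rule \tableref{p-t} is a detail the paper leaves implicit, but it is the same argument.
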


\prfatend
We apply structural induction on $P$. First suppose $P$ has the form $\xi,s$.
In case $P{\ar{R:w_1}}$ with $w_1\mathbin\in\W$, the claim follows from \Obs{transmission actions}.\\
In case $P{\nar{R:w_1}}$ for all $w_1\mathbin\in\W$, the claim follows from \Thm{SP-classification}.

Now consider an expression $P\parl Q$. In case $P{\arinb}$ or $Q{\arinb}$ then also
$P\parl Q {\arinb}$ by Rules \tableref{p-al} and \tableref{p-ar} of \Tab{sos parallel}. By induction,
$P{\nar{\bf time}}$ or $Q{\nar{\bf time}}$, so $P\parl Q {\nar{\bf time}}$.
For the remaining cases assume that $P{\nar{\bf inb}}$ and $Q{\nar{\bf inb}}$.

In case $P{\arrec}$ and $Q{\arsend}$ we have $P\parl Q {\ar{\tau}}$ by the third
rule of \Tab{sos parallel}. Moreover, $P\parl Q {\nar{\bf time}}$.
For the remaining cases assume that the combination $P{\arrec}$ and $Q{\arsend}$ does
not apply, so that $P\parl Q {\nar{\bf inb}}$.

In case $P{\narsend}$ and $Q{\narrec}$ we have
$P\parl Q {\narsend}$ and $P\parl Q {\nar{\bf rec}}$.
By induction, $P{\ar{\w}} \vee P{\ar{\Rw\w}} \vee P{\ar{\wtr}} \vee P{\ar{\Rw\wtr}}$
and $Q{\ar{\w}} \vee Q{\ar{\Rw\w}} \vee Q{\ar{\ws}} \vee Q{\ar{\Rw\ws}}$, so that
$P\parl Q {\ar{\w}} \vee P\parl Q {\ar{\Rw\w}}$.

In case $P{\narsend}$ and $Q{\arrec}$ we have
$P\parl Q {\narsend}$ and $P\parl Q {\ar{\bf rec}}$.
By induction, $P{\ar{\w}} \vee P{\ar{\Rw\w}} \vee P{\ar{\wtr}} \vee P{\ar{\Rw\wtr}}$
and $Q{\ar{\wtr}} \vee Q{\ar{\Rw\wtr}} \vee Q{\ar{\wrs}} \vee Q{\ar{\Rw\wrs}}$, so that
$P\parl Q {\ar{\wtr}} \vee P\parl Q {\ar{\Rw\wtr}}$.

In case $P{\arsend}$ and $Q{\narrec}$ we have
$P\parl Q {\arsend}$ and $P\parl Q {\nar{\bf rec}}$.
By induction, $P{\ar{\ws}} \vee P{\ar{\Rw\ws}} \vee P{\ar{\wrs}} \vee P{\ar{\Rw\wrs}}$
and $Q{\ar{\w}} \vee Q{\ar{\Rw\w}} \vee Q{\ar{\ws}} \vee Q{\ar{\Rw\ws}}$, so that
$P\parl Q {\ar{\ws}} \vee P\parl Q {\ar{\Rw\ws}}$.

In case $P{\arsend}$ and $Q{\arrec}$ we have
$P\parl Q {\arsend}$ and $P\parl Q {\ar{\bf rec}}$.
By induction, $P{\ar{\ws}} \vee P{\ar{\Rw\ws}} \vee P{\ar{\wrs}} \vee P{\ar{\Rw\wrs}}$
and $Q{\ar{\wtr}} \vee Q{\ar{\Rw\wtr}} \vee Q{\ar{\wrs}} \vee Q{\ar{\Rw\wrs}}$, so that
$P\parl Q {\ar{\wrs}} \vee P\parl Q {\ar{\Rw\wrs}}$.
\eprf

\begin{corollary}
\label{cor:PP-classification}
Let $P$ be a state of a parallel process.
Then 
$P{\ar{\bf time}}$ 
iff 
$P {\nar{\bf inb}}$.
\qed
\end{corollary}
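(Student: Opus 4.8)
The plan is to obtain the corollary as a purely propositional rearrangement of Theorem~\ref{thm:PP-classification}; no fresh induction is required. First I would unfold the definition of the predicate $P{\ar{\bf time}}$. By definition it abbreviates ``$P{\ar{w_1}}$ or $P{\ar{\Rw w_1}}$ for some $w_1\in\W$'', and since $\W=\{\w,\wtr,\ws,\wrs\}$ this is precisely the disjunction, over the four values of $w_1$, of the expressions $P{\ar{w_1}}\vee P{\ar{\Rw w_1}}$ appearing on the left-hand sides of the four items of Theorem~\ref{thm:PP-classification}.

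Next I would rewrite each of these four disjuncts by the corresponding characterisation supplied by Theorem~\ref{thm:PP-classification}. This yields that $P{\ar{\bf time}}$ holds if and only if
\[(P{\narrec}\wedge P{\narsend}\wedge P{\narinb})\vee(P{\arrec}\wedge P{\narsend}\wedge P{\narinb})\vee(P{\narrec}\wedge P{\arsend}\wedge P{\narinb})\vee(P{\arrec}\wedge P{\arsend}\wedge P{\narinb}).\]
I would then factor the common conjunct $P{\narinb}$ out of all four disjuncts, so that the right-hand side becomes $P{\narinb}$ conjoined with the disjunction, over all four combinations of the truth values of $P{\arrec}$ and $P{\arsend}$, of those two literals.

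Finally I would observe that this inner disjunction ranges over an exhaustive list of the four joint truth-value assignments for $P{\arrec}$ and $P{\arsend}$, hence is a tautology; the whole expression therefore collapses to $P{\narinb}$, which is exactly the claim. There is no genuine obstacle here: the only facts to check are that $\W$ consists of precisely the four elements occurring in the theorem (so the disjunction of its left-hand sides really is $P{\ar{\bf time}}$) and that the four receive/send cases are jointly exhaustive, both of which are immediate. This is why the statement already carries its \qed inside the environment—the corollary is just Theorem~\ref{thm:PP-classification} read as a whole rather than item by item.
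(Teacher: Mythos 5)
Your proposal is correct and matches the paper's intent exactly: the corollary carries its \qed in the statement precisely because it is the immediate propositional consequence of Theorem~\ref{thm:PP-classification} that you describe, namely disjoining the four ``iff'' items, noting that the left-hand sides together constitute $P{\ar{\bf time}}$ and that the four receive/send combinations on the right are exhaustive, leaving only the common conjunct $P{\nar{\bf inb}}$.
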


\journalonly{\chW{\noindent
For a (partial) network expression $N$ write
\vspace{-1ex}
\begin{itemize}
%\item $N {\arrec}$ iff $N{\ar{\colonact{H\neg K}{\listen{m}}}}$ for some $m\in\tMSG$, with $H\neq\emptyset$,
\item[] $N {\arinb}$ iff $N{\ar{a}}$ with $a$ of the form
$\colonact{R}{\starcastP{m}}$, $\colonact{\dval{ip}}{\deliver{\dval{d}}}$ or $\tau$.
\end{itemize}
}}

\toApp
{%local redefinition of \thelemma
\renewcommand{\thelemma}{A.\arabic{applemmacount}}
\stepcounter{applemmacount}
\addtocounter{lemma}{-1}
\begin{lemma}\label{lem:arrive-enabled node}
$\dval{ip}:P:R{\ar{\colonact{\{\dval{ip}\}\neg \emptyset}{\listen{\dval{m}}}}}$ for any $m\in\tMSG$,
and any $\dval{ip}$, $P$ and~$R$.
\end{lemma}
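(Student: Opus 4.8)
The plan is to read the claim directly off the operational rules for node expressions, since it is exactly the assertion that every node is input-enabled at the level of arrive actions. I note that only two rules can produce a transition of the shape $\dval{ip}:P:R \ar{\colonact{\{\dval{ip}\}\neg\emptyset}{\listen{m}}} \cdots$ out of a node: Rule \tableref{n-rcv}, whose premise is $P \ar{\receive{m}}$, and --- in the non-blocking version of \tawn --- the augmentation rule
$$\frac{P \nar{\receive{m}}}{\dval{ip}:P:R \ar{\colonact{\{\dval{ip}\}\neg\emptyset}{\listen{m}}} \dval{ip}:P:R}$$
added to \Tab{sos node}. The crucial observation is that the premises of these two rules are each other's negation, so for the given $P$ and $m$ exactly one of them is satisfied.

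Given this, the proof is a single case distinction on whether $P$ can receive $m$. First I would treat the case $P \ar{\receive{m}} P'$ for some $P'$: here Rule \tableref{n-rcv} immediately yields $\dval{ip}:P:R \ar{\colonact{\{\dval{ip}\}\neg\emptyset}{\listen{m}}} \dval{ip}:P':R$. Then I would treat the complementary case $P \nar{\receive{m}}$: here the augmentation rule above yields $\dval{ip}:P:R \ar{\colonact{\{\dval{ip}\}\neg\emptyset}{\listen{m}}} \dval{ip}:P:R$. In either case a transition with the required label exists, which is all the lemma asserts; the target state is irrelevant. Since $m$, $\dval{ip}$, $P$ and $R$ were arbitrary, this establishes the statement in full generality.

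I do not expect any genuine obstacle: the whole argument reduces to the fact that the two applicable node rules have complementary premises, so one of them always fires. The only point worth flagging is that the unrestricted quantification over $P$ relies on working in the non-blocking version of \tawn; in the default version the same conclusion holds for input-enabled $P$, in which case $P \ar{\receive{m}}$ holds for every $m$ by assumption and Rule \tableref{n-rcv} alone suffices, the second case never arising.
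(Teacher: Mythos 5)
Your proof is correct and matches the paper's own argument: both rest on the observation that in the default version input-enabledness makes Rule \tableref{n-rcv} applicable for every $m$, while in the non-blocking version the negative-premise augmentation rule covers exactly the complementary case. The only cosmetic difference is that you organise the argument as a case split on whether $P \ar{\receive{m}}$ holds and flag the version dependence afterwards, whereas the paper splits on the version of \tawn first; the content is identical.
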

}
\prf
This is our only proof in which the selected version of \tawn matters---see
Pages~\pageref{ssec:non-blocking}--\hyperlink{non-blocking}{\pageref*{non-blocking}}.

In the default version, we require for any node expression \mbox{$\dval{ip}:P:R$} that
$P \ar{\receive{m}}$ for any $m$---this is the definition of \emph{input enabledness}.
The claim then follows from Rule \tableref{n-rcv} of \Tab{sos node}.

In the alternative version, the claim follows from that rule, in combination with the rule with a negative
premise on \hyperlink{non-blocking}{Page~\pageref*{non-blocking}}.
\eprf
\etoApp

\begin{lemma}\label{lem:arrive-enabled}
Let $N$ be a partial network expression\vspace{1pt} with $L$ the set of addresses of the nodes of $N$.
Then $N{\ar{\colonact{H\neg K}{\listen{\dval{m}}}}}$\,, for any partition
$L=H\dcup K$ of $L$ into sets $H$ and $K$,
and any $m\in\tMSG$.
\end{lemma}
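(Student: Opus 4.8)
The plan is to proceed by structural induction on the partial network expression $N$, using the single-node arrive/non-arrive transitions as the base case and Rule~\tableref{arr} of \Tab{sos network} to combine them. The key observation is that an $\listen{m}$-transition records a set $H$ of nodes at which $m$ arrives together with a disjoint set $K$ of nodes at which it does not, and that \tableref{arr} takes the componentwise union of these sets. Hence, to realise an arbitrary partition of $L$ it suffices to realise on each node whichever of the two outcomes the partition prescribes for it, and then to amalgamate.

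For the base case $N = \dval{ip}:P:R$ we have $L = \{\dval{ip}\}$, so the partition is either $H=\{\dval{ip}\}$, $K=\emptyset$ or $H=\emptyset$, $K=\{\dval{ip}\}$. In the first case I invoke \Lem{arrive-enabled node}, which gives $\dval{ip}:P:R \ar{\colonact{\{\dval{ip}\}\neg\emptyset}{\listen{m}}}$; in the second I invoke Rule~\tableref{n-dis} of \Tab{sos node}, which (having no premises) gives $\dval{ip}:P:R \ar{\colonact{\emptyset\neg\{\dval{ip}\}}{\listen{m}}}$. Either way the required transition exists.

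For the inductive step $N = M_1 \| M_2$, let $L_i$ be the set of addresses of $M_i$, so that $L = L_1 \dcup L_2$. Given a partition $L = H \dcup K$, I restrict it to each component by setting $H_i := H \cap L_i$ and $K_i := K \cap L_i$; then $L_i = H_i \dcup K_i$ is a partition of $L_i$. The induction hypothesis yields $M_i \ar{\colonact{H_i\neg K_i}{\listen{m}}} M_i'$ for $i=1,2$, and Rule~\tableref{arr} combines these into $M_1 \| M_2 \ar{\colonact{(H_1\cup H_2)\neg(K_1\cup K_2)}{\listen{m}}}$. Since $H,K \subseteq L = L_1\cup L_2$, we have $H_1\cup H_2 = H$ and $K_1\cup K_2 = K$, which is exactly the claimed transition.

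I do not expect a serious obstacle: the only point needing care is the bookkeeping that the node addresses of $M_1$ and $M_2$ are disjoint, a standing well-formedness assumption on network expressions, which guarantees both that $L = L_1 \dcup L_2$ is genuinely a disjoint union and that the restricted sets $H_i,K_i$ recombine to give back exactly $H$ and $K$. Everything else is a direct application of the two node-level facts together with the single network rule~\tableref{arr}.
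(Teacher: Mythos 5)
Your proof is correct and follows essentially the same route as the paper's: the base case splits into the two possible partitions of a singleton, handled by the node-level arrive lemma and Rule \tableref{n-dis} respectively, and the inductive step combines the component transitions via Rule \tableref{arr}. Your write-up is in fact more explicit than the paper's (which compresses the inductive step into one sentence), but there is no difference in substance.
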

\prfatend
We apply structural induction on $N$.
If $N$ is a node expression $\dval{ip}:P:R$, we have to show that
$\dval{ip}:P:R{\ar{\colonact{\{\dval{ip}\} \neg \emptyset}{\listen{\dval{m}}}}}$
and also that\\ % inserted a hard linebreak to make it less ugly
$\dval{ip}:P:R{\ar{\colonact{\emptyset \neg \{\dval{ip}\}}{\listen{\dval{m}}}}}$.
The former follows by \Lem{arrive-enabled node}, and the latter by 
Rule \tableref{n-dis}.

In case $N \mathbin= M_1\|M_2$ the result is obtained using Rule \tableref{arr} of \Tab{sos network}.
\eprf

\noindent
Using this lemma, we can finally show one of our main results: an (encapsulated)
network expression can perform a time-consuming action iff an instantaneous non-blocking action is not possible. 
\begin{theorem}
\label{thm:N-classification}
Let $N$ be a partial or complete network expression.\\
Then $N{\ar{\tick}}$ \quad iff \quad $N {\nar{\bf inb}}$.
\end{theorem}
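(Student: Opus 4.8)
The plan is to prove the equivalence by structural induction on the network expression, treating partial networks $M$ first and then lifting the result to complete networks $[M]$. For partial networks there are two cases, a single node $\dval{ip}:P:R$ and a parallel composition $M_1 \| M_2$; the encapsulation $[M]$ is handled separately at the end. Throughout, the only transitions relevant to the claim are the time steps (\tick) and the instantaneous non-blocking actions (\arinb), since connect/disconnect and bare $\listen{m}$ arrive actions are neither.

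For the base case I would first match up the node-level and process-level actions. By Rules \tableref{n-sc}, \tableref{n-del} and \tableref[n-t]{n-$\tau$}, the inb-transitions $\colonact{R}{\starcastP{m}}$, $\colonact{\dval{ip}}{\deliver{\dval{d}}}$ and $\tau$ of $\dval{ip}:P:R$ are in exact correspondence with the inb-transitions $\starcastP{m}$, $\deliver{\dval{d}}$ and $\tau$ of $P$, so $\dval{ip}:P:R \arinb$ iff $P \arinb$. Dually, by Rules \tableref{n-w} and \tableref{n-t}, the node performs \tick\ iff $P \ar{\bf time}$; here \Lem{transmission actions} is needed to turn a transmission step $R'\!:\!w_1$ of $P$ for an arbitrary range into one for the node's own range $R$, as Rule \tableref{n-t} demands. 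The base case then follows at once from \Cor{PP-classification}, namely $P \ar{\bf time}$ iff $P \narinb$.

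For the inductive step $M_1 \| M_2$, the \tick-direction is immediate: by Rule \tableref{tck}, $M_1 \| M_2 \ar{\tick}$ iff $M_1 \ar{\tick}$ and $M_2 \ar{\tick}$, which by the induction hypothesis equals $M_1 \narinb \wedge M_2 \narinb$. It therefore remains to establish $M_1 \| M_2 \arinb$ iff $M_1 \arinb$ or $M_2 \arinb$. For $\tau$ and $\deliver{\dval{d}}$ both directions are routine: these are produced only by the interleaving Rules \tableref{nw-al}/\tableref{nw-ar}, as no synchronisation rule at the partial-network level emits $\tau$ or $\deliver{\dval{d}}$. For a $\starcastP{m}$-action the ``$\Rightarrow$'' direction is likewise structural, since $\starcastP{m}$ is excluded from the side condition of \tableref{nw-al}/\tableref{nw-ar} and can thus only come from the synchronisation Rule \tableref{nw-tl/nw-tr}, which forces one component to emit $\starcastP{m}$ itself.

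The main obstacle is the ``$\Leftarrow$'' direction for a $\starcastP{m}$-action, which is exactly where the non-blocking character of the model is used. Suppose $M_1 \ar{\colonact{R}{\starcastP{m}}}$; to apply Rule \tableref{nw-tl/nw-tr} I must exhibit a matching arrive $M_2 \ar{\colonact{H\neg K}{\listen{m}}}$ with $H \subseteq R$ and $K \cap R = \emptyset$. Writing $L_2$ for the set of addresses of $M_2$, the partition $L_2 = H \dcup K$ with $H = L_2 \cap R$ and $K = L_2 \setminus R$ satisfies both side conditions, and \Lem{arrive-enabled} guarantees that $M_2$ actually has this arrive transition; hence $M_1 \| M_2 \ar{\colonact{R}{\starcastP{m}}}$ and the cast is never blocked. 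Finally, to lift the result to $[M]$ I would use Rule \tableref{e-tck} to get $[M] \ar{\tick}$ iff $M \ar{\tick}$, and observe that the inb-transitions of $[M]$ are precisely $\tau$---inherited from $M$ via \tableref{e-a} or obtained from a $\starcastP{m}$ of $M$ via \tableref{e-sc}---and $\deliver{\dval{d}}$ via \tableref{e-a}; comparison with the three shapes of $M \arinb$ gives $[M] \arinb$ iff $M \arinb$, so the complete-network claim reduces to the partial-network case already proved.
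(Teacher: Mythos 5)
Your proof is correct and follows essentially the same route as the paper's: structural induction on $N$, with the node case resolved via \Lem{transmission actions} and \Cor{PP-classification}, the parallel case via Rule \tableref{tck} and \Lem{arrive-enabled}, and the encapsulation case via Rules \tableref{e-tck}, \tableref{e-a} and \tableref{e-sc}. The only difference is organisational—you phrase the inductive step as two separate biconditionals and spell out the partition $H=L_2\cap R$, $K=L_2\setminus R$ and the ``$\Rightarrow$'' direction for \textbf{*cast}, details the paper leaves implicit—so there is nothing to correct.
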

%\comP{I suggest to keep the proof: (a) it is rather short (which I therefore kept in the main text);
%  (b) it is one of the main results; (c) we should show at least one proof. \chR{agreed}}
\prf
We apply structural induction on $N$. First suppose $N$ is a node expression $\dval{ip}\mathop:P\mathop:R$.
Then $N{\ar{\tick}}$ iff $P{\ar{w_1}}\vee P{\ar{\Rw w_1}}$ for some $w_1\mathbin\in\W$.
By \Lem{transmission actions} this is the case iff 
$P{\ar{w_1}}\vee P{\ar{R'\mathop:w_1}}$ for some $R'\subseteq\tIP$ and $w_1\mathbin\in\W$, i.e., iff $P {\ar{\bf time}}$.
Moreover $N{\arinb}$ iff $P{\arinb}$.
Hence the claim follows from \Cor{PP-classification}.

Now suppose $N$ is a partial network expression $M_1\|M_2$.
In case $M_i{\nar{\bf inb}}$ for $i=1,2$ then $N{\nar{\bf inb}}$.
By induction $M_i{\ar{\tick}}$ for $i=1,2$, and hence $N{\ar{\tick}}$.
Otherwise, $M_i{\arinb}$ for $i=1$ or $2$. Now $N {\arinb}$.
In case $M_i{\ar{\tau}}$ or $M_i{\ar{\colonact{\dval{ip}}{\deliver{\dval{d}}}}}$
this follows from the third line of \Tab{sos network}; if 
$M_i{\ar{\colonact{R}{\starcastP{m}}}}$
it follows from the first line, in combination with \Lem{arrive-enabled}.
By induction $M_i{\nar{\tick}}$, and thus $N{\nar{\tick}}$.

Finally suppose that $N$ is a complete network expression $[M]$.
By the rules of \Tab{sos network} $N{\ar{\tick}}$ iff $M{\ar{\tick}}$, and
$N{\arinb}$ iff $M{\ar{\bf inb}}$, so the claim follows from the case for partial network expressions.
\eprf

\begin{corollary}
\label{no_deadlock}
A complete network $N$ described by T-AWN always admits a transition, independently of the outside environment, 
i.e., $\forall N, \exists a$ such that $N \ar{a}$ and $a \not \in \{\textbf{connect}(\dval{ip},\dval{ip}'),\textbf{disconnect}(\dval{ip},\dval{ip}'),\newpkt{\dval{d}}{\dval{dip}}\}$.\\
More precisely, either $N \ar{\tick}$ or $N \ar{\colonact{\dval{ip}}{\deliver{\dval{d}}}}$ or $N \ar{\tau}$.
\qed
\end{corollary}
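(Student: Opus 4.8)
The plan is to read the corollary off directly from \Thm{N-classification}, of which it is an essentially immediate consequence. A complete network has the form $N = [M]$, so the theorem applies to $N$ and yields the equivalence $N \ar{\tick}$ iff $N {\nar{\bf inb}}$. Reading this as a dichotomy, exactly one of the two statements $N \ar{\tick}$ and $N \arinb$ holds; in particular at least one of them does. This already establishes that $N$ admits \emph{some} transition, which is the qualitative part of the claim.

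To obtain the sharper ``more precisely'' part, I would split on the two cases. If $N \ar{\tick}$, we are done, since $\tick$ is none of the three excluded labels $\textbf{connect}(\dval{ip},\dval{ip}')$, $\textbf{disconnect}(\dval{ip},\dval{ip}')$ or $\newpkt{\dval{d}}{\dval{dip}}$. Otherwise $N \arinb$, so by the definition of $\arinb$ the network $N$ offers a transition labelled by one of $\colonact{R}{\starcastP{m}}$, $\colonact{\dval{ip}}{\deliver{\dval{d}}}$ or $\tau$. The one point that needs checking is that the first of these cannot actually occur at the level of a complete, i.e.\ encapsulated, network: by Rule \tableref{e-sc} of \Tab{sos network} every $\starcastP{m}$-action of $M$ is relabelled to $\tau$ in $[M]$, so $N=[M]$ has no outgoing $\colonact{R}{\starcastP{m}}$-transition. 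Hence $N \arinb$ forces $N \ar{\tau}$ or $N \ar{\colonact{\dval{ip}}{\deliver{\dval{d}}}}$. Combining the two cases leaves exactly the three advertised possibilities $N \ar{\tick}$, $N \ar{\colonact{\dval{ip}}{\deliver{\dval{d}}}}$ and $N \ar{\tau}$, none of whose labels lies in the excluded set.

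There is no genuine obstacle here, as the entire content of the corollary is carried by \Thm{N-classification}; the proof is purely a matter of unpacking the definition of $\arinb$ at the complete-network level. The only care required is the observation about Rule \tableref{e-sc}, which guarantees that the $\colonact{R}{\starcastP{m}}$-disjunct in the definition of $\arinb$ is vacuous for an encapsulated network and therefore never contributes a label outside the list $\{\tick,\,\colonact{\dval{ip}}{\deliver{\dval{d}}},\,\tau\}$. This is why the authors mark the statement with \qed rather than supplying a separate argument.
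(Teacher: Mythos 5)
Your proposal is correct and matches the paper's intent exactly: the corollary is stated with \qed precisely because it is an immediate consequence of \Thm{N-classification}, obtained by unpacking the definition of $\arinb$ for a complete network. Your extra observation that Rule \tableref{e-sc} relabels every $\colonact{R}{\starcastP{m}}$-action of $M$ to $\tau$ in $[M]$, so that only the $\tau$- and \textbf{deliver}-disjuncts of $\arinb$ survive at the encapsulated level, is exactly the one detail worth making explicit.
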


\journalonly{\subsection{Eliminating Data Structures}}
%%%%%%%%%%%%%%%%%%%%%%%%%%%%%%%%%%%%%%%%%%%%%%%%%%%%%%%%%%%%%%
Our process algebra admits a translation into one without data
structures (although we cannot \emph{describe} the target algebra without
using data structures).  The idea is to replace any variable by all
possible values it can take. 
The target algebra differs from the
original only on the level of sequential processes; the subsequent
layers are unchanged.
A formal definition can be found in \SubApp{eliminating}.
\toApp
\subsection{Eliminating Data Structures}\label{subapp:eliminating}
Our process algebra admits a translation into one without data
structures (although we cannot \emph{describe} the target algebra without
using data structures). 
The target algebra differs from the
original only on the level of sequential processes; the subsequent
layers are unchanged. The syntax of the target language of sequential processes is
given by the following grammar:
\begin{align*}
P\ ::=\ & \nil ~\mid~ P+P ~\mid~ \alpha.P ~\mid~ \stargcast{\range}{\dval{m}}. P \prio P ~\mid~
     X ~\mid\\
     & \sum_{i\in I} P_i ~\mid~
        {\rm \Delta}^i P ~\mid~
         \Deltas{i=k}{} P_i ~\mid~
        \DeltaX[o]{i=k}{*}{n} P_{i}
\\
\alpha\ ::=\ &\tau ~\mid~ \send{\dval{m}} ~\mid~\receive{\dval{m}} ~\mid~ \deliver{\dval{d}}
\end{align*}
Its structural operational semantics displayed in 
\Tab{sos sequential without data}.

Here $\nil$ denotes the inactive process (that can only wait), $+$ is a binary choice (as before) and
$\sum_{i\in I}$ denotes a choice with one argument $P_i$ for each index $i$ from a possibly infinite
set $I$---the chosen process cannot start with a wait action, however.
% One could see $\nil$ and $P_1+P_2$ as abbreviations for $\sum_{i\in \emptyset}P_i$ and
% $\sum_{i\in \{1,2\}}P_i$ \cite{Mi89}.
The process $\tau.P$ performs an internal action $\tau$ and
proceeds as $P$. The actions $\send{\dval{m}}$ and
$\receive{\dval{m}}$ are as before, but now there is one such action%
\begin{table}[tp]\normalsize
\caption{Structural operational semantics for sequential process expressions after elimination of data structures
\label{tab:sos sequential without data}}
\renewcommand{\p}{P}
\renewcommand{\q}{Q}
\vspace{-2.5ex}
$$\begin{array}{@{}l@{\ }r@{}}
\nil \ar{\w} \nil \qquad\qquad\qquad
  \displaystyle\frac{\p \ar{w_1} \p' \quad \q \ar{w_2} \q'}{\p+\q \ar{w_1\wedge w_2} \p'+\q'}
&\mbox{\small $(\forall w_1,w_2\in \W)$}\\[12pt]
\displaystyle
\displaystyle\frac{ \p_j \ar{a} \p'}{\sum_{i\in I}\p_i \ar{a} \p'}
\qquad
\frac{\p \ar{a} \p'}{\p+\q \ar{a} \p'}
\qquad
\displaystyle
\frac{\q \ar{a} \q'}{\p+\q \ar{a} \q'}
&
\hspace{-20pt}\mbox{\small $\left(\forall j\mathbin\in I,~ a\mathbin\in\act\mathord-\W\right)$}
\\[17pt]
\tau.\p \ar{\tau} \p
\qquad
\send{\dexp{m}}.\p \ar{\send{\dexp{m}}} \p
\\[6pt]
\receive{\dexp{m}}.\p \ar{\receive{\dexp{m}}} \p
\qquad
\deliver{\dval{d}}.\p \ar{\deliver{\dval{d}}} \p  
\\[6pt]
\stargcast{\range}{\dexp{m}}.\p \prio \q \ar{R\mathbin{:}\w} \stargcast{(\range\mathop \cap R)}{\dexp{m}}.\p\prio \q
&\mbox{\small $(\forall R\subseteq \tIP)$}
\\[6pt]
\multicolumn{2}{l@{}}{
\stargcast{\emptyset}{\dexp{m}}.\p \prio \q \ar{\stargcast{\emptyset}{\dexp{m}}} \q
\quad\ 
\stargcast{R}{\dexp{m}}.\p \prio \q \ar{\stargcast{R}{\dexp{m}}} \p
\,\;\mbox{\small $(\forall R \mathop{\not=} \emptyset)$}}
\\[6pt]
\displaystyle\frac{\p \ar{a} \p'}{X \ar{a} \p'}
  ~\mbox{(\small$X \mathbin{\stackrel{{\it def}}{=}} \p$)}
\qquad
{\rm \Delta}^{i+1} \p \ar{\w} {\rm \Delta}^{i} \p 
\qquad
\displaystyle \frac{\p \ar{a} \p'}{{\rm \Delta}^{0} \p \ar{a} \p'}
& \mbox{\small $(\forall i\geq0,~ a \in \act)$}
\\[12pt]
\displaystyle\frac{\p_k \ar{w_1} \p'}
{\Deltas{i=k}{} \p_i \ar{w_1} \Deltas{i=k+1}{} \p_i}
\qquad
\displaystyle\frac{\p_k \ar{a} \p'}{\Deltas{i=k}{} \p_i \ar{a} \p'}
&\hspace{-40pt}\mbox{(\small$\forall w_1\mathbin\in\W$, $a\mathbin\in \act-\W$, $k\geq 0$)}
\\[24pt]
\displaystyle\frac{\p_k \ar{\send{\dexp{m}}} \p'}{\Deltas{i=k}{} \p_i \ar{\ws} \Deltas{i=k+1}{} \p_i}
\qquad
\displaystyle\frac{\p_k \ar{\receive{\dexp{m}}} \p'}
{\Deltas{i=k}{} \p_i \ar{\wtr} \Deltas{i=k+1}{} \p_i}
\\[24pt]
\displaystyle\frac{\p_{i} \ar{R\mathbin{:}\w} \p_{i}' ~(\forall i \mathbin\in [k..o])}{\DeltaX[o]{i=k}{*}{n} \p_{i} \ar{R\mathbin{:}\w}\DeltaX[o]{i=k}{*}{n-1} \p'_{i}}
\qquad
\displaystyle\frac{\p_{i} \ar{R\mathbin{:}\w} \p_{i}' ~(\forall i \mathbin\in [k..o{+}1])}{\DeltaX[o+1]{i=k}{*}{n} \p_{i} \ar{R\mathbin{:}\w}
\DeltaX[o+1]{i=k+1}{*}{n\!\!\!} \p'_{i}}
&\hspace{-10pt}\left(\mbox{\small$\begin{array}{@{}ll@{}}\forall R\subseteq \tIP,\\ \forall n{>}0 \mbox{~and~} o{\geq}k{\geq}0\end{array}$}\right)
\\[24pt]
\displaystyle\frac{\p_k \ar{\stargcast{R}{\dexp{m}}} \p'}{\DeltaX[o]{i=k}{*}{0} \p_k \ar{\stargcast{R}{\dexp{m}}}\p'}
&\hspace{-20pt}\mbox{\small($\forall R\mathbin\subseteq \tIP \mbox{~and~} o{\geq}k{\geq}0$)}
\end{array}
\vspace{3ex}
$$
for each message $m\mathbin\in\tMSG$ (not an expression that evaluates to a
message). Likewise, there is one action $\deliver{\dval{d}}$ for each \mbox{$\dval{d} \mathbin\in \tDATA$}.
The process $\stargcast{\range}{\dval{m}}. P \mathbin\prio Q$ can cast the message $m\mathbin\in\tMSG$ to the
destinations $\dval{dsts}\mathbin\subseteq\tIP$ and then proceeds as $P$ or~$Q$, depending on whether
$\dval{dsts}=\emptyset$ or not. Alternatively, $\stargcast{\range}{\dval{m}}. P \mathbin\prio Q$
can perform an action $R{:}\w$ and restrict its set of destinations $\range$ to $R$.
The language features process names $X$ with defining equations
\plat{$X\stackrel{{\it def}}{=} P$}, as usual \cite{Mi89}.

\hspace{1em}
The unary operator ${\rm \Delta}^i\!$,
parametrised with a natural number $i$, performs exactly
$i$ wait actions $\w$ before proceeding as its argument $P$.
% \\\comM{begin out}The operator $\Deltas{i=k}{s}$, with a countably infinite sequence of arguments $P_i$,
% performs a number of $\ws$-actions (possibly 0 or $\infty$) before proceeding as one of its
% arguments.\comsP{rewrite?} In any state during its initial sequence of $\ws$-actions it
% \chP has a choice between proceeding as its first argument $P_k$, provided it starts with a
% \textbf{send}-action,
% or doing another wait action $\ws$ and dropping \rpP{its first argument}{$P_k$} from the list.
% The operator $\Deltas{i=k}{r}$ is the same, but scheduling wait actions $\wtr$
% instead of $\ws$ and proceeding with a \textbf{receive}-action.
% The operator $\Deltas{i=0}{}$ is similar, \rpP{scheduling actions $\w$ instead of $\ws$}
% {but allowing arbitrary wait actions $w_1\in\W$ (either $\w$, $\wtr$, $\ws$, or $\wrs$)}, and proceeding
% with any initial action from its first argument;
% moreover, it can only proceed with another \rpP{$\w$}{$w_1$}-action if its first argument can do a\vspace{-2pt}
% \rpP{$\w$}{$w_1$}-action. \chP{With other words, the first non-wait action that can be executed is taken.} 
% \\\comM{end out}\chM{
The operator $\Deltas{i=k}{}$, with a countably infinite sequence of arguments $P_i$,
performs a number of wait
\end{table}

\noindent
actions $w_1\in\W$ (either $\w$, $\wtr$, $\ws$, or $\wrs$)---possibly 0 or $\infty$;
if a finite amount of wait actions is taken it proceeds as one of its
arguments. In any state during its initial sequence of wait actions it
has a choice between proceeding as its first argument $P_k$, provided $P_k$ starts with a
non-wait action, or doing another wait action $w_1$ and dropping $P_k$ from the list of arguments.
The latter is possible if and only if (1) $P_k$ can do a \textbf{send}-action, in which case $w_1:=\ws$,
(2) $P_k$ can do a \textbf{receive}-action, in which case $w_1:=\wtr$, or (3) $P_k$ itself can do $w_1\in \W$.
%% I can be easily convinced to skip the following sentence
%\chP{This behaviour together with the rules for parallel processes, node expressions, partial network expressions
%(Tables~\ref{tab:sos parallel} and \ref{tab:sos node}), ensure that node expressions 
%perform the first non-wait action possible.}
%
The operator $\DeltaX[o]{i=k}{*}{n}$ has parameters $k, n \geq 0$ and $o\geq k$,
and $o\mathord-k\mathord+1$ arguments. As long as $n>0$ all its arguments can synchronously
perform an $\Rw{\w}$-action, thereby either decrementing $n$ or incrementing $k$,\vspace{-2pt} in the latter case
loosing its first argument. When $n=0$ the process $\DeltaX[o]{i=k}{*}{n}P_{i}$
behaves as its first argument $P_k$, provided it starts with a \textbf{*cast}-action; otherwise the process deadlocks.

The idea behind the translation is to replace any variable by all possible values it can take.
Formally, processes $\xi,p$ are replaced
by $\T_\xi(p)$, where $\T_\xi$ is defined inductively by\vspace{.5mm}

$\T_\xi(p)=
\begin{cases} 
\nil
& \!\!\!\!\mbox{\small ${\rm if}~ \xi[\now\mathop{:=}\now+i](p){\uparrow}\quad \forall i$} 
\\ {\rm \Delta}^{i_{0}}\T_{\xi[\now:=\now+i_{0}]}(p) 
& \!\!\!\!\mbox{\small $\begin{array}{@{}l@{}}{\rm with}~ i_{0} = \min\limits_{i\in\NN}(\xi[\now:=\now+i](p)\defined),
  \end{array}$}
\end{cases}$
\hfill if~$\xi(p)\mathord\uparrow$;\\[0.5mm]
Otherwise ($\xi(p){\defined}$):\vspace{.5mm}

\begin{tabular}{@{}l@{}}
$\T_\xi(\broadcastP{\dexp{ms}}.p)= \tau.\T_\xi(\stargcast{\tIP}{\xi(\dexp{ms})}[\keyw{LB},\keyw{\Delta B}].p \prio \p)$,\\[0.5mm]
$\T_\xi(\groupcastP{\dexp{dests}}{\dexp{ms}}.p)=  \tau.\T_\xi(\stargcast{\xi(\dexp{dests})}{\xi(\dexp{ms})}[\keyw{LG},\keyw{\Delta G}].p \prio \p)$,\\[0.5mm]
$\T_\xi(\unicast{\dexp{dest}}{\dexp{ms}}.p \prio q)=  \tau.\T_\xi(\stargcast{\{\xi(\dexp{dest})\}}{\xi(\dexp{ms})}[\keyw{LU},\keyw{\Delta U}].p \prio \q)$,\\[0.5mm]
$\T_\xi(\stargcast{\range}{\dval{m}}[n,o].p \prio \q)=$\\
\hfill  $\DeltaX[o]{i=0}{*}{n}\stargcast{\range}{\dval{m}}.\T_{\xi[\now:=\now+i+n]}(p) \prio \T_{\xi[\now:=\now+i+n]}(q)$,\\
$\T_\xi(\send{\dexp{ms}}.p)= \Deltas{i=0}{}P_i$, with\\[0.5mm]
\qquad\qquad\qquad\quad $P_{i}= 
\begin{cases}
\multicolumn{2}{l}{\send{\xi[\now:=\now+i](\dexp{ms})}.\T_{\xi[\now:=\now+i]}(p)}\\
&\mbox{\small {\rm if}~$\xi[\now:=\now{+}i](\dexp{ms})\defined$}\\
\nil \qquad & \mbox{\small otherwise,}
\end{cases}$\\[0.5mm] 
%for a proper page break
% \end{tabular}

% \begin{tabular}{@{}l@{}}
$\T_\xi(\receive{\msg}.p)= \Deltas{i=0}{}\sum_{m\in\tMSG}\receive{m}.\T_{\xi[\msg:=m;~\now:=\now+i]}(p)$,\\[0.5mm]
$\T_\xi(\deliver{\dexp{data}}.p)= \deliver{\xi(\dexp{data})}.\T_\xi(p)$,\\[0.5mm]
$\T_\xi(\assignment{\keyw{var}:=\dexp{exp}}p)= \tau.\T_{\xi[\keyw{var}:=\xi(\dexp{exp})]}(p)$,\\[0.5mm]
$\T_\xi([\varphi]p)=
\begin{cases} 
\nil
& \mbox{\small {\rm if}~ $\xi[\now:=\now+i] \,\,\,\,\not\!\!\!\!\stackrel{\varphi}{\longrightarrow}\quad \forall i $}
\\ {\rm \Delta}^{i_{0}}\sum_{ \{\zeta\mid\xi[\now:=\now+i_{0}] \stackrel{\varphi}{\rightarrow}\xii\}} \tau.\T_{\xii}(p) 
& \mbox{\small $\begin{array}{@{}l@{}} {\rm with}\\[-5pt]
  i_{0} = \min\limits_{i\in\NN}( \xi[\now\mathbin{:=}\now{+}i]\stackrel{\varphi}{\rightarrow}),
  \end{array}$}
\end{cases}$\\[0.5mm] 
$\T_\xi(p+q)=\T_\xi(p) + \T_\xi(q)$,\\[0.5mm]
$\T_\xi(X(\dexp{exp}_1,\ldots,\dexp{exp}_n))= \Deltas{i=0}{}X_{\xi[\now:=\now+i](\dexp{exp}_1),\ldots,\xi[\now:=\now+i](\dexp{exp}_n)}$.\\[0.5mm]
\end{tabular}

\noindent
The last equation requires the introduction of a process name
$X_{\vec{v}}$ for every name
$X: \keyw{TYPE}_1 \times \cdots \times \keyw{TYPE}_n \rightarrow \keyw{SPROC}$
(with defining equation \plat{$X(\overrightarrow{\keyw{var}}) \stackrel{{\it def}}{=} p$})
in the source language
and every vector $\vec{v} \in \keyw{TYPE}_1 \times \cdots \times \keyw{TYPE}_n$ of data values of
the appropriate types the for arguments of $X$.
Its defining equation in the data-free target language is\vspace{-6pt}
\[X_{\vec{v}} \stackrel{{\it def}}{=} \T_{\emptyset[\overrightarrow{\keyw{var}}:=\vec{v}]}(p)\ .\]
\etoApp
The resulting process algebra has a structural operational semantics in the (infinitary)
\emph{de Simone} format, generating the same transition system---up to strong
bisimilarity, $\bis$\,---as the original, which provides some results `for free'. 
\toApp
The resulting process algebra has a structural operational semantics in the (infinitary)
\emph{de Simone} format \cite{dS85}, generating the same transition system---up to strong
bisimilarity, $\bis$\,---as the original.

{%local redefinition of \thetheorem
\renewcommand{\thetheorem}{A.\arabic{apptheoremcount}}
\stepcounter{apptheoremcount}
\addtocounter{theorem}{-1}
\begin{theorem}\label{thm:elimination}
There exists a relation $\B$, a \emph{bisimulation}, between states $\xi,p$ of sequential processes in
the source algebra, and sequential processes $P$ in the target algebra, such that\vspace{-1ex}
\begin{itemize}
\item $(\xi,p) \B \T_\xi(p)$ for all sequential process expressions $p$ and valuations $\xi$,
\item if $(\xi,p) \B P$ and $\xi,p \ar{a} \xi',p'$ then $\exists P'$ such that
  $P \ar{a} P'$ and $(\xi',p') \B P'$,
\item if $(\xi,p) \B P$ and $P \ar{a} P'$ then $\exists \xi'\!,p'$ such that
  $\xi,p \ar{a} \xi'\!,p'$ and $(\xi',p') \B P'\!$.\vspace{-1ex}
\end{itemize}
\end{theorem}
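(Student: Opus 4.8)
The plan is to exhibit $\B$ explicitly and verify the two transfer clauses directly. Since the third clause must start from an \emph{arbitrary} target process, it is cleanest to close the relation under strong bisimilarity on the target side and set
\[
\B \;=\; \{\, ((\xi,p),P) \mid P \bis \T_\xi(p) \,\}\;,
\]
where $\bis$ is strong bisimilarity in the data-free target algebra (a genuine bisimulation, as that algebra is in de Simone format). The first clause then holds by reflexivity of $\bis$. Because $\bis$ is preserved under transitions and is transitive, both transfer clauses reduce to an \emph{operational correspondence} for the representative pairs $((\xi,p),\T_\xi(p))$: (i)~if $\xi,p \ar{a} \xi',p'$ then $\T_\xi(p) \ar{a} Q$ for some $Q \bis \T_{\xi'}(p')$; and (ii)~if $\T_\xi(p) \ar{a} Q$ then $\xi,p \ar{a} \xi',p'$ for some $\xi',p'$ with $Q \bis \T_{\xi'}(p')$. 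Indeed, given $P \bis \T_\xi(p)$ and a source step, (i) produces a $Q$, the bisimilarity $P \bis \T_\xi(p)$ lifts it to a matching $P \ar{a} P' \bis Q \bis \T_{\xi'}(p')$, and symmetrically for target steps using (ii); in both directions the reached pair lies in $\B$.

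I would prove (i) and (ii) by a case analysis on the outermost form of $p$, following the clauses defining $\T_\xi$ in \SubApp{eliminating} and matching the source rules of \Tab{sos sequential} against the target rules of \Tab{sos sequential without data}. The non-wait transitions are almost immediate in every case: the clauses for $\alpha.p$, $\deliver{\dexp{data}}.p$, $\assignment{\keyw{var}:=\dexp{exp}}p$, $p+q$ and the \textbf{*cast}-firing each transport a single source rule to a single target rule whose continuation is literally $\T$ of the source continuation, so the reached pairs are related by reflexivity. For the wait transitions I would lean on \Thm{SP-classification} and \Prop{time determinism}: the former pins down exactly which of $\w,\wtr,\ws,\wrs$ the source state admits in terms of its \textbf{receive}-, \textbf{send}- and other transitions, and the data-free operators ${\rm \Delta}^{i}$, $\Deltas{i=0}{}$ and $\DeltaX[o]{i=0}{*}{n}$ were designed to reproduce precisely that behaviour; the latter guarantees every wait step only advances \now by one and leaves the process unchanged, so matching continuations becomes a bookkeeping identity on time offsets.

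The delicate cases concern the operators encoding delayed definedness and transmission. When $\xi(p)\mathord\uparrow$ the translation prefixes a block ${\rm \Delta}^{i_0}$ with $i_0 = \min_i(\xi[\now:=\now+i](p)\defined)$; here a single source step $\xi,p \ar{\w} \xi\timeinc,p$ (Rule \tableref{w}) must match ${\rm \Delta}^{i_0}(\cdots) \ar{\w} {\rm \Delta}^{i_0-1}(\cdots)$, which needs the small observation that the minimiser drops by one under a time step, giving ${\rm \Delta}^{i_0-1}\T_{\xi[\now:=\now+i_0]}(p) \bis \T_{\xi\timeinc}(p)$ (using also ${\rm \Delta}^{0}P \bis P$ when $i_0=1$); the same fact drives the guard clause $[\varphi]p$, where the leading ${\rm \Delta}^{i_0}$ counts the steps until $\varphi$ first becomes satisfiable and the subsequent $\sum_{\zeta}\tau.\T_\zeta(p)$ reproduces the branching among satisfying extensions from Rules \tableref{grd} and \tableref[ngrd]{$\neg$grd}. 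I expect the \emph{main obstacle} to be the transmission clause, translated with $\DeltaX[o]{i=0}{*}{n}$: one must match each time step $\Rw{\w}$ of the source \textbf{*cast} (Rules \tableref{tr} and \tableref{tr-o}) against the two $\DeltaX$ time-step rules, tracking simultaneously the decrement of $n$, the increment of $i$ once the mandatory part is exhausted, and the restriction $\range \mapsto \range\cap R$ of the destination set, and finally match the firing (Rules \tableref{sc}/\tableref[nsc]{$\neg$sc}) against the $\DeltaX[o]{i=k}{*}{0}$ \textbf{*cast}-rule; getting the index arithmetic and the $\range\cap R$ bookkeeping to agree at every intermediate state is where the care is needed. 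The recursion clause $X(\dexp{exp}_1,\ldots,\dexp{exp}_n)$, translated as $\Deltas{i=0}{}X_{\ldots}$, is handled by invoking \Lem{initials} to transfer the transitions of the call to those of its guarded body, matching Rules \tableref{rec}/\tableref{rec-w}; because the transfer conditions are checked one step at a time, the coinductive nature of $\B$ keeps the argument well-founded without the structural induction that \Thm{SP-classification} required.
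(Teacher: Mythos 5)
Your proposal is correct and takes essentially the same route as the paper: the paper's own proof also exhibits the relation explicitly---relating $\xi,p$ to $\T_\xi(p)$ and to its index-shifted \emph{variants} (e.g.\ $\DeltaX[o]{i=k}{*}{n}P_{i}$ for $\DeltaX[o-k]{i=0}{*}{n}P_{i+k}$, and $\Deltas{i=k}{}P_i$ for $\Deltas{i=0}{}P_{i+k}$)---and verifies the two transfer clauses by induction on the derivation of the transition, and your closure under $\bis$ subsumes those variants (which are strongly bisimilar to their $k=0$ normal forms) while also absorbing the $\Delta^{0}$ and minimiser bookkeeping that you correctly identify. The one loose thread is your remark that coinduction replaces induction in the recursion case: what actually carries that case is the same induction on the derivation of the individual transition (the premise of Rule \tableref{rec} is a smaller derivation, albeit of a structurally larger term), and \Lem{initials} by itself yields only enabledness rather than the required correspondence of targets---but read as a case analysis that recurses into rule premises, your argument is exactly that induction.
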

}
\begin{proof}
We call $\DeltaX[o]{i=k}{*}{n}P_{i}$ a \emph{variant} of $\DeltaX[o-k]{i=0}{*}{n}P_{i+k}$. \vspace{-6pt}
Likewise, $\Deltas{i=k}{}P_i$ is a variant of $\Deltas{i=0}{} P_{i+k}$.

The relation $\B$ relates any state $\xi,p$ to $\T_\xi(p)$ and to all
variants of $\T_\xi(p)$. It therefore automatically satisfies the first requirement of \Thm{elimination}.
That it satisfies the second requirement follows by a straightforward induction on the derivation of
$\xi,p \ar{a} \xi',p'$ from the rules of \Tab{sos sequential}.
That it satisfies the last requirement follows by a straightforward induction on the derivation of
$P \ar{a} P'$ from the rules of \Tab{sos sequential without data}.
\qed
\end{proof}

\etoApp
For example, it follows that $\bis\,$, and many other
semantic equivalences, are congruences on our language.

\begin{theorem}
Strong bisimilarity is a congruence for all operators of \tawn.
\end{theorem}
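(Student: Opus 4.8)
The plan is to obtain the congruence property from the data-free reformulation of the previous subsection, rather than reasoning about the data-carrying semantics of Tables~\ref{tab:sos sequential}--\ref{tab:sos network} directly. The source algebra is itself \emph{not} in de Simone format---its sequential rules carry data side conditions such as $\xi(\dexp{exp})\defined$ and extend valuations inside guards---which is precisely why the translation $\T$ into the data-free target algebra was set up. My route is therefore: (i) observe that the target algebra is in de Simone format; (ii) invoke the standard meta-result that strong bisimilarity is a congruence for every operator specified in that format; and (iii) transport this congruence back across $\T$ using \Thm{elimination}.

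For step~(i) I would check that every rule of \Tab{sos sequential without data}, together with the \emph{unchanged} rules for the parallel, node and network layers (Tables~\ref{tab:sos parallel}--\ref{tab:sos network}), conforms to the (infinitary) de Simone format: each rule has premises $x_i \ar{a_i} y_i$ on distinct argument variables $x_i$, a single conclusion whose target is built from the $y_i$ and the untouched arguments, and side conditions that only constrain the labels (e.g.\ $w_3=w_1\parl w_2$, $H\subseteq R$, $K\cap R=\emptyset$). The only rule violating this is the negative-premise rule for non-blocking broadcast of \SSect{non-blocking}; here I would use its positive-premise reformulation from \cite[\textsection4.5]{TR13}, which yields the same transition system, so that the format is genuinely de Simone. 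Step~(ii) is then immediate from the meta-theorem of \cite{dS85}, in its infinitary version (needed because of the infinite index sets in $\sum_{i\in I}$ and the universal quantifications over $R\subseteq\tIP$ and $m\in\tMSG$): $\bis$ is a congruence for all operators of the target algebra.

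For step~(iii) I would first lift the bisimulation $\B$ of \Thm{elimination} from sequential processes to all layers. Since the parallel, node and network operators are literally the same in source and target, relating every source expression $N$ to the target expression $\T(N)$ obtained by applying $\T_\xi$ to each sequential leaf gives a bisimulation---provable by structural induction on $N$, using that the shared higher-layer operators preserve $\bis$---whence $N\bis\T(N)$ at every layer. The crucial compositionality fact is that $\T$ sends each \tawn operator to a de Simone-definable target context: for the network-building operators this is verbatim ($\T(M_1\|M_2)=\T(M_1)\|\T(M_2)$, and similarly for $\parl$, $\dval{ip}\mathbin{:}\_\mathbin{:}R$ and $[\,\_\,]$), while for the data-dependent sequential operators the defining clauses of $\T_\xi$ exhibit the corresponding context explicitly---for instance $\T_\xi(p+q)=\T_\xi(p)+\T_\xi(q)$ and $\T_\xi(\assignment{\keyw{var}:=\dexp{exp}}p)=\tau.\T_{\xi'}(p)$ with $\xi'=\xi[\keyw{var}:=\xi(\dexp{exp})]$, and analogously a leading $\tau$- or ${\rm\Delta}^{i_0}$-prefix for broadcast, groupcast, unicast and guards. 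Congruence then transfers by the usual chaining: if the arguments of a \tawn operator $f$ are pairwise bisimilar, then by $N\bis\T(N)$ and transitivity their translations are bisimilar; the de Simone context $C_f$ realising $\T\circ f$ preserves $\bis$ in the target by step~(ii); and a final application of $N\bis\T(N)$ pulls the result back to the source.

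I expect the main obstacle to be step~(iii), specifically making the notion ``$\T$ commutes with each operator'' precise in the presence of valuations. The translation is not a literal homomorphism on the sequential layer---it inserts $\tau$'s and ${\rm\Delta}^{i_0}$'s and threads updated valuations $\xi'$ through the recursive clauses---so the congruence claim for a sequential operator must be phrased relative to a fixed ambient valuation and its induced updates, and one must verify that bisimilar arguments really do land in $C_f$ under matching valuations. The secondary points requiring care are the appeal to the \emph{infinitary} de Simone theorem and the replacement of the single negative-premise rule; once these are in place, the statement follows with no further computation.
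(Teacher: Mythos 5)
Your proposal follows essentially the same route as the paper: the paper's entire proof consists of the remark that the language was deliberately designed, via the data-free translation and \Thm{elimination}, to fit the (infinitary) de Simone format of \cite{dS85}, so that congruence of $\bis$ comes for free from the existing meta-theory. You have in fact spelled out considerably more of the argument than the paper does---in particular the transfer step (your (iii)) and the caveats about the infinitary format, the negative-premise rule, and the non-homomorphic sequential clauses of $\T_\xi$ are all real issues that the paper leaves implicit.
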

This is a deep result that usually takes many pages to establish (e.g.,~\cite{SRS10}).
Here we get it directly from the existing theory on structural
operational semantics, as a result of carefully designing our
language within the disciplined framework described by de Simone~\cite{dS85}.

\begin{theorem}
$\parl$ is associative, and $\|$ is associative and commutative, up to $\bis\,$.
\end{theorem}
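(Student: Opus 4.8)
The plan is to prove each of the three equations by exhibiting an explicit strong bisimulation and checking that it is respected by the relevant operational rules. The observation that makes this tractable is that the rules for $\parl$ (\Tab{sos parallel}) and for $\|$ (\Tab{sos network}) never inspect the internal structure of their arguments: every premise has the form $X \ar{a} X'$. Hence a candidate relation defined purely by the bracketing of these operators can be verified uniformly, regardless of the sequential subprocesses and data valuations at the leaves. Moreover, since $\bis$ is a congruence for all operators of \tawn (shown just above, via the de Simone format and \Thm{elimination}), it suffices to prove each equation at the level at which the operator is introduced; the equation then lifts to every context, in particular under the encapsulation operator $[\_]$.

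For commutativity of $\|$ I would take $\mathcal{C} = \{(M \| N,\; N \| M)\}$, closed under identity. This is immediate from the symmetry of the rules: the cast/listen synchronisation rule (first line of \Tab{sos network}) already lists both orientations $M\|N$ and $N\|M$; the arrive/arrive rule \tableref{arr} combines the address sets via $H\cup H'$ and $K\cup K'$; and the $\tick$ rule \tableref{tck} together with the interleaving rules \tableref{nw-al}/\tableref{nw-ar} are manifestly symmetric. Since $\cup$ is commutative, a matching transition is found on the other side in every case. For associativity I would take $\mathcal{A}_{\|} = \{((M_1 \| M_2)\| M_3,\; M_1 \|(M_2 \| M_3))\}$. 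Interleaving of $\tau$, $\deliver{\dval{d}}$, \textbf{connect} and \textbf{disconnect}, and the threefold $\tick$ synchronisation, are routine. The interesting case is a $\colonact{R}{\starcastP{m}}$ performed by one component: in either bracketing it must synchronise with an $\listen{m}$ supplied by each of the other two (amalgamated by rule \tableref{arr}), subject to the side conditions $H \subseteq R$ and $K \cap R = \emptyset$. Because $\cup$ is associative and these side conditions distribute over the union of the $H$- and $K$-sets, the emitted label and the reached state agree in both bracketings; \Lem{arrive-enabled} guarantees the required $\listen{m}$-transitions exist.

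The main work, and the step I expect to be the real obstacle, is associativity of $\parl$, via $\mathcal{A}_{\parl} = \{((P \parl Q)\parl S,\; P \parl(Q \parl S))\}$. Two things must match. First, the send/receive communication pattern: by tracking which \textbf{receive}- and \textbf{send}-actions stay free and which synchronise, one checks that both bracketings realise exactly the synchronisations ``\textbf{receive} of $P$ with \textbf{send} of $Q$'' and ``\textbf{receive} of $Q$ with \textbf{send} of $S$'', while the free actions are the \textbf{receive}s of $S$ and the \textbf{send}s of $P$. Second, and most delicate, the combination of wait actions through the partial function $\parl$ on $\W$ (and, identically, of transmission actions via rules \tableref{p-tl}, \tableref{p-tr}, \tableref{p-t}, using \Lem{transmission actions} to align the range $R$) must itself be associative.

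I would verify the wait-action step by encoding each $w\in\W$ as a pair of bits $(r,s)$, where $r$ records the ``conditional on not receiving'' side condition and $s$ the ``conditional on not sending'' one, so that $\w=(0,0)$, $\wtr=(1,0)$, $\ws=(0,1)$, $\wrs=(1,1)$. Reading off the table of \Tab{sos parallel}, $w_1\parl w_2$ is undefined exactly when $r_1=1$ and $s_2=1$ (the left receive synchronises with the right send into a prioritised $\tau$), and otherwise equals $(r_2,s_1)$. A direct computation then shows that both $(w_1\parl w_2)\parl w_3$ and $w_1\parl(w_2\parl w_3)$ equal $(r_3,s_1)$ and are defined precisely when $\neg(r_1\wedge s_2)\wedge\neg(r_2\wedge s_3)$, so $\parl$ on $\W$ is associative. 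With this bit-encoding the computation is short; the subtlety lies entirely in reading the priority of $\tau$-synchronisation over waiting correctly out of the table, and in confirming that this $\W$-combination lines up with the matching send/receive synchronisation pattern in the two bracketings.
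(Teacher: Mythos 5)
Your proof is correct, but it takes a genuinely different route from the paper's. The paper does not build explicit bisimulations: it observes that the operational rules for $\parl$ and $\|$ fit an associativity rule format from the literature, so that associativity up to $\bis$ follows from a meta-theorem (with the details as for untimed AWN); the single new proof obligation this leaves is precisely the step you single out as delicate, namely the associativity of the partial operation $\parl$ on $\W$, which the paper discharges automatically with the theorem prover Prover9 rather than by hand. Commutativity of $\|$ is, as in your argument, read off from the symmetry of the rules. Your encoding $\w=(0,0)$, $\wtr=(1,0)$, $\ws=(0,1)$, $\wrs=(1,1)$, with $w_1\parl w_2$ undefined iff $r_1=s_2=1$ and otherwise equal to $(r_2,s_1)$, does reproduce the table defining $\parl$ on $\W$, and it makes the associativity---including the agreement of the definedness conditions $\neg(r_1\wedge s_2)\wedge\neg(r_2\wedge s_3)$ on both sides---a two-line computation, so it is a clean human-readable replacement for the automated check. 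What the rule-format approach buys is that the case analysis over which \textbf{send}/\textbf{receive} synchronisations are realised in each bracketing, and the alignment of the undefined entries of the $\W$-table with the prioritised $\tau$-synchronisations, are absorbed into the meta-theorem; your direct construction is self-contained and makes visible exactly where the associativity of the $\W$-operation is used. One small remark: in the associativity case for $\|$ you do not actually need \Lem{arrive-enabled}, since the derivation of a $\colonact{R}{\starcastP{m}}$-transition under one bracketing already supplies the $\listen{m}$-transitions of the other two components needed to reassemble it under the other bracketing.
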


\prf
The operational rules for these operators fit a format presented in \cite{CMR08},
guaranteeing associativity up to~$\bis$. The details are similar to
the case for AWN, as elaborated in \cite{ESOP12,TR13}; the only extra complication is
the associativity of the operator $\parl$ on $\W$, as defined on
\hyperlink{parallel}{Page~\pageref*{parallel}}, which we checked automatically by means of the theorem prover Prover9~\cite{Prover9}.
Commutativity of $\|$ follows by symmetry of the rules.
\eprf

%%%%%%%%%%%%%%%%%%%%%%%%%
\journalonly{\subsection{Simulation of \tawn by AWN}}

\toApp
\subsection{Simulation Results}\label{subapp:simulation}

A \emph{doubly labelled transition system} (L$^2$TS) (over sets $\act$ and $\rm\Sigma$) is a triple
\mbox{$(\bbbp,\rightarrow,\ell)$}, where $\bbbp$ is as a set of \emph{processes} or \emph{states},
$\mathord\rightarrow \subseteq \bbbp \times \act \times \bbbp$ is a \emph{transition relation} and
\emph{$\ell: \bbbp \rightarrow \rm\Sigma$} a \emph{state labelling}.

A \emph{simulation} is a binary relation between the states of two L$^2$TSs satisfying the
\emph{transfer property}: any transition from a state in the source L$^2$TS can be mimicked by a
``similar'' transition from a related state in the target L$^2$TS, such that the end states of both
transitions are again related. Usually a similar transition is taken to be one with the same
label, but here we generalise this by parametrising a simulation with an explicit similarity
relation $\U^\act$ between the transition labels of the two L$^2$TSs. We additionally require
related states to have a similar state label, a notion parametrised by an explicit similarity relation
$\U^{\rm\Sigma}$ between the state labels of the two L$^2$TSs.
A \emph{weak} simulation \cite{vG93} allows, in satisfying the transfer property, internal actions $\tau$
to precede and follow the mimicking transition---moreover, it allows internal transitions $\tau$ to be
mimicked by doing no transition.
% An \emph{$\eta$-simulation} \cite{vG93} strengthens the transfer property of a weak simulation by
% requiring that the source state of the mimicking transition is related to the source state of the
% original transition.

For $P,Q\in\bbbp$, write $P \ar a Q$ for $(P,a,Q)\in\mathord{\rightarrow}$.
% Moreover, $P \ar{\opt{a}} Q$ denotes $(P \ar a Q) \vee (a=\tau \wedge P=Q)$.
Suppose that $\act$ contains the \emph{internal action}~$\tau$.
Then $P \Longrightarrow Q$ for $P,Q\in\bbbp$ denotes an arbitrary (possibly empty) sequence of
$\tau$-transitions, i.e., there are states
$P_{0},\dots,P_n$ with $P=P_{0}\ar\tau P_{1} \cdots P_{n-1} \ar\tau P_n=Q$.
Moreover, $P \dar{a} Q$, with $a \in \act$, denotes 
\plat{$P \Longrightarrow \ar{a} \Longrightarrow Q$}, 
and $P \dar{\hat a} Q$ denotes $P \Longrightarrow Q$ if $a=\tau$
 and 
$P \dar{a} Q$ otherwise.

%Moreover, $P \dar{\hat a} Q$, with $a \in \act$, denotes $P \Longrightarrow Q$ in case $a=\tau$, and
%\plat{$P \Longrightarrow \ar{a} \Longrightarrow Q$} otherwise.

{%local redefinition of \thedefinition 
\renewcommand{\thedefinition}{A.\arabic{appdefcount}}
\stepcounter{appdefcount}
\addtocounter{definition}{-1}
\begin{definition}\rm\label{def:simulation}
Let $(\bbbp_i,\rightarrow_i,\ell_i)$ for $i=1,2$ be two L$^2$TSs,
labelled over sets $\act_i$ and $\rm\Sigma_i$, respectively.
Furthermore, let $\U^\act \subseteq \act_1 \times \act_2$ and
$\rm\U^\Sigma \subseteq \Sigma_1 \times \Sigma_2$.\linebreak[3]
A \emph{weak simulation} w.r.t.\ $\U^\act$ and $\rm\U^\Sigma$ is a binary relation
$\mathord{\Sim}\subseteq \bbbp_1\times\bbbp_2$ between the states of the two L$^2$TSs,
such that\vspace{-1ex}
\begin{itemize}
\item if $P \Sim Q$ and $P\!\ar{a} P'$ then $\exists Q',b$ such that
  $Q\!\dar{\hat b} Q'$, $a \U^\act\! b$ and $P' \Sim Q'$,%
  \footnote{In case $b=\tau$, no action needs to be taken, that means, $Q=Q'$ is allowed if 
  $P' \Sim Q$.}
\item if $P \Sim Q$ then $\ell_1(P) \U^{\rm\Sigma} \ell_2(Q)$.
\end{itemize}
When $P \Sim Q$, we speak of a weak simulation of $P$ \emph{by} $Q$.
\end{definition}
}

In \otawn the sequential processes $\xi,p$, equipped with the transition
relation generated by the rules of \Tab{sos sequential}, form a double labelled transition system by
taking $\ell(\xi,\p)\mathbin{:=}\xi$, the \emph{data state} of $\xi,p$.
In generalising this idea to parallel processes and network expressions we postulate two requirements on applications of \otawn:
\begin{enumerate}
\vspace{-1ex}
\item In a parallel process expression $\xi_1,p_1 \parl \dots \parl \xi_n,p_n$ the variables maintained by the
  $p_i$ (i.e., the domains of the partial functions $\xi_i$) are
  pairwise disjoint.
% \footnote{\outP{Except perhaps for the variable $\now$; we then require that all $\xi_i$ agree on its value.}}
\item Each node expression $\dval{ip}:P:R$ occurring in a (partial) network expression has a different
  address $\dval{ip}$.
\vspace{-1ex}
\end{enumerate}
The first requirement is not a restriction at all, since it can easily be achieved by renaming.%
\footnote{If the variable \now is renamed, the SOS rules of \Sect{tawnsos} have to be adapted accordingly.}
The second requirement is satisfied for all applications we encountered so far. Dropping one of the 
requirements would
merely increase the bookkeeping effort of defining the notion of a global data state.
Requirement 1 allows us to define the data state $\ell(P)$ of a parallel process expression 
$P = \xi_1,p_1 \parl \dots \parl \xi_n,p_n$ as $\bigcup_{i=1}^n \xi_n$, whereas Requirement 2
enables a definition of the global data state $\ell(N)$ of a (partial) network expression as a
partial function $\sigma$ that associates with each address $\dval{ip}$ of a node expression
$\dval{ip}:P:R$ occurring in $N$ the data state of $P$.

The above definitions yield L$^2$TSs for the processes and network expressions of \otawn.
To compare the behaviour of AWN and \tawn, we construct weak simulations between their L$^2$TSs.
These will show that each AWN network expression $N$, seen as a \tawn network expression, is weakly
simulated by the AWN-expression $N$, and likewise for AWN process expressions.

To this end, we first define similarity relations between the state and transition labels that occur
in the semantics of \tawn and AWN\@. The only difference in their data states is that \tawn processes
maintain the variable $\now$, which is absent in AWN\@. Consequently,
the similarity relation between the state labels of processes is given by $\xi \U^{\rm\Sigma} \exclnow$
for any \tawn-valuation $\xi$.
Here $\exclnow$ is the AWN valuation obtained by omitting the value of $\now$ from $\xi$.
For networks, this generalises to $\sigma \U^{\rm\Sigma} \exclnow[\sigma]$, where
$\exclnow[\sigma](\dval{ip}):=\exclnow[\sigma(\dval{ip})]$ for all $\dval{ip}\in\mbox{dom}(\sigma)$.

The translation labels of AWN processes include the actions $\broadcastP{\dval{m}}$,
$\groupcastP{\dval{D}}{\dval{m}}$, $\unicast{\dval{dip}}{\dval{m}}$ and
$\neg \unicast{\dval{dip}}{\dval{m}}$ for $m\in\tMSG$, $D\subseteq\tIP$ and $\dval{dip}\in\tIP$;
in \tawn these are replaced by 
$\colonact{\range}{\starcastP{m}}$. Furthermore, \tawn processes have transition labels 
$w_1$ and $R{:}w_1$ for $w_1\in\W$ and $R\subseteq\tIP$, which are absent in AWN\@.
Consequently, the similarity relation between the transition labels of processes is given by
\begin{itemize}
\vspace{-1ex}
\item the identity relation on the \otawn transition labels
$\send{\dval{m}}$, $\deliver{\dval{d}}$, $\receive{\dval{m}}$ and internal actions $\tau$,
for all $\dval{m}\in\tMSG$, $\dval{d}\in\tDATA$,
\item $\colonact{\range}{\starcastP{\dval{m}}} \U^\act b$,
where $b$ is either $\broadcastP{\dval{m}}$, $\groupcastP{\dval{D}}{\dval{m}}$ with
$\dval{dsts}\subseteq D$, $\unicast{\dval{dip}}{\dval{m}}$ with $\dval{dsts}=\{\dval{dip}\}$
or $\neg \unicast{\dval{dip}}{\dval{m}}$ with $\dval{dsts}=\emptyset$,
\item $w_1 \U^\act \tau$ and $R{:}w_1 \U^\act \tau$ for $w_1\in\W$ and $R\subseteq\tIP$.
\vspace{-1ex}
\end{itemize}
On the level of network expressions, the only difference is the \tawn transition label $\tick$,
which is absent in AWN\@. The similarity relation between the transition labels of network
expressions is given by 
\begin{itemize}
\vspace{-1ex}
\item the identity relation on AWN transition labels,%
	\footnote{The labels are $\colonact{R}{\starcastP{\dval{m}}}$,
				$\colonact{H\neg K}{\listen{\dval{m}}}$,
				$\colonact{\dval{ip}}{\deliver{\dval{d}}}$,
				$\textbf{connect}(\dval{ip},\dval{ip}')$,
				$\textbf{disconnect}(\dval{ip},\dval{ip}')$,
				$\colonact{\dval{ip}}{\textbf{newpkt}(\dval{d},\dval{dip})}$
				and $\tau$.}
\item $\colonact{\emptyset}{\starcastP{m}} \U^\act \tau$,
\item $\tick \U^\act \tau$.
\vspace{-1ex}
\end{itemize}
% One mismatch is that \tawn has transition labels $w_1$, $R{:}w_1$ and $\tick$, indicating the passage of
% time, which do not occur in AWN\@. We address this mismatch by reading (in this section only) each such
% time consuming action as an internal action $\tau$. Another mismatch is in the data states for \tawn,
% which feature the variable $\now$. We address this by considering L$^2$TSs that disregard this variable.

In order for our envisioned simulation to exist, we make one more abstraction:
we read all \textbf{(dis)connect}-actions as $\tau$s. It is with this modification of the L$^2$TSs
of AWN and \tawn in mind that we speak of weak simulations below.

{%local redefinition of \thetheorem
\renewcommand{\thetheorem}{A.\arabic{apptheoremcount}}
\stepcounter{apptheoremcount}
\addtocounter{theorem}{-1}
\begin{theorem}
% A sequential process in T-AWN can be simulated in AWN.
\label{thm:SP_simulation}
Given a common underlying data structure
modulo the variables $\now$\footnote{The variables $\now$ only occur in the data structure of \tawn.} there exists a weak simulation $\Sim$
w.r.t.\ $\U^\act$ and $\rm\U^\Sigma$ of the sequential processes of \tawn by the ones of AWN\@,
such that each AWN process simulates its interpretation as a \tawn process.
\end{theorem}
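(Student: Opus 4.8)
The plan is to exhibit a single explicit relation $\Sim$ between the \tawn\ states $(\xi,p)$ and the AWN states $(\exclnow,s)$, and to verify the two requirements of a weak simulation w.r.t.\ $\U^\act$ and $\U^{\rm\Sigma}$. Into $\Sim$ I would put, firstly, every pair $(\xi,p)\Sim(\exclnow,p)$ in which $p$ is a genuine ($\now$-free) AWN sequential process expression; this immediately yields the clause ``each AWN process simulates its interpretation as a \tawn\ process''. Secondly, I relate each transmission-in-progress state $(\xi,\stargcast{\range}{m}[n,o].p\prio q)$ to every AWN state $(\exclnow,s)$ from which the matching cast is still enabled with the correct continuations, i.e.\ from which the AWN cast action $b$ with $\colonact{\range}{\starcastP{m}}\U^\act b$ leads to $\exclnow,p$, and, in the unicast case $\range\subseteq\{\dval{dip}\}$, from which in addition $\neg\unicast{\dval{dip}}{m}$ leads to $\exclnow,q$. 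The state-label requirement is immediate: on both sides $\ell$ returns the valuation, so $\ell_1(\xi,p)=\xi$ and $\ell_2(\exclnow,s)=\exclnow$ are related by $\xi\U^{\rm\Sigma}\exclnow$, which holds by definition of $\U^{\rm\Sigma}$. Throughout I use that AWN expressions never mention $\now$, so evaluating their data (sub)expressions is insensitive to $\now$ and, by \Prop{time determinism} and its analogue for transmission actions, $\exclnow$ is left unchanged by any step that only increments $\now$.

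For the transfer clause I would induct on the derivation of a \tawn\ transition $\xi,p\ar{a}\xi',p'$ from \Tab{sos sequential}. After projecting $\now$ away, the rules \tableref{grd}, \tableref{snd}, \tableref{del}, \tableref{rcv}, \tableref{ass}, \tableref{alt-l}, \tableref{alt-r} and \tableref{rec} are literally the AWN rules, so each such step is answered by the identical AWN step (same label, matched by $a\U^\act a$) whose target is again a $\now$-free AWN process lying in the base part of $\Sim$; for the rules carrying sub-transition premises (\tableref{alt-l}, \tableref{alt-r}, \tableref{rec}) the matching AWN move is supplied by the induction hypothesis. Every durational step---\tableref{w} ($\w$), \tableref{wr} ($\wtr$), \tableref{ws} ($\ws$), the waiting cases \tableref[ngrd]{$\neg$grd}, \tableref{rec-w} and \tableref{alt-w}, and the transmission steps \tableref{tr}, \tableref{tr-o} (labelled $\Rw{\w}$)---carries a label that is $\U^\act$-related to $\tau$ and changes only $\now$; I answer each of them by the empty move, which $\hat\tau$ permits. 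Their targets stay $\Sim$-related: in the first group the process component is unchanged, while \tableref{tr}/\tableref{tr-o} pass to a further transmission state whose destination set has only shrunk and which is therefore still enabled from the same $s$. The cast-initiating $\tau$-steps \tableref{bc}, \tableref{gc}, \tableref{uc} are likewise answered by the empty move; crucially, such a step can fire underneath a choice (via \tableref{alt-l}/\tableref{alt-r}), which is exactly why the transmission states must be related to general cast-enabling contexts $s$ and not merely to the bare cast process.

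The heart of the argument, and the step I expect to be the main obstacle, is matching the cast-completing rules \tableref{sc} and \tableref[nsc]{$\neg$sc}, where the \tawn\ action $\colonact{\range}{\starcastP{m}}$ must be answered by a genuine AWN cast. For a state descended from a \textbf{broadcast} or \textbf{groupcast} both rules continue as $p$ (there $q=p$), and the defining clauses of $\U^\act$ give $\colonact{\range}{\starcastP{m}}\U^\act\broadcastP{m}$ for every $\range$, respectively $\colonact{\range}{\starcastP{m}}\U^\act\groupcastP{D}{m}$ precisely because the invariant $\range\subseteq D$ is maintained throughout the transmission; the AWN \textbf{broadcast}/\textbf{groupcast} step from $s$ then reaches $p$, landing back in the base part of $\Sim$. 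The delicate case is \textbf{unicast}, where the continuation depends on whether the single destination stayed in range: if $\range=\{\dval{dip}\}$ rule \tableref{sc} proceeds as $p$, matched by $s\ar{\unicast{\dval{dip}}{m}}p$ with $\colonact{\{\dval{dip}\}}{\starcastP{m}}\U^\act\unicast{\dval{dip}}{m}$, whereas if $\range=\emptyset$ rule \tableref[nsc]{$\neg$sc} proceeds as $q$, matched by $s\ar{\neg\unicast{\dval{dip}}{m}}q$ with $\colonact{\emptyset}{\starcastP{m}}\U^\act\neg\unicast{\dval{dip}}{m}$. Thus it is exactly the permissiveness of $\U^\act$ in its destination argument, together with the simultaneous availability of the success and the failure branch of the AWN \textbf{unicast}, that lets one fixed AWN process simulate an evolving transmission state whose success/failure outcome is settled only at the final instant. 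Since every target pair produced above again lies in $\Sim$, the transfer property holds and $\Sim$ is the desired weak simulation.
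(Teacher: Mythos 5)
Your proposal is correct and follows essentially the same route as the paper: the paper's $\Sim$ is likewise the union of an identity-on-AWN-expressions part ($\Sim_0$) with a part relating in-transmission states $\stargcast{\range}{m}[n,o].p\prio q$ to the originating \textbf{broadcast}/\textbf{groupcast}/\textbf{unicast} expressions closed under $+$-contexts ($\Sim_2$), and it verifies the transfer property exactly as you do — durational and cast-initiating steps answered by the empty move (using \Prop{time determinism}), the completing $\colonact{\range}{\starcastP{m}}$ answered by the matching AWN cast via the permissiveness of $\U^\act$, with the unicast success/failure split handled at the final instant. Your semantic characterisation of the transmission-related AWN states (``any $s$ from which the matching cast with the right continuations is enabled'') is just a mildly more liberal packaging of the paper's syntactic $\Sim_1/\Sim_2$ construction.
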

}

\begin{proof}
Define $\Sim$ as $\Sim_0 \cup \Sim_2$, where $\Sim_0$ consists of all pairs $((\xi,\p),(\exclnow,\p))$ for arbitrary
sequential AWN expressions $\p$---which also are sequential \tawn expressions---and \tawn valuations $\xi$.

\pagebreak[3]
Let $\Sim_1$ be the relation containing the following pairs:
\begin{itemize}
\item $((\xi,\stargcast{\dval{dsts}}{\xi(\dexp{ms})}[n,o].\p \prio \p),(\exclnow,\broadcastP{\dexp{ms}}.\p))$
  \mbox{}\hfill if $\xi(\dexp{ms})\defined$,
\item $((\xi,\stargcast{\dval{dsts}}{\xi(\dexp{ms})}[n,o].\p \prio \p),(\exclnow,\groupcastP{\dexp{dests}}{\dexp{ms}}.\p))$\\
  \mbox{}\hfill if $\xi(\dexp{ms})\defined$, $\xi(\dexp{dests})\defined$ and $\dval{dsts} \subseteq \xi(\dexp{dests})$,
\item $((\xi,\stargcast{\dval{dsts}}{\xi(\dexp{ms})}[n,o].\p \prio \q),(\exclnow,\unicast{\dexp{dest}}{\dexp{ms}}.\p \prio \q))$\\
  \mbox{}\hfill if $\xi(\dexp{ms})\defined$, $\xi(\dexp{dest})\defined$ and $\dval{dsts} \subseteq \{\xi(\dexp{dest})\}$,
\end{itemize} 
for \tawn valuations $\xi$, $\dval{dsts} \subseteq \tIP$, $n,o\in\NN$, sequential \awn processes $p$, $q$
(in the source algebra of $\Sim_1$ again interpreted as \tawn expressions),
and data expressions $\dexp{ms}$, $\dexp{dest}$ and $\dexp{dests}$ of type $\tMSG$, $\tIP$ and
$\pow(\tIP)$, respectively.
Then $\Sim_2$ is the smallest relation containing $\Sim_1$, such that
$((\xi,p),(\zeta,q))\in{\Sim_2} \Rightarrow ((\xi,p),(\zeta,r+q)), ((\xi,p),(\zeta,q+r)) \in{\Sim_2}$,
for all \awn-processes $r$.

To show that $\Sim$ is a weak simulation, we need to demonstrate that it satisfies the two requirements
of Definition~\ref{def:simulation}. The second requirement is satisfied by construction.
Moreover, we obtain a stronger version of the first requirement:
\begin{itemize}
\vspace{-1ex}
\item if $P \Sim_{0} Q$ and $P \ar{w_1} P'$ with $w_1\in\W$ then $P' \Sim_1 Q$,
\item if $P \Sim_{0} Q$ and $P \mathbin{\ar{\tau}} P'$ then either $P' \mathbin{\Sim_2} Q$ or $\exists Q'$ with
  $Q \mathbin{\ar{\tau}} Q'$ and $P' \mathbin{\Sim_1} Q'\!$,
\item if $P \Sim_{0} Q$ and $P \mathbin{\ar{a}} P'$ with $a\mathbin{\not\in}\W{\cup}\{\tau\}$ 
then $\exists Q'$ with
  $Q \mathbin{\ar{a}} Q'$ and $P' \mathbin{\Sim_1} Q'\!$,
\item if $P \Sim_2 Q$ and $P \ar{R{:}\w} P'$ then $P' \Sim_2 Q$,
\item if $P \Sim_2 Q$ and $P \ar{\colonact{\range}{\starcastP{m}}} P'$ then $\exists Q'$ such that
  $Q \ar{b} Q'$ and $P' \Sim_1 Q'$,
  where $b$ is either $\broadcastP{\dval{m}}$, or $\groupcastP{\dval{D}}{\dval{m}}$ with
  $\dval{dsts}\subseteq D$, or $\unicast{\dval{dip}}{\dval{m}}$ with $\dval{dsts}=\{\dval{dip}\}$
  or $\neg \unicast{\dval{dip}}{\dval{m}}$ with $\dval{dsts}=\emptyset$,
\vspace{-1ex}
\end{itemize}
considering that other combinations of $P \Sim Q$ and $P \ar{a} P'$ cannot occur.
The first of these properties follows from \Prop{time determinism}.
The fourth follows from Rules~\tableref{tr} and~\tableref{tr-o} of \Tab{sos sequential} and
a trivial induction on the definition of $\Sim_2$.\linebreak[4]
Demonstrating the others proceeds by a straightforward induction on the derivation of \tawn transitions
from the rules of \Tab{sos sequential}.
\qed
\end{proof}
We have shown that each sequential AWN process $P$, seen as a \tawn process, can be simulated by the AWN process $P$.
We will now lift this result to parallel processes, node expressions, partial network expressions
and finally complete networks. This constitutes the proof of \Thm{network_simulation}.
\etoApp
%%%%%%%%%%%%%%%%%%%%%%%%%%%%
\begin{theorem}
Each AWN process $P$, seen as a \tawn process, can be simulated by the AWN process $P$.
Likewise, each AWN network $N$, seen as a \tawn network, can be simulated by the AWN network $N$.
\label{thm:network_simulation}
\end{theorem}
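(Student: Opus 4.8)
The plan is to take the weak simulation $\Sim$ on sequential processes provided by \Thm{SP_simulation} and to propagate it upward through the three remaining syntactic layers---parallel processes, node expressions, and (partial and complete) network expressions---defining at each layer a simulation relation built from the one below it, and verifying the transfer property of Definition~\ref{def:simulation} by a case analysis on the \tawn operational rules, with the lower layer serving as induction hypothesis. Three facts drive every case. First, the state labels $\ell(\cdot)$ record only data states, never transmission ranges, and $\U^{\rm\Sigma}$ discards the value of $\now$; hence related \tawn and \awn expressions may freely differ in the value of $\now$ and in their transmission ranges. Second, all durational labels ($w_1\in\W$, $\Rw{w_1}\in\RW$, and $\tick$), the empty-range cast $\colonact{\emptyset}{\starcastP{m}}$, and the \textbf{(dis)connect}-actions are $\U^\act$-similar to $\tau$, so on the \awn side each is mimicked by the empty move that weak simulation permits. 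Third, by \Prop{time determinism} a time step alters no data beyond $\now$, so mimicking it by no \awn move preserves relatedness.

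For the parallel layer I would relate $P_1\parl\dots\parl P_n$ to $Q_1\parl\dots\parl Q_n$ exactly when $P_i \Sim Q_i$ for every $i$, and check transfer according to the rule applied. The interleaving rules \tableref{p-al} and \tableref{p-ar} lift directly from the component simulation. For the synchronisation rule \tableref{p-a} the key point is that \textbf{receive}- and \textbf{send}-actions lie outside $\W\cup\{\tau\}$ and are therefore matched in \awn by \emph{single} transitions carrying the identical label; the very same \textbf{receive}/\textbf{send} pair then synchronises into a $\tau$ on the \awn side. The four time rules \tableref{p-w}, \tableref{p-tl}, \tableref{p-tr} and \tableref{p-t} each produce a label in $\W\cup\RW$, mimicked by no \awn move, the target staying related because each component's wait- or transmission-step keeps it related to its unchanged \awn counterpart.

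At the node and network layers I would relate $\dval{ip}:P:R$ to $\dval{ip}:Q:R'$ whenever $P \Sim Q$---deliberately \emph{not} requiring $R=R'$---relate $M_1\|M_2$ componentwise, and relate $[M]$ through the encapsulation rules. The inherited actions ($\tau$, $\deliver{\dval{d}}$, the arrive obtained from $\receive{m}$, and the \textbf{newpkt} arrive at the encapsulation level) transfer with identical labels, while every $\tick$ (rules \tableref{n-w}, \tableref{n-t}, \tableref{tck}, \tableref{e-tck}) and every \textbf{(dis)connect} is mimicked by the empty move---which is exactly why the node relation can afford to ignore ranges.

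The genuinely delicate step, which I expect to be the main obstacle, is the \textbf{*cast}. In \tawn the range $\range$ carried by the terminal $\colonact{\range}{\starcastP{m}}$ is the set of intended destinations intersected with the node's range at \emph{every} tick of the multi-step transmission (rules \tableref{tr} and \tableref{tr-o}), so intermediate disconnections that are subsequently repaired shrink $\range$ below the range that a single, instantaneous \awn cast would compute from the current topology. I would resolve this using the second fact above together with \Lem{arrive-enabled}: since \textbf{(dis)connect}-actions read as $\tau$, I insert into the \awn run, immediately before the mimicking cast, enough \textbf{(dis)connect}-$\tau$ steps to make the casting node's range restrict the intended destinations to exactly $\range$, so that \awn emits the identical label $\colonact{\range}{\starcastP{m}}$; \Lem{arrive-enabled} then supplies, for the partition $L=H\dcup K$ determined by $\range$ (receivers $H\subseteq\range$, non-receivers $K$ disjoint from $\range$), the matching synchronising arrive of the remaining nodes. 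Because ranges are absent from the state label, the post-cast states remain related despite this temporary topology surgery. The encapsulation layer then adds nothing new: $\colonact{\emptyset}{\starcastP{m}}\U^\act\tau$ and $\tick\U^\act\tau$ dispatch rules \tableref{e-sc} and \tableref{e-tck}, and \tableref{e-a} and \tableref{e-np} pass their labels through unchanged.
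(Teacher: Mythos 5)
Your construction follows the paper's own proof almost exactly: the same layer-by-layer lifting of \Thm{SP_simulation} via componentwise relations, time steps and \textbf{(dis)connect}s mimicked by empty moves, and—crucially—the same resolution of the \textbf{*cast} mismatch by inserting \textbf{(dis)connect}-$\tau$ steps that temporarily reshape the casting node's range so that the instantaneous AWN cast emits exactly the accumulated label $\colonact{\range}{\starcastP{m}}$ (the paper derives $\dval{ip}{:}Q{:}R \dar{} \dval{ip}{:}Q{:}\dval{dsts} \dar{\colonact{\range}{\starcastP{m}}} \dval{ip}{:}Q'{:}\dval{dsts} \dar{} \dval{ip}{:}Q'{:}R$ from the AWN rules). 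The one genuine structural deviation is that you drop $R=R'$ from the node-level relation, whereas the paper requires it and therefore restores the range after the cast; since ranges are not part of the state label $\ell(\cdot)$, your coarser relation is equally admissible and merely saves the restoration step.

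The one place where your argument, as written, does not establish what the transfer property demands is the synchronising arrive. \Lem{arrive-enabled} only asserts that for any partition $L=H\dcup K$ \emph{some} transition $N\ar{\colonact{H\neg K}{\listen{m}}}N'$ exists; it says nothing about where it leads. For the nodes in $K$ this is harmless (rule \tableref{n-dis} leaves the state unchanged), but each node in $H$ performs a genuine \textbf{receive} that alters its state (e.g.\ the message is appended to the $\QMSG$ queue), and the simulation requires the resulting AWN states to remain related to the resulting \tawn states. That relatedness comes not from \Lem{arrive-enabled} but from the componentwise induction hypothesis: arrive labels lie in the identity part of $\U^\act$, and the premise $N\ar{\colonact{H\neg K}{\listen{m}}}N'$ of rule \tableref{nw-tl/nw-tr} is matched by an identically labelled arrive of the related AWN network, built from matching \textbf{receive}s of the nodes in $H$ and disregard steps of the nodes in $K$. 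This is how the paper handles it (a plain induction on the derivation from \Tab{sos network}); substituting that for your appeal to \Lem{arrive-enabled} repairs the argument without changing anything else.
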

\prfatend
% To prove this result, we will use Theorem \ref{SP_simulation} and show that parallel processes can
% also be simulated, then nodes, partial networks and finally complete networks.
%
Given weak simulations $\Sim_1$ and $\Sim_2$ of parallel \tawn processes by parallel AWN processes,
define the simulation ${\Sim_1} \parl {\Sim_2}$ of parallel \tawn processes by parallel AWN processes by
\[ P_1 \parl P_2 \mathrel{({\Sim_1} \parl {\Sim_2})} Q_1 \parl Q_2 ~~:\Leftrightarrow~~  P_1 \Sim_1 Q_1 \wedge P_2 \Sim_2 Q_2\;. \] 
By construction, this relation satisfies the second requirement of Definition~\ref{def:simulation}.
A straightforward induction on the derivation of \tawn transitions
from the rules of \Tab{sos parallel} shows that ${\Sim_1} \parl {\Sim_2}$ also satisfies the first
requirement, and thus is a weak simulation indeed. Now a trivial induction on the number of
sequential processes occurring in a parallel process, with \Thm{SP_simulation} as base
case and the above observation as induction step, lifts \Thm{SP_simulation} to parallel processes.

Given a weak simulation $\Sim$ of parallel \tawn processes by parallel AWN processes,
define the simulation $\Sim'$ of \tawn node expressions by AWN node expressions by
\[\dval{ip\/}{:}P{:}R ~\Sim~ \dval{ip}'{:}\hspace{.5pt}Q\hspace{.5pt}{:}R'  ~~:\Leftrightarrow~~
  \dval{ip}=\dval{ip}' \wedge P \Sim Q \wedge R=R'\;.\]
By construction, this relation satisfies the second requirement of Definition~\ref{def:simulation}.
By induction on the derivation of \tawn transitions from the rules of \Tab{sos node}
we show that ${\Sim_1} \parl {\Sim_2}$ also satisfies the first requirement, and thus is a weak
simulation. 
The only non-trivial cases are when
{$\dval{ip\/}{:}P{:}R \ar{\;\colonact{\range}{\starcastP{m}}\;} \dval{ip\/}{:}P'{:}R$} and
\plat{$\dval{ip\/}{:}P{:}R \ar{\;\;\tick\;\;} \dval{ip\/}{:}P'{:}R$}.
In the former case
 $P \ar{\;\;\colonact{\range}{\starcastP{m}}\;\;} P'$, so by induction
$Q \dar{b} Q'$ for some $Q'$ with $P' \Sim Q'$,
  where $b$ is either $\broadcastP{\dval{m}}$, $\groupcastP{\dval{D}}{\dval{m}}$ with
  $\dval{dsts}\subseteq D$, $\unicast{\dval{dip}}{\dval{m}}$ with
  $\dval{dsts}=\{\dval{dip}\}$ or\\%hard linebreak inserted; was too stretched out}\\
  $\neg \unicast{\dval{dip}}{\dval{m}}$ with
  $\dval{dsts}=\emptyset$.
By the rules of \cite[Table 3]{ESOP12},%\vspace{-1ex}\linebreak[4]
\[\dval{ip\/}{:}\hspace{.5pt}Q\hspace{.5pt}{:}R \dar{}
  \dval{ip\/}{:}\hspace{.5pt}Q\hspace{.5pt}{:}\dval{dsts} \dar{\;\colonact{\range}{\starcastP{m}}\;}
  \dval{ip\/}{:}\hspace{.5pt}Q'{:}\dval{dsts} \dar{}
  \dval{ip\/}{:}\hspace{.5pt}Q'{:}R\;,\]
except in the case that $b=\neg \unicast{\dval{dip}}{\dval{m}}$, when we obtain
\[\dval{ip\/}{:}\hspace{.5pt}Q\hspace{.5pt}{:}R \dar{}
  \dval{ip\/}{:}\hspace{.5pt}Q\hspace{.5pt}{:}\hspace{.5pt}\emptyset \dar{\;\tau\;}
  \dval{ip\/}{:}\hspace{.5pt}Q'{:}\hspace{.5pt}\emptyset \dar{}
  \dval{ip\/}{:}\hspace{.5pt}Q'{:}R\;.\]
Here the derivations
$\dval{ip\/}{:}\hspace{.5pt}Q\hspace{.5pt}{:}R \dar{}  \dval{ip\/}{:}\hspace{.5pt}Q\hspace{.5pt}{:}\dval{dsts}$
and $\dval{ip\/}{:}\hspace{.5pt}Q'{:}\dval{dsts} \dar{} \dval{ip\/}{:}\hspace{.5pt}Q'{:}R$
consists of \textbf{(dis)connect}-actions---this is the reason these are seen as $\tau$'s here.
In case \plat{$\dval{ip\/}{:}P{:}R \ar{\tick} \dval{ip\/}{:}P'{:}R$}, we have 
$P \ar{w_1} P'$ or $P \ar{R\mathord{:}w_1} P'$, so by induction there is a $Q'$ with $Q \dar{} Q'$
and $P' \Sim Q'$. By the rules of \cite[Table 3]{ESOP12} we have
$\dval{ip\/}{:}\hspace{.5pt}Q\hspace{.5pt}{:}R \dar{}\dval{ip\/}{:}\hspace{.5pt}Q'{:}R$.

Given weak simulations $\Sim_1$ and $\Sim_2$ of \tawn partial network expressions by AWN partial
network expressions,
define the simulation ${\Sim_1} \| {\Sim_2}$ of \tawn partial network expressions by AWN partial
network expressions by
\[ N_1 \| N_2 \mathrel{({\Sim_1} \| {\Sim_2})} M_1 \| M_2 ~~:\Leftrightarrow~~  N_1 \Sim_1 M_1 \wedge N_2 \Sim_2 M_2\;. \] 
By construction, this relation satisfies the second requirement of Definition~\ref{def:simulation}.
A straightforward induction on the derivation of \tawn transitions
from the rules of \Tab{sos network} shows that ${\Sim_1} \| {\Sim_2}$ also satisfies the first
requirement, and thus is a weak simulation indeed. Now a trivial induction on the number of
node expressions occurring in a partial network expression lifts \Thm{SP_simulation} to partial
network expressions.

% Let us now show that partial networks expressions can be simulated. The transition rules for T-AWN (Table \ref{tab:sos network}) and for AWN are 
% identical except for the transition rules dealing with $\tick$. However $\tick$ only synchronises
% with itself. Thus the simulation is achieved by considering the simulation of the node level expressions.

Given a weak simulation $\Sim$ of \tawn partial network expressions by AWN partial
network expressions,
define the simulation $\Sim'$ of complete \tawn networks by complete AWN networks by
\[[N] \Sim' [M]  ~~:\Leftrightarrow~~ N \Sim M\;.\]
By construction, this relation satisfies the second requirement of Definition~\ref{def:simulation}.
A straightforward induction on the derivation of \tawn transitions
from the rules of \Tab{sos network} shows that $\Sim'$ also satisfies the first
requirement, and thus is a weak simulation indeed.
% 
% Finally complete network expressions in T-AWN can be simulated in AWN. Once again, the transition rules for T-AWN (Table \ref{tab:sos network}) and for AWN are 
% identical except for the transition rules dealing with $\tick$, which does not interleaved with another action. Thus the simulation of a complete network expression $N=[M]$ where $M$ is a partial network expression is achieved by $[M']$ where $M'$ is the simulation of $M$.
\eprf

\noindent
Here a \emph{simulation} refers to a \emph{weak simulation} as
  defined in \cite{vG93}, but treating \textbf{(dis)connect}-actions
  as $\tau$, and with the extra requirement that the data states
  maintained by related expressions are identical---except of course
  for the variables $\now$, that are missing in AWN\@.
Details can be found in \SubApp{simulation}.

Thanks to \Thm{network_simulation}, we can prove that all invariants on the data structure
of a process expressed in AWN are still preserved when the process is
interpreted as a \tawn expression.
% We can then extend important properties of AODV expressed in AWN to AODV expressed in T-AWN, such
% as loop-freedom.
As an application of this, an untimed version of AODV\@, formalised as an AWN process, has been
  proven loop free in \cite{TR13,GHPT16}; the same system, seen as a \tawn expression---and thus with
  specific execution times associated to uni-, group-, and broadcast actions---is still loop free when given
  the operational semantics of \tawn.

\toApp
An immediate corollary of this result is that for each \tawn network expression $N'$, reachable from
an (initial) AWN network expression $N$, seen as \tawn network expression, there exists an AWN
network expression $N''$, reachable from $N$, such that $\exclnow[\ell(N')] = \ell(N'')$,
i.e., having the same global data state as $N'$, except of course for the variables $\now$.

This in turn implies that any invariant for the AWN network $N$---a property that holds for all
global data states of network expressions reachable from $N$---is also an invariant for $N$ seen as
a \tawn network.
\etoApp

%%%%%%%%%%%%%%%%%%%%%

%%%%%%%%%%%%%%%%%%%%%
\section{Case Study: The AODV Routing Protocol}
\label{sec:aodv}
%%%%%%%%%%%%%%%%%%%%%
Routing protocols are crucial to the dissemination of data packets between nodes in 
WMNs and MANETs.
Highly dynamic topologies are a key feature of WMNs and MANETs, due to
mobility of nodes and/or the variability of wireless links. This makes
the design and implementation of robust and efficient routing
protocols for these networks a challenging task.
In this section we present a formal specification of the Ad hoc On-Demand
Distance Vector (AODV) routing protocol.
AODV~\cite{rfc3561} is a widely-used routing protocol designed for
MANETs, and is one of the four protocols currently standardised by the
IETF MANET working group\footnote{\url{http://datatracker.ietf.org/wg/manet/charter/}}.
It also forms the basis of new WMN routing protocols, including
HWMP in the IEEE 802.11s wireless mesh network standard~\cite{IEEE80211s}.

Our formalisation is based on an untimed formalisation of AODV~\cite{TR13,GHPT16}, written in \awn, 
and models the exact details of the core
functionality of AODV as standardised in IETF RFC 3561~\cite{rfc3561}; e.g., route discovery, route maintenance and
error handling. We demonstrate how {\tawn} can be used to reason about
critical protocol properties. As major outcome we demonstrate that 
AODV is \emph{not} loop free, which is in contrast to common belief.
Loop freedom is a critical property for any routing protocol, 
but it is particularly relevant and challenging for WMNs and MANETs. 
We close the section by discussing a fix to the protocol and prove that 
the resulting protocol is indeed loop free.

\subsection{Brief Overview}

%%%%%%%%%%%%%%%%%%%%%%%
{%%%node names... to quickly change it.
\newcommand{\gennode}[1]{\ensuremath{\MakeUppercase{#1}}\xspace}
\newcommand{\na}{\gennode{a}}
\newcommand{\nd}{\gennode{d}}
\newcommand{\ns}{\gennode{s}}
AODV is a reactive protocol, which means that routes are established only
on demand.  If a node~\ns wants to send a data packet to  a node \nd, but
currently does not know a route, it temporarily buffers the
packet and initiates a route discovery process
by broadcasting a route request (RREQ) message in the network. An intermediate
node \na that receives the RREQ message creates a routing table entry for a route
towards node~\ns referred to as a \emph{reverse route}, and re-broadcasts the RREQ. This is repeated until the RREQ reaches the destination node~\nd, or alternatively a node that knows a route to \nd. In both cases, the node replies by unicasting a corresponding route reply (RREP) message back to the source~\ns,
via a previously established reverse route. When forwarding RREP messages, nodes
create a \rte for node \nd, called the \emph{forward route}. When the RREP reaches
the originating node \ns, a route from \ns to \nd  is established and data packets
can start to flow. Both forward and reverse routes are maintained in a routing table at every node---details are given below.
In the event of link and route breaks, AODV uses route error (RERR) messages to
notify the affected nodes: if a link break is detected by a node, it first invalidates all routes stored in the node's own routing table that actually use the broken link. Then it sends a RERR message containing the unreachable destinations to all (direct) neighbours using this route. 

In AODV, a routing table consists of a list of entries---at most one for each destination---each containing the following information:
(i) the destination IP address;
(ii) the \emph{destination sequence number}; % emphasised
(iii) the sequence-number-status flag---tagging whether the recorded sequence number
can be trusted;
(iv) a flag tagging the route as being valid or invalid---this flag is set to invalid
when a link break is detected or the route's lifetime is reached;
(v) the hop count, a metric to indicate the distance to the destination;
(vi) the next hop, an IP address that
identifies the next (intermediate) node on the route to the destination;
(vii) a list of precursors, a set of IP addresses of those
  $1$-hop neighbours that use this particular route; and
(viii) the lifetime (expiration or deletion time) of the route.
The destination sequence number constitutes a measure approximating the
relative freshness of the information held---a higher number denotes newer information.  
The  routing table  is updated whenever a node receives an AODV control message (RREQ, RREP or RERR)
or detects a link break. 

During the lifetime of the network, each node not only maintains its routing table, it also stores
its {\em own} {\em sequence number}.  This number is used as a local ``timer'' and is incremented 
whenever a new route request is initiated. It is the source of the destination
sequence numbers in routing tables of other nodes.

\hspace{-5pt}Full details of the protocol are outlined in the request for comments (RFC)~\cite{rfc3561}.
}%end macros

\subsection{Route Request Handling Handled Formally}\label{sec:rreqformal}
Our formal model consists of seven processes: 
$\AODV$ reads a message from the message queue (modelled in process \QMSG, see below) and,
		depending on the type of the message, calls other processes.
         Each time a message has been handled
		the process has the choice between handling another message,
		initiating the transmission of queued data packets or generating
		a new route request.
$\NEWPKT$ and $ \PKT$ describe all actions
                performed by a node when a data packet is
                received.  The former process handles a newly injected
		packet. The latter describes all actions performed
		when a node receives data from another node via the protocol. 
$\RREQ$ models all events that might occur
                after a route request message has been received. 
Similarly, $\RREP$ describes the reaction
		of the protocol to an incoming route reply.
$\RERR$ models the part of AODV that handles error messages.
The last process $\QMSG$ queues incoming messages. Whenever a message is received,
	  it is first stored in a message queue. When the corresponding node is able to handle a message,
	  it pops the oldest message from the queue and handles it.
An AODV network is an encapsulated parallel composition of node expressions, each with a different
node address (identifier), and all initialised with the parallel composition 
$\AODV(\dots)\parl\QMSG(\dots)$.

Here we only present parts of the $\RREQ$ process, depicted in \Pro{rreq_shortened};
the full formal specification of the entire protocol can be found in \SubApp{spec}. There, we also discuss 
all differences between the untimed version of AODV, as formalised in~\cite{TR13,GHPT16}, and 
the newly developed timed version. 
These differences mostly consist of setting expiration times for routing table
  entries and other data maintained by AODV, and handling the expiration of this data.

{% These changes are again valid for the rest of this section.
%%check whether we need all of them
\newcommand{\dsnf}{sequence-number-status flag\xspace}

\renewcommand{\ip}{\dval{ip}}
\renewcommand{\dip}{\dval{dip}}
\renewcommand{\oip}{\dval{oip}}
\renewcommand{\sip}{\dval{sip}}
\renewcommand{\rip}{\dval{rip}}
\renewcommand{\rt}{\dval{rt}}
\renewcommand{\r}{\dval{r}}
\renewcommand{\osn}{\dval{osn}}
\renewcommand{\dsn}{\dval{dsn}}
\renewcommand{\rsn}{\dval{rsn}}
\renewcommand{\flag}{\dval{flag}}
\renewcommand{\hops}{\dval{hops}}
\renewcommand{\nhip}{\dval{nhip}}
\renewcommand{\pre}{\dval{pre}}
\renewcommand{\dests}{\dval{dests}}
\renewcommand{\rreqid}{\dval{rreqid}}
\renewcommand{\rreqs}{\dval{rreqs}}
A route discovery in AODV is initiated by a source node broadcasting a RREQ message; this message
is subsequently re-broadcast by other nodes.
Process~\ref{pro:rreq_shortened} shows major parts of our process algebra specification for handling a RREQ message received
by a node \ip.
The incoming message carries eight parameters, including
{\hops}, indicating how far the RREQ had travelled so far,
{\rreqid}, an identifier for this request,
{\dip}, the destination IP address,
and {\sip}, the sender of the incoming message;
the parameters {\ip}, \dval{sn} and {\rt}, storing the node's address, sequence number
and routing table, as well as {\rreqs} and \dval{store}, are maintained by the process RREQ itself.

\hspace{-2.3pt}Before handling the incoming message, the process first updates
{\rreqs} (Line~\ref{rreq_short:line1}), a list of (unique) pairs containing the originator IP address {\oip} and a route request identifier {\rreqid} received within 
 the last \pathdiscoverytime: 
 the 
 %%reduce space below Process 4 -- by this Theorem 7 fits on Page 22
\pagebreak 
       \vspace*{-5ex}
 \noindent   
 \mbox{update} removes identifiers that are too old. 
 Based on this list, the node then checks whether it has recently received a RREQ with the same
   {\oip} and {\rreqid}.

If this is the case, the RREQ message is ignored, and the protocol continues to execute the main AODV process (Lines~\ref{rreq_short:line2}--\ref{rreq_short:line3}). 
If the
RREQ is new (Line~\ref{rreq_short:line4}), the process updates the routing table by adding a ``reverse route'' entry
to {\oip}, the originator of the RREQ, via node {\sip}, with distance \hops+1 (Line~\ref{rreq_short:line6}).
 If there
already is a route to {\oip} in the node's routing table {\rt}, it is only updated with the new
route if the new route is ``better'', i.e., fresher and/or shorter
and/or replacing an invalid route. 
The lifetime of this reverse route is updated as well (Line~\ref{rreq_short:line7}): it is set to the
maximum of the currently stored lifetime and the minimal lifetime, which is determined by 
$\now+2\cdot\traversal-2\cdot(\hops+1)\cdot\nodetraversal$ \cite[Page 17]{rfc3561}.
The process also adds the message to the list of known RREQs (Line~\ref{rreq_short:line8}).

Lines~\ref{rreq_short:line10}--22 (only shown in \SSubApp{ModelAODV}) deal with the case where the node receiving the RREQ is the
intended destination, i.e., {\dip}={\ip} (Line~\ref{rreq_short:line10}).
}% end redefined macros

\setcounter{algorithm}{3}
  \algsetup{linenodelimiter=.,linenosize=\tiny}
  \begin{algorithm}[t]
    {\scriptsize
      %\algsetup{indent=2em}
      \caption{Parts of the RREQ handling}
      \label{pro:rreq_shortened}
      \begin{algorithmic}[1]\scriptsize
        % !TEX root = ../tawn.tex
{
%local uncolouring
\renewcommand{\highlightUPD}{\UPD}
\renewcommand{\highlight}[1]{#1}
\DEFPROCESS{\RREQ}{\hops\comma\rreqid\comma\dip\comma\dsn\comma\dsk\comma\oip\comma\osn\comma\sip\,\comma\,\ip\comma\sn\comma\rt\comma\rreqs\comma\queues}
%	\COMLINE{routine that describes the handling of a received route request}
\highlightUPD{\exprreqs{\rreqs}{\now}}\label{rreq_short:line1}
\PAR
	\IF[the RREQ has been received previously]{$(\oip\comma\rreqid\comma\highlight*)\in\rreqs$}																							\label{rreq_short:line2}
		\aodvL{\ip}{\sn}{\rt}{\rreqs}{\queues} \COM{silently ignore RREQ, i.e., do nothing}																			\label{rreq_short:line3}
	\ELSIF[the RREQ is new to this node]{$(\oip\comma\rreqid\comma\highlight*)\not\in\rreqs$}																								\label{rreq_short:line4}
		\UPD{\rt:=\upd{\rt}{(\oip,\osn,\kno,\val,\hops+1,\sip,\emptyset,\highlight{\now+\valtime})}}	%\COMMENT{update the route to \oip\ in \rt}									
		\label{rreq_short:line6}
		\highlightUPD{\rt:=\settimert{\rt}{\oip}{\now+2\cdot\traversal-2\cdot(\hops+1)\cdot\nodetraversal}}\label{rreq_short:line7}%
		\UPD{\rreqs:=\rreqs\cup\{(\oip,\rreqid,\highlight{\now+\pathdiscoverytime})\}}		\COMMENT{update \rreqs}											\label{rreq_short:line8}
		\PAR																																															\label{rreq_short:line9}
		\IF[this node is the destination node]{$\dip=\ip$}																															\label{rreq_short:line10}
\algsetup{linenosize=\color{white}}
\STATE$[\dots]$\STATE

\makeatletter
\setcounter{ALC@line}{22}
\makeatother

		\ELSIF[this node is not the destination node]{$\dip\not=\ip$}																											\label{rreq_short:line20}
			\PAR																																															\label{rreq_short:line21}
			\COMLINE{valid route to \dip\ that is fresh enough}
			\IF{$\!\dip\mathbin\in\akD{\rt} \wedge \dsn \mathbin\leq  \sqn{\rt}{\!\dip} \wedge\sqnf{\rt}{\!\dip}\mathbin=\kno\!$}		\label{rreq_short:line22}
					\COMLINE{update \rt\ by adding precursors}																														\label{rreq_short:line23}
					\UPD{\rt := \addprecrt{\rt}{\dip}{\{\sip\}}}																																\label{rreq_short:line24}
					\UPD{\rt := \addprecrt{\rt}{\oip}{\{\nhop{\rt}{\dip}\}}}																												\label{rreq_short:line25}
				\COMLINE{unicast a RREP towards the \oip\ of the RREQ}
				\STARTPRIO
				    %no unicast-macros to get a nice line break
				    %					\unicast{\nhop{\rt}{\oip}}{\rrep{\dhops{\rt}{\dip}}{\dip}{\sqn{\rt}{\dip}}{\oip}{\highlight{\selt{\rt}{\dip}-\now}}{\ip}}\,.
				    \textbf{unicast}(\nosp{\nhop{\rt}{\oip}}\comma
				    \\\hspace{23.2pt}\nosp{\rrep{\dhops{\rt}{\dip}}{\dip}{\sqn{\rt}{\dip}}{\oip}{\highlight{\selt{\rt}{\dip}-\now}}{\ip}}\,.

					\label{rreq_short:line26}%
					\aodvL{\ip}{\sn}{\rt}{\rreqs}{\queues}																																	\label{rreq_short:line26a}
				\PRIO
					\COMspec{If the transmission is unsuccessful, a RERR message is generated}\label{rreq_short:line26b}
					
\algsetup{linenosize=\color{white}\tiny}
\STATE$[\dots]$\COM{update local data structure}
\algsetup{linenosize=\color{black}\tiny}
\makeatletter
\setcounter{ALC@line}{39}
\makeatother
					\groupcast{\pre}{\rerr{\dests}{\ip}}\ . 																																	\label{rreq_short:line31}
					\aodv{\ip}{\sn}{\rt}{\rreqs}{\queues}
				\ENDPRIO
			\ELSIF[$\!$no fresh route$\!$]{$\dip\mathbin{\not\in}\akD{\rt} \vee \sqn{\rt}{\!\dip} <  \dsn \vee\sqnf{\rt}{\!\dip}\mathbin=\unkno$}					\label{rreq_short:line32}
				\COMLINE{no further update of \rt}
				\broadcast{\rreq{$\hops+1$}{\rreqid}{\dip}{\max(\sqn{\rt}{\dip}\comma\dsn)}{\dsk}{\oip}{\osn}{\ip}}\ .										\label{rreq_short:line34}
				\aodvL{\ip}{\sn}{\rt}{\rreqs}{\queues}				\label{rreq_short:line35}
			\ENDIFii
			\ENDPAR																																													\label{rreq_short:line36}
		\ENDIFii
		\ENDPAR																																														\label{rreq_short:line37}
	\ENDIFii
	\ENDPAR
}%end local uncolouring

	\end{algorithmic}
    }%end{footnotesize}
  \end{algorithm}%

\setcounter{algorithm}{0}

{% These changes are again valid for the rest of this section.
\renewcommand{\ip}{\dval{ip}}%
\renewcommand{\dip}{\dval{dip}}%
\renewcommand{\oip}{\dval{oip}}%
\renewcommand{\sip}{\dval{sip}}%
\renewcommand{\rip}{\dval{rip}}%
\renewcommand{\rt}{\dval{rt}}%
\renewcommand{\r}{\dval{r}}%
\renewcommand{\osn}{\dval{osn}}%
\renewcommand{\dsn}{\dval{dsn}}%
\renewcommand{\rsn}{\dval{rsn}}%
\renewcommand{\flag}{\dval{flag}}%
\renewcommand{\hops}{\dval{hops}}%
\renewcommand{\nhip}{\dval{nhip}}%
\renewcommand{\pre}{\dval{pre}}%
\renewcommand{\dests}{\dval{dests}}%
\renewcommand{\rreqid}{\dval{rreqid}}%
\renewcommand{\rreqs}{\dval{rreqs}}%

Lines~\ref{rreq_short:line20}--\ref{rreq_short:line36} deal with the case where the node receiving the RREQ is
not the destination, i.e., ${\dip}\mathop{\neq}{\ip}$ (Line~\ref{rreq_short:line20}).  
The node can respond to the RREQ with a
corresponding RREP on behalf of the destination node \dip, if its route to {\dip} is ``fresh enough''
(Line~\ref{rreq_short:line22}).  This means that (a) the node has a valid route to \dip, (b) the destination sequence
\pagebreak
number in the node's current \rte $(\sqn{\rt}{\dip})$ is greater than or equal to the
requested sequence number to {\dip} in the RREQ message, and (c) the node's destination sequence number is 
trustworthy ($\sqnf{\rt}{\!\dip}\mathbin=\kno$).  If these three conditions are met
(Line~\ref{rreq_short:line22}), the node generates a RREP message, and unicasts it back to the originator
node {\oip} via the reverse route. 
Before
unicasting the RREP message, the intermediate node updates the forward {\rte} to
$\dip$ by placing the last hop node ($\sip$)
into the precursor list for
that entry (Line~\ref{rreq_short:line24}).
Likewise, it updates the reverse {\rte} to {\oip} by placing the first hop $\nhop{\rt}{\dip}$
towards $\dip$ in the precursor list for that entry
(Line~\ref{rreq_short:line25}).
To generate the RREP message, the process copies the sequence number for the
destination $\dip$ from the routing table $\rt$ into the destination sequence number field of the RREP message and it places its distance in hops from the destination ($\dhops{\rt}{\dip}$) in the corresponding field of the new reply
(Line~\ref{rreq_short:line26}).
 The RREP message is unicast to the next hop along the reverse route back to the originator of the
corresponding RREQ message. If this unicast is successful, the process goes back to the AODV routine (Line~\ref{rreq_short:line26a}).
If the unicast of the RREP fails, we proceed with Lines~\ref{rreq_short:line26b}--\ref{rreq_short:line31}, in which a route error
(RERR) message is generated and sent. This conditional unicast is implemented in our model with the \otawn construct 
$\unicast{\dexp{dest}}{\dexp{ms}}.P \prio{Q}$.
In the latter case, the node sends a RERR message to all nodes that rely on the broken link for one of their routes.
For this, the process first determines which destination nodes are affected by the broken link, i.e., the nodes
that have this unreachable node listed as a next hop in the routing table (not shown in the shortened specification).
Then, it invalidates any affected \rtes, and determines the list of
\emph{precursors}, which are the
neighbouring nodes that have a
route to one of the affected destination nodes via the broken link.
Finally, a RERR message is sent via groupcast to all these precursors
(Line~\ref{rreq_short:line31}).
%, listing only those invalidated destinations with a non-empty set of
%precursors (Line~\ref{rreq_tbc:line17a}).

If the node is not the destination and there is either no route to the
destination $\dip$ inside the routing table or the route is not fresh enough,
the route request received has to be forwarded. This happens in Line~\ref{rreq_short:line34}.
The information inside the forwarded request is mostly copied from the request received.
Only the hop count is increased by $1$ and the destination sequence number is set
to the maximum of the destination sequence number in the RREQ packet
and the current sequence number for $\dip$ in the routing table.
In case $\dip$ is an unknown destination, $\sqn{\rt}{\dip}$ returns the
unknown sequence number $0$.
}% end redefined macros

To ensure that our time-free model from \cite{TR13,GHPT16} accurately captures the intended behaviour of AODV \cite{rfc3561}, 
we spent a long time reading and interpreting the RFC, inspecting open-source
implementations, and consulting network engineers.
We now prove that our timed version of AODV behaves similar to our original formal specification, 
and hence (still) captures the intended behaviour.
 
\begin{theorem}\label{thm:timedVSuntimed}
The timed version of AODV (as presented in this paper) is a proper extension of 
the untimed version (as presented in~\cite{TR13}). By this we mean that if 
all timing constants, such as \valtime\hspace{-1pt}, are set to $\infty$\hspace{-1pt},
and the maximal number of pending route request retries 
\retries\ is set to $1$, then the (\tawn) transition systems of both versions of AODV are 
weakly bisimilar.
\end{theorem}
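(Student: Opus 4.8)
The plan is to exhibit an explicit weak bisimulation $\B$ directly between the two \tawn transition systems. The first thing to record is that the untimed specification of~\cite{TR13}, being written in \awn, is syntactically also a \tawn specification, and so carries a \tawn transition system generated by the rules of \Sect{tawnsos}. In particular both systems use the \emph{same} transmission durations $\LB,\LG,\LU$ (and $\DB,\DG,\DU$) -- these are global parameters of \tawn, not part of the protocol model and not among the constants sent to $\infty$ -- so the block of $\tick$-steps produced by any one transmission agrees in length and position on the two sides. Consequently the similarity relation on labels is the identity and the only silent action to be absorbed is $\tau$: every visible action, and in particular every $\tick$, must be matched exactly up to surrounding $\tau$-steps, i.e.\ we seek an ordinary weak bisimulation in which $\tau$ is the unique silent action.

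First I would carry out a line-by-line comparison of the timed specification against the untimed one, using the complete specifications deferred to the appendix. Under the stipulated settings the timed model differs from the untimed one only in that (i)~each routing-table entry, each pair in the $\dval{rreqs}$ list, and each buffered packet carries an extra \emph{lifetime} component; (ii)~there are extra assignments, e.g.\ through $\fnsettimert$, that set these lifetimes; (iii)~there are guards and data-filtering functions $\fnexprt$, $\fnexpq$ and $\fnexprreqs$ that purge expired data; and (iv)~there is the retry counter governed by \retries. With every protocol timing constant (\valtime, \traversal, \pathdiscoverytime, and the rest) set to $\infty$, each such difference is benign: a lifetime assignment sets its field to $\infty$ and is a pure $\tau$-step; the filtering functions reduce to the identity, because no stored item can satisfy $\now\geq\infty$, so every application of $\fnexprt$, $\fnexpq$ or $\fnexprreqs$ is a $\tau$-step leaving the essential data untouched; and with $\retries = 1$ the timed node emits each route request the same number of times as the untimed model, while the $\infty$-valued timeouts turn any clock-guarded waiting into unbounded idling.

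This motivates the relation $\B$, which I would define to relate a reachable state of the timed system to a reachable state of the untimed system exactly when their global data states agree after (a)~deleting the variable \now, (b)~forgetting all lifetime components, and (c)~forgetting the timestamps in $\dval{rreqs}$ and in the packet store, and when their control states coincide up to the timer-handling subterms isolated above (so the timed side may legitimately sit partway through a burst of timer-$\tau$-steps, all of which are mapped to the same untimed control point). To keep $\B$ well behaved I would first prove, by induction on reachability, the \emph{invariant} that in every reachable timed state all stored lifetimes equal $\infty$ (equivalently, exceed every attainable value of \now); this is precisely what guarantees that the filtering functions never delete anything and that the timed control flow never branches differently from the untimed one at an expiration guard.

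The transfer property is then checked by case analysis on the transition taken, in both directions. Visible actions (the $\starcastP{}$-, \textbf{deliver}-, \textbf{newpkt}- and (dis)connect-actions, together with the arrive/receive actions) are matched one-to-one, since both specifications compute them from data that $\B$ keeps equal; the timed-only $\tau$-steps (timer assignments and no-op filtering) form finite straight-line bursts and are absorbed into the $\Longrightarrow$ of the weak bisimulation; and a $\tick$-step is answered by a $\tick$-step, using \Prop{time determinism} lifted to networks, which tells us that a time step alters only \now\ and hence preserves $\B$. I expect the principal obstacle to lie in the control-state bookkeeping around the retry/timeout logic and the expiration guards, mediated by \Thm{N-classification}: while a timer-$\tau$ is pending the timed side has an instantaneous non-blocking action and therefore \emph{cannot} $\tick$, so to answer a $\tick$ of the untimed partner one must first drive the (finite) $\tau$-burst to completion and only then take the matching $\tick$. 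One must verify that after each such burst the timed process re-synchronises with the untimed one at a matching program point, and -- crucially -- that no reachable timed state can fire an expiration-triggered transition, nor get stuck waiting on a clock that can never reach $\infty$, that the untimed partner cannot mirror. Pinning down the $\infty$-lifetime invariant uniformly across all seven processes and through the message-queue process $\QMSG$ is the delicate step; once it is in place the remaining cases are routine, and since the same $\B$ satisfies the transfer requirement in both directions we obtain weak bisimilarity rather than mere one-sided simulation.
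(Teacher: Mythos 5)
Your proposal follows essentially the same route as the paper, which for this theorem offers only a proof sketch: show that the newly introduced timer functions ($\fnexprt$, $\fnexpq$, $\fnexprreqs$, $\fnsettimert$, \dots) leave the essential data unchanged when the time parameters are $\infty$ and hence yield extra $\tau$-steps that are absorbed by the weak bisimulation, observe that with $\retries=1$ the retry counter mimics the Boolean request-required flag, and check that all remaining transitions coincide. Your additions --- the explicit relation $\B$, the $\infty$-lifetime invariant, and the observation via \Thm{N-classification} that pending timer-$\tau$ bursts must be flushed before a $\tick$ can be answered --- are sensible elaborations of that same argument rather than a different approach.
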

%\prf
%\begin{itemize}
%\item set all timers to $\infty$
%\item what about the type $P$? \chP{Setting $\retries=1$ should work; 
%then identify 1 with {\tt no\_req}, and 0 with {\tt req}}
%\end{itemize}
%\eprf
\begin{proofsketch}
First, one shows that the newly introduced functions, such as \fnexprreqs\ and \fnsettimert\  do not change the data state in case the time parameters equal $\infty$; 
and hence lead to transitions of the form \plat{$\xi,p\ar{\tau}\xi,p'$}. This kind of transitions are the ones that make the bisimulation weak, since they do not occur 
in the formal specification of~\cite{TR13}.
\journalonly{Then one proves that, in case $\retries=1$, the function $\fnunsetrrf$ and $\fnincretries$, as well as $\fnsetrrf$ and $\fnsetretries$
behave equivalently, when identifying $1$ with $\pen$ and $0$ with $\nonpen$.}
Subsequently, one proves that all other transitions are basically identical.
\end{proofsketch}

% !TEX root = tawn.tex
% Tex-master = "tawn"
\subsection{Loop Freedom}\label{ssec:lf}
Loop freedom is a critical property for any routing protocol, but it is particularly relevant and
challenging for WMNs and MANETs.
``{A routing-table loop is a path specified in the nodes' routing tables at a particular point in time that visits the same node more than once before reaching the intended destination}"~\cite{Garcia-Luna-Aceves89}.
Packets caught in a routing loop can quickly saturate the links and have a detrimental impact on network performance.

For AODV and many other protocols sequence numbers are used to guarantee loop freedom.
Such  protocols usually claim to be loop free due to the use of monotonically increasing sequence numbers.
For example, AODV ``{uses destination sequence numbers to ensure loop freedom at all
 times (even in the face of anomalous delivery of routing control messages), ...}"~\cite{rfc3561}. 
 It has been shown that sequence numbers do not a  priori guarantee loop freedom~\cite{AODVloop}; 
 for some plausible interpretations%
 \footnote{By a plausible interpretation of a protocol standard written in English prose we mean an interpretation that fills the missing bits, and 
 resolves ambiguities and contradictions occurring in the standard in a sensible and meaningful way.}
 of different versions of AODV, however, loop freedom has been
 proven~\cite{AODV99,BOG02,ZYZW09,SRS10,MSWIM12,TR13,GHPT16,NT15}%
\footnote{%
 The proofs in \cite{AODV99} and \cite{BOG02} are incorrect;
 the model of \cite{SRS10} does not capture the full behaviour of the routing protocol; 
 and \cite{ZYZW09} is based on a subset of AODV that
 does not cover the ``intermediate route reply'' feature,
 a source of loops.
In \cite{NT15} a draft of a new version of AODV is modelled, without intermediate route reply.
%\chP{\cite{TR13,GHPT16} captures all core features, but abstracts from time.}
 For a more detailed discussion see~\cite{GHPT16}.}.
%\cite{TR13}\journalonly{\comsP{cite JDIST?}}.}.
With the exception of~\cite{BOG02}, all these papers consider only untimed versions of AODV. 
As mentioned in \Sect{intro} untimed analyses revealed many shortcomings of AODV;
hence they are necessary. At the same time, a timed analysis is required as well.
\cite{BOG02} shows that the premature deletion of invalid routes, and a too quick restart of a node after a reboot,
can yield routing loops.
Since then, AODV has changed to such a degree that the examples of \cite{BOG02} do not apply any longer. 

In \cite{Garcia04}, ``it is shown that the use of a $\deltime$ in  the current AODV specification can result in loops''.
However, the loop constructed therein at any time passes through at least one invalid routing table entry.
As such, it is not a routing loop in the sense of \cite{TR13,GHPT16}---we only consider loops
consisting of valid routing table entries, since invalid ones do not forward data
packets. In a loop as in \cite{Garcia04} data packets cannot be sent in circles forever.

It turns out that AODV as standardised in the RFC (and carefully formalised  in \Sect{rreqformal} and \SubApp{spec})
is \emph{not} loop free.
A potential cause of routing loops, sketched in \Fig{loop1}, is a situation where a
node $B$ has a {\val}id 
\begin{wrapfigure}[7]{l}{0.55\textwidth}%\vspace{-4ex}
	\scriptsize%\centering
		\begin{tabular}{ll@{\qquad}ll}
			\multicolumn{2}{l}{\ \ \ \includegraphics{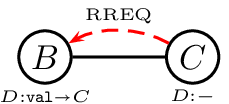}}   &   
			\multicolumn{2}{c}{\includegraphics{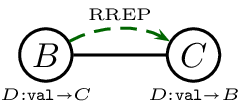}}\\	[5pt]
			a)&$C$'s entry for $D$ has  &b)& $B$ sends old data;\\
                 &expired; $B$'s has not    &&     $C$ establishes loop\\[-2ex]
		\end{tabular}
		\caption{Premature Route Expiration\label{fig:loop1}}
\end{wrapfigure}%
routing table entry for a destination $D$ (in \Fig{loop1} denoted $D{:}\val{\raisebox{.8pt}{$\scriptstyle\rightarrow$}}C$),
but the next hop $C$ no longer has a routing table entry for $D$ ($D{:}-$), valid or invalid.
In such a case, $C$ might search for a new route to $D$ and create a new
routing table entry pointing to $B$ as next hop, or to a node $A$ upstream
from $B$. We refer to this scenario as a case of \emph{premature route expiration}.
%\commP{expiry vs. expiration}

A related scenario, which we  also call premature route expiration, is when a node $C$ sends a RREP message with destination~$D$
or a RREQ messages with originator $D$ to a node $B$, but looses its route to $D$ before that
message arrives. This scenario can easily give rise to the scenario above. 

Premature route expiration can be avoided by setting {\deltime} to $\infty$, 
which is essentially the case in the untimed version of AODV (cf.\ \Thm{timedVSuntimed}).
In that case, no routing table entry expires or is erased. 
\journalonly{routes are invalided after the receipt of route error messages only.}
Hence,
% In an untimed model of AODV, such as the one in \cite{TR13}, this scenario is avoided by setting
% {\deltime} to $\infty$, \comsPa{weird sentence!}\comsP{use Thm 8?} thereby preventing
the situation where $C$ no longer has a routing table entry for $D$ is prevented.

In \cite{TR13} we studied 5184 possible interpretations of the AODV RFC \cite{rfc3561},
a proliferation due to ambiguities, contradictions and cases of
underspecification that could be resolved in multiple ways.
In 5006 of these readings of the standard, including some rather
plausible ones, we found routing loops, even when excluding all loops
that are due to timing issues \cite{AODVloop,TR13}. In \cite{MSWIM12,TR13,GHPT16}
we have chosen a default reading of the RFC that avoids these loops,
formalised it in AWN, and formally proved loop freedom, still
assuming (implicitly) $\deltime = \infty$.

After taking this hurdle, the present paper continues the investigation by 
% dropping the simplifying assumption that $\deltime = \infty$.
allowing arbitrary values for time parameters and for \retries; 
hence dropping the simplifying assumption that $\deltime = \infty$.

One of our key results is that for the formalisation of AODV presented
here, premature route expiration is the \emph{only} potential source
of routing loops. Under the assumption that premature route expiration
does not occur, it turns out that, with minor modifications, the 
loop freedom proof of \cite{TR13,GHPT16} applies to our timed model of AODV
as well. A proof of this result is presented in \SubApp{invariants}.
There, we revisit all the invariants from \cite{TR13} that contribute
to the loop-freedom proof, 
%%or: proof of loop freedom
and determine which of them are still valid
in the timed setting, and how others need to be modified.

It is trivial to find an example where premature route expiration does
occur in AODV, and a routing loop ensues. This can happen when a
message spends an inordinate amount of time in the queue of incoming
messages of a node. However, this situation tends not to occur in
realistic scenarios. To capture this, we now make the assumption that 
the transmission time of a message plus
the period it spends in the queue of incoming messages of the
receiving node is bounded by \nodetraversal. We also assume that the
period a route request travels through the network is bounded by \traversal.

These assumptions eliminate the ``trivial'' counterexample mentioned above.
As we show in \SubApp{D}, we now \emph{almost} can prove an invariant that
essentially says that premature route expiration does not occur.
Following the methodology from \cite{MSWIM12,TR13,GHPT16}, we establish our
invariants by showing that they hold in all possible initial states of
AODV, and are preserved under the transitions of our operation
semantics%
%\comsP{starting from here not understandable}
, which correspond to the line numbers in our process algebraic specification.
% of Processes~\ref{pro:aodv}--\ref{pro:queues}.

We said ``almost'', because, as indicated in
% the appendix
\SubApp{D}, our main invariant is not preserved by
five lines of our AODV specification.
%\outR{, including
%Line~\ref{aodv:line26c} of Process~\ref{pro:aodv},
%Lines~\ref{pkt2:line6b},~\ref{pkt2:line6c} and~\ref{pkt2:line6d} of Process~\ref{pro:pkt},
%Lines~\ref{rreq:line24a} and~\ref{rreq:line25a}
%\chR{Line~\ref{rreq:line7}} of Process~\ref{pro:rreq}}.
%and Line~\ref{rrep:line12d} of Process~\ref{pro:rrep}.
Additionally, we need to make the assumption that when a RREQ message is forwarded, the forwarding node has
a valid routing table entry to the originator of the route request.
This does not hold for our formalisation of AODV: in \Pro{rreq_shortened} no check 
is performed on $\oip$, only the routing table to the destination node $\dip$ has to satisfy certain conditions
(Lines \ref{rreq_short:line20} and \ref{rreq_short:line32}).

It turns out that for each of these failures we can
construct an example of premature route expiration, and, by that, a
counterexample to loop freedom.
%\outR{Here, as illustrated in \Fig{loop2}, a node $B$ keeps a routing table entry to destination
%$D$ valid for a long time, because it answers a host of RREQ
%messages for $D$ as intermediate node. But at no point does $B$
%send any messages towards $D$, so that the route to $D$
%maintained by the node $C$ downstream from $B$ will eventually expire and
%be deleted. From here a loop is easily established, as discussed
%  above in \Fig{loop1}.}

However, if we skip all five offending lines (or adapt them in
appropriate ways) and make a small change to process RREQ that makes
the above assumption valid,%
\footnote{The change basically introduces the test ``$\oip\in\akD{\rt}$'' in
  Line~\ref{rreq_short:line32} or~\ref{rreq_short:line9} of \Pro{rreq_shortened}.\label{repair}}
we obtain a proof of loop freedom for the resulting version of AODV\@. This follows immediately from
the invariants established in \SubApp{D}.

\section{Conclusion }
In this paper we have proposed \tawn, a timed process algebra for wireless networks.
We are aware that there are many other timed process algebras, such as 
timed CCS~\cite{MT90}, 
timed CSP~\cite{RR86,OS06}, 
timed ACP~\cite{BB96},
ATP \cite{NS94} and TPL \cite{HR95}.
However, none of these algebras provides the unique set of features needed 
for modelling and analysing protocols for wireless networks (e.g. a conditional unicast).%
\footnote{This is similar to the untimed situation. A detailed comparison between \awn and other process calculi for 
wireless networks is given in \cite[Section 11.1]{TR13}; this discussion can directly be transferred to the timed case.}
These features are provided by \otawn, though. 
Our treatment of time is based on design decisions that appear
  rather different from the ones in \cite{MT90,RR86,OS06,BB96,NS94}.
  Our approach appears to be closest to \cite{HR95}, but
  avoiding the negative premises that play a crucial role in the
  operational semantics of~\cite{HR95}.

We have illustrated the usefulness of \tawn by analysing 
the  Ad hoc On-Demand Distance
Vector routing protocol, and have shown that, contrary to claims in the
literature and to common belief, it fails to be loop free. We have also discussed
boundary conditions for a fix ensuring
that the resulting protocol is loop free.

% % % % % % % % % % % %
\paragraph*{Acknowledgement.}
NICTA is funded by the Australian Government through the Department of Communications and the Australian Research Council through the ICT Centre of Excellence Program. 

\newpage
\bibliographystyle{eptcsini}
\bibliography{aodv}

\newpage
\appendix
\section*{Appendices}
\setcounter{secnumdepth}{3}
\section{Results on the Process Algebra\label{app:pa}}
\subsection{Deferred Proofs and Auxiliary Lemmas\label{subapp:proofs}}
    \printproofs
 
\section{Case Study: The AODV Routing Protocol}
\subsection{Formal Specification of AODV}\label{subapp:spec}
% !TEX root = tawn.tex
%%%%%%%%%%%%%%%%%%%%%%%%%%%%%%%%%%%%%%%%%%%%%%%%%%%%%%%%%%%%%%%%%%%%%%%%%%%%%%
This appendix provides a complete and accurate formal specification of the AODV routing
protocol, as defined in IETF RFC 3561~\cite{rfc3561}. The presented formalisation is based on 
an untimed model formalised in AWN~\cite{TR13,GHPT16}, and includes all core
components of the protocol, but abstracts from optional protocol features.%
\footnote{A list and discussion of all omitted features occurs in \cite[Section 3]{TR13}.}
The only difference with our previous formalisation of AODV in \cite{TR13,GHPT16} is the inclusion of
timing issues.

To keep this appendix `short', we focus on the difference between the two models; 
an interested reader can either study the formal model on her own, or have a look 
at \cite{TR13,GHPT16}, where we explain each and every line of the specification. 

\subsubsection{Data Structure\label{ssubapp:data}}~

\noindent In \cite[Section 5]{TR13} we define the basic data structure needed for the detailed formal
specification of AODV, when abstracting from timing issues. Here we merely list the changes in this
data structure needed to deal with time.

\paragraph{\rm\bf Constants and Basic Types.}
The new type $\tTIME$ and the variable $\now$ have been introduced in \Sect{process_algebra}.
Additionally, we use the following constants of type $\tTIME$.
They all follow the RFC; their default values are found in~\cite[Section 10]{rfc3561}.
\begin{description}
\label{pg:time_constants}
\item[\deltime:] the lifetime of an invalid routing table entry;
\item[\valtime:] the time after which a valid entry is invalidated;
\item[\myroute:] amount of time entered as last parameter of a route reply issued by the destination
  node of a route request---to be used as the time during which the route to the destination created
  by that route reply remains valid;
\item[\nodetraversal:]  a conservative estimate of the average one hop traversal time for packets---it
  should include queueing delays and transfer times;
\item[\traversal:] a conservative estimate on the time it takes for a message to travel from one end
  of the network to the other and back---calculated as $2\cdot\nodetraversal\cdot\keyw{NET\_DIAMETER}$;
\item[\pathdiscoverytime:] time during which identifiers of handled route requests are kept.
\end{description}

The type $\tPendingRREQ$, which, in the original specification, was a Boolean flag indicating whether a new route request
  needs to be initiated, is now of type \NN; it tells the number of pending route request. 
  The constants $\pen$ and $\nonpen$ do not exist any more, but there is a new constant of type $\tPendingRREQ$,
  discussed in~\cite[Sections 6.3 and 10]{rfc3561}:
\begin{description}
\item[\retries:] maximal number of retries for a route discovery process.
\end{description}

Routing table entries are of type \tROUTE\ and  have an additional parameter, their expiration time, which in the RFC is called Lifetime.
An entry is now an eight-tuple of type 
\[\tIP \times \tSQN \times\tSQNK \times \tFLAG\times \NN \times \tIP \times \pow(\tIP)\times\tTIME\]

A tuple $(\dval{dip}\comma\dval{dsn}\comma\dval{dsk}\comma\dval{flag}\comma\dval{hops}\comma\dval{nhip}\comma\dval{pre}\comma\dval{ltime})$
describes a route to $\dval{dip}$ of length $\dval{hops}$ and validity
$\dval{flag}$; the very next node on this route is $\dval{nhip}$; the
last time the entry was updated the destination sequence number was
$\dval{dsn}$; \dval{dsk} denotes whether the sequence number
is ``outdated'' or can be used to reason about the freshness of the route.
$\dval{pre}$ is a set of all neighbours who are
``interested'' in the route to $\dval{dip}$.  
Finally, $\dval{ltime}$ states the expiration time of the route; 
if this time is reached, a valid route will be set to invalid, and an invalid route will 
be deleted from the routing table.
We use projections $\pi_{1},\dots\pi_{8}$ to select the
corresponding component from the $8$-tuple: for example,
$\pi_6:\tROUTE\to\tIP$ determines the next hop, and $\pi_8:\tROUTE\to\tIP$ distills the expiration time.
A routing table is an element of type \tRT\ and defined as a set of entries, with the restriction
that each has a different destination $\dip$, i.e., the first
component of each entry in a routing table is unique.

The data type $\tQUEUES$, which models a set of data queues for
injected data packets, is equipped with timers as well.
Each queue now contains a timer that
 indicates when a new/the next route request should be sent. 
The special value $0$ means ``to be sent immediately''.
\[\begin{array}{r@{\hspace{0.5em}}c@{\hspace{0.5em}}l}	
\tQUEUES &:=& \left\{\dval{store}\,\left|
\begin{array}{@{~}l@{}}\dval{store}\in\pow(\tIP\times\tPendingRREQ\times\tTIME\times[\tDATA])\ans\mbox{}\\
\big((\dval{dip},p,t,q),(\dval{dip},p',t',q')\in\dval{store} \Rightarrow\\\hfill p=p' \wedge t=t'\wedge q=q'\big)
\end{array}\right.\right\}
\end{array}\]
Here ${[\tDATA]}$ denotes a queue of elements from $\tDATA$.

The last type that needs to be changed is the set of 
pairs $(\oip, \rreqid)\in\tIP\times\tRREQID$, which uniquely identify route
requests. (For our specification we set $\tRREQID=\NN$.)
As such pairs are stored by nodes for a limited amount of time,
we add a third component to indicate when the pair can be dropped.
In our model, each node maintains a variable \keyw{rreqs} of type
\[
		\pow(\tIP\times\tRREQID\times\tTIME)
\]
of sets of such triples to store the set of route requests seen by the node so far.

\renewcommand{\fnsetrrf}{\keyw{setRRF}}
\renewcommand{\fnunsetrrf}{\keyw{unsetRRF}}

\paragraph{\rm\bf Functions.}
A brief overview of all functions used in our specifications can be found in 
Table~\ref{tab:types}. Here, we only list changes w.r.t. untimed formal specification of AODV. 

First, we need to change/update a couple of functions. This is mainly due to the new and changed data types.
For example, the function $\fnqD:\tQUEUES\to\pow(\tIP)$, which extracts the destinations for which there are unsent packets
is changed from $\{\dip\mid(\dip,*,*)\in\queues\}$ to $\{\dip\mid(\dip,*,*,*)\in\queues\}$.
In a similar (straightforward) manner the functions 
$\fnadd$,
$\fndrop$,
$\fnselqueue[\footnotesize]$,
$\fnakD$, $\fnikD$, $\fnkD$
$\fnaddprec$,
$\fnaddprecrt$, and
$\fnnrreqid$ 
are adapted.

In~\cite{TR13,GHPT16} the functions 
$\fnsqn$,
$\fnsqnf$,
$\fnstatus$,
$\fndhops$,
$\fnnhop$,
$\fnprecs$, and
$\fnprecs$
distill particular information for a specified route in the routing table (if it exists). 
Since routing table entries are extended with an additional field, we define a new function 
$\fnltime$ that selects the newly introduced expiration time:
\[\begin{array}{r@{\hspace{0.5em}}c@{\hspace{0.5em}}l}
              \fnltime :\ \tRT\times\tIP&\rightharpoonup& \tTIME\\
	 \ltime{\dval{rt}}{\dval{dip}}&:=&
	    \left\{
	    \begin{array}{l@{\qquad}l}
	        \pi_{8}(\dval{r}) &\mbox{if }\dval{r}\in\dval{rt}\wedge \pi_{1}(\dval{r}) = \dval{dip}\\
	        \mbox{undefined}&\mbox{otherwise\ .}
	    \end{array}\right.
	    \end{array}\]

Next to these changes, we now discuss changes in a few functions that either are non-trivial or of particular interest for 
the timed version of AODV.

{%local macros outermost
\hypertarget{invalidate}{\emph{Invalidating routes} is a main feature of AODV; if a route is not valid
any longer, its validity flag has to be set to invalid. By doing
this, the stored information about the route, such as the sequence number and the hop count,
remains accessible.  
The function for invalidating routing table entries takes as arguments a routing
table, a set of destinations $\dval{dests}\in\pow(\tIP\times\tSQN)$,
and the expiration time for the newly invalidated routes.
Elements of {\dval{dests}} are $(\dval{rip},\dval{rsn})$-pairs that not only identify an
unreachable destination $\dval{rip}$, but also a sequence number that
describes the freshness of the faulty route.  We restrict ourselves to sets that have at most one entry for each
destination---formally we define {\dval{dests}} as a \emph{partial
function} from $\tIP$ to $\tSQN$, i.e., a subset of $\tIP\times\tSQN$
satisfying 
$(\dval{rip},\dval{rsn}),(\dval{rip},\dval{rsn}')\in \dval{dests} \Rightarrow \dval{rsn}=\dval{rsn}'\ .
$}
\[\begin{array}{@{}r@{\ \ }c@{\ \ }l@{}}
\multicolumn{3}{l}{\fninv : \tRT\times(\tIP\rightharpoonup\tSQN)\times \tTIME \to\tRT}\\
\inv{\dval{rt}}{\dval{dests}}{\dval{t}}&:=& \{\route\,|\,\route\in\rt\ans (\pi_{1}(\route),*)\not\in\dval{dests}\}\\
&\cup&\{(\pi_{1}(\route),\dval{rsn},\pi_{3}(\route),\inval,\pi_{5}(\route),\pi_{6}(\route),\pi_{7}(\route),\dval{t})\mid\\
&&\route\in\rt\ans(\pi_{1}(r),\dval{rsn})\in\dval{dests}\}
\end{array}\]

Similar to invalidate, \emph{updating a routing table} must take the expiration time of a route into account.
The update function now works on $8$-tuples as routing table entries,
and the new expiration time of a route is taken as the maximum of the one from the routing table (if any) and the one
from the incoming route, but only if the route is actually updated with new important information. 
This is in line with the RFC, which updates a route's expiration time
to the maximum of the {\tt ExistingLifetime} and the {\tt MinimalLifetime}. In AODV the 
minimal expiration time is often set to $\now + \valtime$.
%\comR{not in RFC Section 6.7 (Process 5, Line 2) -- also 6.2 ambiguous}
As in \cite{TR13,GHPT16} we define an update function $\upd{\dval{rt}}{\dval{r}}$ of a routing table 
$\dval{rt}$ with an entry $\dval{r}$ only when
 $\dval{r}$ is valid, i.e., $\pi_{4}(\dval{r})=\val$,
$\pi_{2}(\dval{r})=0\Leftrightarrow\pi_{3}(\dval{r})=\unkno$,
and $\pi_{3}(\dval{r})=\unkno\Rightarrow\pi_{5}(\dval{r})=1$.
\hypertarget{update}{Formally the function $\fnupd:\tRT\times\tROUTE \rightharpoonup \tRT$ is given~by}
{%local macros innermost
\newcommand{\nrt}{\dval{nrt}}
\newcommand{\nr}{\dval{nr}}
\newcommand{\ns}{\dval{ns}}
\newcommand{\s}{\dval{s}}
\renewcommand{\r}{\dval{r}}
\[
\upd{\dval{rt}}{\r}:= \left\{
\begin{array}{@{\,}ll@{}l@{}}
\dval{rt}\cup\{\r\} & \mbox{if }  \pi_{1}(\r)\not\in\kD{\dval{rt}}\\[1mm]
\nrt\cup\{\nr\}&\mbox{if }  \pi_{1}(\r)\in\kD{\dval{rt}} &\ans  \sqn{\dval{rt}}{\pi_{1}(\r)}<\pi_{2}(\r)\\[1mm]
\nrt\cup\{\nr\}&\mbox{if }  \pi_{1}(\r)\in\kD{\dval{rt}} &\ans \sqn{\dval{rt}}{\pi_{1}(\r)}=\pi_{2}(\r) \\
 																					&&\ans \dhops{\dval{rt}}{\pi_{1}(\r)}>\pi_{5}(\r)\\[1mm]
\nrt\cup\{\nr\}&\mbox{if }  \pi_{1}(\r)\in\kD{\dval{rt}} &\ans \sqn{\dval{rt}}{\pi_{1}(\r)}=\pi_{2}(\r) \\
 																					&&\ans\status{\dval{rt}}{\pi_{1}(\r)}=\inval\\[1mm]
\nrt\cup\{\nr'\}&\mbox{if } \pi_{1}(\r)\in\kD{\dval{rt}} &\ans  \pi_3(\r)=\unkno\\[1mm]
\nrt\cup\{\ns\}&\mbox{otherwise\ ,}
\end{array}
\right.\]
where $\s:=\selr[\footnotesize]{\dval{rt}}{\pi_{1}(\r)}$ is the current entry in the
routing table for the destination of $\route$ (if it exists), and
$\nrt := \dval{rt} -\{\s\}$ is the routing table without that entry.
The entry 
\[
	\nr:=\left(\pi_{1}(\r), \pi_{2}(\r), \pi_{3}(\r), \pi_{4}(\r), \pi_{5}(\r), \pi_{6}(\r), \pi_{7}(\r)\cup\pi_{7}(\s),\max(\pi_{8}(\r),\pi_{8}(\s))\right)
	\] 
is identical to~$\r$ except
that the precursors from $\s$ are added and the lifetime is set to the maximum of the routes {\r} and {\s}.
Similarly, $\ns:=\addprec{\s}{\pi_{7}(\r)} = (\pi_{1}(\s), \pi_{2}(\s), \pi_{3}(\s), \pi_{4}(\s),$ $\pi_{5}(\s), \pi_{6}(\s), \pi_{7}(\s)\cup\pi_{7}(\r),\pi_{8}(\s))$
is generated from $\s$ by adding the precursors from $\r$; the
	lifetime, however, is \emph{not} updated.\footnote{We could have
	updated the expiration time to $\max(\pi_{8}(\r),\pi_{8}(\s))$; our results
	on loop freedom are not affected by this choice.}
Lastly, 
$\nr':=\left(\pi_{1}(\r), \pi_{2}(\s), \pi_{3}(\r), \pi_{4}(\r), \pi_{5}(\r), \pi_{6}(\r), \pi_{7}(\r)\cup\pi_{7}(\s),\max(\pi_{8}(\r),\pi_{8}(\s))\right)$
is identical to $\nr$ except that the sequence number is replaced by the one from 
the route $s$. 
} % end local macros innermost

One of the AODV control messages need{s} to be modified as well: the route reply. 
It is the only message type that carries, according to the RFC, a time parameter.
It specifies the time for which nodes receiving the RREP message consider the route to be valid. 
The function that generates a RREP message has the form 
$\rrepID:\NN \times \tIP \times \tSQN \times \tIP \times  \tTIME\times\tIP\rightarrow \tMSG$.

%\begin{itemize}
%\item Adding a data packet for a particular destination to a store is defined by:
%\[\begin{array}{r@{\hspace{0.5em}}c@{\hspace{0.5em}}l}
%\multicolumn{3}{l}{\fnadd:\tDATA\times\tIP\times\tQUEUES \to\tQUEUES}\\
%\add{\dval{d}}{\dval{dip}}{\dval{store}}&:=&\left\{
%\begin{array}{ll}
%  \dval{store}\cup\{(\dval{dip}\comma0\comma0\comma\append{\dval{d}}{[\,]})\}&\mbox{if } (\dval{dip},*,*,*)\notin\queues\\
%  \dval{store}-\{(\dval{dip},p,t,q)\}\\
%  \phantom{\dval{store}}\cup\{(\dval{dip}\comma p\comma t\comma\append{\dval{d}}{q})\}&\mbox{if } (\dval{dip},p,t,q)\in\queues\ .
%\end{array}\right.
%\end{array}\]
%\end{itemize}
Since $\tPendingRREQ$ is not a Boolean flag anymore,
but of type $\NN$, the functions
  $\fnunsetrrf$ and $\fnsetrrf$ (for updating the request-required flag) are replaced by the
  functions $\fnincretries$ and $\fnsetretries$, respectively.

The function $\fnincretries$ increments the number of pending requests:
\[\begin{array}{l}%{r@{\hspace{0.5em}}c@{\hspace{0.5em}}l}
\fnincretries:\tQUEUES\times\tIP\to\tQUEUES\\
\incretries{\dval{store}}{\dval{dip}}:=\left\{
\begin{array}{ll@{}}
  \dval{store}-\{(\dval{dip},n,t,q)\}\cup\{(\dval{dip},n+1,t,q)\}\hspace{-9em}
      \\&\mbox{if } (\dval{dip},n,t,q)\in\dval{store}\\
  \dval{store}&\mbox{otherwise}\ .
\end{array}\right.
\end{array}\]

The function $\fnsetretries$ resets the number
of pending requests (to $0$):
\[\begin{array}{l}%{r@{\hspace{0.5em}}c@{\hspace{0.5em}}l}
\fnsetretries:\tQUEUES\times(\tIP\rightharpoonup\tSQN)\to\tQUEUES\\
\setretries{\dval{store}}{\dval{dests}}:=
\begin{array}[t]{l}
   \{st\,|\,st\in \dval{store} \ans (\pi_{1}(st),*)\not\in\dval{dests}\}\\
   \mathrel\cup\{(\pi_{1}(st),0,0,\pi_{4}(st))\mid\\
   \hspace{15pt}st\in\dval{store}\ans(\pi_{1}(st),*)\in\dval{dests}\}\;.\\
\end{array}\end{array}\]
It also resets the waiting time before a new route request may be scheduled.

We define two new (partial) functions that extract the number of
route requests already initiated for a particular destination,
and the time one has to wait before a new route request may be scheduled, respectively:
% (formerly known as $\sigma_{\mbox{\it\tiny $p$-flag}}$):
\[\begin{array}{l}%{r@{\hspace{0.5em}}c@{\hspace{0.5em}}l}
\fnfD[\footnotesize]:\tQUEUES\times\tIP\rightharpoonup\tPendingRREQ\\
\fD[\footnotesize]{\dval{store}}{\dval{dip}}:=\left\{
\begin{array}{l@{\quad}l}
  p&\mbox{if } (\dval{dip},p,*,*)\in\dval{store}\\
  \mbox{undefined}&\mbox{otherwise}\ ,
\end{array}\right.
\end{array}\]
\[\begin{array}{l}%{r@{\hspace{0.5em}}c@{\hspace{0.5em}}l}
\fnseltime[\footnotesize]:\tQUEUES\times\tIP\rightharpoonup\tTIME\\
\selt[\footnotesize]{\dval{store}}{\dval{dip}}:=\left\{
\begin{array}{l@{\quad}l}
  t&\mbox{if } (\dval{dip},*,t,*)\in\dval{store}\\
  \mbox{undefined}&\mbox{otherwise}\ .
\end{array}\right.
\end{array}\]

Finally, to cope with the newly introduced expiration times (lifetimes), we 
define new functions for modifying the routing tables and other data structures. 

\hypertarget{exprt}{A (valid) route that has expired, has to be marked as invalid; 
and an expired invalid route has to be removed from the routing table.\pagebreak[1]
The function $\fnexprt$ models this behaviour:}
\[\begin{array}{ll}
	      \multicolumn{2}{l}{\fnexprt :\tRT\times\tTIME\times\tTIME\to \tRT}\\
	    \exprt{\dval{rt}}{\dval{t}}{\dval{t}'}:=& \{\route\mid \route\in\rt\wedge \pi_{8}(\route) > \dval{t}
            \wedge \keyw{1hoplife}(\pi_6(r),\dval{t}) \}\ \cup \\
	     & \{(\dval{dip},\inc{\dval{dsn}},\dval{dsk},\inval,\dval{hops},\dval{nhip},\dval{pre},\dval{lifetime}+\dval{t}')\mid\\
	     &\qquad(\dval{dip},\dval{dsn},\dval{dsk},\val,\dval{hops},\dval{nhip},\dval{pre},\dval{lifetime})\in\rt\\
	     &\qquad \mbox{}\wedge (\dval{lifetime} \leq \dval{t} \vee \neg\keyw{1hoplife}(\dval{nhip},\dval{t}))\\
             &\qquad \mbox{}\wedge \dval{lifetime}+t' > \dval{t} \}\ .
\end{array} \]%
\journalonly{change {\lifetime} to t; discuss differences; still loop free? seems to be an ambiguity in the RFC}%
\hypertarget{1hoplife}{Here $\keyw{1hoplife}(\dval{nhip},\dval{t})$ is a shorthand for
\[(\dval{nhip},*,*,\val,1,*,*,\dval{ltime})\in\rt \Rightarrow \dval{ltime} > \dval{t}\;;\]
it says that if there is a valid routing table entry for node $\dval{nhip}$ with hop count $1$
(in the routing table), then it is not yet expired.}
The first set keeps all routing table entries that have not expired at time \dval{t}.
Here we take into account two ways a routing table entry $r$ can expire: when the (current) time \dval{t}
equals or exceeds its expiration time $\pi_8(r)$, or when the 1-hop routing table entry to its next
hop expires \cite[Section 6.1]{rfc3561}.
The second set selects all expired valid routes and marks them as
invalid, thereby incrementing the destination sequence number; 
it also sets a new expiration time to indicate when the entry should be removed.
In the (rare) case that even the new expiration time counts as expired, the entry is dropped.
Expired invalid routes are not added to the created set, and are hence erased.

In applications we take $t=\now$ and $t'=\deltime$.
In case an entry is invalidated, the new expiration time is set to be {\deltime} after the previous
expiration time. So valid entries with $\dval{lifetime}+\deltime\leq\now$ skip the phase of being
invalid and are erased right away.

\hypertarget{exprreqs}{Similar to $\fnexprt$ we define a function that modifies the set of route request identifiers
  by expunging the expired ones.}
\[\begin{array}{l}%{r@{\hspace{0.5em}}c@{\hspace{0.5em}}l}
	      \fnexprreqs :\pow(\tIP\times\tRREQID\times\tTIME)\times\tTIME\to \pow(\tIP\times\tRREQID\times\tTIME)\\
	    \exprreqs{\dval{rreqs}}{\dval{t}}:= \{\dval{rq}\mid \dval{rq}\in\dval{rreqs}\wedge \pi_{3}(\dval{rq}) > \dval{t}\}\footnotemark\ .
\end{array} \]
\footnotetext{Projections on route requests  identifiers are defined as usual. Here this means that $\pi_3:\tIP\times\tRREQID\times\tTIME\to\tTIME$ determines the expiration time of the triple.}

 In the same vain, we introduce a function that drops all packets enqueued for destinations
  that have $\retries$ pending route requests, and for which the waiting period has expired.
 This means that no further route request will be sent, and hence the packets will not be delivered.
\[\begin{array}{l}
	      \fnexpq :\tQUEUES\times\tTIME\to \tQUEUES\\
         \expq{\dval{store}}{t'} := \{(\dval{dip},p,t,*) \in\dval{store} \mid p < \retries \vee t>t'\}
\end{array} \]

 \hypertarget{settimert}{Last, but not least, we introduce two
 functions to update the expiration times in routing tables and in stores, respectively.}
\[\begin{array}{l}%{r@{\hspace{0.5em}}c@{\hspace{0.5em}}l}
	      \fnsettimert :\tRT\times\tIP\times\tTIME\to \tRT\\
	    \settimert{\dval{rt}}{\dval{dip}}{\dval{t}}:= 
	    \left\{\begin{array}{@{\,}ll@{}}
	    		\dval{rt} - \{r\} \cup \{nr\} & \mbox{if }  \dval{dip}\in\kD{\dval{rt}}\\[1mm]
	    		\dval{rt}  &\mbox{otherwise\ ,}
	    \end{array}\right.
\end{array} \]
where $r:=\selr[\footnotesize]{\dval{rt}}{\dval{dip}}=(\dval{dip}\comma\dval{dsn}\comma\dval{dsk}\comma\dval{flag}\comma\dval{hops}\comma\dval{nhip}\comma\dval{pre}\comma\dval{ltime})$ is the current entry in the routing table for {\dval{dip}}
and $\dval{nr}:=(\dval{dip}\comma\dval{dsn}\comma\dval{dsk}\comma\dval{flag}\comma\dval{hops}\comma\dval{nhip}\comma\dval{pre}\comma$ $\max(\dval{ltime},\dval{t}))$ 
is identical to~$r$ except for the expiration time, which is updated.
\[\begin{array}{l}
	      \fnsettimequeues :\tQUEUES\times\tIP\times\tTIME\to \tQUEUES	      \\
	     \settimequeues{\dval{store}}{\dval{dip}}{\dval{t}}:=
	     \left\{\begin{array}{@{\,}l@{\qquad }l@{}}
	    \multicolumn{2}{l}{\dval{store}-\{(\dval{dip},p,*,q)\}\cup\{(\dval{dip},p,t,q)\}}\\
	     		& \mbox{if } (\dval{dip},p,*,q)\in\dval{store}\\[1mm]
	    \dval{store}&\mbox{otherwise}
	    \end{array}\right.
\end{array} \]

}% end local macros outermost

\subsection*{Summary}
Table~\ref{tab:types} shows AODV's data structure; 
detailed explanations can be found in~\cite{TR13}.

\begin{table}%[H]
\vspace*{-15mm}
\caption{Data structure of AODV\label{tab:types}}

\vspace*{-2ex}
\begin{center}
\setlength{\tabcolsep}{2.5pt}
{\footnotesize\scriptsize
\begin{tabular}{@{}|l@{\,}|l@{}|l@{\,}|@{}}
\hline
\textbf{Type} & \textbf{Variables} & \textbf{Description}\\
\hline
 \tIP			&\ip,\,\dip,\,\oip,\,\rip,\,\sip,\,\nhip	&node identifiers\\
 \tSQN		&\dsn,\,\osn,\,\rsn,\,\sn	&sequence numbers\\
 \tSQNK		&\keyw{dsk}        			&sequence-number-status flag\\
 \tFLAG		&\flag					&route validity\\
 \NN		&\hops					&hop counts\\
 \tROUTE  	&					&routing table entries\\
 \tRT		&\rt					&routing tables\\		
 \tRREQID	&\rreqid				&request identifiers\\
 \tPendingRREQ&		                                &pending-request counter\\
 \tQUEUES       &\queues				&store of queued data packets\\
 \tMSG		&\msg					&messages\\
 ${[\tTYPE]}$						&			&queues with elements of type \tTYPE\\
 \quad[\tMSG]						&\msgs		&message queues\\
$\tIP\rightharpoonup\tSQN$      &\dests &sets of destinations with sequence numbers\\
$\pow(\tIP)$			&$\pre$	&sets of identifiers (precursors, destinations,  \dots)\\
$\pow(\tIP\mathord\times\tRREQID\mathord\times \tTIME)$	&\rreqs		&sets of request identifiers  with originator IP\\
\hline\hline
\multicolumn{2}{|l@{\,}|}{\textbf{Constant/Predicate}}& \textbf{Description}\\
\hline
%\multicolumn{2}{|l@{\,}|}{$0\mathop{:}\tSQN,~1\mathop{:}\tSQN$}&
%	unknown, smallest sequence number\\
%\multicolumn{2}{|l@{\,}|}{$\mathord{<} \subseteq \tSQN\mathord\times\tSQN$}&
%	strict order on sequence numbers\\
\multicolumn{2}{|l@{\,}|}{$\kno,\unkno\mathop{:}\tSQNK$}&
	constants to distinguish  known and unknown sqns\\
\multicolumn{2}{|l@{\,}|}{$\val,\inval\mathop{:}\tFLAG$}&
	constants to distinguish  valid and invalid routes\\
\multicolumn{2}{|l@{\,}|}{$\retries\mathop{:}\tPendingRREQ$}&
	maximal number of RREQ attempts\\
%\multicolumn{2}{|l@{\,}|}{${[\,]}\mathop{:}{[\tMSG]}$}&
%	empty queue\\
\multicolumn{2}{|l@{\,}|}{$\deltime, \valtime,$}& time constants\\
\multicolumn{2}{|l@{\,}|}{$\myroute, \nodetraversal,$}& \\
\multicolumn{2}{|l@{\,}|}{$\traversal, \pathdiscoverytime\mathop{:}\tTIME$}&\\
\hline\hline
\multicolumn{2}{|l@{\,}|}{\textbf{Operator}} & \textbf{Description}\\
\hline
\multicolumn{2}{|l@{\,}|}{$\fnhead\mathop{:}[\tTYPE]\rightharpoonup\tTYPE$}&
	returns the ``oldest'' element in the queue\\
\multicolumn{2}{|l@{\,}|}{$\fntail\mathop{:}[\tTYPE]\rightharpoonup[\tTYPE]$}&
	removes the ``oldest'' element in the queue\\
\multicolumn{2}{|l@{\,}|}{$\fnappend\mathop{:}\tTYPE\mathord\times[\tTYPE]\rightarrow[\tTYPE]$}&
	inserts a new element into the queue\\
\multicolumn{2}{|l@{\,}|}{$\fndrop\mathop{:}\tIP\mathord\times\tQUEUES \rightharpoonup\tQUEUES$}&
	deletes a packet from the queued data packets\\
\multicolumn{2}{|l@{\,}|}{$\fnadd\mathop{:}\tDATA\mathord\times\tIP\mathord\times\tQUEUES \to\tQUEUES$}&
	adds a packet to the queued data packets\\
\multicolumn{2}{|l@{\,}|}{$\fnincretries\mathop{:}\tQUEUES\mathord\times\tIP\to\tQUEUES$}&
	increments the number of pending requests\\
\multicolumn{2}{|l@{\,}|}{$\fnsetretries\mathop{:}\tQUEUES\mathord\times(\tIP\rightharpoonup\tSQN)\to\tQUEUES$}&
	resets the number of pending requests\\
\multicolumn{2}{|l@{\,}|}{$\fnselqueue\mathop{:}\tQUEUES\mathord\times\tIP \rightarrow [\tDATA]$}&
	selects the data queue for a particular destination\\
\multicolumn{2}{|l@{\,}|}{$\fnfD\mathop{:}\tQUEUES\mathord\times\tIP\rightharpoonup\tPendingRREQ$}&
	returns the number of route requests initiated\\
\multicolumn{2}{|l@{\,}|}{$\fnseltime\mathop{:}\tQUEUES\mathord\times\tIP\rightharpoonup\tTIME$}&
	tells when the next route request should be sent\\
\multicolumn{2}{|l@{\,}|}{$\fnselroute\mathop{:}\tRT\mathord\times\tIP \rightharpoonup \tROUTE$}&
	selects the route for a particular destination\\
\multicolumn{2}{|l@{\,}|}{$(\_\,,\_\,,\_\,,\_\,,\_\,,\_\,,\_\,,\_\,)\mathop{:}$}&
	generates a routing table entry\\
\multicolumn{2}{|l@{\,}|}{$\quad
	\tIP\mathord\times \tSQN \mathord\times\tSQNK \mathord\times\tFLAG \mathord\times \NN
        \mathord\times \tIP \mathord\times \pow(\tIP)\mathord\times\tTIME \mathop\rightarrow \tROUTE$}&
	\\
\multicolumn{2}{|l@{\,}|}{$\fninc\mathop{:}\tSQN \rightarrow \tSQN$}&
	increments the sequence number\\
\multicolumn{2}{|l@{\,}|}{$\max\mathop{:}\tSQN\mathord\times\tSQN \to\tSQN$}&
	returns the larger sequence number\\	
\multicolumn{2}{|l@{\,}|}{$\fnsqn\mathop{:}\tRT \mathord\times \tIP \to \tSQN$}&
	returns the sequence number of a particular route\\	
\multicolumn{2}{|l@{\,}|}{$\fnsqnf\mathop{:}\tRT \mathord\times \tIP \to \tSQNK$}&
	determines whether the sequence number is known\\
\multicolumn{2}{|l@{\,}|}{$\fnstatus\mathop{:}\tRT\mathord\times\tIP\rightharpoonup\tFLAG$}&
	returns the validity of a particular route\\	
\multicolumn{2}{|l@{\,}|}{$\fndhops\mathop{:}\tRT \mathord\times \tIP \rightharpoonup \NN$}&
	returns the hop count of a particular route\\
\multicolumn{2}{|l@{\,}|}{$\fnnhop\mathop{:}\tRT \mathord\times \tIP \rightharpoonup \tIP$}&
	returns the next hop of a particular route\\
\multicolumn{2}{|l@{\,}|}{$\fnprecs\mathop{:}\tRT \mathord\times \tIP \rightharpoonup \pow(\tIP)$}&
	returns the set of precursors of a particular route\\
\multicolumn{2}{|l@{\,}|}{$\fnltime\mathop{:}\tRT \mathord\times \tIP \rightharpoonup \tTIME$}&
	returns the expiration time of a particular route\\
\multicolumn{2}{|l@{\,}|}{$\fnakD, \fnikD, \fnkD\mathop{:}\tRT \rightarrow\pow(\tIP)$}&
	returns the set of valid/invalid/known destinations\\	
\multicolumn{2}{|l@{\,}|}{$\fnqD\mathop{:}\tQUEUES \rightarrow \pow(\tIP)$}&
	returns the set of destinations with unsent packets\\	
\multicolumn{2}{|l@{\,}|}{$\fnaddprec \mathop{:} \tROUTE\mathord\times \pow(\tIP) \to \tROUTE$}&
	adds a set of precursors to a routing table entry\\	
\multicolumn{2}{|l@{\,}|}{$\fnaddprecrt \mathop{:} \tRT\mathord\times\tIP\mathord\times \pow(\tIP) \rightharpoonup \tRT$}&
	adds a set of precursors to an entry inside a table\\	
\multicolumn{2}{|l@{\,}|}{$\fnupd\mathop{:}\tRT \mathord\times \tROUTE \rightharpoonup \tRT$}&
	updates a routing table with a route\\
\multicolumn{2}{|l@{\,}|}{$\fninv\mathop{:}\tRT \mathord\times (\tIP\rightharpoonup\tSQN) \rightarrow \tRT$}&
	invalidates a set of routes within a routing table\\
\multicolumn{2}{|l@{\,}|}{$\fnnrreqid\mathop{:} \pow(\tIP\mathord\times\tRREQID\mathord\times\tTIME) \mathord\times \tIP \rightarrow \tRREQID$}&
	generates a new route request identifier\\
\multicolumn{2}{|l@{\,}|}{$\newpktID\mathop{:}\tDATA \mathord\times \tIP \rightarrow \tMSG$}&
	generates a message with new appl.~layer data\\
\multicolumn{2}{|l@{\,}|}{$\pktID\mathop{:}\tDATA \mathord\times \tIP \mathord\times \tIP \rightarrow \tMSG$}&
	generates a message containing appl.~layer data\\
\multicolumn{2}{|l@{\,}|}{$\rreqID\!\mathop{:}\!\NN\!\mathord\times \tRREQID \mathord\times \tIP \mathord\times \tSQN\mathord\times\tSQNK \mathord\times \tIP \mathord\times \tSQN \mathord\times \tIP\!\rightarrow\!\tMSG$}&
	generates a route request\\
\multicolumn{2}{|l@{\,}|}{$\rrepID\!\mathop{:}\!\NN\!\mathord\times \tIP \mathord\times \tSQN \mathord\times \tIP \mathord\times\tTIME \mathord\times \tIP\!\rightarrow \tMSG$}&
	generates a route reply\\
\multicolumn{2}{|l@{\,}|}{$\rerrID\mathop{:}(\tIP\rightharpoonup\tSQN) \mathord\times \tIP \rightarrow \tMSG$}&
	generates a route error message\\
%%new functions
\multicolumn{2}{|l@{\,}|}{$\fnexprt \mathop{:}\tRT\mathord\times\tTIME\mathord\times\tTIME\to \tRT$}&invalidates/removes expired routing table entries\\
\multicolumn{2}{|l@{\,}|}{$\fnexprreqs \mathop{:}\pow(\tIP\mathord\times\tRREQID\mathord\times\tTIME)\mathord\times\tTIME$}&removes expired route request identifiers\\
\multicolumn{2}{|l@{\,}|}{$\quad\to \pow(\tIP\mathord\times\tRREQID\mathord\times\tTIME)$}&\\
\multicolumn{2}{|l@{\,}|}{$\fnexpq \mathop{:}\tQUEUES\mathord\times\tTIME\to \tQUEUES$}&removes expired entries form a store\\

\multicolumn{2}{|l@{\,}|}{$\fnsettimert \mathop{:}\tRT\mathord\times\tIP\mathord\times\tTIME\to \tRT$}&updates expiration time for a routing table entry\\
\multicolumn{2}{|l@{\,}|}{$\fnsettimequeues \mathop{:}\tQUEUES\mathord\times\tIP\mathord\times\tTIME\to \tQUEUES$}&updates expiration time for an entry in the store\\
\hline
\end{tabular}
}%end small
\end{center}
\vspace{-20mm}
\end{table}

\subsubsection{Modelling AODV}\label{ssubapp:ModelAODV}~

\noindent
Our model of AODV consists of $7$ processes, named $\AODV$, $\NEWPKT$, $\PKT$, $\RREQ$,
$\RREP$, $\RERR$ and $\QMSG$; their formal specifications are displayed as Processes~\ref{pro:aodv}--\ref{pro:queues}.
The red-coloured parts are those bits that differ from the specification given in~\cite{TR13,GHPT16}.
In this paper we only explain those parts, and refer to \cite[Section 6]{TR13} for a detailed explanation of all other
parts.

\paragraph{\rm\bf The Basic Routine.}%\label{ssec:proc_aodv}
The basic process $\AODV$, depicted in \Pro{aodv}, either handles a message from the
corresponding queue, sends a queued data packet if a route to the destination has been established, 
or initiates a new route discovery process in case of queued
data packets with invalid or unknown routes.
This process maintains five data variables, {\ip}, {\sn}, {\rt}, {\rreqs} and
{\queues}, in which it stores its own identity, its own sequence number, its current routing
table, the list of route requests seen so far, and its current store
of queued data packets that await transmission.

With timers in place, the routing table needs regular updates. 
In particular, valid routing table entries have to be invalidated, and
invalid ones need to be erased 
when the expiration time of an entry has been reached.
Hence each time before we use information from the routing table $\rt$ maintained by a node, we prune expired
routes from the routing table, and invalidate routes that have not been used for a long time; this
happens for instance in Line~\ref{aodv:line1} of
Process~\ref{pro:aodv}, so that the updated routing table is used when we 
evaluate the guard of Line~\ref{aodv:line22}, checking that there is a valid route to $\dip$.

We again prune $\rt$ in Line~\ref{aodv:line3}, prior to for instance evaluating the guard in
Line~\ref{pkt2:line5} of Process~\ref{pro:pkt}---repeated pruning is needed
because time may have passed upon receiving the message in Line~\ref{aodv:line2} of Process~\ref{pro:aodv}.
A similar argument applies to Lines~\ref{aodv:line25} and~\ref{aodv:line29b}.

Likewise, before we consult the store of queued data packets (e.g.\ in
Lines~\ref{aodv:line22} and~\ref{aodv:line34}) we drop all packets
from those queues for which  $\retries$
unsuccessful attempts have been made to find a route to the destination (Line~\ref{aodv:line1q}).

Each time a routing table entry is updated (Lines~\ref{aodv:line10},~\ref{aodv:line14}
and~\ref{aodv:line18}) the lifetime of the entry is set to $\valtime$ (so that the expiration time
becomes%
  \algsetup{linenodelimiter=.,linenosize=\tiny}
  \begin{algorithm}[H]
    {\scriptsize
      %\algsetup{indent=2em}
      \caption{The basic routine}
      \label{pro:aodv}
      \begin{algorithmic}[1]\scriptsize
        % !TEX root = ../tawn.tex
\DEFPROCESS{\AODV}{\ip\comma\sn\comma\rt\comma\rreqs\comma\queues}
   \COMLINE{clean up routing table, and data storage}
	\highlightUPD{\rt:=\exprt{\rt}{\now}{\deltime}}      \label{aodv:line1}
	\highlightUPD{\queues:=\expq{\queues}{\now}}      \label{aodv:line1q}
	\PAR
	\COMLINE{node receives a message}
	\IFempty
		\receiveL{\msg}\ .																															\label{aodv:line2}
		\highlightUPD{\rt:=\exprt{\rt}{\now}{\deltime}}   \label{aodv:line3}
		\COMLINE{depending on the message, the node calls different processes}	
		\PAR
		\IF[new DATA packet]{$\msg = \newpkt{\data}{\dip}$}																	\label{aodv:line4}
			\newpktP{\data}{\dip}{\ip}{\sn}{\rt}{\rreqs}{\queues}																	\label{aodv:line5}
		\ELSIF[incoming DATA packet]{$\msg = \pkt{\data}{\dip}{\oip}$}  \label{aodv:line6}
			\pktP{\data}{\dip}{\oip}{\ip}{\sn}{\rt}{\rreqs}{\queues}																	\label{aodv:line7}
		\ELSIF[RREQ]{$\msg = \rreq{\hops}{\rreqid}{\dip}{\dsn}{\dsk}{\oip}{\osn}{\sip}$}								\label{aodv:line8}
			\COMLINE{update the route to \sip\ in \rt}																					\label{aodv:line9}
			\UPD{\rt:=\upd{\rt}{(\sip,0,\unkno,\val,1,\sip,\emptyset,\highlight{\now+\valtime})}}																\label{aodv:line10}
%			\COMMENT{$0$ is used since no sequence number is known}%
			\rreqP{\hops}{\rreqid}{\dip}{\dsn}{\dsk}{\oip}{\osn}{\sip}{\ip}{\sn}{\rt}{\rreqs}{\queues}
		\ELSIF[RREP]{$\msg = \rrep{\hops}{\dip}{\dsn}{\oip}{\highlight\lifetime}{\sip}$}											\label{aodv:line12}
			\COMLINE{update the route to \sip\ in \rt}																					\label{aodv:line13}
			\UPD{\rt:=\upd{\rt}{(\sip,0,\unkno,\val,1,\sip,\emptyset,\highlight{\now+\valtime})}}																\label{aodv:line14}
			\rrepP{\hops}{\dip}{\dsn}{\oip}{\highlight\lifetime}{\sip}{\ip}{\sn}{\rt}{\rreqs}{\queues}												\label{aodv:line15}
		\ELSIF[RERR]{$\msg = \rerr{\dests}{\sip}$}																					\label{aodv:line16}
			\COMLINE{update the route to \sip\ in \rt}																					\label{aodv:line17}
			\UPD{\rt:=\upd{\rt}{(\sip,0,\unkno,\val,1,\sip,\emptyset,\highlight{\now+\valtime})}}																\label{aodv:line18}
			\rerrP{\dests}{\sip}{\ip}{\sn}{\rt}{\rreqs}{\queues}																			\label{aodv:line19}
		\ENDIFii
		\ENDPAR																																		\label{aodv:line20}
		\ELSIF[send a queued packet if a valid route is known]{$\mbox{Let } \dip\mathbin\in\qD{\queues}\cap\akD{\rt}$}				\label{aodv:line22}
			\UPD{\data:=\head{\selq{\queues}{\dip}}}													\label{aodv:line23}
			\STARTPRIO
			 	\unicast{\nhop{\rt}{\dip}}{\pkt{\data}{\dip}{\ip}}\ . 											\label{aodv:line24}
                        	\highlightUPD{\rt:=\exprt{\rt}{\now}{\deltime}}      \label{aodv:line25}
				\UPD{\queues:=\drop{\dip}{\queues}}\COMMENT{drop {\data} from the {\queues} for {\dip}}													\label{aodv:line26}
				\highlightUPD{\rt:=\settimert{\rt}{\dip}{\now+\valtime}} \label{aodv:line26b}
				\highlightUPD{\rt:=\settimert{\rt}{\nhop{\rt}{\dip}}{\now+\valtime}}\label{aodv:line26c}
				\aodvL{\ip}{\sn}{\rt}{\rreqs}{\queues}			\label{aodv:line27}
			 \PRIO
				\COMspec{an error is produced and the routing table is updated}							\label{aodv:line29}
				\highlightUPD{\rt:=\exprt{\rt}{\now}{\deltime}} \label{aodv:line29b}
				\UPD{\dests:=\{(\rip,\inc{\sqn{\rt}{\rip}})\,|\,\rip\in\akD{\rt}\ans \nhop{\rt}{\rip}=\nhop{\rt}{\dip}\}}		\label{aodv:line30}
				\UPD{\rt:=\inv{\rt}{\dests}{\highlight{\now+\deltime}}}														\label{aodv:line32}
				\UPD{\queues:=\highlight{\setretries{\queues}{\dests}}}\label{aodv:line32a}
				\UPD{\pre:=\bigcup\{\precs{\rt}{\rip}\,|\,(\rip,*)\in\dests\}}									\label{aodv:line31}
				\UPD{\dests:=\{(\rip,\rsn)\,|\,(\rip,\rsn)\in\dests\ans \precs{\rt}{\rip}\not=\emptyset\}}				\label{aodv:line31a}
				\groupcast{\pre}{\rerr{\dests}{\ip}}\ .
				\aodv{\ip}{\sn}{\rt}{\rreqs}{\queues}													\label{aodv:line33}
 		 	\ENDPRIO
		\ELSIF[a route discovery process is initiated]{$\mbox{Let } \dip\mathbin\in\qD{\queues}{-}\akD{\rt}\wedge\highlight{\fD{\queues}{\dip} < \retries \wedge \selt{\queues}{\dip} \leq \now}$}	\label{aodv:line34}
			\UPD{\queues:=\highlight{\incretries{\queues}{\dip}}}\label{aodv:line34b}
			\highlightUPD{\queues:=\settimequeues{\queues}{\dip}{\now+2^{{\sigma_{\it retries}}(\queues,\dip)}\cdot\traversal}}			\label{aodv:line35}		
			\highlightUPD{\rt:=\settimert{\rt}{\dip}{\now+2\cdot\traversal}}\label{aodv:line35b}		
			\UPD{\sn:=\inc{\sn}}\COMMENT{increment own sequence number}								\label{aodv:line36}
			\COMLINE{update \rreqs\ by adding $(\ip,\nrreqid{\rreqs}{\ip})$}								\label{aodv:line37}
			\UPD{\rreqid:=\nrreqid{\rreqs}{\ip}}							\label{aodv:line38a}
			\UPD{\rreqs := \rreqs\cup\{(\ip,\rreqid, \highlight{\now+\pathdiscoverytime})\}}							\label{aodv:line38b}
			\broadcast{\rreq{$0$}{\rreqid}{\dip}{\sqn{\rt}{\dip}}{\sqnf{\rt}{\dip}}{\ip}{\sn}{\ip}}\ .					\label{aodv:line39}
			\aodvL{\ip}{\sn}{\rt}{\rreqs}{\queues}														\label{aodv:line40}			
	\ENDIFii
	\ENDPAR

	\end{algorithmic}
    }%end{footnotesize}
  \end{algorithm}

\vspace{-4mm}
\noindent
 $\now+\valtime$), according to \cite[Section 6.2]{rfc3561}.
Likewise, after a route is used to forward a data packet (Line~\ref{aodv:line24}),
the lifetime of the routing table entries for the destination and for the next hop on the
path to the destination are updated in the same way (Line~\ref{aodv:line26b} and~\ref{aodv:line26c}),
again according to \cite[Section 6.2]{rfc3561}.
The lifetime parameter of route reply messages is simply passed on from the incoming message of
Line~\ref{aodv:line12} to the process $\RREP$ in Line~\ref{aodv:line15}.

When invalidating routing table entries in Line~\ref{aodv:line32}, the expiration time of the
invalidated entries is set to $\now+\deltime$, according to \cite[Section 6.11]{rfc3561}.
For each of the newly invalidated destinations, a fresh route discovery process needs to be
initiated. To this end, the number of pending route request for that destination is set to $0$,
and the time after which the next route request can be made to $\now$ (Line~\ref{aodv:line32a}).

If the guard of Line~\ref{aodv:line34} evaluates to \keyw{true}, a route discovery process for a
destination $\dip$ will be initiated. For this to happen, according to \cite[Section~6.3]{rfc3561}, the
number $\fD[\footnotesize]{\queues}{\dip}$ of pending route requests for $\dip$ needs to be smaller than the
parameter $\retries$. Moreover, the time we were instructed to wait for has been reached
($\selt[\footnotesize]{\queues}{\dip}\mathrel{\leq} \now$). When a new route request is being made, the
recorded number of pending route requests for $\dip$ is incremented (Line~\ref{aodv:line34b}),
and, again according to \cite[Section 6.3]{rfc3561}, an \pagebreak[3]instruction is processed to wait until time
$\now+2^{\fnfD[\tiny](\queues,\dip)}\cdot\traversal$ before issuing a new
route request for $\dip$ (Line~\ref{aodv:line35}). Furthermore, Line~\ref{aodv:line35b} says that
a routing table entry waiting for a route reply should not be expunged before time $\now + 2 \cdot
\traversal$ \cite[Section 6.4]{rfc3561}. Finally, Line~\ref{aodv:line38b} indicates that
``before broadcasting the RREQ, the originating node buffers the RREQ ID and the Originator IP
  address (its own address) of the RREQ for \pathdiscoverytime'' (\cite[Section 6.3]{rfc3561}).

\paragraph{\rm\bf Data Packet Handling.}%\label{ssec:proc_pkt}
The process $\NEWPKT$ (\Pro{newpkt}), describing all actions performed by a node when a data packet is injected
by a client hooked up to the local node, is unchanged w.r.t.\ \cite{TR13,GHPT16}.

\vspace{-5mm}
  \algsetup{linenodelimiter=.,linenosize=\tiny}
  \begin{algorithm}[H]
    {\scriptsize
      %\algsetup{indent=2em}
      \caption{Routine for handling a newly injected data packet}
      \label{pro:newpkt}
      \begin{algorithmic}[1]\scriptsize
        % !TEX root = ../tawn.tex
\DEFPROCESS{\NEWPKT}{\data\comma\dip\,\comma\,\ip\comma\sn\comma\rt\comma\rreqs\comma\queues}
%	\COMLINE{routine for handling a received packet}
	\IF[the DATA packet is intended for this node]{$\dip=\ip$}															\label{newpkt:line2}
		\deliverL{\data}\ .
		\aodv{\ip}{\sn}{\rt}{\rreqs}{\queues}																						\label{newpkt:line3}
	\ELSIF[the DATA packet is not intended for this node]{$\dip\not=\ip$}										\label{newpkt:line4}
		\UPD{\queues:=\add{\data}{\dip}{\queues}}	\ .
		\aodv{\ip}{\sn}{\rt}{\rreqs}{\queues}																						\label{newpkt:line5}	
	\ENDIFii

	\end{algorithmic}
    }%end{footnotesize}
  \end{algorithm}

\vspace{-4mm}

In the process $\PKT$ (\Pro{pkt}), dealing with data packets received via the protocol,
a data packet is forwarded to the next hop on the route to the destination in Line~\ref{pkt2:line6}.
According to \cite[Section 6.2]{rfc3561}, the expiration times of the routing table entries for the
destination, the next hop on the path to the destination, the source and the next hop on the
path to the source of the message are all set to $\now+\valtime$ (Lines~\ref{pkt2:line6a}--\ref{pkt2:line6d}).
The handling of an unsuccessful transmission is exactly as in Process~\ref{pro:aodv}.
Line ~\ref{pkt2:line19} says that ``if a data packet is received for an invalid
route, the lifetime field is updated to current time plus $\deltime$'' \cite[Section 6.11]{rfc3561}.

\paragraph{\rm\bf Receiving Route Requests.}%\label{ssec:proc_rreq}
The process $\RREQ$ (\Pro{rreq}) models all events that may occur after a route
request has been received. In case the node itself is the intended destination of the RREQ message, 
the node generates a route reply (RREP) message, which is then sent along the established reverse route. 
A RREP message is also generated in case an intermediate node (a node that is\\[-8pt]
\mbox{}
\vspace{-11mm}
  \algsetup{linenodelimiter=.,linenosize=\tiny}
  \begin{algorithm}[H]
    {\scriptsize
      %\algsetup{indent=2em}
      \caption{Routine for handling a received data packet}
      \label{pro:pkt}
      \begin{algorithmic}[1]\scriptsize
        % !TEX root = ../tawn.tex
\DEFPROCESS{\PKT}{\data\comma\dip\comma\oip\,\comma\,\ip\comma\sn\comma\rt\comma\rreqs\comma\queues}
	\IF[the DATA packet is intended for this node]{$\dip=\ip$}																				\label{pkt2:line2}
		\deliverL{\data}\ .
		\aodv{\ip}{\sn}{\rt}{\rreqs}{\queues}																											\label{pkt2:line3}
	\ELSIF[the DATA packet is not intended for this node]{$\dip\not=\ip$}															\label{pkt2:line4}
	 	\PAR
			\IF[valid route to \dip]{$\dip\in\akD{\rt}$}																									\label{pkt2:line5}
				\COMLINE{forward packet}
				\STARTPRIO
					\unicast{\nhop{\rt}{\dip}}{\pkt{\data}{\dip}{\oip}}\ . \label{pkt2:line6}
                                     	\highlightUPD{\rt:=\exprt{\rt}{\now}{\deltime}}      \label{aodv:line6extra}
					\highlightUPD{\rt:=\settimert{\rt}{\dip}{\now+\valtime}}\label{pkt2:line6a}
					\highlightUPD{\rt:=\settimert{\rt}{\nhop{\rt}{\dip}}{\now+\valtime}}\label{pkt2:line6b}
					\highlightUPD{\rt:=\settimert{\rt}{\oip}{\now+\valtime}}\label{pkt2:line6c}
					\highlightUPD{\rt:=\settimert{\rt}{\nhop{\rt}{\oip}}{\now+\valtime}}\label{pkt2:line6d}
				\aodvL{\ip}{\sn}{\rt}{\rreqs}{\queues}						\label{pkt2:line7}
				\PRIO
					\COMspec{If the packet transmission is unsuccessful, a RERR message is generated}
					\highlightUPD{\rt:=\exprt{\rt}{\now}{\deltime}}		
					\UPD{\dests:=\{(\rip,\inc{\sqn{\rt}{\rip}})\,|\,\rip\in\akD{\rt}\ans \nhop{\rt}{\rip}=\nhop{\rt}{\dip}\}}			\label{pkt2:line9}
					\UPD{\rt:=\inv{\rt}{\dests}{\highlight{\now+\deltime}}}																												\label{pkt2:line10}
					\UPD{\queues:=\highlight{\setretries{\queues}{\dests}}}																			\label{pkt2:line11}
					\UPD{\pre:=\bigcup\{\precs{\rt}{\rip}\,|\,(\rip,*)\in\dests\}}																	\label{pkt2:line12}
					\UPD{\dests:=\{(\rip,\rsn)\,|\,(\rip,\rsn)\in\dests\ans \precs{\rt}{\rip}\not=\emptyset\}}							\label{pkt2:line13}
					\groupcast{\pre}{\rerr{\dests}{\ip}}\ . \aodv{\ip}{\sn}{\rt}{\rreqs}{\queues}											\label{pkt2:line14}
				\ENDPRIO
			\ELSIF[no valid route to \dip]{$\dip\not\in\akD{\rt}$}																				\label{pkt2:line15}
				\COMLINE{no local repair occurs; data is lost}																					\label{pkt2:line16}
				\PAR	
					\IF[invalid route to \dip]{$\dip\in\ikD{\rt}$}																						\label{pkt2:line18}
		\highlightUPD{\rt:=\settimert{\rt}{\dip}{\now+\deltime}}\label{pkt2:line19} %added by Rob
						\COMLINE{if the route is invalid, a RERR is sent to the precursors}
						\groupcast{\precs{\rt}{\dip}}{\rerr{\{(\dip,\sqn{\rt}{\dip})\}}{\ip}}\ .											
						\aodvL{\ip}{\sn}{\rt}{\rreqs}{\queues}																							\label{pkt2:line20}
					\ELSIF[route not in \rt]{$\dip\not\in\ikD{\rt}$}																					\label{pkt2:line21}
						\aodvL{\ip}{\sn}{\rt}{\rreqs}{\queues}																							\label{pkt2:line22}
					\ENDIFii
				\ENDPAR																																			\label{pkt2:line23}									
			\ENDIFii
		\ENDPAR																																					\label{pkt2:line24}
	\ENDIFii

	\end{algorithmic}
    }%end{footnotesize}
  \end{algorithm}

\noindent
 neither the destination nor the originator of the RREQ message) receives it and has knowledge about a valid and fresh enough route to the destination.

Just as in \Pro{aodv}, the process prunes expired
routes from the routing table before reading the routing table.
This happens in Lines~\ref{rreq:line15a} and~\ref{rreq:line27}.
Likewise, before consulting the list of already handled route requests
in Line~\ref{rreq:line2} the process
expunges expired entries from this list in Line~\ref{rreq:line1}.

In Line~\ref{rreq:line6}, the routing table for the originator $\oip$ of the received route request
is updated. According to \cite[Section 6.2]{rfc3561}, the lifetime of the entry is ``initialized to \valtime'', whereas according to \cite[Section 6.5]{rfc3561},
the expiration time ``is set to be the maximum of (ExistingLifetime, MinimalLifetime)'', where MinimalLifetime =
$$\now+2\cdot\traversal-2\cdot(\hops+1)\cdot\nodetraversal.$$
We implement both instructions, in Lines~\ref{rreq:line6} and~\ref{rreq:line7}, thereby taking the
maximum lifetime resulting from both instructions.

In Line~\ref{rreq:line8} we add the unique identifier $(\oip\comma\rreqid)$ for the current route
request as a new entry in the list of already handled route requests; its expiration time is set to
$\now+\pathdiscoverytime$, according to \cite[Section 6.5]{rfc3561}.

  \algsetup{linenodelimiter=.,linenosize=\tiny}
  \begin{algorithm}[H]
    {\scriptsize
      %\algsetup{indent=2em}
      \caption{RREQ handling}
      \label{pro:rreq}
      \begin{algorithmic}[1]\scriptsize
        % !TEX root = ../tawn.tex
\DEFPROCESS{\RREQ}{\hops\comma\rreqid\comma\dip\comma\dsn\comma\dsk\comma\oip\comma\osn\comma\sip\,\comma\,\ip\comma\sn\comma\rt\comma\rreqs\comma\queues}
%	\COMLINE{routine that describes the handling of a received route request}
\highlightUPD{\exprreqs{\rreqs}{\now}}\label{rreq:line1}
\PAR
	\IF[the RREQ has been received previously]{$(\oip\comma\rreqid\comma\highlight*)\in\rreqs$}																							\label{rreq:line2}
		\aodvL{\ip}{\sn}{\rt}{\rreqs}{\queues} \COM{silently ignore RREQ, i.e., do nothing}																			\label{rreq:line3}
	\ELSIF[the RREQ is new to this node]{$(\oip\comma\rreqid\comma\highlight*)\not\in\rreqs$}																								\label{rreq:line4}
		\UPD{\rt:=\upd{\rt}{(\oip,\osn,\kno,\val,\hops+1,\sip,\emptyset,\highlight{\now+\valtime})}}	%\COMMENT{update the route to \oip\ in \rt}									
		\label{rreq:line6}
		\highlightUPD{\rt:=\settimert{\rt}{\oip}{\now+2\cdot\traversal-2\cdot(\hops+1)\cdot\nodetraversal}}\label{rreq:line7}%
		\UPD{\rreqs:=\rreqs\cup\{(\oip,\rreqid,\highlight{\now+\pathdiscoverytime})\}}		\COMMENT{update \rreqs}											\label{rreq:line8}
		\PAR																																															\label{rreq:line9}
		\IF[this node is the destination node]{$\dip=\ip$}																															\label{rreq:line10}
			\UPD{\sn:=\max(\sn,\dsn)}	\COMMENT{update the sqn of \ip}																									\label{rreq:line12}
			\COMLINE{unicast a RREP towards \oip\ of the RREQ}																											\label{rreq:line13}
			\STARTPRIO
					\unicast{\nhop{\rt}{\oip}}{{\rrep{$0$}{\dip}{\sn}{\oip}{\highlight{\myroute}}{\ip}}}\ . 																								\label{rreq:line14a}								
					\aodvL{\ip}{\sn}{\rt}{\rreqs}{\queues}																																	\label{rreq:line14}
				\PRIO
					\COMspec{If the transmission is unsuccessful, a RERR message is generated}																\label{rreq:line15}
					\highlightUPD{\rt:=\exprt{\rt}{\now}{\deltime}}\label{rreq:line15a}
					\UPD{\dests:=\{(\rip,\inc{\sqn{\rt}{\rip}})\,|\,\rip\in\akD{\rt}\wedge \nhop{\rt}{\rip}=\nhop{\rt}{\oip}\}}\label{rreq:line16}%
					\UPD{\rt:=\inv{\rt}{\dests}{\highlight{\now+\deltime}}}																																					\label{rreq:line18}			
					\UPD{\queues:=\highlight{\setretries{\queues}{\dests}}}																														\label{rreq:line18a}
					\UPD{\pre:=\bigcup\{\precs{\rt}{\rip}\,|\,(\rip,*)\in\dests\}}																										\label{rreq:line17}
					\UPD{\dests:=\{(\rip,\rsn)\,|\,(\rip,\rsn)\in\dests\ans \precs{\rt}{\rip}\not=\emptyset\}}																\label{rreq:line17a}
					\groupcast{\pre}{\rerr{\dests}{\ip}}\ .																																	
					\aodv{\ip}{\sn}{\rt}{\rreqs}{\queues}																																	\label{rreq:line19}
				\ENDPRIO	
		\ELSIF[this node is not the destination node]{$\dip\not=\ip$}																											\label{rreq:line20}
			\PAR																																															\label{rreq:line21}
			\COMLINE{valid route to \dip\ that is fresh enough}
			\IF{$\!\dip\mathbin\in\akD{\rt} \wedge \dsn \mathbin\leq  \sqn{\rt}{\!\dip} \wedge\sqnf{\rt}{\!\dip}\mathbin=\kno\!$}		\label{rreq:line22}
					\COMLINE{update \rt\ by adding precursors}																														\label{rreq:line23}
					\UPD{\rt := \addprecrt{\rt}{\dip}{\{\sip\}}}																																\label{rreq:line24}
					\UPD{\rt := \addprecrt{\rt}{\oip}{\{\nhop{\rt}{\dip}\}}}																												\label{rreq:line25}
				\COMLINE{unicast a RREP towards the \oip\ of the RREQ}
				\STARTPRIO
				    %no unicast-macros to get a nice line break
				    %					\unicast{\nhop{\rt}{\oip}}{\rrep{\dhops{\rt}{\dip}}{\dip}{\sqn{\rt}{\dip}}{\oip}{\highlight{\selt{\rt}{\dip}-\now}}{\ip}}\,.
				    \textbf{unicast}(\nosp{\nhop{\rt}{\oip}}\comma
				    \\\hspace{23.2pt}\nosp{\rrep{\dhops{\rt}{\dip}}{\dip}{\sqn{\rt}{\dip}}{\oip}{\highlight{\selt{\rt}{\dip}-\now}}{\ip}}\,.
				    \label{rreq:line26}%
					\aodvL{\ip}{\sn}{\rt}{\rreqs}{\queues}																																	\label{rreq:line26a}
				\PRIO
					\COMspec{If the transmission is unsuccessful, a RERR message is generated}\label{rreq:line26b}
					\highlightUPD{\rt:=\exprt{\rt}{\now}{\deltime}}\label{rreq:line27}
					%unfolded definition UPD to get proper linebreaks
%					\UPD{\dests:=\{(\rip,\inc{\sqn{\rt}{\rip}})\,|\,\chP{\rip}\in\akD{\rt}\ans \nhop{\rt}{\rip}=\nhop{\rt}{\oip}\}}
					\STATE{\textcolor{brown}{\ensuremath{\mbox{\bf [\![}\dests:=\{(\rip,\inc{\sqn{\rt}{\rip}})\,|}}}\\
								{\textcolor{brown}{\ensuremath{\hspace{40pt}\,\rip\in\akD{\rt}\ans \nhop{\rt}{\rip}=\nhop{\rt}{\oip}\}\mbox{\bf ]\!]}}}}											\label{rreq:line28}
					\UPD{\rt:=\inv{\rt}{\dests}{\highlight{\now+\deltime}}}																																						\label{rreq:line30}			
					\UPD{\queues:=\highlight{\setretries{\queues}{\dests}}}																													\label{rreq:line30a}		
					\UPD{\pre:=\bigcup\{\precs{\rt}{\rip}\,|\,(\rip,*)\in\dests\}}																										\label{rreq:line29}
					\UPD{\dests:=\{(\rip,\rsn)\,|\,(\rip,\rsn)\in\dests\ans \precs{\rt}{\rip}\not=\emptyset\}}																\label{rreq:line29a}
					\groupcast{\pre}{\rerr{\dests}{\ip}}\ . 																																	\label{rreq:line31}
					\aodv{\ip}{\sn}{\rt}{\rreqs}{\queues}
				\ENDPRIO
			\ELSIF[$\!$no fresh route$\!$]{$\dip\mathbin{\not\in}\akD{\rt} \vee \sqn{\rt}{\!\dip} <  \dsn \vee\sqnf{\rt}{\!\dip}\mathbin=\unkno$}					\label{rreq:line32}
				\COMLINE{no further update of \rt}
				\broadcast{\rreq{$\hops+1$}{\rreqid}{\dip}{\max(\sqn{\rt}{\dip}\comma\dsn)}{\dsk}{\oip}{\osn}{\ip}}\ .										\label{rreq:line34}
				\aodvL{\ip}{\sn}{\rt}{\rreqs}{\queues}				\label{rreq:line35}
			\ENDIFii
			\ENDPAR																																													\label{rreq:line36}
		\ENDIFii
		\ENDPAR																																														\label{rreq:line37}
	\ENDIFii
	\ENDPAR

	\end{algorithmic}
    }%end{footnotesize}
  \end{algorithm}

\vspace{-2mm}

In Line~\ref{rreq:line14a}, when sending a route reply in answer to the incoming route request
because the current node \emph{is} the destination of the request, ``the destination node copies the
value $\myroute$ [\dots] into the Lifetime field of the RREP'' \cite[Section 6.6.1]{rfc3561}. 
However, when sending a route reply ``as an intermediate hop along the path from the originator to the
destination'' (Line~\ref{rreq:line26}), ``the Lifetime field of the RREP is calculated by
subtracting the current time from the expiration time in its route table entry'' \cite[Section 6.6.2]{rfc3561}.

The treatment of an unsuccessful unicast (Lines~\ref{rreq:line15}--\ref{rreq:line19} and
Lines~\ref{rreq:line26b}--\ref{rreq:line31}) is exactly as in Process~\ref{pro:aodv}.

\paragraph{\rm\bf Receiving Route Replies.}%\label{ssec:rrep}

We handle a received route reply only if it would give rise to a genuine update to the routing table
entry for the destination $\dip$ of the original route request (Lines~\ref{rrep:line1}
and~\ref{rrep:line25}), not counting updates to the lifetime of that entry. When we do update the
routing table (Line~\ref{rrep:line2}), ``the expiry time is set to the current time plus the value
of the Lifetime in the RREP message'' \cite[Section 6.7]{rfc3561}.
%\newpage

%
  \algsetup{linenodelimiter=.,linenosize=\tiny}
  \begin{algorithm}[H]
    {\scriptsize
      %\algsetup{indent=2em}
      \caption{RREP handling}
      \label{pro:rrep}
      \begin{algorithmic}[1]\scriptsize
        % !TEX root = ../tawn.tex
\DEFPROCESS{\RREP}{\hops\comma\dip\comma\dsn\comma\oip\comma\highlight\lifetime\comma\sip\,\comma\,\ip\comma\sn\comma\rt\comma\rreqs\comma\queues}
	\IF[$\!$routing table has to be updated$\!$]{$\!\rt\mathop{\not=}\upd{\rt}{(\dip\comma\dsn\comma\kno\comma\val\comma\hops\mathop+1\comma\sip\comma\emptyset\comma\highlight{0})}\!$}						\label{rrep:line1}
		\UPD{\rt:=\upd{\rt}{(\dip\comma\dsn\comma\kno\comma\val\comma\hops+1\comma\sip\comma\emptyset\comma\highlight{\now+\lifetime})}}			\label{rrep:line2}
		\PAR\label{rrep:line3}
		\IF[this node is the originator of the corresponding RREQ]{$\oip = \ip$}																	\label{rrep:line4}
			\COMLINE{a packet may now be sent; this is done in the process \AODV}
			\aodvL{\ip}{\sn}{\rt}{\rreqs}{\queues}																													\label{rrep:line8}
		\ELSIF[this node is not the originator; forward RREP]{$\oip \not= \ip$}																	\label{rrep:line9}
			\PAR
				\IF[valid route to \oip]{$\oip\in\akD{\rt}$}																											\label{rrep:line11}
					\COMLINE{add next hop towards $\oip$ as precursor and forward the route reply}									\label{rrep:line12}									
					\UPD{\rt := \addprecrt{\rt}{\dip}{\{\nhop{\rt}{\oip}\}}}																					\label{rrep:line12a}
					\UPD{\rt := \addprecrt{\rt}{\nhop{\rt}{\dip}}{\{\nhop{\rt}{\oip}\}}}																	\label{rrep:line12b}
					\highlightUPD{\rt:=\settimert{\rt}{\oip}{\now+\valtime}}\label{rrep:line12d}
					\STARTPRIO
						\unicast{\nhop{\rt}{\oip}}{\rrep{$\hops+1$}{\dip}{\dsn}{\oip}{\highlight{\lifetime}}{\ip}}\ .															\label{rrep:line13}
						\aodvL{\ip}{\sn}{\rt}{\rreqs}{\queues}																										\label{rrep:line14}
					\PRIO
						\COMspec{If the transmission is unsuccessful, a RERR message is generated}
						\highlightUPD{\rt:=\exprt{\rt}{\now}{\deltime}}
					%unfolded definition UPD to get proper linebreaks
%						\UPD{\dests:=\{(\rip,\inc{\sqn{\rt}{\rip}})\,|\,\rip\in\akD{\rt}\ans \nhop{\rt}{\rip}=\nhop{\rt}{\oip}\}}
					\STATE{\textcolor{brown}{\ensuremath{\mbox{\bf [\![}\dests:=\{(\rip,\inc{\sqn{\rt}{\rip}})\,|}}}\\
								{\textcolor{brown}{\ensuremath{\hspace{40pt}\,\rip\in\akD{\rt}\ans \nhop{\rt}{\rip}=\nhop{\rt}{\oip}\}\mbox{\bf ]\!]}}}}												
						\label{rrep:line16}
						\UPD{\rt:=\inv{\rt}{\dests}{\highlight{\now+\deltime}}}																														\label{rrep:line18}							
					\UPD{\queues:=\highlight{\setretries{\queues}{\dests}}}																		\label{rrep:line16a}
						\UPD{\pre:=\bigcup\{\precs{\rt}{\rip}\,|\,(\rip,*)\in\dests\}}																			\label{rrep:line17}
						\UPD{\dests:=\{(\rip,\rsn)\,|\,(\rip,\rsn)\in\dests\ans \precs{\rt}{\rip}\not=\emptyset\}}									\label{rrep:line17a}
						\groupcast{\pre}{\rerr{\dests}{\ip}}\ .\ \aodv{\ip}{\sn}{\rt}{\rreqs}{\queues} 												\label{rrep:line20}
					\ENDPRIO
				\ELSIF[no valid route to \oip]{$\oip\not\in\akD{\rt}$}																						\label{rrep:line21}
					\aodvL{\ip}{\sn}{\rt}{\rreqs}{\queues}																											\label{rrep:line21a}
				\ENDIFii																																						\label{rrep:line22}
			\ENDPAR																																							\label{rrep:line23}
		\ENDIFii	
		\ENDPAR																																								\label{rrep:line23a}
	\ELSIF[$\!$routing table is not updated$\!$]{$\rt\mathop=\upd{\rt}{(\dip\comma\dsn\comma\kno\comma\val\comma\hops\mathop+1\comma\sip\comma\emptyset\comma\highlight{0})}$}							\label{rrep:line25}
		\aodvL{\ip}{\sn}{\rt}{\rreqs}{\queues}																														\label{rrep:line26}
	\ENDIFii

	\end{algorithmic}
    }%end{footnotesize}
  \end{algorithm}

As implemented in Line~\ref{rrep:line12d}, ``the (reverse) route used to forward a RREP has its
lifetime changed to be the maximum of (existing-lifetime, (current time + \valtime)'') \cite[Section 6.7]{rfc3561}.
This literal reading of the RFC seems a bit weird, since the route to {\oip} is not updated otherwise. 
Although not specified in the RFC, it would make sense to also add a precursor to the reverse
route by 
$\assignment{\rt := \addprecrt{\rt}{\oip}{\{\nhop{\rt}{\dip}\}}}$\label{pg:small_precursor_improvement}.
Inserting this line, would not change the results and proofs presented in this paper.

\paragraph{\rm\bf Receiving Route Errors.}
The process $\RERR$ models the part of AODV that handles error messages.
An error message consists of a set $\dests$ of pairs of
an unreachable destination IP address $\rip$ and
the corresponding unreachable destination sequence number $\rsn$.
The adaptations to this process are just as the ones discussed earlier.
%\newpage

%
  \algsetup{linenodelimiter=.,linenosize=\tiny}
  \begin{algorithm}[H]
    {\scriptsize
      %\algsetup{indent=2em}
      \caption{RERR handling}
      \label{pro:rerr}
      \begin{algorithmic}[1]\scriptsize
        % !TEX root = ../tawn.tex
\DEFPROCESS{\RERR}{\dests\comma\sip\,\comma\,\ip\comma\sn\comma\rt\comma\rreqs\comma\queues}
		\COMLINE{invalidate broken routes}												\label{rerr:line4}
		\UPD{\dests:=\{(\rip,\rsn)\,|\,(\rip,\rsn)\mathbin\in\dests\wedge\rip\mathbin\in\akD{\rt}\wedge \nhop{\rt}{\rip}\mathbin=\sip\wedge \sqn{\rt}{\rip}\mathbin<\rsn\}}\label{rerr:line2}%
		\UPD{\rt:=\inv{\rt}{\dests}{\highlight{\now+\deltime}}}										\label{rerr:line5}
		\UPD{\queues:=\highlight{\setretries{\queues}{\dests}}}\label{rerr:line5a}
		\COMLINE{forward the RERR to all precursors for \rt\ entries for broken connections}			\label{rerr:line1}
		\UPD{\pre:=\bigcup\{\precs{\rt}{\rip}\,|\,(\rip,*)\in\dests\}}									\label{rerr:line3}
		\UPD{\dests:=\{(\rip,\rsn)\,|\,(\rip,\rsn)\in\dests\ans \precs{\rt}{\rip}\not=\emptyset\}}				\label{rerr:line3a}
		\groupcast{\pre}{\rerr{\dests}{\ip}}\ . \aodv{\ip}{\sn}{\rt}{\rreqs}{\queues}						\label{rerr:line6}

	\end{algorithmic}
    }%end{footnotesize}
  \end{algorithm}

\vspace{-7mm}

\paragraph{\rm\bf The Message Queue.}
Since we have to guarantee input-enabledness of all network nodes, 
a node \dval{ip} must always be able to perform a receive action, regardless of which state
it is in. For this reason we
introduce a process $\QMSG$, modelling a message queue,
that runs in parallel with {\AODV} or any other process that might be called.
This process is unchanged w.r.t.\ \cite{TR13,GHPT16}.

%The whole parallel process running
%on a node is then given by an expression of the form
%\[
%(\xi,\aodv{\ip}{\sn}{\rt}{\rreqs}{\queues})\ \parl\ (\xii,\Qmsg{\msgs})
%\ .\]

\vspace{-2mm}
  \algsetup{linenodelimiter=.,linenosize=\tiny}
  \begin{algorithm}[H]
    {\scriptsize
      %\algsetup{indent=2em}
      \caption{Message queue}
      \label{pro:queues}
      \begin{algorithmic}[1]\scriptsize
        % !TEX root = ../tawn.tex
\DEFPROCESS{\QMSG}{\msgs}
	\IFempty
		\COMLINE{store incoming message at the end of \msgs}			\label{queues:line1}
		\receiveL{\msg}\ . 			
		\Qmsg{\append{\msg}{\msgs}}								\label{queues:line2}
	\ELSIF[the queue is not empty]{$\msgs\not=[\,]$}					\label{queues:line3}
		\PAR
		\COMLINE{pop top message and send it to another sequential process}								\label{queues:line4}
		\sendL{\head{\msgs}}\ .\ \Qmsg{\tail{\msgs}}					\label{queues:line5}
		\COMLINE{or receive and store an incoming message}				\label{queues:line7}
		\STATE $+$\,  \receive{\msg}\ . \Qmsg{\append{\msg}{\msgs}}                     \label{queues:line8}
		\ENDPAR
	\ENDIFii

	\end{algorithmic}
    }%end{footnotesize}
  \end{algorithm}

\vspace{-4mm}

%%%%%%%%%%%%%%%%%%
% !TEX root = tawn.tex
% Tex-master = "tawn"
 
%%%%%%%%%%%%%%%%%%%%%%%%%%%%%%
\subsection{Invariants}\label{subapp:invariants}
%%%%%%%%%%%%%%%%%%%%%%%%%%%%%%
\spnewtheorem{assumption}{Assumption}{\bfseries}{\rm}
%local redefinition of numbering
\setcounter{proposition}{0}
\renewcommand{\theproposition}{\thesection.\arabic{proposition}}
\setcounter{corollary}{0}
\renewcommand{\thecorollary}{\thesection.\arabic{corollary}}
\setcounter{theorem}{0}
\renewcommand{\thetheorem}{\thesection.\arabic{theorem}}
\setcounter{definition}{0}
\renewcommand{\thedefinition}{\thesection.\arabic{definition}}
\setcounter{remark}{0}
\renewcommand{\theremark}{\thesection.\arabic{remark}}

We now analyse our timed version of AODV.
We will go through the propositions proved for AODV without time in
\cite{TR13,GHPT16}---up to the proof of loop freedom---and
check whether they still hold. Most propositions still hold and the
proofs are, mutatis mutandis, the same as the proofs for AODV without time.
Changes mainly concern line numbers, and the changes triggered by the
introduction of the new functions that can modify the routing table, namely
\hyperlink{settimert}{$\fnsettimert$} and
\hyperlink{exprt}{$\fnexprt$}, as well as the function that can modify the
set of route request identifiers, namely
\hyperlink{exprreqs}{$\fnexprreqs$}.  The modification of the other
functions and of data types are mainly to include the role of time; they
do not modify their roles.

A transition $N\ar{\tau}N'$ between two network expressions may arise from a transition
$\colonact{R}{\starcastP{m}}$ performed by a network node \dval{ip},
synchronising with receive actions of all nodes $\dval{dip}\in R$
in transmission range.
%, or stems from a $\tau$-transition of a network node \dval{ip}.
In this case, we write $N\ar{R:\starcastP{\dval{m}}}_\dval{ip}N'$.
This means that $N=[M]$ and $N'=[M']$ are network expressions such
that \plat{$M\ar{R:\starcastP{m}}M'$}, and the cast action is performed by
node \dval{ip}.  This transition stems ultimately from an action
$\broadcastP{\dexp{ms}}$, $\groupcastP{\dexp{dests}}{\dexp{ms}}$, or
$\unicast{\dexp{dest}}{\dexp{ms}}$ (cf.\ \Sect{process_algebra}). Each
such action can be identified by a line number in one of the processes
of \SSubApp{ModelAODV}.

With $\xiN{\dval{ip}}(\keyw{var})$ we denote the evaluation
$\xi(\keyw{var})$ of the variable $\keyw{var}$ maintained by node
$\dval{ip}$ when AODV is in state $N$---see \cite[Section 7.2]{TR13}
or \cite[Section 6.2]{GHPT16} for further explanation.

In \ref{ssubapp:data} we have defined functions that work on evaluated
routing tables $\xiN{\dval{ip}}(\rt)$, such as $\fnnhop$.
To ease readability, we abbreviate
\plat{$\nhop{\xiN{\dval{ip}}(\rt)}{\!\dval{dip}}$}~by \plat{$\nhp{\dval{ip}}$}.
Similarly, we use 
\plat{$\sq{\dval{ip}}\hspace{-.5pt}$}, \plat{$\dhp{\dval{ip}}\hspace{-.5pt}$}, \plat{$\sta{\dval{ip}}\hspace{-.5pt}$},
\plat{$\lti{\dval{ip}}\hspace{-.5pt}$},
\plat{$\kd{\dval{ip}}$}, \plat{$\akd{\dval{ip}}$}
and \plat{$\ikd{\dval{ip}}$} for
\plat{$\sqn{\xiN{\dval{ip}}(\rt)}{\dval{dip}}$}, \plat{$\dhops{\xiN{\dval{ip}}(\rt)}{\dval{dip}}$},
\plat{$\status{\xiN{\dval{ip}}(\rt)}{\dval{dip}}$,}
\plat{$\ltime{\xiN{\dval{ip}}(\rt)}{\dval{ip}}$},
\plat{$\kD{\xiN{\dval{ip}}(\rt)}$}, \plat{$\akD{\xiN{\dval{ip}}(\rt)}$} and \plat{$\ikD{\xiN{\dval{ip}}(\rt)}$}, respectively.

\subsubsection{Basic Properties}

\begin{proposition}\label{prop:preliminaries}\rm\cite[Proposition~7.1]{TR13}
  \begin{enumerate}[(a)]
  \item\label{it:preliminariesi} With the exception of new packets
    that are submitted to a node by a client of AODV, every message
    received and handled by the main routine of AODV has to be sent by
    some node before.\label{before} More formally, we consider an
    arbitrary path $N_0\ar{\ell_1}N_1\ar{\ell_2} \ldots \ar{\ell_k}
    N_k$ with $N_0$ an initial state in our model of AODV\@. If the
    transition $N_{k-1}\ar{\ell_k} N_k$ results from a synchronisation
    involving the action $\receive{\msg}$ from Line~\ref{aodv:line2}
    of Pro.~\ref{pro:aodv}---performed by the node \dval{ip}---, where
    the variable {\msg} is assigned the value $m$, then either
    $m=\newpkt{\dval{d}}{\dval{dip}}$ or one of the $\ell_i$ with
    $i<k$ stems from an action $\starcastP{m}$ of a node $\dval{ip}'$
    of the network.
  \item\label{it:preliminariesii} No node can receive a message
    directly from itself.  Using the formalisation above, we must have
    $\dval{ip}\neq\dval{ip}'$.
  \end{enumerate}
\end{proposition}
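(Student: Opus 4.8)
The plan is to reproduce the untimed argument of \cite[Proposition~7.1]{TR13} almost verbatim, since the statement concerns only the flow of messages through the network, and none of the functions introduced for the timed model---\fnexprt, \fnsettimert, \fnexprreqs, \fnexpq\ and \fnsettimequeues---creates, consumes or alters a message: each merely rewrites routing tables, the stored request identifiers, or the data-packet store. As the message-queue process $\QMSG$ and the packet-injection process $\NEWPKT$ are moreover unchanged with respect to \cite{TR13,GHPT16}, the message-passing skeleton of the model coincides with the untimed one, and only line numbers need adjusting. The one structural fact doing the real work is that, under the encapsulation operator, spurious message arrival is impossible: by the rules of \Tab{sos network}, an arrive action $\colonact{H\neg K}{\listen{m}}$ survives encapsulation only if it synchronises with a cast $\colonact{R}{\starcastP{m}}$ of another node (yielding a $\tau$ via Rule \tableref{e-sc}), or if it is a packet injection via Rule \tableref{e-np}, in which case $m=\newpkt{\dval{d}}{\dval{dip}}$.

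\textbf{Part (a).} First I would pin down how the \textbf{receive} of Line~\ref{aodv:line2} relates to the network. On node $\dval{ip}$ the running process is $\AODV\parl\QMSG$; by Rules \tableref{p-al} and \tableref{p-ar} a \textbf{receive} of $\AODV\parl\QMSG$ can arise only from $\QMSG$ (not from $\AODV$), so a node-level \textbf{receive} originates from $\QMSG$, whereas a \textbf{receive} of $\AODV$ can fire only by synchronising with a \textbf{send} of $\QMSG$ into a $\tau$ (Rule \tableref{p-a}). Hence the step $\ell_k$ is an internal hand-over and $m$ is the message $\QMSG$ passes from the head of its queue. I would then prove, by induction on the path $N_0\ar{\ell_1}\cdots\ar{\ell_k}N_k$, the invariant that every message held in (or currently being offered out of) the queue of $\QMSG$ at any node was obtained by that $\QMSG$ through an earlier network \textbf{receive}, each of which either synchronised with a cast $\starcastP{m}$ of another node---supplying the required label $\ell_i$ with $i<k$ that stems from a cast---or was a $\newpkt{\dval{d}}{\dval{dip}}$-injection. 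Because $\QMSG$ only appends received messages and only removes the oldest element when handing it to $\AODV$, the invariant is preserved by every rule; and for the message delivered at step $k$ the originating cast must lie strictly earlier, as the receipt by $\QMSG$, its enqueueing, and the eventual hand-over to $\AODV$ are distinct transitions, so $i<k$.

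\textbf{Part (b).} For $\dval{ip}\neq\dval{ip}'$ I would observe that in the synchronisation producing the relevant cast the sender and receiver play disjoint roles. The message entered $\QMSG$ of node $\dval{ip}$ through an arrive action with $\dval{ip}$ in its $H$-component (Rule \tableref{n-rcv}), synchronising via the first rule of \Tab{sos network} with the cast $\colonact{R}{\starcastP{m}}$ of $\dval{ip}'$, which is contributed by Rule \tableref{n-sc}. As all node expressions in a network carry distinct addresses, $\dval{ip}'$ occurs in exactly one node expression, namely the one performing the cast, and that expression cannot simultaneously perform the arrive; hence $\dval{ip}\neq\dval{ip}'$.

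\textbf{Main obstacle.} There is little genuine difficulty here: the only care needed is to formulate the queue invariant so that message identity is tracked correctly across enqueueing and dequeueing through the internal \textbf{send}/\textbf{receive} hand-over of $\AODV\parl\QMSG$, and to confirm that every newly added timed transition---those stemming from \fnexprt, \fnsettimert\ and the like, together with the $\tick$-steps---leaves the contents of all message queues untouched, so that the inductive step is unaffected. This is exactly the bookkeeping already performed in \cite{TR13}, so I expect only line-number changes and a routine re-verification rather than any new mathematical content.
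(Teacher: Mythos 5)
Your proposal is correct and takes essentially the same route as the paper, which simply states that the proof is exactly as in \cite{TR13} because the process $\QMSG$ (and the message-passing machinery generally) is unchanged in the timed model. Your elaboration of the queue invariant, the encapsulation argument, and the distinct-address argument for part (b) is just a spelled-out version of that same untimed proof.
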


\begin{proof}
  Exactly as in \cite{TR13}.  (The process $\QMSG$ has not been changed.)
\qed
\end{proof}

\begin{proposition}\rm\cite[Proposition~7.2]{TR13}
  \label{prop:invarianti_itemiii}
  The sequence number of any given node $\dval{ip}$ increases
  monotonically, i.e., never decreases, and is never unknown.  That
  is, for $\dval{ip}\mathbin\in\IP$, if $N \ar{\ell} N'$ then
  $1\leq\xiN{\dval{ip}}(\sn)\leq\xiN[N']{\dval{ip}}(\sn)$.
\end{proposition}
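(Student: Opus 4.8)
The plan is to establish this as an invariant of the reachable states of our AODV model, proved by induction on the length of the execution path from an initial state, where the inductive content is precisely the monotonicity assertion applied to the final transition. For the base case, every initial state of the model sets $\xiN{\dval{ip}}(\sn)=1$ for each node $\dval{ip}$, so the bound $1\leq\xiN{\dval{ip}}(\sn)$ holds at the outset. For the inductive step it then suffices to show, for a single transition $N\ar{\ell}N'$ out of a reachable state $N$, that $\xiN{\dval{ip}}(\sn)\leq\xiN[N']{\dval{ip}}(\sn)$; together with the induction hypothesis $1\leq\xiN{\dval{ip}}(\sn)$ this delivers both conjuncts in state $N'$.

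First I would narrow down the transitions that can possibly affect $\xiN{\dval{ip}}(\sn)$. As $\sn$ is a data variable maintained locally by the sequential processes running on node $\dval{ip}$, the rules of \Tab{sos node} and \Tab{sos network} ensure that $N\ar{\ell}N'$ leaves the data state of $\dval{ip}$ --- and in particular $\sn$ --- unchanged unless the transition involves node $\dval{ip}$ taking a step of its own processes. The $\tick$-transitions are covered by \Prop{time determinism} together with its lifting to the node and network levels, which guarantees that a time step alters only the variable $\now$; and a genuine receive (arrive) step of $\dval{ip}$ merely hands the incoming message to the queue process $\QMSG$ without touching $\sn$. Consequently the only transitions that can change $\xiN{\dval{ip}}(\sn)$ are $\tau$-transitions of $\dval{ip}$ that arise, via Rule~\tableref{ass} lifted through the parallel, node and network layers, from an assignment $\assignment{\sn := \dexp{exp}}$ occurring somewhere in the seven processes of \SSubApp{ModelAODV}.

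The central step is therefore a syntactic scan of the process specifications for every assignment whose left-hand side is $\sn$. In each such case the right-hand side evaluates to a value that is at least the current value of $\sn$: the sequence number is bumped only when a fresh route discovery is initiated and when the destination node answers a request, and the expression used is built from $\sn$ either by applying the increment operator $\inc{\cdot}$ or by taking a maximum that includes the current $\sn$. Since $\inc{\cdot}$ never decreases its argument on the (unbounded) type $\tSQN$ and returns a value $\geq 1$, every assignment to $\sn$ is non-decreasing and preserves the lower bound, which yields $\xiN{\dval{ip}}(\sn)\leq\xiN[N']{\dval{ip}}(\sn)$ and closes the induction.

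The main obstacle is not any single calculation but confirming that this case analysis is \emph{complete} in the timed setting. Concretely, I would have to verify that none of the newly introduced timing machinery ever writes to $\sn$: the functions $\fnexprt$, $\fnsettimert$, $\fnexprreqs$, $\fnexpq$ and $\fnsettimequeues$, as well as the red-coloured lines that distinguish the timed model from the untimed one, all operate solely on the routing table $\dval{rt}$, the stores $\dval{rreqs}$ and $\dval{store}$, and the clock $\now$, and never on a node's own sequence number. Once this is checked, the collection of assignments to $\sn$ coincides with that of the untimed model, so the reasoning of \cite[Proposition~7.2]{TR13} transfers essentially verbatim, with only the line numbers updated.
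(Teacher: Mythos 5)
Your proposal is correct and follows essentially the same route as the paper, whose entire proof is the remark that the argument of \cite[Proposition~7.2]{TR13} carries over because the timed model introduces no additional methods for modifying a node's own sequence number. Your final paragraph --- checking that $\fnexprt$, $\fnsettimert$, $\fnexprreqs$, $\fnexpq$ and $\fnsettimequeues$ never write to $\sn$ --- is exactly the one observation the paper adds on top of the untimed proof, which you then reconstruct in the expected way (induction along the execution, time steps only touching $\now$ by \Prop{time determinism}, and a scan of all assignments to $\sn$).
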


\begin{proof}
  Exactly as in \cite{TR13}. There are no additional methods to modify
  a node's own sequence number.
\qed
\end{proof}

\begin{remark}\label{rem:remark}
  Most of the forthcoming proofs can be done by showing the statement
  for each initial state and then checking all locations in the
  processes where the validity of the invariant is possibly changed.
  Note that routing table entries are only changed by the functions
  \hyperlink{update}{\fnupd},
  \hyperlink{invalidate}{$\fninv$},
  $\fnaddprecrt$,
  \hyperlink{settimert}{$\fnsettimert$} or
  \hyperlink{exprt}{$\fnexprt$}.\footnote{The functions \hyperlink{settimert}{$\fnsettimert$} or
  \hyperlink{exprt}{$\fnexprt$} are added w.r.t.\ \cite[Remark~7.3]{TR13}.}
  Thus we have to show that an
  invariant dealing with routing tables is satisfied after the
  execution of these functions if it was valid before.  In our proofs,
  we go through all occurrences of these functions.  In case the
  invariant does not make statements about precursors, the function
  $\fnaddprecrt$ need not be considered.
\end{remark}

Proposition 7.4 in \cite{TR13} says that
%\begin{proposition}\label{prop:destinations maintained}\rm
  the set of known destinations of a node increases monotonically.
  That is, for $\dval{ip}\in\IP$, if \plat{$N \ar{\ell} N'$}
  then $\kd{\dval{ip}}\subseteq\kd[N']{\dval{ip}}$.
%\end{proposition}
This proposition no longer holds since \hyperlink{exprt}{$\fnexprt$} can remove routing table
entries.

Proposition 7.5 in \cite{TR13} says that
%\begin{proposition}\label{prop:rreqs increase}\rm
  the set of already seen route requests of a node increases
  monotonically.  That is, for $\dval{ip}\mathbin\in\IP$, if \plat{$N \ar{\ell} N'$}\vspace{-1pt} then
  $\xiN{\dval{ip}}(\rreqs)\subseteq\xiN[N']{\dval{ip}}(\rreqs)$.
%\end{proposition}
This proposition no longer holds since the function
\hyperlink{exprreqs}{$\fnexprreqs$} prunes the list of route request
seen by a node.

Proposition 7.6 in \cite{TR13} says that
%\begin{proposition}\label{prop:dsn increase}\rm
  in each node's routing table, the sequence number for any given
  destination increases monotonically, i.e., never decreases.
% That is, for $\dval{ip},\dval{dip}\mathbin\in\IP$, if $N \ar{\ell} N'$
% then $\sq{\dval{ip}}\leq\fnsqn_{N'}^\dval{ip}(\dval{dip})$.
%\end{proposition}
This proposition no longer holds since routing table entries can be
removed and recreated with an inferior sequence number.
However, we have the following weakening. 

\begin{proposition}\label{prop:dsn increase}\rm
  In each node's routing table, the sequence number for any given destination, as long as it is not
  deleted, increases monotonically.
  That is, for $\dval{ip},\dval{dip}\mathbin\in\IP$, if $N \ar{\ell} N'$ and
  $\dval{dip}\mathbin\in\kd{\dval{ip}} \cap \kd[N']{\dval{ip}}$ then
  $\sq{\dval{ip}}\mathbin\leq\fnsqn_{N'}^\dval{ip}(\dval{dip})$.
\end{proposition}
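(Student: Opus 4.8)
The plan is to follow the bookkeeping methodology spelled out in Remark~\ref{rem:remark}. Since the statement concerns a single transition $N\ar{\ell}N'$, the quantity $\sqn{\xiN{\dval{ip}}(\rt)}{\dval{dip}}$ can only change if this step modifies the routing table maintained by node $\dval{ip}$, and a routing table is only ever altered by the five functions $\fnupd$, $\fninv$, $\fnaddprecrt$, $\fnsettimert$ and $\fnexprt$. The hypothesis $\dval{dip}\in\kd{\dval{ip}}\cap\kd[N']{\dval{ip}}$ guarantees that the entry for $\dval{dip}$ is present both before and after the step, so it suffices to verify, for every application of one of these functions in Processes~\ref{pro:aodv}--\ref{pro:queues}, that a \emph{surviving} entry for $\dval{dip}$ has its sequence number not decreased. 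Transitions that leave $\dval{ip}$'s table untouched (in particular $\tick$-steps, which by the analogue of Proposition~\ref{prop:time determinism} only advance $\now$) are trivial.

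First I would dispose of the functions that never touch a sequence number. The function $\fnaddprecrt$ only enlarges a precursor set, and $\fnsettimert$ only raises a lifetime to $\max(\dval{ltime},\dval{t})$; in both cases the second component $\pi_2$ of the affected entry is unchanged, so the claim is immediate. For $\fnupd$ I would argue by the case distinction in its definition: when $\dval{dip}=\pi_1(\dval{r})$ is already a known destination, the new entry is $\dval{nr}$, $\dval{nr}'$ or $\dval{ns}$, and its sequence number equals $\pi_2(\dval{r})$ only in the branch guarded by $\sqn{\dval{rt}}{\pi_1(\dval{r})}<\pi_2(\dval{r})$ (a strict increase) and equals the stored value $\sqn{\dval{rt}}{\pi_1(\dval{r})}$ in every other branch; hence $\fnupd$ never lowers the sequence number. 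This is exactly the argument used for Proposition~7.6 in~\cite{TR13}.

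The two remaining functions carry the content of the proposition. For $\fninv$, an invalidated entry keeps $\pi_1$ but receives the sequence number $\dval{rsn}$ drawn from the accompanying set $\dval{dests}$; as in the untimed proof, I would check at each call site that the $\dval{dests}$ passed to $\fninv$ satisfy $\dval{rsn}\geq\sqn{\dval{rt}}{\dval{rip}}$ (these sets are built either from incremented own sequence numbers or from message data already known to dominate the stored value), so invalidation does not decrease the sequence number. The genuinely new case is $\fnexprt{\dval{rt}}{\dval{t}}{\dval{t}'}$: a surviving entry for $\dval{dip}$ lands either in the first set of the definition, where it is copied verbatim (sequence number preserved), or in the second set, where a valid but expired route is re-added as $\inval$ with sequence number $\inc{\dval{dsn}}$, and $\inc{\dval{dsn}}\geq\dval{dsn}$ since $\fninc$ is non-decreasing. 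The only other possibility is that the entry is dropped because even its refreshed expiration time $\dval{lifetime}+\dval{t}'$ has already passed; but then $\dval{dip}\notin\kd[N']{\dval{ip}}$, so this case is excluded by the hypothesis.

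The hard part is precisely $\fnexprt$, and it is exactly what forces the \emph{weakening} relative to Proposition~7.6 of~\cite{TR13}: $\fnexprt$ can erase an entry whose sequence number is later rebuilt, through a fresh route discovery, with an inferior value, so unconditional monotonicity genuinely fails. The whole role of the side condition $\dval{dip}\in\kd{\dval{ip}}\cap\kd[N']{\dval{ip}}$ is to rule out this deletion-and-recreation scenario; once a surviving entry is fixed, the only modification $\fnexprt$ can make to it is the monotone $\fninc$ increment, and the proof closes. The secondary subtlety, inherited unchanged from the untimed development, is discharging the $\dval{dests}$ side condition for $\fninv$, which I expect to reduce verbatim to the corresponding step in~\cite{TR13}.
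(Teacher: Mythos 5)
Your proposal is correct and follows essentially the same route as the paper, whose entire proof is the remark that it is ``identical to the proof of Proposition~7.6 in \cite{TR13}'', i.e.\ the standard sweep over all routing-table-modifying functions with the hypothesis $\dval{dip}\in\kd{\dval{ip}}\cap\kd[N']{\dval{ip}}$ excluding the deletion-and-recreation scenario. You are in fact somewhat more explicit than the paper, since you spell out why the new timed functions $\fnsettimert$ and $\fnexprt$ preserve the invariant (verbatim copy, or invalidation with $\fninc$, with the dropped-entry case barred by the hypothesis), which the paper's one-line proof leaves implicit.
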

\begin{proof}
  Identical to the proof of Proposition~7.6 in \cite{TR13}.
\qed
\end{proof}

The next invariant tells that each node is correctly informed about
its own identity.
\begin{proposition}\rm\label{prop:self-identification}\cite[Proposition~7.7]{TR13}
  For each $\dval{ip}\in\IP$ and each reachable state $N$ we have
  $\xiN{\dval{ip}}(\ip)=\dval{ip}$.
\end{proposition}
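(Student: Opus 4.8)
The plan is to prove this invariant by induction on the reachability of $N$, following the standard methodology recalled in Remark~\ref{rem:remark}: establish the property in every initial state, and then verify that it is preserved by every transition. For the base case I would note that, by the definition of initial states, every node takes the form $\dval{ip}:(\xi,\AODV(\ip,\ldots)\parl\QMSG(\ldots)):R$ with $\xi(\ip)=\dval{ip}$; hence $\xiN{\dval{ip}}(\ip)=\dval{ip}$ holds initially.

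For the inductive step, the key observation is that the value $\xiN{\dval{ip}}(\ip)$ can change in only two ways: through an assignment $\assignment{\ip:=\dexp{exp}}$, or through a process call $X(\dexp{exp}_1,\ldots,\dexp{exp}_n)$, which by Rule~\tableref{rec} resets the valuation to $\emptyset[\keyw{var}_i:=\xi(\dexp{exp}_i)]$ and thereby rebinds the formal parameter $\ip$ to the value of whatever expression occurs in its argument position. First I would inspect all processes \Pro{aodv}--\Pro{queues} and confirm that no assignment ever targets the variable $\ip$. Then I would go through every process call in the specification and check that the argument in the position of the formal parameter $\ip$ is literally the expression $\ip$, so that $\xi(\ip)$ is carried over unchanged. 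Since the node address recorded in the node expression $\dval{ip}:P:R$ is never altered by any operational rule either, the two agree in every reachable state.

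The main point to stress is that none of this is affected by the move to the timed setting. The newly introduced functions---\fnsettimert, \fnexprt, \fnexprreqs, \fnexpq, together with the modified update and invalidate operations---act only on routing tables, the list of seen route requests, and the data-packet store; they never touch the variable $\ip$. Consequently the argument is, mutatis mutandis, identical to the one in \cite{TR13}, the only adaptation being the line numbers of the occurrences inspected. I do not expect a genuine obstacle here: the proof is essentially a bookkeeping check that $\ip$ is always passed on unmodified, rather than a substantial argument, and the invariant holds for exactly the same reason as in the untimed model.
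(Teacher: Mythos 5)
Your proposal is correct and matches the paper's argument: the paper simply notes that the proof is ``exactly as in \cite{TR13}; there are no modifications of the variable $\ip$,'' which is precisely the bookkeeping check you describe (initial states bind $\ip$ correctly, no assignment targets $\ip$, every recursive call passes $\ip$ on unchanged, and the new timed functions only touch routing tables, request lists and the store). No gap here.
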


\begin{proof}
  Exactly as in \cite{TR13}; there are no modifications of the variable \ip.
\qed
\end{proof}

\begin{proposition}\label{prop:ip=ipc}\rm\cite[Proposition~7.8]{TR13}
  If an AODV control message is sent by node $\dval{ip}\in\IP$, the
  node sending this message identifies itself correctly:
  \begin{equation*}
    N\ar{R:\starcastP{m}}_{\dval{ip}}N' \ims \dval{ip}=\ipc\ ,
  \end{equation*}
  where the message $m$ is either $\rreq{*}{*}{*}{*}{*}{*}{*}{\ipc}$,
  $\rrep{*}{*}{*}{*}{*}{\ipc}$, or $\rerr{*}{\ipc}$.
\end{proposition}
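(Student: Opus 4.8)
The plan is to establish the invariant by a direct, finite case analysis over the transmission statements in the process specifications, supported by Proposition~\ref{prop:self-identification}. As noted at the beginning of this subsection, a transition $N\ar{R:\starcastP{m}}_{\dval{ip}}N'$ ultimately stems from a single \textbf{broadcast}-, \textbf{groupcast}- or \textbf{unicast}-action occurring at one identifiable line of one of the processes $\AODV$, $\NEWPKT$, $\PKT$, $\RREQ$, $\RREP$, $\RERR$ or $\QMSG$ running on the node $\dval{ip}$. Hence it suffices to inspect each line at which a message of one of the three forms $\rreq{*}{*}{*}{*}{*}{*}{*}{\ipc}$, $\rrep{*}{*}{*}{*}{*}{\ipc}$ or $\rerr{*}{\ipc}$ is emitted, and to check that the sender-IP field---the final argument, which evaluates to $\ipc$---is syntactically the process variable $\ip$. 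Since Proposition~\ref{prop:self-identification} guarantees $\xiN{\dval{ip}}(\ip)=\dval{ip}$ in every reachable state $N$, evaluating this field under the valuation maintained by node $\dval{ip}$ then yields precisely $\dval{ip}$, as required.

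First I would enumerate the finitely many sending sites. Route requests are \textbf{broadcast} both when a node initiates a route discovery (in $\AODV$) and when it re-broadcasts a received request (in $\RREQ$); in either construction the sender field is built from $\ip$. Route replies are \textbf{unicast} when the node answers a request---either as the destination or as an intermediate node holding a sufficiently fresh route (in $\RREQ$)---and when a reply is forwarded along the reverse route (in $\RREP$); here too the sender field is $\ip$. Route errors are \textbf{groupcast} at the various error-handling sites in $\AODV$, $\PKT$, $\RREQ$, $\RREP$ and $\RERR$, with the $\rerrID$-message again carrying $\ip$ in its sender position. In each of these cases the conclusion is then immediate from the previous paragraph.

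The point requiring genuine care, and the main obstacle, lies in the forwarding cases, where the outgoing message is assembled mostly by copying fields from an incoming one. There I would verify explicitly that the sender-IP field is always \emph{overwritten} with the current node's own $\ip$, rather than passed through unchanged from the received message; this is exactly what separates the immediate-sender field from the genuinely copied originator field (such as $\dval{oip}$). Finally, I would observe that the timing-related additions do not interfere: the new functions $\fnsettimert$, $\fnexprt$ and $\fnexprreqs$ touch only routing tables and the stored request list, and the timers attached to queue entries and routing table entries never feed into the sender field of a control message. Consequently the argument is, mutatis mutandis, identical to the one in \cite{TR13}, with only the affected line numbers updated.
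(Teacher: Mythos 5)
Your proposal is correct and takes essentially the same approach as the paper, whose entire proof reads ``Exactly as in \cite{TR13}, using Proposition~\ref{prop:self-identification}''---i.e.\ the very line-by-line inspection of all \textbf{broadcast}-, \textbf{groupcast}- and \textbf{unicast}-sites, combined with $\xiN{\dval{ip}}(\ip)=\dval{ip}$, that you spell out. You merely make explicit (and correctly) the additional sanity check that the new timing machinery does not touch the sender field.
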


\begin{proof}
  Exactly as in \cite{TR13}, using \Prop{self-identification}.
\qed
\end{proof}

\begin{corollary}\rm\label{cor:sipnotip}\cite[Corollary~7.9]{TR13}
  At no point will the variable $\sip$ maintained by node \dval{ip}
  have the value \dval{ip}.
  \begin{equation*}
    \xiN{\dval{ip}}(\sip)\neq\dval{ip}
  \end{equation*}
\end{corollary}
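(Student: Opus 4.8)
The plan is to follow \cite[Corollary~7.9]{TR13} and derive the statement as a direct consequence of \Prop{preliminaries} and \Prop{ip=ipc}, treating it as an invariant proved by induction on reachable states. The variable $\sip$ plays the role of the \emph{sender} field of an incoming AODV control message, so the first thing I would do is locate every assignment to $\sip$ in the processes of \SSubApp{ModelAODV}. Inspecting $\AODV$, one sees that $\sip$ is bound only when a message is received in Line~\ref{aodv:line2} of \Pro{aodv} and then handled by one of $\RREQ$, $\RREP$ or $\RERR$; in each case the pattern for the incoming message $m$ binds $\sip$ to the last component of $m$, which is exactly the $\ipc$ field referred to in \Prop{ip=ipc}. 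The process $\NEWPKT$, which handles a \textbf{newpkt} message injected by a client, never touches $\sip$, so that case can be set aside.

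With this established, the inductive step is immediate. Suppose $N\ar{\ell}N'$ is the transition that sets $\xiN[N']{\dval{ip}}(\sip)$, arising from node $\dval{ip}$ handling a received message $m$, so that $\xiN[N']{\dval{ip}}(\sip)$ equals the $\ipc$ component of $m$. By \Prop{preliminaries}(\ref{it:preliminariesi}), since $m$ is handled by the main routine and is not a \textbf{newpkt} message, it must have been cast earlier by some node $\dval{ip}'$ through a $\starcastP{m}$ action. By \Prop{ip=ipc}, the $\ipc$ component of $m$ equals $\dval{ip}'$, and by \Prop{preliminaries}(\ref{it:preliminariesii}) we have $\dval{ip}'\neq\dval{ip}$. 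Chaining these gives $\xiN[N']{\dval{ip}}(\sip)=\ipc=\dval{ip}'\neq\dval{ip}$. For the base case it suffices to note that in every initial state $\sip$ is undefined (in particular not equal to $\dval{ip}$), so the invariant holds vacuously, and every transition that does not rewrite $\sip$ preserves it trivially.

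The only genuine work—and hence the main obstacle—is the bookkeeping of the first paragraph: verifying exhaustively that $\sip$ is assigned nowhere except as the sender field of a bona fide control message, and that this field coincides with the quantity $\ipc$ for which \Prop{ip=ipc} was proved. Fortunately, the timed model differs from \cite{TR13} only in the added lifetime and timing machinery (the functions \fnsettimert, \fnexprt, \fnexprreqs and their companions), none of which reads from or writes to $\sip$; consequently this check is unchanged from the untimed case and the remainder of the argument carries over verbatim.
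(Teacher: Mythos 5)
Your proof is correct and follows essentially the same route as the paper, which simply defers to the proof of \cite[Corollary~7.9]{TR13}: that proof likewise tracks the assignments to $\sip$ (only as the sender field of a received control message) and chains \Prop{preliminaries} with \Prop{ip=ipc} to conclude $\xiN{\dval{ip}}(\sip)=\dval{ip}'\neq\dval{ip}$. Your explicit check that the new timing machinery never touches $\sip$ is exactly the ``mutatis mutandis'' content the paper leaves implicit.
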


\begin{proof}
  The same proof as in \cite{TR13}, mutatis mutandis (different line numbers).
\end{proof}

\begin{proposition}\rm\label{prop:positive hopcount}\cite[Proposition~7.10]{TR13}
  All routing table entries have a hop count greater than or equal to $1$.
  \begin{equation}\label{eq:inv_length}
    (*,*,*,*,\dval{hops},*,*,*)\in\xiN{\dval{ip}}(\rt) \ims \dval{hops}\geq1
  \end{equation}
\end{proposition}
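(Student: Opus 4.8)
The plan is to establish the invariant~\eqref{eq:inv_length} following the standard methodology recalled in Remark~\ref{rem:remark}: check that it holds in every initial state, and that it is preserved by every transition. The base case is immediate, since in all initial states of AODV the routing table $\xiN{\dval{ip}}(\rt)$ is empty and the implication holds vacuously. For the inductive step, Remark~\ref{rem:remark} tells us that an entry of $\xiN{\dval{ip}}(\rt)$ can only be altered by the functions \fnupd, \fninv, \fnaddprecrt, \fnsettimert and \fnexprt, so it suffices to verify that each of these preserves~\eqref{eq:inv_length} under the assumption that it held before.

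The first thing I would observe is that four of the five functions cannot possibly break the invariant, because none of them introduces a fresh hop-count value: \fninv rewrites an entry keeping its fifth component $\dval{hops}$ unchanged; \fnaddprecrt modifies only the precursor set $\pi_7$; \fnsettimert modifies only the lifetime $\pi_8$; and \fnexprt either deletes an entry (which trivially cannot violate the invariant) or replaces a valid entry by an invalid one carrying the same $\dval{hops}$. In particular, the only genuine difference from the untimed setting of~\cite{TR13}, namely the two new functions \fnsettimert and \fnexprt, is harmless here, precisely because both leave the hop-count field untouched.

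The one remaining case is \fnupd. Inspecting its definition, every entry it inserts---the argument route $r$ itself, or the derived entries built from it and the pre-existing entry $s$ for that destination---carries as its hop count either $\pi_5(r)$ or $\pi_5(s)$. The value $\pi_5(s)\geq1$ by the induction hypothesis, so it only remains to check that at each call site of \fnupd in Processes~\ref{pro:aodv}--\ref{pro:rerr} the supplied route $r$ satisfies $\pi_5(r)\geq1$. This is purely syntactic: at every such site the hop count is either the literal $1$ (for a one-hop route to a neighbour \sip) or of the form $\dval{hops}+1$ (for a freshly created reverse route to \oip or forward route to \dip), and hence at least $1$ in all cases.

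I expect the only real effort to be bookkeeping rather than mathematics: confirming that the enumeration of call sites of the five modifying functions is exhaustive, and that the timed refinements have not silently added a new update of a routing-table entry with hop count $0$. Since none of the timing additions change how hop counts are computed or propagated, this reasoning is, mutatis mutandis (with only the line numbers differing), exactly the argument of \cite[Proposition~7.10]{TR13}, supplemented by the observation above that \fnsettimert and \fnexprt preserve $\dval{hops}$.
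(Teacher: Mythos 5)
Your proposal is correct and follows essentially the same route as the paper: the paper's proof likewise invokes Remark~\ref{rem:remark}, defers the \fnupd/\fninv/\fnaddprecrt cases to the untimed argument of \cite[Proposition~7.10]{TR13} mutatis mutandis, and adds exactly your observation that \fnsettimert leaves the hop count unchanged while \fnexprt either deletes an entry or preserves its hop count. You have merely spelled out more of the inherited untimed reasoning than the paper does.
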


\begin{proof}
  Essentially the same proof as in \cite{TR13}, following \Rem{remark}.
  We have to consider the new functions
  \hyperlink{settimert}{$\fnsettimert$} and \hyperlink{exprt}{$\fnexprt$}
  and change the line numbers. $\fnsettimert$ does not
  modify the hop count. $\fnexprt$ either removes the entry or leaves the
  hop count unchanged. In both cases, the invariant is preserved.
\qed
\end{proof}

\begin{proposition}\label{prop:starcastNew}\rm\cite[Proposition~7.11]{TR13}
  \begin{enumerate}[(a)]
  \item If a route request with hop count $0$ is sent by a node
    $\ipc\in\IP$, the sender must be the originator.
    \begin{equation}\label{inv:starcast_i}
      N\ar{R:\starcastP{\rreq{0}{*}{*}{*}{*}{\oipc}{*}{\ipc}}}_{\dval{ip}}N' \ims \oipc=\ipc(=\dval{ip})
    \end{equation}
  \item If a route reply with hop count $0$ is sent by a node
    $\ipc\in\IP$, the sender must be the destination.
    \begin{equation}\label{inv:starcast_i_rrep}
      N\ar{R:\starcastP{\rrep{0}{\dipc}{*}{*}{*}{\ipc}}}_{\dval{ip}}N' \ims \dipc=\ipc(=\dval{ip})    
    \end{equation}
  \end{enumerate}
\end{proposition}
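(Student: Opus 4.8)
The plan is to proceed exactly as in the untimed case \cite[Proposition~7.11]{TR13}, following the methodology of Remark~\ref{rem:remark}: verify both implications in every initial state and then check that they are preserved by every transition. In an initial state no node has cast anything, so both implications hold vacuously. For the inductive step, only transitions whose label is a \textbf{*cast} of a route request concern part~(a), and only those whose label is a \textbf{*cast} of a route reply concern part~(b); every other transition leaves the premises of \eqref{inv:starcast_i} and \eqref{inv:starcast_i_rrep} untouched. The essential observation that lets me reuse the old argument is that the newly inserted time-related machinery is irrelevant here: the functions \hyperlink{settimert}{$\fnsettimert$}, \hyperlink{exprt}{$\fnexprt$} and \hyperlink{exprreqs}{$\fnexprreqs$} perform no casts and modify only lifetimes, never a hop count, originator, or destination field of a message. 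Hence the proof is the one of \cite{TR13} with adjusted line numbers.

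For part~(a) I would enumerate every broadcast of a route request in the specification of \SSubApp{ModelAODV}. There are exactly two such broadcasts. The first is the initiation (or retry) of a route discovery in process \AODV, where the originating node writes its own address into the $\oip$-field and the constant $0$ into the hop-count field; by \Prop{self-identification} the stored address equals $\dval{ip}$, and by \Prop{ip=ipc} the sender field satisfies $\ipc=\dval{ip}$, so $\oipc=\ipc=\dval{ip}$ as required. The second is the re-broadcast of a received request in process \RREQ, whose forwarded message carries hop count $\dval{hops}+1$, where $\dval{hops}$ is the (non-negative) hop count of the incoming request; thus this message has hop count at least $1$ and never matches the premise of \eqref{inv:starcast_i}. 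Consequently every route request with hop count $0$ is an originated one, which proves~(a).

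For part~(b) I would likewise list every unicast of a route reply. A reply is emitted in three places: by the destination answering a request in process \RREQ, with hop count $0$ and with $\dip$-field equal to the node's own address, whence $\dipc=\ipc=\dval{ip}$ by \Prop{self-identification} and \Prop{ip=ipc}; by an intermediate node answering on behalf of the destination in process \RREQ, with hop count $\dhops{\dval{rt}}{\dval{dip}}$; and by the forwarding of a reply in process \RREP, with hop count $\dval{hops}+1$. In the last two cases the hop count is at least $1$---in the forwarding case because $\dval{hops}\geq0$, and in the intermediate-reply case because $\dhops{\dval{rt}}{\dval{dip}}\geq1$ by \Prop{positive hopcount}, which has already been re-established for the timed setting above. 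Therefore a reply of hop count $0$ can only originate from the destination, giving $\dipc=\dval{ip}$ and proving~(b).

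The main difficulty is not any single deduction but the bookkeeping: after the renumbering and the insertion of the red-coloured timed lines I must make sure the list of cast actions is genuinely complete, and confirm for each newly added assignment that it touches only a lifetime and hence leaves the message fields intact. The only genuine appeal to a nontrivial invariant is the use of \Prop{positive hopcount} in part~(b); everything else follows immediately from the syntactic shape of the cast actions together with \Prop{self-identification} and \Prop{ip=ipc}.
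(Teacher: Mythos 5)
Your proposal is correct and takes essentially the same route as the paper, which simply invokes the untimed proof of \cite[Proposition~7.11]{TR13} ``mutatis mutandis'' with \Prop{positive hopcount} as the one non-trivial ingredient; your case analysis of the cast occurrences and the observation that the new timing functions never touch hop-count, originator or destination fields is exactly the content of that adaptation.
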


\begin{proof}
  The same proof as in \cite{TR13}, mutatis mutandis; it uses \Prop{positive hopcount}.
\qed
\end{proof}

\begin{proposition}\rm\label{prop:rte}\cite[Proposition~7.12]{TR13}
  \begin{enumerate}[(a)]
  \item\label{it_a} Each routing table entry with $0$ as its
    destination sequence number has a sequence-number-status flag
    valued unknown.
    \begin{equation}\label{eq:unk_sqn}
      (\dval{dip},0,\dval{f},*,*,*,*,*)\in\xiN{\dval{ip}}(\rt) \ims \dval{f}=\unkno
    \end{equation}
  \item\label{it_d} Unknown sequence numbers can only occur at $1$-hop
    connections.
    \begin{equation}\label{eq:inv_viia}
      (*,*,\unkno,*,\dval{hops},*,*,*)\in\xiN{\dval{ip}}(\rt) \ims \dval{hops}=1
    \end{equation}
  \item\label{it_c} $1$-hop connections must contain the destination
    as next hop.
    \begin{equation}\label{eq:inv_vii}
      (\dval{dip},*,*,*,1,\dval{nhip},*,*)\in\xiN{\dval{ip}}(\rt)\ims \dval{dip}=\dval{nhip}
    \end{equation}
  \item\label{it_e} If the sequence number $0$ occurs within a routing
    table entry, the hop count as well as the next hop can be
    determined.
    \begin{equation}\label{eq:inv_viib}
      (\dval{dip},0,\dval{f},*,\dval{hops},\dval{nhip},*,*)\in\xiN{\dval{ip}}(\rt)\ims \dval{f}{=}\unkno \wedge \dval{hops}{=}1\wedge \dval{dip}{=}\dval{nhip}
    \end{equation}
  \end{enumerate}
\end{proposition}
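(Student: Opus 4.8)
The plan is to follow the standard invariant methodology recorded in \Rem{remark}: establish each of the implications (a), (b) and (c) in every initial state, and then show that they are preserved by every transition. In all initial states the routing tables are empty, so the three implications hold vacuously. Since none of the four claims mentions precursors, the function $\fnaddprecrt$ need not be considered, and by \Rem{remark} it suffices to check preservation under the four remaining functions that can alter a routing table: $\fnupd$, $\fninv$, $\fnsettimert$ and $\fnexprt$. I would prove (a), (b) and (c) simultaneously and then obtain (d) as an immediate corollary: under the premise $\dval{dsn}=0$, part (a) forces the flag to be $\unkno$, whence (b) gives $\dval{hops}=1$, whence (c) gives $\dval{dip}=\dval{nhip}$.

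For the two functions inherited from the untimed model, $\fnupd$ and $\fninv$, I would reuse the argument of \cite[Proposition~7.12]{TR13} essentially verbatim, only updating line numbers. This is justified because, on the components $\pi_1,\pi_2,\pi_3,\pi_5,\pi_6$ that the three invariants actually constrain, the timed versions of $\fnupd$ and $\fninv$ behave exactly as their untimed counterparts; their only genuinely new behaviour is the manipulation of the lifetime $\pi_8$, which is irrelevant here. In particular, the well-definedness side conditions of $\fnupd$, namely $\pi_2(\dval{r})=0\Leftrightarrow\pi_3(\dval{r})=\unkno$ and $\pi_3(\dval{r})=\unkno\Rightarrow\pi_5(\dval{r})=1$, directly discharge (a) and (b) for any freshly inserted entry, while (c) is handled at each call site exactly as in \cite{TR13} (a route inserted with hop count $1$ always has its destination as next hop).

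The new content is the preservation under the two time-related functions. For $\fnsettimert$ it is immediate: it replaces an entry only by one differing in the lifetime $\pi_8$, leaving $\pi_1,\pi_2,\pi_3,\pi_5,\pi_6$ untouched, so none of the three invariants can be affected. For $\fnexprt$ I would observe that each surviving entry is either copied unchanged from the old table---so the induction hypothesis applies directly---or is obtained by invalidating a previously valid entry. An invalidated entry keeps $\dval{dsk}$ ($\pi_3$), $\dval{hops}$ ($\pi_5$) and $\dval{nhip}$ ($\pi_6$), so (b) and (c) transfer from the original entry by the induction hypothesis; and since invalidation replaces $\dval{dsn}$ by $\inc{\dval{dsn}}\geq 1$, the hypotheses of (a) and of the $\dval{dsn}=0$ case of (d) are vacuous for the new entry. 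Entries that are dropped rather than invalidated can only help an implication of the form ``every entry satisfies $P$''.

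The main obstacle I anticipate is not conceptual but one of bookkeeping: confirming that on the relevant projections $\fnexprt$ really coincides with invalidation, and that no call site of $\fnupd$ or $\fninv$ has shifted in the timed specification in a way that would break the reused (c)-argument of \cite{TR13}. A secondary point to watch is the use of $\fninv$ in the RERR handling: I must check that wherever $\fninv$ is invoked the sequence number supplied in $\dval{dests}$ is either nonzero or accompanied by an $\unkno$ flag, so that (a) survives; but this matches the untimed situation exactly, so no new argument is required.
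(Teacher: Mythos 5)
Your proposal is correct and follows essentially the same route as the paper's proof: empty initial tables, then preservation checks showing that $\fnsettimert$, $\fnexprt$ and $\fnaddprecrt$ leave the relevant components (sequence number, flag, hop count, next hop) untouched, so that only the calls of $\fnupd$ need real work, handled mutatis mutandis as in the untimed proof via the call-site analysis and \Prop{starcastNew}. The one point of divergence is $\fninv$: the paper disregards it wholesale by observing that it could only break the invariants by \emph{decreasing} a sequence number, which is excluded by the weakened monotonicity result \Prop{dsn increase}; your plan to re-check the $\fninv$ call sites directly would also work, but your remark that this ``matches the untimed situation exactly, so no new argument is required'' glosses over the fact that the untimed argument leaned on \cite[Proposition~7.6]{TR13}, which fails in the timed setting and must be replaced by its weakened form---fortunately still applicable here, since $\fninv$ never deletes an entry.
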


\begin{proof}
  At the initial states all routing tables are empty. Since
  \hyperlink{settimert}{$\fnsettimert$}, \hyperlink{exprt}{$\fnexprt$} and \hyperlink{addprert}{$\fnaddprecrt$}
neither decrease the sequence number nor change
  the sequence-number-status flag,  the next hop or the hop count of a routing
  table entry, they cannot invalidate any of the above invariants. The function \hyperlink{invalidate}{$\fninv$}
  changes neither the sequence-number-status flag, nor the next hop or
  the hop count, but could decrease the sequence number of an entry.
  The proof in \cite{TR13} points to \cite[Proposition 7.6]{TR13} to
  show that this cannot happen. Here this follows from \Prop{dsn increase}.
  For this reason, we still can disregard applications of $\fninv$.
  Hence, as in \cite{TR13}, one only has to look at the application calls of \hyperlink{update}{\fnupd}.
  The remainder of the proof follows \cite{TR13}, mutatis mutandis.
  It uses \Prop{starcastNew}.
\qed
\end{proof}

\begin{proposition}\label{prop:msgsendingii}\rm\cite[Proposition~7.13]{TR13}
  \begin{enumerate}[(a)]
  \item Whenever an originator sequence number is sent as part of a
    route request message, it is known, i.e., it is greater than or equal to $1$.
    \begin{equation}\label{inv:starcast_sqni}
      N\ar{R:\starcastP{\rreq{*}{*}{*}{*}{*}{*}{\osnc}{*}}}_{\dval{ip}}N' \ims \osnc \geq1
    \end{equation}
  \item Whenever a destination sequence number is sent as part of a
    route reply message, it is known, i.e., it is greater than or equal to $1$.
    \begin{equation}\label{inv:starcast_sqnii}
      N\ar{R:\starcastP{\rrep{*}{*}{\dsnc}{*}{*}{*}}}_{\dval{ip}}N'\ims\dsnc\geq1
    \end{equation}
  \end{enumerate}
\end{proposition}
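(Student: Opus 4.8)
The plan is to establish both parts simultaneously as invariants, following the methodology of \Rem{remark}: verify the claim in every initial state and then show it is preserved by every transition that can emit the relevant message, reasoning by induction on the length of the execution path $N_0\ar{\ell_1}\cdots\ar{\ell_k}N_k$. In the initial states no node has yet cast anything, so the base case holds vacuously, and it remains to inspect each line of the processes of \SSubApp{ModelAODV} that produces a $\starcastP{m}$-action carrying an originator sequence number (for~(a)) or a destination sequence number (for~(b)).

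For part~(a), I would note that a route request is cast in exactly two places. When a node \emph{initiates} a route discovery in process \AODV, the field \osnc\ is set to the node's own sequence number \sn, which satisfies $\sn\geq1$ by \Prop{invarianti_itemiii}. When a node \emph{forwards} a received request in process \RREQ, the field \osnc\ is copied unchanged from the incoming message; by \Prop{preliminaries}(\ref{it:preliminariesi}) that incoming message---not being a \textbf{newpkt}---was cast earlier along the path, and the induction hypothesis applied to that earlier cast yields $\osnc\geq1$, a bound that copying preserves.

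For part~(b), a route reply is cast in three places. When the \emph{destination} itself answers, \dsnc\ is its own sequence number, again $\geq1$ by \Prop{invarianti_itemiii}. When an \emph{intermediate} node replies on behalf of the destination, \dsnc\ is set to $\sqn{\dval{rt}}{\dval{dip}}$, and the guard enabling this branch requires the corresponding routing table entry to be valid with a \emph{known} sequence number, i.e.\ $\sqnf{\dval{rt}}{\dval{dip}}=\kno$; \Prop{rte}(\ref{it_a}) then forbids the stored sequence number from being $0$, so $\dsnc\geq1$. When a node \emph{forwards} a reply in process \RREP, \dsnc\ is copied from the incoming reply and the argument is identical to the forwarding case of part~(a), again invoking \Prop{preliminaries}(\ref{it:preliminariesi}) and the induction hypothesis.

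The only genuinely new point relative to \cite{TR13} is to check that the timing machinery leaves the routing-table argument intact. The functions that can now modify a routing table before it is read are \fnsettimert, which touches only lifetimes, and \fnexprt, which can invalidate or delete entries. Neither threatens the bound: in the intermediate-reply case the guard demands a valid, known entry \emph{after} any such pruning, so \Prop{rte}(\ref{it_a}) applies to the very state in which $\sqn{\dval{rt}}{\dval{dip}}$ is read. I expect the main obstacle to be purely bookkeeping---cleanly routing the two forwarding cases through \Prop{preliminaries} so that the inductive dependence on earlier casts is sound---after which the remaining line-by-line checks reduce, mutatis mutandis, to the proof in \cite{TR13}.
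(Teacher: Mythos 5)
Your proof is correct and follows essentially the same route as the paper, whose own proof simply defers to \cite{TR13} ``mutatis mutandis'' while citing exactly the three ingredients you use: \Prop{preliminaries} for tracing forwarded messages back to an earlier cast, \Prop{invarianti_itemiii} for the node's own sequence number being $\geq 1$, and \Prop{rte} for the intermediate-reply case. Your additional check that \fnsettimert\ and \fnexprt\ do not disturb the argument is the ``mutatis mutandis'' part made explicit.
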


\begin{proof}\!
 {Just} as in \cite{TR13}, mutatis mutandis,
  using Propositions~\ref{prop:preliminaries},~\ref{prop:invarianti_itemiii} and~\ref{prop:rte}.\!\!\!\!\!\qed
\end{proof}
\begin{proposition}\label{prop:msgsending}\rm\cite[Proposition~7.14]{TR13}
  \begin{enumerate}[(a)]
  \item\label{prop:msgsendingRREQ} If a route request is sent
    (forwarded) by a node $\ipc$ different from the originator of the
    request then the content of $\ipc$'s routing table must be fresher
    or at least as good as the information inside the message.
    \begin{equation}\label{inv:starcast_ii}
      \begin{array}{rcl}
        &&N\ar{R:\starcastP{\rreq{\hopsc}{*}{*}{*}{*}{\oipc}{\osnc}{\ipc}}}_{\dval{ip}}N'
        \ans\ipc\neq\oipc\\
        &\Rightarrow&
        \oipc\in\kd{\ipc}
        \ans\big(\sq[\oipc]{\ipc}>\osnc\ors (\sq[\oipc]{\ipc}=\osnc 
        \\
        &&\ans \dhp[\oipc]{\ipc}\leq\hopsc\ans \sta[\oipc]{\ipc}=\val)\big)
      \end{array}
    \end{equation}
  \item If a route reply is sent by a node $\ipc$, different from the
    destination of the route, then the content of $\ipc$'s routing
    table must be consistent with the information inside the message.
    \begin{equation}\label{inv:starcast_iv}
      \begin{array}{@{}rcl@{}}
        &&N\ar{R:\starcastP{\rrep{\hopsc}{\dipc}{\dsnc}{*}{*}{\ipc}}}_{\dval{ip}}N'
        \ans\ipc\neq\dipc\\
        &\Rightarrow&
        \dipc\in\kd{\ipc}
        \ans \sq[\dipc]{\ipc} = \dsnc\\
        &&\ans  \dhp[\dipc]{\ipc}=\hopsc\ans \sta[\dipc]{\ipc}=\val
      \end{array}
    \end{equation}
  \end{enumerate}
\end{proposition}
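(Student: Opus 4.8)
The plan is to follow the proof of Proposition~7.14 of \cite{TR13}, adapting it to the timed model in the manner prescribed by \Rem{remark}. Both parts are statements about transitions that cast a route request (part~(a)) or a route reply (part~(b)), so I would argue by induction on the length of the execution path, and in the inductive step enumerate every location in the processes of \SSubApp{ModelAODV} at which such a message is cast, checking in each case that the routing table of the sending node $\ipc$ in the source state $N$ satisfies the consequent. The base case is vacuous, since in an initial state no control message has been cast. The structural fact that makes the timed case manageable is that the outgoing message is built from the routing table at the \emph{start} of the transmission, whereas the \textbf{*cast} transition fires only at its \emph{end}; by \Prop{time determinism} and the remark following it, the data state is unchanged throughout the intervening durational steps except for the value of \now. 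Hence the table read in $N$ is exactly the one from which the message parameters were computed, and no entry can disappear \emph{during} the transmission itself.

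For part~(a) a RREQ is cast in only two places. When $\AODV$ initiates a fresh route discovery the sender is the originator, so $\ipc=\oipc$, the hypothesis $\ipc\neq\oipc$ is false, and the claim is vacuous. When $\RREQ$ forwards a received request (Line~\ref{rreq_short:line34} of \Pro{rreq_shortened}) the forwarded message reuses the originator address $\oipc$ and originator sequence number $\osnc$ and only increments the hop count, so $\hopsc=\dval{hops}+1$. Immediately beforehand the process has updated its table with a reverse route to $\oipc$ via $\sip$ carrying sequence number $\osnc$, hop count $\dval{hops}+1$ and flag $\val$ (Line~\ref{rreq:line6}) by means of $\fnupd$. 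By the defining clauses of $\fnupd$, after this update $\oipc\in\kd{\ipc}$ and either $\sq[\oipc]{\ipc}>\osnc$, or $\sq[\oipc]{\ipc}=\osnc$ with $\dhp[\oipc]{\ipc}\leq\dval{hops}+1=\hopsc$ and $\sta[\oipc]{\ipc}=\val$, which is exactly the required consequent.

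For part~(b) a RREP is cast when the node is itself the destination (Line~\ref{rreq:line14a}, where $\dipc=\ipc$ and the claim is vacuous), when an intermediate node answers on behalf of the destination (Line~\ref{rreq:line26}), and when $\RREP$ forwards a reply (Line~\ref{rrep:line2} updates the forward route, after which the incremented reply is unicast). In the two non-vacuous cases the destination sequence number and hop count carried by the message are copied directly from the entry for $\dipc$ whose presence and validity the preceding guard, respectively the $\fnupd$ call, has just established; here the crucial ordering is that the pruning via $\fnexprt$ takes place \emph{before} the guard is evaluated, so that after the guard confirms a valid route, no further pruning intervenes before the transmission begins. Thus $\dipc\in\kd{\ipc}$, $\sq[\dipc]{\ipc}=\dsnc$, $\dhp[\dipc]{\ipc}=\hopsc$ and $\sta[\dipc]{\ipc}=\val$ all hold by construction. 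Wherever the original argument appealed to monotonic growth of destination sequence numbers (the now-invalid Proposition~7.6 of \cite{TR13}) I would substitute \Prop{dsn increase}, and I would invoke \Prop{positive hopcount}, \Prop{rte}, \Prop{starcastNew} and \Prop{msgsendingii} precisely where \cite{TR13} uses their untimed counterparts.

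The genuinely new work, flagged in \Rem{remark}, is to verify that the functions introduced for timing do not break these arguments. The function $\fnsettimert$ alters only an entry's expiration time, leaving destination, sequence number, hop count, next hop and validity untouched, so it touches none of the quantities in either consequent. The delicate case, and what I expect to be the main obstacle, is $\fnexprt$. For part~(b) the ordering above shields us, but for part~(a) a call to $\fnexprt$ (Line~\ref{rreq:line27}) does lie between the creation of the reverse route at Line~\ref{rreq:line6} and the forward at Line~\ref{rreq_short:line34}, and it evaluates at the same \now at which the entry's lifetime was set at Line~\ref{rreq:line7} (no time having elapsed). The key observation that rescues part~(a) is the \emph{disjunctive} shape of its consequent together with the fact that $\fnexprt$, when it invalidates a still-valid entry, \emph{increments} its destination sequence number: an invalidated reverse route keeps $\oipc\in\kd{\ipc}$ while raising $\sq[\oipc]{\ipc}$ to $\osnc+1>\osnc$, so the first disjunct takes over from the second. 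The one residual point to discharge is that such a freshly created valid entry cannot be \emph{deleted} outright by a single application of $\fnexprt$ at the current \now, for which I would bound the lifetime set at Line~\ref{rreq:line7} below by \now (using $\deltime$ and the relation $\traversal=2\cdot\nodetraversal\cdot\keyw{NET\_DIAMETER}$, and a bound on the hop count of a forwarded request); it is exactly here, rather than in the copy-from-table reasoning, that the care of the timed proof concentrates, and it is the same phenomenon that will later force the \emph{stronger}, and genuinely failing, assumption that the forwarding node holds a \emph{valid} (not merely known) route to $\oipc$.
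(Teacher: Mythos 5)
Your proposal is correct and follows essentially the same route as the paper, whose entire proof of this proposition is the observation that the argument of \cite[Proposition~7.14]{TR13} carries over mutatis mutandis (using \Prop{self-identification}), with the new timing functions checked in the manner prescribed by \Rem{remark}—exactly the reconstruction you give. The one residual point you flag, that $\fnexprt$ cannot delete the freshly updated reverse route within the same time slice, closes more simply than the route you sketch: the update at Line~\ref{rreq:line6} already sets the expiration time to at least $\now+\valtime>\now$ (Line~\ref{rreq:line7} only takes a maximum with this), and the deletion clause of $\fnexprt$ additionally requires the lifetime to lag the current time by $\deltime$, so no bound on the hop count and no appeal to the relation between $\traversal$ and $\nodetraversal$ is needed.
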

\begin{proof}
  The same proof as in \cite{TR13}, mutatis mutandis, using \Prop{self-identification}.
\end{proof}
\Prop{msgsending} states facts about the network state at the end of a \textbf{*cast}-action. 
Since the evaluation function of a node does not change while transmitting a message (taking 
a $\tau$-action stemming from rules \tableref{bc}, \tableref{gc} and \tableref{uc} of \Tab{sos sequential},
an $\Rw\w$-action or an $R:\textbf{*cast}$-action), a similar proposition can be shown for all 
such actions.
% In case of a $\Rw\w$-action, that means
%     \begin{equation*}%\label{inv:starcast_iv}
%       \begin{array}{@{}rcl@{}}
%         &&N\ar{\Rw\w}_{\dval{ip}}N'
%         \ans\ipc\neq\dipc\\
%         &\Rightarrow&
%         \dipc\in\kd{\ipc}
%         \ans \sq[\dipc]{\ipc} = \dsnc\\
%         &&\ans  \dhp[\dipc]{\ipc}=\hopsc\ans \sta[\dipc]{\ipc}=\val\;\chR.
%       \end{array}
%     \end{equation*}
% %  \comsP{do we have to formally define the label?}
% \chR{Here $N\ar{\Rw\w}_{\dval{ip}}N'$ denotes that $N$ changes into
%   $N'$ by performing a $\tick$-transition that results, in part, from a
%   $R:\w$-transition performed by node $\dval{ip}$.}
% 
If we consider the start of a transmission (the $\tau$-action), we can even strengthen the proposition. 
We show this only for the case of unicasting a RREP message (the strengthened version of 
\Prop{msgsending}(b)).

\begin{proposition}\label{prop:msgsendingstartuni}\rm
If the sending of a route reply is initiated by a node $\ipc$, different from the
    destination of the route, then the content of $\ipc$'s routing
    table must be consistent with the information inside the message, 
    including the  lifetime field. Moreover the sequence number is known (the sequence-number-status flag is set to \kno).
    \begin{equation}\label{inv:starcast_lifetime}
      \begin{array}{@{}rcl@{}}
        &&N\ar{\unicast{*}{\rrep{\hopsc}{\dipc}{\dsnc}{*}{*}{\ipc}}}_{\dval{ip}}N'
        \ans\ipc\neq\dipc\\
        &\Rightarrow&
        \dipc\in\kd{\ipc}
        \ans \sq[\dipc]{\ipc} = \dsnc\\
        &&\ans  \dhp[\dipc]{\ipc}=\hopsc\ans \sta[\dipc]{\ipc}=\val\\
        &&\ans \sqnf{\xiN{\ipc}(\rt)}{\dipc}=\kno \ans \lti[\dipc]{\ipc} > \xiN{\ipc}(\now)\ ,
 \end{array}
    \end{equation}
    where the label is a new notation indicating that a transition stemming from rule \tableref{uc}
    with $\dval{ms}=\rrep{\hopsc}{\dipc}{\dsnc}{*}{*}{\ipc}$
    is taken by node \dval{ip}.
\end{proposition}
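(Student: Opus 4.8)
The plan is to read this statement as the unicast-\emph{initiation} analogue of \Prop{msgsending}(b). The label $\unicast{*}{\rrep{\hopsc}{\dipc}{\dsnc}{*}{*}{\ipc}}$ denotes the $\tau$-transition produced by Rule~\tableref{uc} of \Tab{sos sequential}, i.e.\ the instant a process commits to a unicast, before any durational step of Rules~\tableref{tr}/\tableref{tr-o} has occurred; accordingly I would evaluate every condition in the source state $N$. The four conjuncts $\dipc\in\kd{\ipc}$, $\sq[\dipc]{\ipc}=\dsnc$, $\dhp[\dipc]{\ipc}=\hopsc$ and $\sta[\dipc]{\ipc}=\val$ are exactly those of \Prop{msgsending}(b), so I would inherit that argument. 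The real content is the pair $\sqnf{\xiN{\ipc}(\rt)}{\dipc}=\kno$ and $\lti[\dipc]{\ipc}>\xiN{\ipc}(\now)$, and these can be added here precisely because at \emph{initiation} no transmission time has yet elapsed, so \now has not advanced past the expiration time stored in the table (which is why the lifetime claim is absent from \Prop{msgsending}(b), where \now grows during transmission).

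Following \Rem{remark}, I would locate the two sites at which a \textbf{unicast} of a RREP with $\ipc\neq\dipc$ is triggered: the intermediate reply in process \Pro{rreq} and the forwarding of a reply in process \Pro{rrep} (the destination's own reply has $\dipc=\ipc$ and is excluded). For the intermediate reply the guard immediately preceding the \textbf{unicast} demands a \emph{valid} route to $\dipc$ with $\sqnf{\rt}{\dipc}=\kno$, which delivers $\sta[\dipc]{\ipc}=\val$ and the new flag condition, while the reply is assembled by copying $\sqn{\rt}{\dipc}$ and $\dhops{\rt}{\dipc}$. For its lifetime I would use that \fnexprt\ prunes \rt\ (with $t=\now$) just before the table is read, so every surviving entry has $\pi_8>\now$; since only zero-time intranode steps (guards, assignments, precursor insertions, none of which touch $\pi_8$ or \now) separate the pruning from the \textbf{unicast}, $\lti[\dipc]{\ipc}>\xiN{\ipc}(\now)$ still holds at initiation.

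For the forwarding site in \Pro{rrep} the reply is emitted right after the routing-table update for $\dipc$. As \fnupd\ is defined only on valid entries the result is valid, and because the incoming RREP carries a known destination sequence number $\dsnc\geq1$ (\Prop{msgsendingii}(b)) the domain condition $\pi_2=0\Leftrightarrow\pi_3=\unkno$ of \fnupd\ forces $\sqnf{\xiN{\ipc}(\rt)}{\dipc}=\kno$; consistency of sequence number and hop count is again the genuine-update content of \Prop{msgsending}(b). The lifetime is where the work lies: the update sets $\lti[\dipc]{\ipc}=\max(\text{old},\,\now+L)$ with $L$ the Lifetime field of the incoming reply. If an old entry survives pruning it already has $\pi_8>\now$, so the maximum exceeds \now; the one remaining sub-case is a freshly created entry, for which I need $L\geq1$.

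Establishing $L\geq1$ is the step I expect to be the main obstacle, and I would discharge it with an auxiliary invariant: every RREP in transit, and every RREP whose unicast is being initiated, carries a Lifetime field $\geq1$. This follows by induction over the origin of the reply: a destination-generated RREP uses $L=\myroute\geq1$; an intermediate-generated RREP (process \Pro{rreq}) uses $L=\lti[\dipc]{\ipc}-\now$, which is $\geq1$ by exactly the pruning argument of the previous paragraph; and a forwarded RREP leaves the Lifetime field untouched, preserving positivity. There is a mild interdependence—the intermediate case of the auxiliary invariant reuses the lifetime argument of the present proposition—but no circularity, since that argument derives $\lti[\dipc]{\ipc}>\now$ directly from pruning and never from any RREP lifetime. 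With the invariant in hand, $\now+L\geq\now+1>\now$ settles the freshly-created sub-case and completes the forwarding site.
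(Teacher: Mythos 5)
Your proposal takes essentially the same route as the paper's proof: both reduce the first four conjuncts to \Prop{msgsending}(b), dispose of the destination-generated reply (\Pro{rreq}, Line~\ref{rreq:line14a}) because there $\ipc=\dipc$ by Line~\ref{rreq:line10}, and then check the two remaining initiation sites. For the intermediate reply (\Pro{rreq}, Line~\ref{rreq:line26}) your argument is exactly the paper's: the guard of Line~\ref{rreq:line22} yields validity and $\sqnf{\xiN{\ipc}(\rt)}{\dipc}=\kno$, and the application of \fnexprt\ in Line~\ref{aodv:line3} of \Pro{aodv}, executed in the same time slice, yields $\lti[\dipc]{\ipc}>\xiN{\ipc}(\now)$. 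The one place you genuinely diverge is the forwarding site (\Pro{rrep}, Line~\ref{rrep:line13}): the paper settles both new conjuncts with the single remark that they follow from the \fnupd\ of Line~\ref{rrep:line2}, with the guard of Line~\ref{rrep:line1} ensuring its effect, whereas you observe that for a freshly created entry \fnupd\ produces expiration time $\now+L$ with $L$ the Lifetime field of the incoming RREP, so the strict inequality needs $L\geq 1$, and you introduce an auxiliary invariant (every RREP in transit carries a Lifetime field $\geq 1$) to close this sub-case. That is more careful than the paper's bare assertion, and the inductive step of your invariant is sound; but note that its base case silently requires $\myroute\geq 1$, which the paper's stated constraint $0\leq\myroute$ does not by itself provide (only the RFC's default value does). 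You should therefore either record $\myroute\geq 1$ as an explicit hypothesis or point out that with $\myroute=0$ the lifetime conjunct can fail at the first forwarding hop---a corner case the paper's one-line justification passes over. Apart from this, your write-up is consistent with, and slightly stronger than, the paper's argument.
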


\begin{proof}
  The same proof as for Proposition 7.14(b) in \cite{TR13}, mutatis mutandis.
  The last line, however, did not occur in \cite{TR13}.
  We extend the proof to justify this addition.
    \begin{description}
       \item[Pro.~\ref{pro:rreq}, Line~\ref{rreq:line14a}:] As in \cite{TR13} a new route reply with
		$\ipc:=\xi(\ip)=\dval{ip}$ is initiated.
		Moreover, by Line~\ref{rreq:line10}, $\dipc:=\xi(\dip)=\xi(\ip)=\dval{ip}$ and 
		thus $\ipc=\dipc$.
		Hence, the antecedent of \eqref{inv:starcast_lifetime} is not satisfied.
      \item[Pro.~\ref{pro:rreq}, Line~\ref{rreq:line26}:]
        That $\sqnf{\xiN{\ipc}(\rt)}{\dipc}=\kno$ follows from Line~\ref{rreq:line22};
        that $\lti[\dipc]{\ipc} > \xiN{\ipc}(\now)$ follows from Line~\ref{aodv:line3} of
        Pro.~\ref{pro:aodv}, which is always executed prior to Line~\ref{rreq:line26} or Pro.~\ref{pro:rreq},
        in the same time slice.
      \item[Pro.~\ref{pro:rrep}, Line~\ref{rrep:line13}:]
        That $\sqnf{\xiN{\ipc}(\rt)}{\dipc}\mathbin=\kno$ and $\lti[\dipc]{\ipc} \mathbin> \xiN{\ipc}(\now)$
        follows from the $\fnupd$ done in Line~\ref{rrep:line2}, with Line~\ref{rrep:line1} ensuring its effect.
      \qed
     \end{description}
\end{proof}

\begin{proposition}\rm\label{prop:starcastrerr}\cite[Proposition~7.15]{TR13}
  Any sequence number appearing in a route error message
  stems from an invalid destination and is equal to the sequence
  number for that destination in the sender's routing table at the
  time of sending.
  \begin{equation}\label{inv:starcast_rerr}
        \begin{array}{@{}rcl@{}}
		    &&N\ar{R:\starcastP{\rerr{\destsc}{\ipc}}}_{\dval{ip}}N'\ans (\ripc,\rsnc)\in\destsc\\
    &\Rightarrow&
    \ripc\in\ikd{\dval{ip}} \ans \rsnc = \sq[\ripc]{\dval{ip}}
    \end{array}
  \end{equation}
\end{proposition}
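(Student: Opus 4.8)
The plan is to apply the invariant-checking methodology of \Rem{remark}: the claim holds vacuously in every initial state, since no message has yet been cast, so it remains to inspect every transition that can make the antecedent true, i.e.\ every line of Processes~\ref{pro:aodv}--\ref{pro:queues} at which a \textbf{groupcast} of a message $\rerr{\destsc}{\ipc}$ is issued (in $\RERR$, and in the error-handling branches of $\RREQ$, $\RREP$, $\AODV$ and $\PKT$). A preliminary reduction is the key to the timed argument: the conclusion speaks of the routing table of the casting node in the state $N$ from which the completing \textbf{*cast}-transition $\colonact{\range}{\starcastP{m}}$ is taken, whereas the message $m=\rerr{\destsc}{\ipc}$ was fixed several time units earlier, at the $\tau$-step (Rule~\tableref{gc}) that turned the \textbf{groupcast} into the auxiliary $\stargcast{\range}{m}[\LG,\DG]$ construct. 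By \Prop{time determinism} and its extension to transmission actions noted after it, the only transitions between these two points are time steps $\Rw{\w}$, which change the valuation solely in the variable $\now$; hence the casting node's routing table in $N$ is identical---apart from $\now$, which does not occur in \eqref{inv:starcast_rerr}---to its routing table at the moment $m$ was built. It therefore suffices to establish both conjuncts at that earlier moment, exactly as in the untimed proof.

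At each of the identified lines the variable $\dval{dests}$ is assembled, in the same time slice, together with an invalidation $\rt := \inv{\rt}{\dval{dests}}{\now+\deltime}$ of the affected entries, and $\destsc$ is (a subset of) the value of $\dval{dests}$. By the definition of $\fninv$, applying $\inv{\rt}{\dval{dests}}{\dval{t}}$ replaces, for every $(\ripc,\rsnc)\in\dval{dests}$ that indexes an entry already present in $\rt$, that entry by one carrying the flag $\inval$ and destination sequence number exactly $\rsnc$. Consequently, immediately after the invalidation, $\ripc\in\ikd{\dval{ip}}$ and $\sq[\ripc]{\dval{ip}}=\rsnc$, which are precisely the two conjuncts of \eqref{inv:starcast_rerr}. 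The one point to discharge line by line---as already done in \cite{TR13}---is that every pair inserted into $\dval{dests}$ really does index an existing routing-table entry, so that $\fninv$ genuinely invalidates it rather than ignoring it; this holds because $\dval{dests}$ is in each case read off from the current routing table, as the affected destinations together with their recorded sequence numbers. Up to line numbers this part of the argument is unchanged.

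What is genuinely new, and where the care must go, is ruling out interference by the timed routing-table functions $\fnsettimert$ and $\fnexprt$. Here I would note that $\fnsettimert$ modifies only expiration times, leaving every sequence number and validity flag untouched, and that $\fnexprt$, by its definition, increments the sequence number of an entry only when that entry is \emph{valid} and has expired, marking it $\inval$; an entry that is already invalid is either retained verbatim or deleted outright, so $\fnexprt$ never alters the recorded sequence number of an already-invalidated destination. Moreover, in each process the pruning calls to $\fnexprt$ precede all invalidation and RERR generation, and---by the reduction of the first paragraph---no such call is interposed between the $\fninv$ that establishes the invariant and the \textbf{*cast} that relies on it. The main obstacle is thus not conceptual but one of diligence: verifying the persistence-through-transmission step, which has no counterpart in the instantaneous untimed semantics, and confirming, for each RERR-emitting line, that the interleaving of the new timed updates with the invalidation leaves both conjuncts intact. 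Once this is checked, the conclusion follows as in \cite[Proposition~7.15]{TR13}, mutatis mutandis.
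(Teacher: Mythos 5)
Your proposal is correct and matches the paper's approach: the paper's entire proof is ``Same proof as in \cite{TR13}, mutatis mutandis,'' and what you have written is precisely the content of that ``mutatis mutandis''---the reduction from the completing \textbf{*cast}-transition back to the state where the message was assembled (which the paper justifies in the remark following its Proposition on message sending, using that wait and transmission actions change only $\now$), the line-by-line check that each RERR is groupcast in the same time slice as the $\fninv$ that produces $\dval{dests}$, and the observation that $\fnsettimert$ and $\fnexprt$ cannot disturb the sequence number or invalid status of an already-invalidated entry. Nothing further is needed.
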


\begin{proof}
  Same proof as in \cite{TR13}, mutatis mutandis.
\qed
\end{proof}

Propositions 7.16--7.25 in \cite{TR13} show that all partial functions
used in the specification of AODV are always defined when
they occur outside of an atomic formula (when an undefined function
call occurs in an atomic formula, that formula evaluates to \keyw{false}---cf.~Footnote~\ref{fn:undefvalues}).
The proofs, which use Propositions \ref{prop:preliminaries} and \ref{prop:msgsendingii}, apply to our timed model of AODV as well. Moreover, the
arguments for the new partial functions $\fnfD[\footnotesize]$ and $\fnseltime[\footnotesize]$ are
identical to the argument for $\sigma_{\mbox{\footnotesize\it p-flag}}$ in \cite[Proposition 7.25]{TR13}.

\subsubsection{The Quality of Routing Table Entries}\label{ssec:quality}~

\noindent In \cite[Section 7.5]{TR13} the \emph{net sequence number} of a route to a
destination $\dval{dip}$ in a routing table $\dval{rt}$ is defined by
\hypertarget{nsqn}{
\[\begin{array}{r@{\hspace{0.5em}}r@{\hspace{0.5em}}l}
  \fnnsqn : \tRT\times\tIP&\to& \tSQN\\
  \nsqn{\dval{rt}}{\dval{dip}}&:=&
    \left\{
      \begin{array}{ll}
        \sqn{\dval{rt}}{\dval{dip}}&\mbox{if }\status{\dval{rt}}{\dval{dip}}=\val \ors \sqn{\dval{rt}}{\dval{dip}}=0\\
        \sqn{\dval{rt}}{\dval{dip}}-1&\mbox{otherwise}\ .
      \end{array}
    \right.
  \end{array}\]
}
% \[\begin{array}{r@{\hspace{0.5em}}r@{\hspace{0.5em}}l}
%   \nsq{\dval{ip}}&:=&
%     \left\{
%       \begin{array}{l@{\quad}l}
%         \sq{\dval{ip}}&\mbox{if }\sta{\dval{ip}}=\val \ors \sq{\dval{ip}}=0\\
%         \sq{\dval{ip}}-1&\mbox{otherwise}\ .
%       \end{array}
%     \right.
%   \end{array}\]
If two routing tables $\dval{rt}$ and $\dval{rt}'$ have a routing table entry to destination
$\dval{dip}$, i.e., $\dval{dip}\in\kD{\dval{rt}}\cap\kD{\dval{rt}'}$, 
 % the preorder can be lifted as well.
they can be compared w.r.t.\ their \emph{quality} for that destination \cite{TR13}:
\[\begin{array}{r@{\hspace{0.5em}}r@{\hspace{0.5em}}l}
\dval{rt} \rtord \dval{rt}' &:\Leftrightarrow
&\nsqn{\dval{rt}}{\dval{dip}} < \nsqn{\dval{rt}'}{\dval{dip}} \ors\\
&&\big(\nsqn{\dval{rt}}{\dval{dip}} = \nsqn{\dval{rt}'}{\dval{dip}} \wedge
\dhops{\dval{rt}}{\dval{dip}} \geq \dhops{\dval{rt}'}{\dval{dip}}\big)
\end{array}
\]
\noindent For all destinations $\dval{dip}\in\IP$, 
the relation $\rtord$ on routing tables with an entry for \dval{dip} is a total preorder.
The equivalence relation induced by $\rtord$ is denoted by~$\rtequiv$.

\begin{proposition}\label{prop:qual}\rm\cite[Proposition~7.26]{TR13}
Assume a routing table $\dval{rt}\in\tRT$ with $\dval{dip}\in\kD{\dval{rt}}$.\vspace{-1ex}
\begin{enumerate}[(a)]
\item\label{it:qual_upd}
An \hyperlink{update}{\fnupd} of $\dval{rt}$ can only increase the quality of the routing table.
That is, for all routes \dval{r} such that \upd{\dval{rt}}{\dval{r}} is defined
(i.e., $\pi_{4}(\dval{r})=\val$,
 $\pi_{2}(\dval{r})=0\Leftrightarrow\pi_{3}(\dval{r})=\unkno$ and
 $\pi_{3}(\dval{r})=\unkno\Rightarrow\pi_{5}(\dval{r})=1$)
we have\vspace{-1ex}\vspace{-3pt}
  \begin{equation}\label{eq:qual_upd}
    \dval{rt}\rtord\upd{\dval{rt}}{\dval{r}}\ .
\vspace{-2pt}
  \end{equation}
\item\label{it:qual_inv}
An \hyperlink{invalidate}{$\fninv$} on $\dval{rt}$ does not change the
quality of the routing table if, for each $(\dval{rip},\dval{rsn})\in\dval{dests}$,
\dval{rt} has a valid entry for \dval{rip}, and
  \begin{itemize}
  \item \dval{rsn} is the by one incremented sequence number from the routing table, or 
  \item both \dval{rsn} and the sequence number in the routing table are $0$.
  \end{itemize}
That is, for all partial functions $\dval{dests}$ (subsets of $\tIP\times\tSQN$)
\vspace{-2pt}
  \begin{equation}\label{eq:qual_inv}
  \begin{array}{rcl}
  &&\big((\dval{rip},\dval{rsn})\in\dval{dests} \ims \dval{rip}\in\akD{\dval{rt}} \ans
  \dval{rsn}=\inc{\sqn{\dval{rt}}{\dval{rip}}}\big) \\
  &\Rightarrow&
    \dval{rt}\rtequiv\inv{\dval{rt}}{\dval{dests}}{*}\ .
  \end{array}
\vspace{-2pt}
  \end{equation}
\item\label{it:qual_addpre}
If precursors are added to an entry of $\dval{rt}$, the quality of the routing table
does not change.
 That is, for all $\dval{dip}\in\IP$ and sets of precursors $\dval{npre}\in\pow(\IP)$,
\vspace{-2pt}
  \begin{equation}\label{eq:qual_addpre}
    \dval{rt}\rtequiv\addprecrt{\dval{rt}}{\dval{dip}}{\dval{npre}}\ .
\vspace{-2pt}
  \end{equation}
\end{enumerate}
\end{proposition}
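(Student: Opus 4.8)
The plan is to reduce the timed statement to the untimed one of \cite{TR13} via a single observation: the quality preorder $\rtord$ reads a routing-table entry only through its net sequence number $\fnnsqn$ and its hop count $\fndhops$, and both of these are computed from the validity flag, the sequence number and the hop count alone --- never from the newly added expiration-time field $\pi_8$. Hence equipping entries with lifetimes cannot change any instance of $\rtord$, and it suffices to rerun the untimed case analysis while checking that the extra component is merely carried along. Throughout I would fix a destination $\dval{dip}\in\kD{\dval{rt}}$; since none of $\fnupd$, $\fninv$, $\fnaddprecrt$ ever deletes an entry, $\dval{dip}$ stays known after the operation and the comparison is well defined.

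For part~(a) I would split on whether $\pi_1(\dval{r})=\dval{dip}$. If $\pi_1(\dval{r})\neq\dval{dip}$, the entry for $\dval{dip}$ is untouched up to precursors and lifetime, so $\dval{rt}\rtequiv\upd{\dval{rt}}{\dval{r}}$ and a fortiori $\dval{rt}\rtord\upd{\dval{rt}}{\dval{r}}$. If $\pi_1(\dval{r})=\dval{dip}$ I would walk through the six clauses defining $\fnupd$. Using the definedness hypotheses on $\dval{r}$, in particular $\pi_4(\dval{r})=\val$, every replacement entry is valid, so its net sequence number equals its stored sequence number. In the clause $\sqn{\dval{rt}}{\dval{dip}}<\pi_2(\dval{r})$ this drives the net sequence number strictly up; in the two clauses with $\sqn{\dval{rt}}{\dval{dip}}=\pi_2(\dval{r})$ the net sequence number is preserved or rises, and when it is merely preserved the governing clause conditions bound the hop count so that it does not increase; the $\pi_3(\dval{r})=\unkno$ clause reinstates validity without lowering the sequence number; and the final clause leaves the entry unchanged modulo precursors. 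Each subcase yields $\dval{rt}\rtord\upd{\dval{rt}}{\dval{r}}$.

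For part~(b) only the destinations $(\dval{dip},\dval{rsn})\in\dval{dests}$ matter, the others being left alone by $\fninv$. The hypothesis supplies either a valid entry with $\dval{rsn}=\inc{\sqn{\dval{rt}}{\dval{dip}}}$ or the degenerate case $\dval{rsn}=\sqn{\dval{rt}}{\dval{dip}}=0$. Writing $n=\sqn{\dval{rt}}{\dval{dip}}$, invalidation turns a valid entry of net sequence number $n$ into an invalid one of stored sequence number $n+1$, whose net sequence number is again $(n{+}1)-1=n$, while the hop count stays fixed; in the zero case both net sequence numbers are $0$. Thus $\fnnsqn$ and $\fndhops$ are preserved and $\dval{rt}\rtequiv\inv{\dval{rt}}{\dval{dests}}{*}$. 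Part~(c) is immediate: $\fnaddprecrt$ edits only the precursor component, so it changes neither $\fnnsqn$ nor $\fndhops$, giving $\dval{rt}\rtequiv\addprecrt{\dval{rt}}{\dval{dip}}{\dval{npre}}$.

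The main obstacle is the bookkeeping of the net-sequence-number-preserving subcases of part~(a), where the definition of $\fnnsqn$ branches on whether the sequence number is $0$. There the clause conditions alone leave the hop count free to slip, so I would appeal to \Prop{rte}, pinning a sequence number $0$ to an unknown status flag and hop count~$1$, and to \Prop{positive hopcount} for the entries involved --- exactly as in the untimed argument. These are the only points where a naive computation could break the order; everything else is a routine transcription of the proof in \cite{TR13}, precisely because the lifetime field is invisible to $\rtord$.
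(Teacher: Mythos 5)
Your proposal is correct and takes essentially the same route as the paper, which simply says the proof is ``the same as in [TR13], using Proposition~\ref{prop:positive hopcount}''---i.e., the untimed argument carries over because the expiration-time component $\pi_8$ is invisible to $\fnnsqn$, $\fndhops$ and hence to $\rtord$, which is exactly the observation you make explicit. Your additional appeal to Proposition~\ref{prop:rte} for the sequence-number-$0$ subcases matches what the underlying untimed case analysis already relies on, so there is no substantive divergence.
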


\begin{proof}
The same as in \cite{TR13}, using \Prop{positive hopcount}.
\qed
\end{proof}
Further, we have to prove that the applications of
\fnsettimert\ do not
decrease the quality of a routing table entry, nor do applications of \fnexprt\ that do not delete the
entry to \dval{dip}.
The first is straightforward since \fnsettimert\ only modifies time components.
The second follows since such applications leave both $\nsqn{\dval{rt}}{\dval{dip}}$ and
$\dhops{\dval{rt}}{\dval{dip}}$ invariant.

\newcommand{\nowN}[1][N]{\dval{now}_{#1}}

Theorem 7.27 of \cite{TR13} says that the quality of routing table
entries can never decrease. This result does not hold any longer, as the
entry may expire and reemerge with a lower quality. However, we do have
the following weakening of this result.

\begin{proposition}\label{prop:state_quality}\rm
As long as a routing table entry is not deleted, its quality can only be increased, never decreased.

Assume $N \ar{\ell}N'$ and $\dval{ip},\dval{dip}\mathbin\in\IP$.
If $\dval{dip}\in\kd{\dval{ip}} \cap \kd[N']{\dval{ip}}$ then
\[\xiN{\dval{ip}}(\rt)\rtord\xiN[N']{\dval{ip}}(\rt)\ .\]
\end{proposition}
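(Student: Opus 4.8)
The plan is to treat this as the timed reincarnation of Theorem~7.27 of \cite{TR13}, whose global conclusion must now be localised to transitions that do not delete the entry for $\dval{dip}$. Following the methodology of Remark~\ref{rem:remark}, I would fix $\dval{ip},\dval{dip}\in\IP$ and a transition $N\ar{\ell}N'$ with $\dval{dip}\in\kd{\dval{ip}}\cap\kd[N']{\dval{ip}}$, and inspect every point at which $\xiN{\dval{ip}}(\rt)$ can change. By Remark~\ref{rem:remark} these are exactly the calls to \fnupd, \fninv, \fnaddprecrt, \fnsettimert and \fnexprt. Since each single network transition (be it a $\tau$-, $\tick$- or \textbf{*cast}-step) modifies the routing table through at most one such call, it suffices to show that each of these five functions weakly increases the quality with respect to $\rtord$ for the destination $\dval{dip}$, under the standing assumption that the entry for $\dval{dip}$ is not removed.

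Three of the five functions are inherited from the untimed proof. An application of \fnupd weakly increases the quality by \Prop{qual}\eqref{eq:qual_upd}; an application of \fnaddprecrt leaves it unchanged by \Prop{qual}\eqref{eq:qual_addpre}; and an application of \fninv leaves it unchanged by \Prop{qual}\eqref{eq:qual_inv}, provided its precondition holds. As in \cite{TR13} I would discharge that precondition at each call site, checking that every invalidated destination is valid in $\xiN{\dval{ip}}(\rt)$ with a sequence number exactly one below the one supplied in $\dval{dests}$; the only changes are the updated line numbers and the replacement of the now-defunct appeal to Proposition~7.6 of \cite{TR13} by \Prop{dsn increase} wherever monotonicity of sequence numbers is invoked.

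The genuinely new work concerns the two time-related functions, handled exactly as in the paragraph following \Prop{qual}. The function \fnsettimert touches only the lifetime component, hence leaves $\fnnsqn$ and $\fndhops$ untouched and preserves the quality. For \fnexprt I would exploit the hypothesis $\dval{dip}\in\kd[N']{\dval{ip}}$, which guarantees that the entry for $\dval{dip}$ survives the pruning (if it were deleted, $\dval{dip}$ would not lie in $\kd[N']{\dval{ip}}$). The surviving entry is then either copied verbatim (quality unchanged), or a valid entry is re-tagged $\inval$ with its destination sequence number incremented; in the latter case, writing $s$ for the stored destination sequence number, the net sequence number passes from $s$ (valid) to $(s+1)-1=s$ (invalid) while the hop count is untouched, so the quality is again unchanged. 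Either way \fnexprt preserves quality whenever it does not delete the entry, and the proposition follows.

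The main obstacle I anticipate is precisely this \fnexprt case: one must confirm that ``not deleted'' is the exact hypothesis under which expiration is benign, and in particular that the expiration-driven invalidation—the simultaneous flag flip and sequence-number increment—is net-sequence-number neutral rather than quality-decreasing. This small computation is what replaces the failed global monotonicity of Theorem~7.27 of \cite{TR13}. A secondary subtlety is re-establishing the precondition of \Prop{qual}\eqref{eq:qual_inv} for each \fninv call without the old monotonicity property, which now must be routed through the weaker \Prop{dsn increase}.
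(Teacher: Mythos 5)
Your proposal is correct and follows essentially the same route as the paper: dispose of \fnupd, \fnaddprecrt, \fnsettimert and \fnexprt via \Prop{qual} and the remark following it (your explicit computation that expiration-driven invalidation is net-sequence-number neutral is exactly what that remark asserts), and then check the \fninv call sites as in the untimed proof, substituting \Prop{dsn increase} for the no-longer-valid global monotonicity of sequence numbers. The only difference is one of presentation — you spell out the $\fnexprt$ arithmetic and the per-transition bookkeeping that the paper leaves implicit.
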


\begin{proof}
By \Prop{qual}, and the remark following it, the quality of routing table entries, as long as they
are not deleted, cannot decrease due to applications of $\fnupd$, $\fnaddprecrt$, $\fnsettimert$ and
$\fnexprt$. Hence we only need to check all applications of $\fninv$. That proceeds exactly as 
the proof of Proposition 7.27 in \cite{TR13}.
\qed
\end{proof}
\Prop{state_quality} states in particular that if $N \ar{\ell}N'$ and
$\dval{dip}\in\kd{\dval{ip}} \cap \kd[N']{\dval{ip}}$ then
$\nsq{\dval{ip}} \leq \fnnsqn_{N'}^{\dval{ip}}(\dval{dip})$.

Proposition 7.28 and Theorem 7.30 of \cite{TR13} state relations
between the routing tables of different nodes. They are the key
results in establishing loop freedom. Both results do not hold here,
at least not unconditionally. However, a weakening of Proposition 7.28
and the full Theorem 7.30 hold if we assume that premature route
expiration does not occur. We formalise this assumption in two parts
as Assumptions~\ref{ass:pre} and~\ref{ass:underway} below.

For the value of the  variable $\now$ in state $N$, we write $\nowN$.
Assuming that in an initial state of AODV the clocks of all nodes
have the same value, this will continue to be the case throughout the
life of the protocol, since AODV does not modify this variable. 
Hence we do not need a superscript \dval{ip} to indicate which node's variable $\now$ is meant.
Moreover, $\nowN$ increases monotonically.

A route to \dval{dip} may be marked as \emph{valid} in the routing table of a node \dval{ip}, but
if $\lti{\dval{ip}}\leq\nowN$ or \hyperlink{1hoplife}{$\neg\keyw{1hoplife}(\nhp{\dval{ip}},\nowN)$}
its validity is questionable, and, following the RFC~\cite{rfc3561},
the routing table entry ought to be marked as \emph{invalid}.
Hence, before the routing table is
consulted, the function \fnexprt\ is always applied, making 
all valid routing table entries invalid that are timed out themselves, 
or have a timed-out routing table entry to the next hop.
We define the set of \emph{intrinsically valid} routing table entries of node \dval{ip} in state $N$ as
$\VD{\dval{ip}} := \{\dval{dip}\in\akd{\dval{ip}} \mid \lti{\dval{ip}}>\nowN \wedge \keyw{1hoplife}(\nhp{\dval{ip}},\nowN)
\} =\xiN{\dval{ip}}(\akD{\exprt{\rt}{\now}{\deltime}})$.

\begin{assumption}\label{ass:pre}
If a node has an intrinsically valid routing table entry to a destination \dval{dip}, then the next hop, if not
\dval{dip} itself, has a known route to \dval{dip}.
\begin{equation}\label{eq:pre}
% \begin{array}{cl}
%      & \dval{dip}\in\akd{\dval{ip}} \ans \lti{\dval{ip}}\geq\nowN
%         \ans \dval{nhip} := \nhp{\dval{ip}} \neq \dval{dip}\\      
% \ims & (\dval{dip}\mathbin\in\kd{\dval{nhip}}\wedge \lti{\dval{nhip}}\mathbin\geq\nowN )
% \end{array}
\dval{dip}\in\VD{\dval{ip}} \ans \dval{nhip} := \nhp{\dval{ip}} \neq \dval{dip}
\ims \dval{dip}\mathbin\in\kd{\dval{nhip}}
\end{equation}
\end{assumption}
\hypertarget{underway}{To formalise the second part of the assumption that premature route
expiration does not occur, we first define what we mean by a message being \emph{underway}.
A message starts being underway when its transmission is initiated.
For a RREQ or RREP message this is between the states $N$ and $N'$ for which\vspace{-1pt}
\[
        N\ar{\broadcastP{\rreq********}}_{\dval{sip}}N'
\quad \mbox{or} \quad
        N\ar{\unicast{*}{\rrep******}}_{\dval{sip}}N'\vspace{-3pt}
\]
in the notation of \Prop{msgsendingstartuni}, with \dval{sip} being the sending node.
For a RREQ message, this indicates a transition stemming from Rule~\tableref{bc} in \Tab{sos sequential},
resulting\linebreak[3] from the execution of \Pro{aodv}, Line~\ref{aodv:line39} or \Pro{rreq}, Line~\ref{rreq:line34}.
When a message leaves the incoming message queue of the receiving node we still treat it as
underway until the receiving node makes 
``sufficient'' updates to its routing table triggered by the
receipt of the message, or when it becomes clear that an update is not going to happen:\footnote{%
By sufficient we mean enough changes for the invariants presented later to hold.}
a PKT message is underway until Line~\ref{pkt2:line2},~\ref{pkt2:line5},~\ref{pkt2:line19} or~\ref{pkt2:line21} of \Pro{pkt} is executed;
a RREQ message is underway until Line~\ref{rreq:line2} or~\ref{rreq:line6} of \Pro{rreq} is executed;
a RREP message is underway until Line~\ref{rrep:line2} or~\ref{rrep:line25} of \Pro{rrep} is executed;
and a RERR message until Line~\ref{rerr:line5} of \Pro{rerr} is executed}.%
\footnote{NEWPKT messages are not considered since they are not stored in the message queue.}
In each case exactly one of these lines will in fact be executed, and this happens in the same time
slice in which the message leaves the incoming message queue.

\begin{assumption}\label{ass:underway}
% Suppose $N^\dagger$ is a state in which a RREP message with destination \dval{dip} is sent by a node \dval{ip},
% and $N$ is a later state, prior or equal to the state in which this message is handled by the receiving node.
If a RREP  message with destination \dval{dip} or a RREQ message with originator \dval{dip}, sent
by a node $\dval{sip}\neq\dval{dip}$, is underway to a node \dval{ip}, then $\dval{dip} \in \kd{\dval{sip}}$.
% 
% The same assumption is made for RREQ messages with originator \dval{dip}.
\end{assumption}

\begin{proposition}\rm\label{prop:inv_nsqn}
Assume that premature route expiration does not occur (Assumptions~\ref{ass:pre} and~\ref{ass:underway}).
If, in a reachable network expression $N$, a node $\dval{ip}\mathop\in\IP$ has an intrinsically
valid routing table entry to $\dval{dip}$, then also the next hop
\dval{nhip} towards \dval{dip}, if not \dval{dip} itself, has a
routing table entry to $\dval{dip}$, and the net sequence number of
the latter entry is at least as large as that of the former.%
\begin{equation}\label{eq:inv_ix}
\dval{dip}\in\VD{\dval{ip}}\ans\dval{nhip}\not=\dval{dip}
\ims	\dval{dip}\in\kd{\dval{nhip}} \mathrel{\wedge} \nsq{\dval{ip}}\leq \nsq{\dval{nhip}}\ ,
\end{equation}
where $\dval{nhip}:=\nhp{\dval{ip}}$ is the IP address of the next hop.
\end{proposition}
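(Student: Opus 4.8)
The first conjunct $\dval{dip}\in\kd{\dval{nhip}}$ is literally Assumption~\ref{ass:pre}, so only the inequality between net sequence numbers needs proof. The plan is to mirror the proof of Proposition~7.28 in \cite{TR13}, arguing by induction over the reachable states: I would show the property holds in every initial state---where all routing tables are empty and the antecedent is vacuously false---and is preserved by every transition $N\ar{\ell}N'$. Following \Rem{remark}, it suffices to examine the transitions that alter a routing table, i.e.\ those applying \fnupd, \fninv, \fnaddprecrt, \fnsettimert\ or \fnexprt, and among those to isolate the ones that could newly make $\dval{dip}$ an intrinsically valid destination of $\dval{ip}$ with some next hop $\dval{nhip}\neq\dval{dip}$ while breaking $\fnnsqn_{N'}^{\dval{ip}}(\dval{dip})\leq\fnnsqn_{N'}^{\dval{nhip}}(\dval{dip})$.

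The principal case is an \fnupd\ at $\dval{ip}$ that installs $\dval{nhip}$ as the next hop towards $\dval{dip}$. Such an update is caused by a RREP carrying destination $\dval{dip}$, or a RREQ carrying originator $\dval{dip}$, received from the sender $\dval{sip}=\dval{nhip}$; since $\dval{nhip}\neq\dval{dip}$ we have $\dval{sip}\neq\dval{dip}$, and the number installed as $\dval{ip}$'s net sequence number is the (net) sequence number transported by the message. Let $N_1$ be the state in which $\dval{sip}$ initiated this transmission. \Prop{msgsending}---sharpened to \Prop{msgsendingstartuni} for the unicast RREP---shows that at $N_1$ the sender's table was already consistent with the message, so $\fnnsqn_{N_1}^{\dval{sip}}(\dval{dip})$ is at least the value later installed at $\dval{ip}$. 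The message is \emph{underway} from $N_1$ until it is processed at $N$, so Assumption~\ref{ass:underway} guarantees $\dval{dip}\in\kd{\dval{sip}}$ throughout; hence $\dval{sip}$'s entry for $\dval{dip}$ is never deleted in the meantime, and \Prop{state_quality} forbids its net sequence number from decreasing. As the final step $N\ar{\ell}N'$ is $\dval{ip}$'s own update and leaves $\dval{sip}$'s table untouched, chaining these inequalities yields $\fnnsqn_{N'}^{\dval{ip}}(\dval{dip})\leq\fnnsqn_{N'}^{\dval{sip}}(\dval{dip})=\fnnsqn_{N'}^{\dval{nhip}}(\dval{dip})$, as required.

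For the remaining transitions I would argue preservation directly. When the entry $\dval{ip}\to\dval{dip}$ and its next hop are unchanged and $\dval{dip}$ was already intrinsically valid at $N$, the induction hypothesis supplies the inequality at $N$, and it remains only to see that the next hop's net sequence number cannot drop: by Assumption~\ref{ass:pre} the entry $\dval{nhip}\to\dval{dip}$ still exists at $N'$, hence was not deleted, so \Prop{state_quality} applies. Applications of \fnsettimert\ and \fnaddprecrt\ change neither sequence numbers nor next hops, and a non-deleting \fnexprt\ leaves both $\fnnsqn$ and the hop count of the entry intact; an \fninv, or a deleting or invalidating \fnexprt, on $\dval{ip}$'s own entry drops $\dval{dip}$ from the intrinsically valid destinations, so the antecedent fails. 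The genuinely new and hardest case is a \fnsettimert\ that \emph{re-validates} $\dval{ip}$'s entry after it lapsed from intrinsic validity during a preceding durational transmission---e.g.\ the lifetime resets in Lines~\ref{pkt2:line6a}--\ref{pkt2:line6d} of \Pro{pkt}, executed only once the durational forward of Line~\ref{pkt2:line6} has completed and time has advanced. Over such an interval Assumption~\ref{ass:pre} no longer protects $\dval{nhip}$'s entry from deletion, so the non-decrease of its net sequence number must be recovered from the underway bookkeeping and \Prop{state_quality} alone. Lining up the pruning done by \fnexprt\ before each table access, the durational steps, and the lifetime resets so that this inequality survives the gap is the main obstacle; it is exactly this interplay that forces ``no premature route expiration'' to be split into the two Assumptions~\ref{ass:pre} and~\ref{ass:underway}, and that is responsible for the handful of specification lines that fail to preserve the assumption itself.
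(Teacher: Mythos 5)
Your overall strategy coincides with the paper's: reduce the first conjunct to Assumption~\ref{ass:pre}, induct over reachable states starting from empty routing tables, dispose of $\fnaddprecrt$, \fninv\ and non-deleting applications of $\fnexprt$ as harmless, protect $\nsq{\dval{nhip}}$ via Assumption~\ref{ass:pre} plus \Prop{state_quality}, and reduce the real work to the $\fnupd$ calls triggered by an incoming RREQ/RREP, where the decisive refinement over \cite[Proposition~7.28]{TR13} is exactly the one you give: the message is underway from its initiation state to its processing state, Assumption~\ref{ass:underway} keeps the sender's entry alive over that interval, and \Prop{state_quality} then replaces the appeal to \cite[Theorem~7.27]{TR13}. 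Up to that point your proposal and the paper's proof are the same argument.

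The gap is your last paragraph: you flag the $\fnsettimert$ case --- an entry re-entering $\VD{\dval{ip}}$ after lapsing during a durational transmission --- as ``the main obstacle'' and leave it open, whereas the paper closes it with a concrete structural observation about the specification: every application of $\fnsettimert$ is preceded by an application of $\fnexprt$ in the same time slice, so any entry still marked valid at that point already satisfies $\lti{\dval{ip}}>\nowN$ and $\keyw{1hoplife}(\nhp{\dval{ip}},\nowN)$; hence $\fnsettimert$ (which only extends expiration times and never touches the validity flag, the next hop or the sequence number) cannot turn the antecedent $\dval{dip}\in\VD{\dval{ip}}$ from false to true, and has no other effect on the invariant. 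Without some such argument your induction does not close, so this case must be dispatched, not merely acknowledged. Relatedly, your closing attribution is misplaced: the five offending specification lines and the need to repair them belong to the proof of the \emph{Intended Theorem} in \SubApp{D}, i.e.\ to the attempt to \emph{establish} Assumption~\ref{ass:pre}; the present proposition takes Assumptions~\ref{ass:pre} and~\ref{ass:underway} as given, and its proof goes through without any caveat about those lines.
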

Apart from its reliance on Assumptions~\ref{ass:pre} and~\ref{ass:underway}, this proposition weakens 
\cite[Proposition~7.28]{TR13} by assuming $\dval{dip}\in\VD{\dval{ip}}$ instead of $\dval{dip}\in\kd{\dval{ip}}$.

\begin{proof}
We can forget about the conclusion $\dval{dip}\in\kd{\dval{nhip}}$, since this follows by Assumption~\ref{ass:pre}.
For an initial network expression the invariant holds since all routing tables are empty.
We need to make sure that the invariant is maintained under all modifications to $\xiN{\dval{ip}}(\rt)$
or $\xiN{\dval{nhip}}(\rt)$, and under progress of time.

Progress of time cannot invalidate the invariant; at most it can invalidate the antecedent.

A modification of \plat{$\xiN{\dval{nhip}}(\rt)$} is harmless, as it
can only increase $\nsq{\dval{nhip}}$ (cf.\ \Prop{state_quality}).
The antecedent of the proposition ($\dval{dip}\in\kd{\dval{ip}} \cap \kd[N']{\dval{ip}}$)
follows from Assumption~\ref{ass:pre} and the fact that
\plat{$\dval{dip}\in\VD{\dval{ip}}$ and $\dval{nhip}\not=\dval{dip}$} holds both before
and after the modification of \plat{$\xiN{\dval{nhip}}(\rt)$}.

Applications of $\fnaddprecrt$ have no effect on the invariant.
Applications of \hyperlink{exprt}{$\fnexprt$} have no effect on the invariant either,
since $\fnexprt$ is idempotent, and net sequence numbers are not affected by $\fnexprt$.
Applications of \hyperlink{invalidate}{$\fninv$}
cannot invalidate the invariant; at most they can invalidate the antecedent.
Whenever $\fnsettimert$ is applied, $\fnexprt$ has been applied before in the same time slice.
For this reason, we have $\lti{\dval{ip}}>\nowN \wedge
\mbox{\hyperlink{1hoplife}{$\keyw{1hoplife}(\nhp{\dval{ip}},\nowN)$}}$, so $\fnsettimert$ cannot chance the condition
\plat{$\dval{dip}\in\VD{\dval{ip}}$} from false to true. It also has no further effect on the invariant.

Hence, it suffices to check all applications of \hyperlink{update}{\fnupd} that actually change a
routing table entry, beyond its precursors. This proceeds as in the proof of \cite[Proposition~7.28]{TR13},
using Propositions~\ref{prop:preliminaries},~\ref{prop:ip=ipc} and~\ref{prop:msgsending},
but with one refinement.

  In cases Pro.~\ref{pro:rreq}, Line~\ref{rreq:line6} and Pro.~\ref{pro:rrep}, Line~\ref{rrep:line2}
  we handle in a state $N$ a RREQ message with originator \dval{dip}, or a RREP message with
  destination \dval{dip}, that was sent by a node $\dval{nhip}\neq\dval{dip}$ in a state $N^\dagger$. The proof of
  \cite[Proposition~7.28]{TR13} then calls \cite[Theorem~7.27]{TR13} to infer that
  $\nsq{\dval{nhip}}\geq\fnnsqn_{N^\dagger}^{\dval{nhip}}(\dval{dip})$.
  Here we can instead use \Prop{state_quality}, but only under Assumption~\ref{ass:underway}.
\qed
\end{proof}
To prove loop freedom we will show that on any route established by AODV the quality 
of routing table entries increases when going from one node to the next hop. Here, the 
preorder is not sufficient, since we need a strict increase in quality. 
Therefore, on routing tables $\dval{rt}$ and $\dval{rt}'$ that both have an entry to $\dval{dip}$, i.e., $\dval{dip}\in\kD{\dval{rt}}\cap\kD{\dval{rt}'}$,
 we define a relation $\rtsord$ by
\[
\dval{rt} \rtsord \dval{rt}'\ :\Leftrightarrow\  \dval{rt} \rtord \dval{rt}' \ans \dval{rt} \not\rtequiv \dval{rt}'\ .
\]

\begin{corollary}\label{cor:strictord}\rm{}\hspace{-3.10028pt}\cite[\hspace{-1.5pt}Corollary 7.29]{TR13}\hspace{-1pt}
The relation $\rtsord$ is irreflexive and \mbox{transitive}. 
\end{corollary}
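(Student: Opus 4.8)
The plan is to prove this as a purely order-theoretic fact, using only the two properties established immediately before the statement: that $\rtord$ is a total preorder on the routing tables that have an entry for the fixed destination $\dval{dip}$, and that $\rtequiv$ is the equivalence relation it induces, i.e.\ $\dval{rt}\rtequiv\dval{rt}'$ holds exactly when both $\dval{rt}\rtord\dval{rt}'$ and $\dval{rt}'\rtord\dval{rt}$. Throughout I would keep in mind that $\rtsord$, like $\rtord$ and $\rtequiv$, is only defined between routing tables that each have an entry for $\dval{dip}$, so every table mentioned in the argument is tacitly assumed to have such an entry; the statements about three tables in the transitivity case presuppose this for all three.

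For irreflexivity I would simply unfold the definition $\dval{rt}\rtsord\dval{rt}'\ :\Leftrightarrow\ \dval{rt}\rtord\dval{rt}'\wedge\dval{rt}\not\rtequiv\dval{rt}'$ at $\dval{rt}'=\dval{rt}$. Since $\rtequiv$ is an equivalence relation it is in particular reflexive, so $\dval{rt}\rtequiv\dval{rt}$ always holds; the second conjunct $\dval{rt}\not\rtequiv\dval{rt}$ therefore fails, and $\dval{rt}\rtsord\dval{rt}$ can never hold. This is immediate and requires no case analysis on net sequence numbers or hop counts.

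For transitivity I would assume $\dval{rt}\rtsord\dval{rt}'$ and $\dval{rt}'\rtsord\dval{rt}''$, unfolding these to $\dval{rt}\rtord\dval{rt}'$, $\dval{rt}\not\rtequiv\dval{rt}'$, $\dval{rt}'\rtord\dval{rt}''$ and $\dval{rt}'\not\rtequiv\dval{rt}''$. Transitivity of the preorder $\rtord$ gives $\dval{rt}\rtord\dval{rt}''$ at once, so the only remaining obligation is $\dval{rt}\not\rtequiv\dval{rt}''$. I would obtain this by contradiction: if $\dval{rt}\rtequiv\dval{rt}''$, then $\dval{rt}''\rtord\dval{rt}$, and combining this with $\dval{rt}\rtord\dval{rt}'$ by transitivity of $\rtord$ yields $\dval{rt}''\rtord\dval{rt}'$; together with the given $\dval{rt}'\rtord\dval{rt}''$ this means $\dval{rt}'\rtequiv\dval{rt}''$, contradicting $\dval{rt}'\not\rtequiv\dval{rt}''$. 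Hence $\dval{rt}\not\rtequiv\dval{rt}''$, and therefore $\dval{rt}\rtsord\dval{rt}''$.

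I do not expect any genuine obstacle here: the corollary is an instance of the general principle that the strict part of a preorder is irreflexive and transitive, and the nontrivial content has already been discharged by the verification that $\rtord$ is a total preorder. The only point requiring a little care is the implicit domain restriction to tables having an entry for $\dval{dip}$, which must be carried along so that each comparison in the chain is well defined; since the hypotheses of the transitivity case already guarantee it for all three tables, this is purely a matter of bookkeeping.
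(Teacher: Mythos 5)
Your proof is correct: it is the standard order-theoretic fact that the strict part of a preorder (here, $\rtord$ restricted to tables with an entry for $\dval{dip}$, with $\rtequiv$ its induced equivalence) is irreflexive and transitive, and both your irreflexivity argument and your contradiction argument for $\dval{rt}\not\rtequiv\dval{rt}''$ go through. The paper itself gives no proof, deferring to \cite[Corollary 7.29]{TR13}, and your argument is exactly the one that discharges the claim; note in passing that you never actually use totality of $\rtord$, only that it is a preorder whose kernel is $\rtequiv$.
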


\begin{proposition}\rm\cite[Theorem~7.30]{TR13}
\label{prop:inv_a}
Assume that premature route expiration does not occur (Assumptions~\ref{ass:pre} and~\ref{ass:underway}).
The quality of the routing table entries for a destination \dval{dip} is strictly increasing
along a route towards \dval{dip},
 until it reaches either \dval{dip} or a node with an invalid routing table entry to \dval{dip}.
\begin{equation}\label{eq:inv_x}
\dval{dip}\in\akd{\dval{ip}}\cap \akd{\dval{nhip}} \ans\dval{nhip}\not=\dval{dip}
\ims \xiN{\dval{ip}}(\rt)\rtsord \xiN{\dval{nhip}}(\rt)\ ,
\end{equation}
where $N$ is a reachable network expression and $\dval{nhip}:=\nhp{\dval{ip}}$ is the IP address of the next hop.
\end{proposition}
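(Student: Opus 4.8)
The plan is to establish \eqref{eq:inv_x} as an invariant of the reachable network expressions by induction on the length of a run, following the methodology of \Rem{remark}: I would verify it in every initial state and show it is preserved by every transition. In an initial state all routing tables are empty, so $\akd{\dval{ip}}=\emptyset$, the antecedent fails, and the base case is vacuous. For the inductive step the crucial observation is that, for fixed $\dval{ip}$ and $\dval{dip}$, whether $\xiN{\dval{ip}}(\rt)\rtsord\xiN{\dval{nhip}}(\rt)$ holds depends only on the stored net sequence numbers and hop counts of the entries for $\dval{dip}$ at $\dval{ip}$ and at $\dval{nhip}=\nhp{\dval{ip}}$. Hence, as in \cite[Theorem~7.30]{TR13}, it suffices to inspect the transitions that alter a routing table, namely applications of \fnupd, \fninv, \fnaddprecrt, \fnsettimert\ and \fnexprt, together with the passage of time.

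First I would dispatch every case except \fnupd. A time step leaves all routing tables unchanged (only $\now$ advances), so it changes neither $\akd$ nor any net sequence number or hop count, and cannot falsify \eqref{eq:inv_x}. The function \fnaddprecrt\ only edits precursors and \fnsettimert\ only edits lifetimes, so neither touches a validity flag, a sequence number or a hop count. For \fninv\ and \fnexprt\ the point is the calculation already used elsewhere: invalidating a valid entry with sequence number $\dval{dsn}$ yields an invalid entry with sequence number $\dval{dsn}+1$, whose net sequence number is therefore unchanged, and the hop count and next hop are left alone as well; deletion simply removes an entry. Thus every surviving entry keeps its quality while the valid set $\akd$ can only shrink, so each of these four functions can at most falsify the antecedent of \eqref{eq:inv_x} (by removing $\dval{dip}$ from $\akd{\dval{ip}}$ or from $\akd{\dval{nhip}}$), and never its conclusion.

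The substance of the proof lies in the \fnupd\ case, which I would treat exactly along the lines of \cite[Theorem~7.30]{TR13}, running through each program line that genuinely updates the entry of a node for $\dval{dip}$ (the reverse- and forward-route updates in Processes~\ref{pro:aodv}, \ref{pro:rreq} and~\ref{pro:rrep}). When a node installs a new entry for $\dval{dip}$ with next hop $\dval{nhip}$, that entry has hop count one larger than the count $\hopsc$ carried by the triggering message, while \Prop{msgsending} (together with Propositions~\ref{prop:preliminaries} and~\ref{prop:ip=ipc}) ties $\hopsc$ and the message's sequence number to $\dval{nhip}$'s own table. Combining this with the net sequence number inequality of \Prop{inv_nsqn}, I would conclude in each case that either $\nsq{\dval{ip}}<\nsq{\dval{nhip}}$, or $\nsq{\dval{ip}}=\nsq{\dval{nhip}}$ together with $\dhp{\dval{ip}}>\dhp{\dval{nhip}}$; both yield the strict inequality $\xiN{\dval{ip}}(\rt)\rtsord\xiN{\dval{nhip}}(\rt)$. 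Once established, this strictness is stable: the net sequence number and hop count recorded at $\dval{ip}$ stay frozen until that entry is re-updated (which re-establishes the relation) or removed, whereas the quality of $\dval{nhip}$'s entry for $\dval{dip}$ can only rise (\Prop{state_quality}).

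The hard part, and the only genuinely timed ingredient, will be the refinement needed when the information installed at $\dval{ip}$ comes from a RREP with destination $\dval{dip}$, or a RREQ with originator $\dval{dip}$, that was cast by $\dval{nhip}$ in some earlier state $N^{\dagger}$. In the untimed development one invokes monotonicity of quality (\cite[Theorem~7.27]{TR13}) to transport $\dval{nhip}$'s information from $N^{\dagger}$ to the current state; here that monotonicity can fail, precisely because the entry at $\dval{nhip}$ might have been deleted and re-created with inferior quality---this is premature route expiration. I would close the gap exactly as in the proof of \Prop{inv_nsqn}: Assumption~\ref{ass:underway} guarantees that while the message is underway $\dval{nhip}$ retains a route to $\dval{dip}$, so the entry is never deleted and \Prop{state_quality} applies, giving $\fnnsqn_{N^{\dagger}}^{\dval{nhip}}(\dval{dip})\leq\nsq{\dval{nhip}}$. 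The remaining care---and the most delicate bookkeeping---lies in bridging the fact that \eqref{eq:inv_x} speaks of flag-validity ($\akd$) whereas \Prop{inv_nsqn} speaks of intrinsic validity ($\VD$): I would check that each update line falls under Assumption~\ref{ass:underway}, that the entry freshly installed at $\dval{ip}$ is intrinsically valid so that \Prop{inv_nsqn} may be applied, and that, relying on Assumptions~\ref{ass:pre} and~\ref{ass:underway}, a later re-creation of $\dval{nhip}$'s entry cannot lower its quality below that of a node still pointing at it, which is the one configuration that would otherwise break \eqref{eq:inv_x}.
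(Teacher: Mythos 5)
Your proposal is correct and follows essentially the same route as the paper's proof: induction over reachable states, dispatching $\fnaddprecrt$, $\fnsettimert$, $\fninv$ and $\fnexprt$ as harmless (at most falsifying the antecedent), reducing to the $\fnupd$ cases of \cite[Theorem~7.30]{TR13}, and patching the two places where the untimed monotonicity results fail by invoking \Prop{inv_nsqn} and \Prop{state_quality} under Assumption~\ref{ass:underway}, with the $\akd$-versus-$\VD$ gap bridged by the $\fnexprt$ executed in the same time slice. No substantive differences to report.
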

\begin{proof}
For an initial network expression the invariant holds since all routing tables are empty.
We need to make sure that the invariant is maintained under all modifications to $\xiN{\dval{ip}}(\rt)$
or $\xiN{\dval{nhip}}(\rt)$. 
Applications of $\fnaddprecrt$ and
\hyperlink{settimert}{$\fnsettimert$} have no effect on the invariant.
Applications of \hyperlink{invalidate}{$\fninv$} and \hyperlink{exprt}{$\fnexprt$} cannot
invalidate the invariant; at most they can invalidate the ante\-cedent. Hence, it suffices to check
all applications of \hyperlink{update}{\fnupd} that change~a routing table entry, beyond its precursors,
just as in the proof of \cite[Theorem~7.30]{TR13}.

The argument that the invariant is maintained under updates of $\xiN{\dval{nhip}}(\rt)$ is unchanged
w.r.t.\ the proof of \cite[Theorem~7.30]{TR13}. It uses \Prop{rte}. 
At two occasions
this proof refers to \cite[Proposition~7.28]{TR13} (addressed as ``Invariant (20)''), and in both
cases a reference to \Prop{inv_nsqn} suffices as well.  Here, we use that each $\fnupd$
that is handled in a state $N$ is preceded by an application of $\fnexprt$ in the same time slice.
Hence $\dval{dip}\in\akd{\dval{ip}}$ implies $\dval{dip}\in\VD{\dval{ip}}$.

The argument that the invariant is maintained under updates of $\xiN{\dval{ip}}(\rt)$ is almost 
unchanged w.r.t.\ the proof of \cite[Theorem~7.30]{TR13}.
It uses Propositions~\ref{prop:preliminaries} and~\ref{prop:ip=ipc} and
Invariants~\eqref{inv:starcast_ii} and~\eqref{inv:starcast_iv}.
However, there are two occasions where the argument needs to be refined.

\begin{itemize}
\vspace{-1ex}
\item In the case Pro.~\ref{pro:rreq}, Line~\ref{rreq:line6} we handle in a state $N$ a RREQ message with
  originator \dval{dip} that was sent by a node $\dval{nhip}\neq\dval{dip}$ in a state $N^\dagger$. The proof of
  \cite[Theorem~7.30]{TR13} then calls \cite[Proposition~7.6]{TR13} to infer that\vspace{-2pt}
  $\sq{\dval{nhip}} \geq \fnsqn_{N^\dagger}^{\dval{nhip}}(\dval{dip})$.
  Here we can use \Prop{dsn increase}, but only under Assumption~\ref{ass:underway}.
\item In cases Pro.~\ref{pro:rreq}, Line~\ref{rreq:line6} and Pro.~\ref{pro:rrep}, Line~\ref{rrep:line2}
  we handle in a state $N$ a RREQ message with originator \dval{dip}, or a RREP message with
  destination \dval{dip}, that was sent by a node $\dval{nhip}\neq\dval{dip}$ in a state $N^\dagger$. The proof of
  \cite[Theorem~7.30]{TR13} then calls \cite[Theorem~7.27]{TR13} to infer that
  $\xiN[N^\dagger]{\dval{nhip}}(\rt) \rtord \xiN[N]{\dval{nhip}}(\rt)$.
  Here we can instead use \Prop{state_quality}, but only under Assumption~\ref{ass:underway}.
\qed
\vspace{-1ex}
\end{itemize}
\end{proof}

\noindent
% Loop freedom is an immediate consequence of \Prop{inv_a} \cite{TR13}.

\newcommand{\RG}[2]{\mathcal{R}_{#1}(#2)}
\begin{definition}\rm\cite{TR13}
The \emph{routing graph} of network expression $N$ with respect to
$\dval{dip}\in\tIP$ is $\RG{N}{\dval{dip}}\mathop{:=}\linebreak[1](\IP,E)$, where
all nodes of the network form the vertices and there is an
arc $({\dval{ip}},{\dval{ip}}')\mathbin\in E$ iff $\dval{ip}\mathop{\not=}\dval{dip}$ and
$
(\dval{dip}\comma\nosp{*}\comma\nosp{*}\comma\nosp{\val}\comma\nosp{*}\comma\nosp{\dval{ip}'}\comma\nosp{*}\comma\nosp{*})\mathop{\in}\xiN{\dval{ip}}(\rt).
$

We say that a network expression $N$ is \emph{loop free} if the
corresponding routing graphs $\RG{N}{\dval{dip}}$ are loop free, for
all $\dval{dip}\mathop{\in}\IP$. A routing protocol, such as AODV, is
\emph{loop free} iff all reachable network expressions are loop free.
\end{definition}
An arc in a routing graph states that $\dval{ip}'$ is the next hop on
a valid route to $\dval{dip}$ known by $\dval{ip}$; a path in a routing
graph describes a route towards $\dval{dip}$ discovered by AODV\@.

Using this definition of a routing graph, \Prop{inv_a} states that 
along a path towards a destination \dval{dip} in the routing
graph of a reachable network expression $N$, until it reaches either
\dval{dip} or a node with an invalided routing table entry to dip,
the quality of the routing table entries for \dval{dip} is strictly increasing.
From this, we can immediately conclude

\begin{theorem}\rm\label{thm:loop free}
Assume that premature route expiration does not occur (Assumptions~\ref{ass:pre} and~\ref{ass:underway}).
Then the specification of AODV given in \SubApp{spec} is loop free.
\qed
\end{theorem}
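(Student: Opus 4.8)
The plan is to obtain this theorem as a direct corollary of \Prop{inv_a} together with the order-theoretic properties of $\rtsord$ recorded in \Cor{strictord}. I would fix an arbitrary destination $\dval{dip}\in\tIP$ and an arbitrary reachable network expression $N$, and reduce loop freedom to showing that the routing graph $\RG{N}{\dval{dip}}$ contains no directed cycle.

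First I would note that $\dval{dip}$ itself cannot lie on any cycle: by the definition of $\RG{N}{\dval{dip}}$, an arc $(\dval{ip},\dval{ip}')$ requires $\dval{ip}\neq\dval{dip}$, so $\dval{dip}$ has no outgoing arcs and cannot be a vertex of a directed cycle. Next I would assume, towards a contradiction, a cycle $\dval{ip}_0\to\dval{ip}_1\to\cdots\to\dval{ip}_k=\dval{ip}_0$, in which every $\dval{ip}_j\neq\dval{dip}$. Each arc $(\dval{ip}_j,\dval{ip}_{j+1})$ means, by definition of the routing graph, that $\dval{ip}_j$ holds a \emph{valid} routing-table entry to $\dval{dip}$ whose next hop is $\dval{ip}_{j+1}$, i.e.\ $\dval{dip}\in\akd{\dval{ip}_j}$ and $\dval{ip}_{j+1}=\nhp{\dval{ip}_j}$. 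Since $\dval{ip}_{j+1}$ in turn carries an outgoing arc in the cycle, we also have $\dval{dip}\in\akd{\dval{ip}_{j+1}}$, together with $\dval{ip}_{j+1}\neq\dval{dip}$. Thus the antecedent of \eqref{eq:inv_x} is satisfied for each arc, and \Prop{inv_a} delivers $\xiN{\dval{ip}_j}(\rt)\rtsord\xiN{\dval{ip}_{j+1}}(\rt)$ for all $j$.

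Chaining these $k$ strict inequalities around the cycle via the transitivity of $\rtsord$ (\Cor{strictord}) would yield $\xiN{\dval{ip}_0}(\rt)\rtsord\xiN{\dval{ip}_0}(\rt)$, contradicting the irreflexivity of $\rtsord$ (also \Cor{strictord}). Hence no cycle exists, so $\RG{N}{\dval{dip}}$ is loop free; as $\dval{dip}$ and $N$ were arbitrary, every reachable network expression is loop free and the specification of AODV in \SubApp{spec} is loop free.

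I expect no genuine obstacle at this final step: all the substance resides in \Prop{inv_a}, whose proof is exactly where Assumptions~\ref{ass:pre} and~\ref{ass:underway}---ruling out premature route expiration---are actually consumed. What remains here is only the classical observation that a measure strictly increasing along every arc precludes directed cycles, so the argument is short and self-contained once \Prop{inv_a} and \Cor{strictord} are in hand.
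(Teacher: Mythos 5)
Your proposal is correct and is essentially the paper's own argument: the paper also derives the theorem as an immediate consequence of \Prop{inv_a} (strict increase of entry quality along valid routes under Assumptions~\ref{ass:pre} and~\ref{ass:underway}) combined with the irreflexivity and transitivity of $\rtsord$ from \Cor{strictord}, which is why the theorem is stated with only a \qed. You have merely spelled out the standard cycle-chasing contradiction that the paper leaves implicit, and every step checks out against the definition of the routing graph.
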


% We have thus proven that our model of T-AODV described in T-AWN is loop free. The proofs will need to be reviewed though to be trusted completely.

\subsection{Premature Route Expiration}\label{subapp:D}
By \Thm{loop free}, to establish loop freedom for AODV it suffices to show that premature route
expiration cannot occur. In view of the counterexample to loop freedom sketched in \Fig{loop1},
this condition appears necessary as well.
% To have a chance to prove loop freedom for AODV, we have to assume that no premature route
% expiration occurs for our specification.
In this appendix we do an attempt to prove an invariant that implies that premature route
expiration, and hence routing loops, do not occur in AODV\@.
In this process we formalise postulates on the real-time behaviour of the protocol that need to be
made in order to have any chance on success. Even when assuming these, our invariant turns out not
to be preserved by 5 lines of the AODV specification. As documented in
\SubApp{loops}, each of these violations gives rise to premature route expiration, and consequently to routing loops. Additionally, for our proof to go
through, we need to make an assumption (Assumption~\ref{ass:rreqvalid} below) that does not hold
for AODV\@. The key to modifying AODV into a
loop free variant is (1) to make a small change that validates Assumption~\ref{ass:rreqvalid}, and
(2) to change the above-mentioned 5 lines in such a way that the intended invariant is maintained.

\begin{assumption}\label{ass:rreqvalid}
When a RREQ message with originator \dval{oip} is sent by a node $\dval{sip}\neq\dval{oip}$,
the node \dval{sip} has a valid routing table entry to \dval{oip}.
\end{assumption}
We will mark results that depend on this assumption by \highlightspec{(A3)}.
When applying Assumption~\ref{ass:rreqvalid} we will also use that in the state $N^\dagger$ where
the transmission of the above RREQ message commences (by the execution of transition \tableref{bc} of \Tab{sos sequential})
the valid routing table entry to \dval{oip} satisfies \plat{$\fnltime_{N^\dagger}^{\dval{sip}}(\dval{oip}) >\nowN[N^\dagger]$}.
% and $\keyw{1hoplife}(\nhp{\dval{ip}},\nowN)$.
This follows because the forwarding of the RREQ message is always preceded by an application of
$\fnexprt$ in the same time slice (Line~\ref{aodv:line3} of \Pro{aodv}).

\begin{proposition}\label{prop:starcast}\rm\cite[Proposition~7.36c]{TR13}
  The sequence number of an originator appearing in a route request can never be greater than the
  originator's own sequence number.
\begin{equation}\label{inv:starcast_v}
	  N\ar{R:\starcastP{\rreq{*}{*}{*}{*}{*}{\oipc}{\osnc}{*}}}_{\dval{ip}}N'
          \ims \osnc\leq\xiN{\oipc}(\sn)
\end{equation}
\end{proposition}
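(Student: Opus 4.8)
The plan is to follow the standard invariant methodology used throughout \SubApp{invariants}: establish the statement by induction on the length of the execution reaching the transition $N\ar{R:\starcastP{m}}_{\dval{ip}}N'$, checking that it holds vacuously in every initial state (where no RREQ transmission is underway) and that it is respected whenever the transmitted message $m$ is a route request. By inspection of the processes of \SSubApp{ModelAODV} — and as already recorded in the definition of ``underway'' preceding Assumption~\ref{ass:underway} — a \textbf{*cast} of a RREQ stems from exactly two lines: the initiation of a fresh request in \Pro{aodv}, Line~\ref{aodv:line39}, and the forwarding of a received request in \Pro{rreq}, Line~\ref{rreq:line34}. These are the only two cases to treat. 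The two tools I would rely on, exactly as in \cite{TR13}, are \Prop{invarianti_itemiii} (a node's own sequence number is never unknown and never decreases) and \Prop{preliminaries} (every message handled by the main routine, other than a fresh \textbf{newpkt}, must have been \textbf{*cast} by some node earlier). Since a node's data (apart from $\now$) is frozen while it transmits, the reading $\xiN{\oipc}(\sn)$ at $N$ coincides with its value at the moment the broadcast was initiated.

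For the initiation case the sending node is itself the originator, so by \Prop{self-identification} we have $\oipc=\xiN{\dval{ip}}(\ip)=\dval{ip}$ (consistently with \Prop{ip=ipc} on the sender field), and the originator sequence number inserted into the RREQ in Line~\ref{aodv:line39} is precisely that node's own sequence number at the time of broadcast, i.e.\ $\osnc=\xiN{\oipc}(\sn)$. Hence $\osnc\leq\xiN{\oipc}(\sn)$ holds, with equality. I would double-check, by reading Line~\ref{aodv:line39} together with the preceding increment of $\sn$, that the value copied into the RREQ is the current own sequence number and that any increment occurs strictly before the broadcast, so that the frozen valuation at $N$ still records it.

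For the forwarding case the node is not the originator, and both $\oipc$ and $\osnc$ are copied verbatim from the incoming request: when forwarding, only the hop count and the destination sequence number are altered, never $\osnc$. By \Prop{preliminaries} the incoming request was \textbf{*cast} by some node at a strictly earlier state $\bar N$ on the path, carrying the same value $\osnc$. The induction hypothesis then gives $\osnc\leq\xiN[\bar N]{\oipc}(\sn)$, and monotonicity of the originator's own sequence number (\Prop{invarianti_itemiii}, applied along the segment from $\bar N$ to $N$) yields $\xiN[\bar N]{\oipc}(\sn)\leq\xiN{\oipc}(\sn)$, which closes the argument. Notably this step is robust to the timed setting: even if the request sits in node $\dval{ip}$'s incoming queue for a long while, monotonicity of $\oipc$'s own sequence number over that longer interval still delivers the inequality — the delays that cause trouble for the route-expiration invariants are harmless here.

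The main obstacle is bookkeeping rather than conceptual: one must confirm that the timed refinement opens no new route by which a RREQ acquires an inflated originator sequence number, and no way to decrease an originator's own sequence number. The newly introduced functions \hyperlink{exprt}{$\fnexprt$}, \hyperlink{settimert}{$\fnsettimert$} and \hyperlink{exprreqs}{$\fnexprreqs$} touch only routing-table and route-request timing data, leaving $\sn$ untouched (as already noted for \Prop{invarianti_itemiii}, there are no additional methods to modify a node's own sequence number) and never altering the $\osnc$ field of a request. Consequently the proof of \cite[Proposition~7.36c]{TR13} carries over mutatis mutandis, with only line numbers adapted.
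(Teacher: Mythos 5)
Your proof is correct and follows essentially the same route as the paper, which simply defers to the proof of \cite[Proposition~7.36c]{TR13} and cites the same three ingredients you use: \Prop{preliminaries} for tracing the forwarded RREQ back to an earlier \textbf{*cast}, \Prop{invarianti_itemiii} for monotonicity of the originator's own sequence number, and \Prop{self-identification} for the initiation case. Your closing check that the new timed functions never touch $\sn$ or the $\osnc$ field is precisely the ``mutatis mutandis'' content the paper leaves implicit.
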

\begin{proof}
Exactly as in \cite{TR13}, using Propositions~\ref{prop:preliminaries},~\ref{prop:invarianti_itemiii}
and~\ref{prop:self-identification}.
\qed
\end{proof}

\begin{proposition}\rm\label{prop:dsn}\cite[Proposition~7.37]{TR13}
\begin{enumerate}[(a)]
\item The sequence number of a destination appearing in a route reply can never be greater than the
  destination's own sequence number.
  \begin{equation}\label{eq:dsn_rrep}
    N\ar{R:\starcastP{\rrep{*}{\dipc}{\dsnc}{*}{*}}}_{\dval{ip}}N'
    \ims \dsnc\leq\xiN{\dipc}(\sn)
  \end{equation}
\item A known destination sequence number of a valid routing table entry can never be greater than the
  destination's own sequence number.
  \begin{equation}\label{eq:dsn}
    (\dval{dip},\dval{dsn},\kno,\val,*,*,*)\in\xiN{\dval{ip}}(\rt)
    \ims \dval{dsn}\leq\xiN{\dval{dip}}(\sn)
  \end{equation}
\end{enumerate}
\end{proposition}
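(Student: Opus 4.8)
The plan is to prove parts (a) and (b) together, as a single conjoint invariant established by induction along an arbitrary execution $N_0 \ar{\ell_1} N_1 \ar{\ell_2} \cdots$, in the style of \Rem{remark}. Treating them jointly is essentially forced: checking (a) at a line where a node regenerates a reply \emph{on behalf of} a destination requires (b) for that node, whereas checking (b) when a routing-table entry is installed from an incoming reply requires (a) for the node that cast that reply in an earlier state. In every initial network expression both implications hold vacuously: all routing tables are empty, so (b) has no instance, and no \textbf{*cast} of a RREP has occurred, so neither has (a). Crucially, the right-hand side of both implications is a node's own sequence number $\xiN{\dipc}(\sn)$, which by \Prop{invarianti_itemiii} never decreases; hence the only transitions that can endanger either statement are those that enlarge a left-hand side, namely the emission of a RREP (for (a)) or a modification of a routing table producing a \val-and-\kno\ entry (for (b)).

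For (b) I would invoke \Rem{remark}: a routing table is altered only by \fnupd, \fninv, \fnaddprecrt, \fnsettimert\ and \fnexprt. The functions new to the timed setting are dispatched first, and this is the part specific to \tawn. The function \fnsettimert\ touches only the lifetime field and \fnaddprecrt\ only the precursor set, so neither changes a destination sequence number, a flag, or the validity of an entry; and although \fninv\ and \fnexprt\ may raise a destination sequence number, they do so only while setting the entry's flag to \inval, so the resulting entry escapes the antecedent of (b), which demands \val. Thus only \fnupd\ can create a fresh \val-and-\kno\ entry. At each call site the installed destination sequence number is either $0$ --- in which case the definedness condition of \fnupd\ forces the flag to be \unkno, so the instance of (b) does not arise --- or it is copied from the \dsnc-field of a handled RREP, respectively the \osnc-field of a handled RREQ. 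Using \Prop{preliminaries} to locate the earlier \textbf{*cast} that produced the handled message, the required bound then follows from the induction hypothesis (a), respectively from \Prop{starcast}, and is transported to the current state by monotonicity of own sequence numbers.

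For (a) I would run through every line that casts a RREP. When the reply is generated by the destination itself ($\dipc=\ipc=\dval{ip}$), the field $\dsnc$ is set to the node's current own sequence number, so equality holds. When it is generated by an intermediate node on behalf of \dipc, $\dsnc$ equals $\sq[\dipc]{\ipc}$ read from a \val-and-\kno\ entry guaranteed by the guard preceding the reply, so the induction hypothesis (b) gives $\dsnc \le \xiN{\dipc}(\sn)$. When a node merely forwards a received RREP, $\dsnc$ is inherited from the incoming message, which satisfied (a) in the earlier state where it was cast (\Prop{preliminaries}), and monotonicity of $\xiN{\dipc}(\sn)$ preserves the bound. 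In no case is an own sequence number of \dipc\ decreased, so the inequality survives the step.

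The main obstacle, exactly as in \cite{TR13}, is not any single case but organising the mutual dependence between (a) and (b) inside one induction, together with confirming --- specific to \tawn --- that the newly introduced time-driven functions cannot reopen (b). The decisive observation is that \fnexprt\ increments a destination sequence number only in the very act of invalidating the entry, so the incremented value is never exposed as a \val-and-\kno\ one; beyond this the reasoning is, mutatis mutandis, that of \cite[Proposition~7.37]{TR13}, with only line numbers adjusted.
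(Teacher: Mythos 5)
Your proposal is correct and follows essentially the same route as the paper: the paper's proof of this proposition is literally ``exactly as in \cite{TR13}'', invoking \Prop{preliminaries}, \Prop{invarianti_itemiii}, \Prop{self-identification} and \Prop{starcast}, which are precisely the ingredients your joint induction uses (tracing each handled message back to its cast, monotonicity of own sequence numbers, and the RREQ bound), while your explicit dispatch of the new timed functions $\fnsettimert$, $\fnexprt$, $\fninv$ and $\fnaddprecrt$ is exactly the adaptation the paper's general methodology (cf.\ the remark on routing-table-modifying functions) prescribes for the timed setting.
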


\begin{proof}
Exactly as in \cite{TR13}, using Propositions~\ref{prop:preliminaries},~\ref{prop:invarianti_itemiii},
\ref{prop:self-identification} and \ref{prop:starcast}.
\qed
\end{proof}

\begin{proposition}\rm\label{prop:nhip lifetime}\highlightspec{(A3)}
Let $N^\ddagger$ be a state in which the own sequence number maintained by node \dval{dip} is
incremented to the value \dval{dsn}, and let $N$ be a state in which a node \dval{ip} has a
\emph{valid} routing table entry to \dval{dip} with next hop $\dval{nhip} \neq \dval{dip}$ and a
destination sequence number $\dval{dsn}'\geq\dval{dsn}$. Then $\nowN \geq \nowN[N^\ddagger]$ and
\begin{align}\label{eq:lifetime1}
  \dval{dip}\in\akd{\dval{nhip}} &\ims \lti{\dval{nhip}} \geq \nowN[N^\ddagger]\;, \\
  \dval{dip}\in\ikd{\dval{nhip}} &\ims \lti{\dval{nhip}} \geq \nowN[N^\ddagger]+\deltime\;.
\end{align}
\end{proposition}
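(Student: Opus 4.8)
The plan is to prove this as an invariant by induction over the reachable states, after fixing \dval{dip} and the value \dval{dsn}---which fixes the reference state $N^\ddagger$ and hence the constant $\nowN[N^\ddagger]$. Recall that $\now$ is a single global clock that increases monotonically, so ``later on the run'' and ``larger \now'' coincide, and that routing-table entries are modified only by \fnupd, \fninv, \fnaddprecrt, \fnsettimert and \fnexprt (\Rem{remark}); the induction checks each of these.

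First the monotonicity claim $\nowN \geq \nowN[N^\ddagger]$. Since \dval{ip}'s valid entry to \dval{dip} carries $\dval{dsn}' \geq \dval{dsn} \geq 1$, its sequence-number-status flag is \kno (by \Prop{rte} together with the update precondition $\pi_2 = 0 \Leftrightarrow \pi_3 = \unkno$), so \Prop{dsn}(b) gives $\dval{dsn}' \leq \xiN{\dval{dip}}(\sn)$. Thus in $N$ the own sequence number of \dval{dip} is at least \dval{dsn}; as this number never decreases (\Prop{invarianti_itemiii}) and first reaches \dval{dsn} at $N^\ddagger$, the state $N$ cannot precede $N^\ddagger$ on the run, whence $\nowN \geq \nowN[N^\ddagger]$.

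For the lifetime bounds I would carry both implications as a single joint invariant. Base case: initial routing tables are empty, so the antecedent fails. The inductive step splits according to whether the transition threatens the invariant by modifying \dval{nhip}'s entry to \dval{dip}, or by (re)establishing \dval{ip}'s valid entry to \dval{dip} with next hop \dval{nhip}. The maintenance cases are routine: \fnsettimert and \fnaddprecrt cannot lower the lifetime or falsify the bounds; \fnupd sets the lifetime to the maximum of the old and new values, preserving any lower bound; \fnexprt turns a valid entry (lifetime $\geq \nowN[N^\ddagger]$ by the induction hypothesis) into an invalid one with lifetime $\dval{lifetime}+\deltime \geq \nowN[N^\ddagger]+\deltime$; and \fninv resets the lifetime to $\now+\deltime$, which is $\geq \nowN[N^\ddagger]+\deltime$ because $\now \geq \nowN[N^\ddagger]$ at every reachable state where the antecedent holds, by the monotonicity claim just proved. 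This is exactly the mechanism producing the $+\deltime$ gap between the valid and the invalid bound.

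The real work, and the main obstacle, is the creation case: when \dval{ip} performs an \fnupd that installs a valid entry to \dval{dip} with next hop $\dval{nhip} \neq \dval{dip}$ and sequence number $\geq \dval{dsn}$. Such an update is triggered by a RREQ with originator \dval{dip} or a RREP with destination \dval{dip}, sent by $\dval{nhip}=\dval{sip}$ in an earlier state $N^\dagger$. Here Assumption~\ref{ass:rreqvalid} (for the RREQ case) and \Prop{msgsendingstartuni} (for the RREP case) guarantee that at $N^\dagger$ the sender \dval{nhip} has a \emph{valid} entry to \dval{dip} with $\fnltime_{N^\dagger}^{\dval{nhip}}(\dval{dip}) > \nowN[N^\dagger]$, the strict inequality coming from the \fnexprt applied in the same time slice (Line~\ref{aodv:line3} of \Pro{aodv}). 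By \Prop{msgsending} the sequence number of that entry at $N^\dagger$ is itself $\geq \dval{dsn}$, so the monotonicity claim applied at $N^\dagger$ yields $\nowN[N^\dagger] \geq \nowN[N^\ddagger]$, and hence $\fnltime_{N^\dagger}^{\dval{nhip}}(\dval{dip}) > \nowN[N^\ddagger]$; the valid bound already holds at $N^\dagger$. I then transport this bound forward from $N^\dagger$ to the current state using precisely the maintenance cases above. The delicate points I expect to fight with are: (i) the invariant couples two nodes' tables, so a change to \dval{nhip}'s entry forces re-examination of every \dval{ip} pointing at it, and conversely; and (ii) verifying that none of the forward-carrying steps between $N^\dagger$ and $N$ can delete \dval{nhip}'s entry and reintroduce a staler one. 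This is where the statement is genuinely tied to Assumption~\ref{ass:rreqvalid}, and where the offending lines of the specification documented in \SubApp{loops} would otherwise break the induction.
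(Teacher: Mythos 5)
Your overall architecture coincides with the paper's: first derive $\nowN \geq \nowN[N^\ddagger]$ from \Prop{dsn}(b) and \Prop{invarianti_itemiii}, then run an induction over the five routing-table-modifying functions, with the only non-routine case being the \fnupd at \dval{ip} triggered by a RREQ (Pro.~\ref{pro:rreq}, Line~\ref{rreq:line6}) or RREP (Pro.~\ref{pro:rrep}, Line~\ref{rrep:line2}), resolved via Assumption~\ref{ass:rreqvalid} resp.\ \Prop{msgsendingstartuni} at the state $N^\dagger$ where the transmission commenced. Your maintenance cases for \fninv, \fnexprt, \fnsettimert, \fnaddprecrt and \fnupd match the paper's. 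However, there is a recurring gap: twice you need the sequence-number-status flag of an entry to be $\kno$ in order to invoke \Prop{dsn}(b), and twice the justification you give does not hold. For \dval{ip}'s entry you argue from $\dval{dsn}'\geq 1$ and ``the update precondition $\pi_2=0\Leftrightarrow\pi_3=\unkno$''. That biconditional constrains only the route \emph{passed to} \fnupd; it is not an invariant of routing tables. Indeed the fifth case of \fnupd produces an entry $\dval{nr}'$ with flag $\pi_3(\dval{r})=\unkno$ but sequence number $\pi_2(\dval{s})$ inherited from the existing entry, which may well be positive, so ``$\dval{dsn}'\geq 1$ implies $\kno$'' is false. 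The correct derivation—which you have the hypotheses for—is the paper's: if the flag were $\unkno$ then $\dval{hops}=1$ by \Prop{rte}(b) and hence $\dval{nhip}=\dval{dip}$ by \Prop{rte}(c), contradicting $\dval{nhip}\neq\dval{dip}$.

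The same issue breaks your RREQ sub-case. There you apply ``the monotonicity claim'' to \dval{nhip}'s own entry for \dval{dip} at $N^\dagger$, whose sequence number is $\geq\dval{dsn}$ by \Prop{msgsending}(a). But that application again needs either the $\kno$ flag or a next hop different from \dval{dip} for \emph{that} entry, and neither Assumption~\ref{ass:rreqvalid} nor \Prop{msgsending} provides it: \dval{nhip} may be a one-hop neighbour of \dval{dip} holding an $\unkno$-flagged entry with a high inherited sequence number. (In the RREP sub-case you are safe, because \Prop{msgsendingstartuni} explicitly delivers $\sqnf{\xiN{\ipc}(\rt)}{\dipc}=\kno$.) The paper avoids touching \dval{nhip}'s table for the timing argument altogether: it introduces the state $N^\#$ in which \dval{dip} \emph{initiated} the route request and thereby incremented its own sequence number to the value $\dval{dsn}'\geq\dval{dsn}$ carried in the message, concludes $\nowN[N^\#]\geq\nowN[N^\ddagger]$ from \Prop{invarianti_itemiii}, and uses $\nowN[N^\dagger]\geq\nowN[N^\#]$ since forwarding follows initiation. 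You should substitute that argument; with these two repairs your proof goes through and is essentially the paper's.
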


\begin{proof}
Using proof by contradiction, we show that  the sequence number $\dval{dsn}'$ is known, i.e.,\
$\sqnf{\xiN{\dval{ip}}(\rt)}{\dval{dip}}\linebreak[2]=\kno$. If we were to assume
$\sqnf{\xiN{\dval{ip}}(\rt)}{\dval{dip}}=\unkno$, then $\dhops{\xiN{\dval{ip}}(\rt)}{\dval{dip}} = 1$ and hence 
$\dval{nhip}=\nhop{\xiN{\dval{ip}}(\rt)}{\dval{dip}} = \dval{dip}$, both by \Prop{rte};
a contradiction to the assumption $\dval{nhip} \neq \dval{dip}$.
%Hence, that $\nowN \geq \nowN[N^\ddagger]$ follows from \Prop{dsn}(b).
Hence
\[
  \xiN{\dval{dip}}(\sn)\geq \dval{dsn}' \geq \dval{dsn} = \xiN[N^{\ddagger}]{\dval{dip}}(\sn)\;,
\]
where the first step follows from \Eq{dsn}.
Since sequence numbers increase over time (\Prop{invarianti_itemiii}) and $N^{\ddagger}$ is the state where 
$\dval{dsn}$ is set, we get $\nowN \geq \nowN[N^\ddagger]$.

The invariants hold in initial states, as all routing tables are empty.
Applications of $\fnaddprecrt$ and \hyperlink{settimert}{$\fnsettimert$}
cannot invalidate the invariants. Neither can applications of \hyperlink{invalidate}{\fninv} or \hyperlink{exprt}{$\fnexprt$}
to the routing table of \dval{ip}, or an \hyperlink{update}{\fnupd} to the routing table of \dval{nhip}.
An application of $\fnexprt$ to the routing table of \dval{nhip} that invalidates the
antecedent $\dval{dip}\in\akd{\dval{nhip}}$ but validates $\dval{dip}\in\ikd{\dval{nhip}}$ always
results in a state where the lifetime of the routing table entry is extended by \deltime.
An application of \fninv\ to the routing table of \dval{nhip} that invalidates the
antecedent $\dval{dip}\in\akd{\dval{nhip}}$ always results in a state where
\plat{$\lti{\dval{nhip}} = \nowN+\deltime \geq \nowN[N^\ddagger]+\deltime$}.
It remains to examine all applications of {\fnupd} to the routing table of \dval{ip},
restricting attention to updates that change more than precursors.\vspace{-1ex}
\begin{description}
 \item[Pro.~\ref{pro:aodv}, Lines~\ref{aodv:line10},~\ref{aodv:line14},~\ref{aodv:line18}:]
   After these updates the condition $\dval{nhip} \neq \dval{dip}$ is no longer met.
\item[Pro.~\ref{pro:rreq}, Line~\ref{rreq:line6}:]
   If this update results in a change to the routing table, beyond the addition of precursors,
   afterwards\vspace{-2pt} $\dval{nhip}:=\fnnhop_N^{\dval{ip}}(\dval{dip})=\xiN{\dval{ip}}(\sip)\neq\dval{dip}:=\xiN{\dval{ip}}(\oip)$
   and $\dval{dsn}':=\fnsqn_N^{\dval{ip}}(\dval{dip})=\xiN{\dval{ip}}(\osn)$
   are taken from the sender and sequence-number fields of the incoming RREQ message that is being processed here.
   (The inequation of \dval{nhip} and \dval{dip} is an assumption.)
   Let $N^\#$ be the state in which node \dval{dip} initiated this route request, and thus
   incremented its own sequence number to the value $\dval{dsn}'\geq \dval{dsn}$.
   Then $\nowN[N^\#] \mathbin\geq \nowN[N^\ddagger]$ by \Prop{invarianti_itemiii}.
   By Propositions~\ref{prop:preliminaries}(\ref{it:preliminariesi}) and~\ref{prop:ip=ipc}, the RREQ
   message must have been forwarded by $\dval{nhip}$. Let $N^\dagger$ be the state in which the
   transmission of the forwarded RREQ message by node \dval{nhip} commenced. Obviously,
   $\nowN[N^\dagger] \mathbin\geq \nowN[N^\#]$.
   Assumption~\ref{ass:rreqvalid} yields \plat{$\dval{dip}\in\akd[N^\dagger]{\dval{nhip}}$}.
   Before node \dval{nhip} forwarded the route request (by executing Line~\ref{rreq:line34} of
   Pro.~\ref{pro:rreq}), and in the same time slice,
   it must have executed Line~\ref{aodv:line3} of Pro.~\ref{pro:aodv}, so that
   \plat{$\fnltime_{N^\dagger}^{\dval{nhip}}(\dval{dip}) \mathbin> \nowN[N^\dagger]$}.
   Hence $\fnltime_{N^\dagger}^{\dval{nhip}}(\dval{dip}) \mathbin> \nowN[N^\ddagger]$.
   Further modifications to the routing table of \dval{nhip} (by $\fnaddprecrt$, $\fnsettimert$,
   $\fnupd$, $\fnexprt$ and $\fninv$) between states $N^\dagger$ and $N$ preserve the invariant in the ways surveyed above.
 \item[Pro~\ref{pro:rrep}, Line~\ref{rrep:line2}:]
   If this update results in a change to the routing table, beyond the addition of precursors,
   afterwards $\dval{nhip}:=\fnnhop_N^{\dval{ip}}(\dval{dip})=\xiN{\dval{ip}}(\sip)\neq\dval{dip}=\xiN{\dval{ip}}(\dip)$ 
   and $\dval{dsn}':=\fnsqn_N^{\dval{ip}}(\dval{dip})=\xiN{\dval{ip}}(\dsn)$
   are taken from the sender and sequence-number fields of the incoming RREP message that is being processed here.
   By Propositions~\ref{prop:preliminaries}(\ref{it:preliminariesi}) and~\ref{prop:ip=ipc}, this RREP
   message must have been sent before by $\dval{nhip}$; say its transmission started in state $N^\dagger$.
%    In state $N^\dagger$, node $\dval{nhip}$ had a valid routing table entry to $\dval{dip}$
%    with the known sequence number $\fnsqn^{\dval{nhip}}_{N^\dagger}(\dval{dip}) = \dval{dsn}'$, by
   \Prop{msgsendingstartuni} yields \plat{$\dval{dip}\in\akd[N^\dagger]{\dval{nhip}}$} and\vspace{-1ex}
   \[\fnsqn_{N^\dagger}^{\dval{nhip}}(\dval{dip}) \mathbin= \dval{dsn}' \wedge\vspace{-1ex}
   \sqnf{\xiN[N^\dagger]{\dval{nhip}}(\rt)}{\dval{dip}}\mathbin=\kno \wedge
   \fnltime_{N^\dagger}^{\dval{nhip}}(\dval{dip}) \mathbin> \nowN[N^\dagger].\]
   Hence, by \Prop{dsn}(b), $\nowN[N^\dagger] \mathbin\geq \nowN[N^\ddagger]$.
   So $\fnltime_{N^\dagger}^{\dval{nhip}}(\dval{dip}) \mathbin> \nowN[N^\ddagger]$.
   Further modifications to the routing table of \dval{nhip} (by $\fnaddprecrt$, $\fnsettimert$,
   $\fnupd$, $\fnexprt$ and $\fninv$) between states $N^\dagger$ and $N$ preserve the invariant in the ways surveyed above.
\qed
\end{description}
\end{proof}

We can only show the absence of premature route expiration
under further assumptions. In particular, we postulate the
following relations between time constants.
Henceforth the timing parameters $\nodetraversal$ and $\traversal$ will be abbreviated by
{\renewcommand{\nodetraversal}{\keyw{NODE\_TT}}%
\renewcommand{\traversal}{\keyw{NET\_TT}}%
$\nodetraversal$ and $\traversal$.
\begin{align}
0 \leq \nodetraversal & \leq  \traversal                      \label{eq:node-net}\stepcounter{Hequation}\\
0 \leq \valtime & <  \deltime - \nodetraversal - \traversal   \label{eq:valtime} \stepcounter{Hequation}\\
0 \leq \myroute & <  \deltime - \nodetraversal - \traversal   \label{eq:myroute} \stepcounter{Hequation}\\
3\cdot\traversal & < \deltime + \nodetraversal        \label{eq:traversal}\stepcounter{Hequation}
\end{align}}%
These conditions are in line with the RFC:
\cite[Section 10]{rfc3561} recommends:\vspace{-2ex}
\begin{center}
\begin{tabular}{l@{\qquad}r}
\nodetraversal & $40\,ms$\phantom{${}^{{26}}$}\\
\traversal & $2\cdot \nodetraversal\cdot \keyw{NET\_DIAMETER}$\footnotemark\\
%$2.800\,ms$\\
\valtime   & $10.000\,ms\footnotemark$\\
\myroute   & $2  \cdot \valtime$\phantom{${}^{{26}}$}\\
\deltime   & $5 \cdot \valtime$\phantom{${}^{{26}}$}
\end{tabular}
\end{center}
\addtocounter{footnote}{-1}
\footnotetext{The default value of \keyw{NET\_DIAMETER} is 35,
    yielding a {\nodetraversal} of 2800\,ms.}
\addtocounter{footnote}{1}
\footnotetext{When link-layer indications are used to detect link breakages
(rather than Hello messages) \cite[Section 10]{rfc3561}, as we assume here; otherwise 3000\,ms.}

%\comsR{skip?} for example Equation \eqref{eq:valtime} is implied since
%\commR{centred}%\centering
%\begin{center}
%``\deltime = K * max (\valtime, \keyw{HELLO\_INTERVAL})\\(K=5 is recommended).''~\cite{rfc3561}
%\end{center}
%\comP{Internal note: \\
%(19) should be satisfied, the RFC states:
%``DELETE\_PERIOD = K * max (ACTIVE\_ROUTE\_TIMEOUT, HELLO\_INTERVAL)
%                         (K = 5 is recommended).''\\
% With ``2 * ACTIVE\_ROUTE\_TIMEOUT'' one can probably argue that (20) holds as well.
%}

\begin{proposition}\rm\label{prop:B}~\vspace{-1ex}
  \begin{enumerate}[(a)]
  \item The expiration time of a valid route is always smaller than\\ $\nowN + \deltime - \nodetraversal - \traversal$.
\begin{align}\label{eq:vD_leq_time}\stepcounter{Hequation}
\renewcommand{\nodetraversal}{\keyw{NODE\_TT}}%
\renewcommand{\traversal}{\keyw{NET\_TT}}%
\begin{array}{cl}
     &\dval{dip}\mathbin\in\akd{\dval{ip}}\\
      \ims& \lti{\dval{ip}} < \nowN {+} \deltime - \nodetraversal - \traversal
    \end{array}   
\end{align}
  \item The lifetime recorded in a route reply message is always smaller than\\ $\deltime - \nodetraversal - \traversal$.
\begin{align}\label{eq:ltime_leq_time}\stepcounter{Hequation}
\renewcommand{\nodetraversal}{\keyw{NODE\_TT}}%
\renewcommand{\traversal}{\keyw{NET\_TT}}%
 \begin{array}{cl}
     & N\ar{R:\starcastP{\rrep{*}{*}{*}{*}{\dval{ltime}}{*}}}_{\dval{ip}}N'\\
     \ims & \dval{ltime} < \deltime - \nodetraversal - \traversal
       \end{array}
    \end{align}
  \end{enumerate}
\end{proposition}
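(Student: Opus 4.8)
The plan is to prove both parts of Proposition~\ref{prop:B} simultaneously by induction over the reachable states, following the methodology of \Rem{remark}: verify the invariants in all initial states, and then show they are preserved by every transition, in particular by each routing-table-modifying operation ($\fnupd$, $\fninv$, $\fnaddprecrt$, $\fnsettimert$, $\fnexprt$) and by the passage of time. Part~(a) is a statement about the lifetime field $\lti{\dval{ip}}$ of valid entries, while part~(b) is a statement about the \dval{ltime}-field carried by an outgoing RREP message; since RREP messages derive their lifetime either from the constant $\myroute$ (destination case) or by subtracting the current time from an entry's expiration time (intermediate case, \Pro{rrep}, Line~\ref{rrep:line26}), part~(b) will follow from part~(a) together with the bound~\eqref{eq:myroute} on $\myroute$. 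I would therefore treat part~(a) as the primary invariant and derive part~(b) from it at the \textbf{*cast}-points.

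First I would check the initial states: all routing tables are empty, so the antecedent $\dval{dip}\in\akd{\dval{ip}}$ of~\eqref{eq:vD_leq_time} is vacuously false, and no RREP is sent, so both invariants hold trivially. Next I would observe that the passage of time cannot falsify~\eqref{eq:vD_leq_time}: a $\tick$ increments $\nowN$ by one, which only relaxes the right-hand side $\nowN+\deltime-\nodetraversal-\traversal$ while leaving $\lti{\dval{ip}}$ unchanged (wait actions do not alter data other than \now, by \Prop{time determinism}). The operations $\fnaddprecrt$ and $\fninv$ do not increase a lifetime, and $\fnexprt$ either deletes the entry or, when invalidating, sets a new time which need not be re-examined because the invariant concerns only \emph{valid} entries; so the essential work is to examine every application of $\fnupd$ and $\fnsettimert$ that sets or raises the expiration time of a valid entry. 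For each such line in the AODV processes I would read off the value assigned to \dval{ltime} and confirm it stays below $\nowN+\deltime-\nodetraversal-\traversal$.

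The main obstacle, and the crux of the proof, lies in these $\fnupd$/$\fnsettimert$ cases. The lifetimes assigned throughout the specification are of several forms: $\now+\valtime$ (e.g.\ \Pro{aodv}, Lines~\ref{aodv:line10},~\ref{aodv:line14},~\ref{aodv:line18}), $\now+\deltime$ for invalidated entries, the MinimalLifetime $\now+2\cdot\traversal-2\cdot(\hops+1)\cdot\nodetraversal$ for reverse routes (\Pro{rreq}, Line~\ref{rreq:line7}), and $\now+\dval{ltime}$ when a RREP is processed (\Pro{rrep}, Line~\ref{rrep:line2}), where $\dval{ltime}$ is the lifetime carried by the message. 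The cases giving the tightest constraints are the reverse-route update with the MinimalLifetime expression, which must be bounded using~\eqref{eq:node-net} and~\eqref{eq:traversal} (here $3\cdot\traversal<\deltime+\nodetraversal$ is exactly what is needed to dominate $2\cdot\traversal$ against $\deltime-\nodetraversal-\traversal$), and the RREP-processing update, which requires the bound on $\dval{ltime}$ supplied by part~(b)~\eqref{eq:ltime_leq_time}. This is where the two parts are genuinely interdependent: proving~(a) at Line~\ref{rrep:line2} needs~(b) for the \emph{incoming} message, and proving~(b) at an intermediate-RREP \textbf{*cast} needs~(a) for the entry whose residual lifetime is copied. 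I expect to resolve this circularity by a single simultaneous induction on the length of the execution, so that at the step handling a received RREP the invariant~(b) is already available for the earlier \textbf{*cast} that produced it (invoking Proposition~\ref{prop:preliminaries}(\ref{it:preliminariesi}) to locate that sending transition), and by verifying that the $\myroute$- and $\valtime$-based lifetimes respect the linear constraints~\eqref{eq:valtime} and~\eqref{eq:myroute} by direct arithmetic.
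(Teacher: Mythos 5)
Your proposal is correct and follows essentially the same route as the paper: a simultaneous induction over executions, reducing to the applications of $\fnupd$ and $\fnsettimert$, bounding the $\valtime$-, $\myroute$- and MinimalLifetime-based assignments via the arithmetic constraints \eqref{eq:valtime}, \eqref{eq:myroute} and \eqref{eq:traversal}, and breaking the (a)/(b) circularity by tracing each incoming RREP back to its earlier \textbf{*cast} via Proposition~\ref{prop:preliminaries}(\ref{it:preliminariesi}). The arithmetic you sketch for the MinimalLifetime case (reducing to $3\cdot\traversal<\deltime+\nodetraversal$ using $\dval{hops}\geq 0$) is exactly the step the paper takes.
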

\renewcommand{\nodetraversal}{\keyw{NODE\_TT}}%
\renewcommand{\traversal}{\keyw{NET\_TT}}%

\begin{proof}
We prove the two statements by simultaneous induction.
\vspace{-1ex}
  \begin{enumerate}[(a)]
  \item The invariant holds in the initial states, as all routing tables are empty.
    The functions \hyperlink{invalidate}{$\fninv$}, $\fnaddprecrt$,
    and \hyperlink{exprt}{$\fnexprt$} cannot increase the lifetime of a valid routing table entry, without invalidating the entry.
    It therefore suffices to check whether the invariant is preserved under
    the applications of \hyperlink{update}{\fnupd} and \hyperlink{settimert}{$\fnsettimert$} in Processes~\ref{pro:aodv}--\ref{pro:rerr}.
    (\Pro{queues} does not use these functions.)
\begin{description}
\item[%
  Pro.~\ref{pro:aodv}, Lines~\ref{aodv:line10},~\ref{aodv:line14},~\ref{aodv:line18},~\ref{aodv:line26b},~\ref{aodv:line26c};
  Pro.~\ref{pro:pkt}, Lines~\ref{pkt2:line6a},~\ref{pkt2:line6b},~\ref{pkt2:line6c},~\ref{pkt2:line6d};]~\\
{\bf
  Pro.~\ref{pro:rreq}, Line~\ref{rreq:line6};
  Pro.~\ref{pro:rrep}, Line~\ref{rrep:line12d}:}
If these potential changes to routing table entries increase  the lifetime of a node at all, the expiration time is set to
   $\nowN+\valtime$. That the invariant is preserved follows from \Eq{valtime}.
 \item[Pro.~\ref{pro:aodv}, Line~\ref{aodv:line35b}:]
   As this affects an invalid route ($\dval{dip}\mathbin\in\qD{\dval{store}}{-}\akD{\dval{rt}}$, by Line~\ref{aodv:line34}), the invariant is preserved.
 \item[Pro.~\ref{pro:pkt}, Line~\ref{pkt2:line19}:]
    As this affects an invalid route, the invariant is preserved.
 \item[Pro.~\ref{pro:rreq}, Line~\ref{rreq:line7}:]
   Here $\lti{\dval{ip}}$ is set to
   $\nowN+2\cdot\traversal\linebreak[1]- 2\cdot(\dval{hops}{+}1)\cdot\nodetraversal$.
   Since $\dval{hops}\in\NN$, the result follows from \Eq{traversal}.
 \item[Pro~\ref{pro:rrep}, Line~\ref{rrep:line2}:]
   Here $\lti{\dval{ip}}$ is set to $\nowN+\dval{ltime}$, where the value \dval{ltime} stems from an
   incoming RREP message (Pro.~\ref{pro:aodv}, Line~\ref{aodv:line2}).
   By \Prop{preliminaries}(\ref{it:preliminariesi}), this RREP message must have be sent before by
   some node. By induction, using \Eq{ltime_leq_time}, the invariant holds.
\end{description}
  \item We check all occasions in Processes~\ref{pro:aodv}--\ref{pro:queues} where a route reply is sent.
    \begin{description}
      \item[\Pro{rreq}, Line~\ref{rreq:line14a}:]
        Here \dval{ltime} is set to $\myroute$, so the result follows from \Eq{myroute}.
      \item[Pro.~\ref{pro:rreq}, Line~\ref{rreq:line26}:]
        \dval{ltime} is set to $\lti{\dval{ip}}-\nowN$.
        Hence the invariant holds by induction, using statement (a) of the lemma.
      \item[Pro.~\ref{pro:rrep}, Line~\ref{rrep:line13}:]
        Here the value \dval{ltime} is taken from an incoming RREP message.
        By \Prop{preliminaries}(\ref{it:preliminariesi}), this RREP message must have be sent before by
        some node. Hence the statement follows by induction.
      \qed
     \end{description}
  \end{enumerate}
\end{proof}

%% to be moved somewhere ... later
%\newcommand{\BB}{%
%    \ensuremath{%
%        \mathop{{\rm I}\!{\rm B}}%
%        \nolimits%
%    }%
%}

\newcommand{\prepkt}[1][\dval{nhip}]{\keyw{pkt}_{N}^{#1}(\dval{dip})}
\newcommand{\atime}[1][\dval{nhip}]{\keyw{atime}_N^{#1}(\dval{dip})}

As indicated in \SSect{lf}, we now capture realistic network scenarios by assuming
that the transmission time of a message plus the period it spends in the queue of incoming messages
of the receiving node is bounded by \nodetraversal.
Since {\nodetraversal} ``is a conservative estimate
   of the average one hop traversal time for packets and should include
   queuing delays, interrupt processing times and transfer times''~\cite[Sect.~10]{rfc3561}, 
 the following postulate makes sense.

\spnewtheorem{postulate}{Postulate}{\bfseries}{\rm}
\begin{postulate}\label{post:nodetraversal}
Let $N^\dagger$ be a state in which the transmission of a message
 to \dval{ip} is initiated,
and let $N$ be the state in which the message leaves the queue of incoming messages of node \dval{ip}.
Then $\nowN \leq \dval{now}_{N^\dagger}+\nodetraversal$.
\end{postulate}
Likewise, we assume that the period a route request travels through the network is bounded by \traversal.
\begin{postulate}\label{post:nettraversal}
Let $N^\ddagger$ be a state in which a route request is initiated, and
$N$ a state in which a corresponding RREQ message
leaves the queue of incoming messages of an arbitrary node \dval{ip}.
Then $\nowN \leq \dval{now}_{N^\ddagger}+\traversal$.
\end{postulate}
Together with Assumption~\ref{ass:rreqvalid}, these postulates are strong enough to ensure the validity of Assumption~\ref{ass:underway}.

A similar statement as \Post{nettraversal} could be set up for route replies; 
it is, however, not needed for the current analysis.

\begin{theorem}\label{thm:A3impliesA2}\rm\highlightspec{(A3)}
Assumption~\ref{ass:underway} holds.
\end{theorem}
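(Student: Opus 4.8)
The plan is to prove Assumption~\ref{ass:underway} separately for the two message kinds it mentions, observing that both reduce to a single survival property of a routing-table entry. First I would fix the state $N^\dagger$ at which the sender $\dval{sip}$ initiates the transmission of the message in question: the $\tau$-transition stemming from Rule~\tableref{uc} in the case of a RREP with destination $\dval{dip}$, and from Rule~\tableref{bc} in the case of a RREQ with originator $\dval{dip}$. The first step is to show that already at $N^\dagger$ the node $\dval{sip}$ has a \emph{valid} routing-table entry to $\dval{dip}$ whose lifetime strictly exceeds $\nowN[N^\dagger]$. For the RREP this is precisely \Prop{msgsendingstartuni}, which yields validity together with $\fnltime_{N^\dagger}^{\dval{sip}}(\dval{dip}) > \nowN[N^\dagger]$. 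For the RREQ it is Assumption~\ref{ass:rreqvalid}, combined with the observation recorded below that assumption that the forwarding of a RREQ is always preceded, in the same time slice, by an application of $\fnexprt$ (Line~\ref{aodv:line3} of \Pro{aodv}), so that again $\fnltime_{N^\dagger}^{\dval{sip}}(\dval{dip}) > \nowN[N^\dagger]$. This is the only place the $(A3)$ marker is used.

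The core of the argument is then an auxiliary invariant: once a node holds a valid entry to $\dval{dip}$ with lifetime exceeding the current value of $\now$, that entry cannot be erased from the routing table before $\now$ has advanced by at least $\deltime$. Following \Rem{remark}, I would establish this by inspecting every operation that can modify a routing table. The functions $\fnupd$ and $\fnsettimert$ only raise a lifetime (they take a maximum) and never delete, and $\fnaddprecrt$ touches precursors only; so the sole source of deletion is $\fnexprt$. By its definition, $\fnexprt$ deletes a valid entry with lifetime $L$ outright only when $L+\deltime \le \now$, and otherwise merely invalidates it, setting its new lifetime to $L+\deltime$; an entry invalidated instead by $\fninv$ is given lifetime $\now+\deltime$; and an invalid entry is removed by $\fnexprt$ only once its (invalid) lifetime has fallen at or below $\now$. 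While the entry stays valid its lifetime never drops below the value $L_0 > \nowN[N^\dagger]$ it had at $N^\dagger$, so in each of these deletion routes—including the $\neg\keyw{1hoplife}$ branch of $\fnexprt$, which also resets the new lifetime to $L+\deltime$—the deleting state has a clock value at least $\nowN[N^\dagger]+\deltime$. Since $\now$ increases monotonically and is never reset by AODV, the entry stays in $\kd{\dval{sip}}$ at every state $N$ with $\nowN < \nowN[N^\dagger]+\deltime$.

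Finally I would bound the underway period. By the definition of `underway' and the remark that the receiving node's processing occurs in the same time slice in which the message leaves its incoming queue, every state $N$ at which the message is still underway satisfies $\nowN \le \nowN[N^\dagger]+\nodetraversal$ by Postulate~\ref{post:nodetraversal}—a one-hop bound, since $\dval{sip}$ is the immediate sender. The constant constraint \Eq{valtime} gives $\deltime > \nodetraversal+\traversal \ge \nodetraversal$, hence $\nowN < \nowN[N^\dagger]+\deltime$, and the auxiliary invariant delivers $\dval{dip}\in\kd{\dval{sip}}$, as required. I expect the main obstacle to lie in the bookkeeping of the auxiliary invariant: one must check that no interleaving of updates, link-break invalidations ($\fninv$, resetting a lifetime to $\now+\deltime$), re-validations ($\fnupd$) and expirations ($\fnexprt$) can bring the time-to-deletion below $\deltime$, and in particular that the two distinct invalidation triggers of $\fnexprt$—own-lifetime expiry versus next-hop $1$-hop expiry—both respect the bound. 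Postulate~\ref{post:nettraversal} is not needed for this per-hop argument; the single-hop Postulate~\ref{post:nodetraversal} suffices.
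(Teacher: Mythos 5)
Your proof follows essentially the same route as the paper's: fix the initiation state $N^\dagger$, obtain a valid entry with lifetime exceeding $\nowN[N^\dagger]$ via \Prop{msgsendingstartuni} (RREP) or Assumption~\ref{ass:rreqvalid} plus the remark following it (RREQ), bound the underway period by $\nodetraversal$ using \Post{nodetraversal}, and conclude from $\nodetraversal < \deltime$ (Eq.~\Eq{valtime}) that the entry cannot have been erased. Your explicit case analysis of $\fnupd$, $\fnsettimert$, $\fnaddprecrt$, $\fninv$ and $\fnexprt$ merely spells out what the paper compresses into the one-line observation that invalidation always sets or extends the expiration time by $\deltime$; the argument is correct and matches the paper's.
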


\begin{proof}
Suppose in state $N$ a RREP message that establishes a route to \dval{dip}, sent
by a node $\dval{sip}\neq\dval{dip}$, is underway to a node \dval{ip}.
Let $N^\dagger$ be the state in which the transmission of the message was initiated.
By \Post{nodetraversal}, $\nowN \leq \dval{now}_{N^\dagger}+\nodetraversal$.
   In state~$N^\dagger$ node $\dval{sip}$ had a valid routing table entry to $\dval{dip}$, with an
   expiration time larger than $\nowN[N^{\dagger}]$, by \Prop{msgsendingstartuni},
   i.e., $\dval{dip}\in\akd[N^{\dagger}]{\dval{sip}}$ and  $\ltiN[\dval{dip}]{\dval{sip}}{N^{\dagger}} > \nowN[N^\dagger]$.
   Upon invalidation of an entry, the expiration time is always either set to $\now+\deltime$ or
   extended by $\deltime$.
   Since $\nodetraversal < \deltime$, by \Eq{valtime},
   the routing table entry for \dval{dip} cannot have expired in state $N$.

The case for a RREQ message proceeds likewise, but using Assumption~\ref{ass:rreqvalid} and the remark following it, instead of \Prop{msgsendingstartuni}.
% Suppose in state $N$ a RREQ message with originator \dval{oip} and originator sequence number \dval{osn},
% sent by a node $\dval{nhip}\neq\dval{oip}$, is underway to a node \dval{ip}.
% Let $N^\dagger$ be the state in which the message was sent (by node \dval{nhip}),
% and $N^\ddagger$ the state in which the route request was initiated (by node \dval{oip}, which at
% that occasion incremented its own sequence number to \dval{osn}). \Prop{RREQ lifetime} yields
%    $$\dval{oip}\in\akd[N^\dagger]{\dval{nhip}} \vee
%    (\dval{oip}\in\ikd[N^\dagger]{\dval{nhip}} \wedge \keyw{ltime}_{N^\dagger}^\dval{nhip}(\dval{oip}) \geq \nowN[N^\ddagger]+\deltime)\;.$$
% By \Post{nettraversal} and \Eq{valtime},
% $\nowN \leq \nowN[N^\ddagger] + \traversal \leq \nowN[N^\ddagger]+\deltime$,
% so the route to \dval{oip} maintained by \dval{nhip} has not yet expired in state $N$.
\qed
\end{proof}

Write $\prepkt$ if a data packet for destination \dval{dip} is \hyperlink{underway}{underway} 
(from some node \dval{sip}) to
node \dval{nhip} conform the definition given prior to Assumption~\ref{ass:underway}.
%The predicate $\prepkt$ is defined as
%\[
%\prepkt := 
%\exists m \in \xi_{N}^{\dval{ip}}(\msgs)\mbox{ such that } m = \pkt{*}{*}{*}\;.
%% officially stars are also existential quantifier
%\]
Moreover, if $\prepkt$, write $\atime$ 
for the latest possible time the next data packet destined to
\dval{dip} will arrive at node \dval{nhip} confirm the prediction of \Post{nodetraversal}.
It follows that\vspace{-1ex}
\setcounter{Hequation}{29}
\begin{equation}\label{eq:atime}
\prepkt
\ims \nowN \leq \atime \leq \nowN {+} \nodetraversal\;.
\end{equation}
%\comsP{we could strengthen this theorem since we know a minimum time a packet travels}
The following ``intended theorem'' ensures that also Assumption~\ref{ass:pre} holds.
This is a trivial corollary of the two invariants proposed below.
Hence, if the intended theorem would hold, loop freedom follows.
However, the invariant turns out not to be preserved under 5 lines of the AODV specification, as
made clear by the last line in the following ``intended proof''.

\spnewtheorem{wtheorem}[theorem]{Intended Theorem}{\bfseries}{\rm}
\begin{wtheorem}\label{thm:intended}\rm~\vspace{-1ex}\highlightspec{(A3)}\label{main}
  \begin{enumerate}[(a)]
  \item
If a data packet destined for \dval{dip} is underway to node \dval{nhip},
then \dval{nhip} has a routing table entry to \dval{dip}
that will not expire before (or upon) arrival of that (first) data packet.
\begin{align}\label{eq:32}\stepcounter{Hequation}
\hspace{-18pt}
\begin{array}{@{}cl}
    &  \prepkt[\dval{nhip}] \wedge \dval{nhip}\neq\dval{dip}\\
\Rightarrow & (\dval{dip}\mathbin\in\akd{\dval{nhip}}\wedge \lti{\dval{nhip}}>\atime - \deltime)\\
    & \mbox{} \vee ( \dval{dip}\mathbin\in\ikd{\dval{nhip}}\wedge \lti{\dval{nhip}}>\atime )
\end{array}
\end{align}
  \item
If a node \dval{ip} has a valid routing table entry to a destination
\dval{dip} with expiration time $\dval{ltime} > \nowN$,
and no data packet is underway to \dval{nhip}---the next hop towards \dval{dip}---then \dval{nhip},
if not \dval{dip} itself, has a valid entry to \dval{dip}
with expiration time $> \dval{ltime} + \nodetraversal - \deltime$,
or an invalid one with expiration time $> \dval{ltime} + \nodetraversal$.
\begin{align}\label{eq:33}\stepcounter{Hequation}
\hspace{-18pt}
\renewcommand{\deltime}{\keyw{DELETE\_PRD.}}
\begin{array}{@{}cl}
     & \dval{dip}\in\akd{\dval{ip}} \ans \lti{\dval{ip}}>\nowN \ans \dval{nhip}\neq\dval{dip} \ans \neg \prepkt[\dval{nhip}]\\
\Rightarrow & (\dval{dip}\mathbin\in\akd{\dval{nhip}}\wedge \lti{\dval{nhip}}\mathbin>\lti{\dval{ip}} {+} \nodetraversal {-}\deltime) \\
     & \mbox{} \vee ( \dval{dip}\mathbin\in\ikd{\dval{nhip}}\wedge \lti{\dval{nhip}}\mathbin>\lti{\dval{ip}} {+} \nodetraversal )
\end{array}
\end{align}
where $\dval{nhip}:=\nhp{\dval{ip}}$ is the IP address of the next hop
towards \dval{dip}.
%   \item
% If a node \dval{ip} has a valid routing table entry to a destination
% \dval{dip} with expiration time $\dval{ltime} > \nowN$, then the
% next hop towards \dval{dip}, if not \dval{dip} itself, has a valid
% entry to \dval{dip} or an invalid one with expiration time $> \dval{ltime}$.
% \[
% \begin{array}{cl}
%      & \dval{dip}\in\akd{\dval{ip}} \ans \lti{\dval{ip}}>\nowN \ans \dval{nhip}\neq\dval{dip} \\
% \ims & \dval{dip}\in\akd{\dval{nhip}} \vee ( \dval{dip}\in\kd{\dval{nhip}} \ans \lti{\dval{nhip}}>\lti{\dval{ip}} )
% \end{array}
% \]
\end{enumerate}
\end{wtheorem}

\begin{proof}
We prove the two statements by simultaneous induction.
Both invariants hold in initial states, as no packet is underway and all routing tables are
empty.\vspace{-1ex}%
  \begin{enumerate}[(a)]
  \item We have to check that the invariant is preserved under
(i) changes that validate the condition \plat{$\prepkt[\dval{nhip}]$},
(ii) changes to the routing table of node \dval{nhip},
and (iii) changes that increase the value of \plat{$\atime$}.

\begin{enumerate}[\hspace{-15pt}\bf(i)]
\item
Let $\dval{nhip}\neq\dval{dip}$.
The only way the condition \plat{$\prepkt[\dval{nhip}]$} can turn valid
is when a node \dval{ip} executes Line~\ref{aodv:line24} of Pro.~\ref{pro:aodv}
or Line~\ref{pkt2:line6} of Pro.~\ref{pro:pkt},
with $\xiN{\dval{ip}}(\dip)\mathbin=\dval{dip}$ and $\nhp{\dval{ip}}=\dval{nhip}$,
and \plat{$\prepkt[\dval{nhip}]$} did not hold before.
Right beforehand, node \dval{ip} must have executed Line~\ref{aodv:line22} of Pro.~\ref{pro:aodv}
or Line~\ref{pkt2:line5} of Pro.~\ref{pro:pkt}; hence $\dval{dip}\in\akd{\dval{ip}}$.
Before that, \dval{ip} must have executed Line~\ref{aodv:line1} of Pro.~\ref{pro:aodv}, so that
\plat{$\lti{\dval{ip}}>\nowN$}.
By induction, invariant \Eq{33} yields\vspace{-1ex}
$$\begin{array}{l}
\renewcommand{\deltime}{\keyw{DELETE\_PRD.}}
(\dval{dip}\mathbin\in\akd{\dval{nhip}}\wedge \lti{\dval{nhip}}\mathbin>\lti{\dval{ip}} {+} \nodetraversal {-}\deltime) \\
 \ \vee\ ( \dval{dip}\mathbin\in\ikd{\dval{nhip}}\wedge \lti{\dval{nhip}}\mathbin>\lti{\dval{ip}} {+} \nodetraversal ).
\end{array}$$
% $\dval{dip}\in\akd{\dval{nhip}} \vee {}\linebreak[3]
%  ( \dval{dip}\in\ikd{\dval{nhip}}\wedge \lti{\dval{nhip}}>\lti{\dval{ip}} + \nodetraversal )$.
This holds just before \plat{$\prepkt[\dval{nhip}]$} turned valid, and hence also just after.
By \Eq{atime}, $\lti{\dval{ip}} + \nodetraversal > \nowN + \nodetraversal\linebreak[1] \geq \atime$,
so the invariant is maintained.
\item
We now examine changes to the routing table of node \dval{nhip}.
These could be made by the functions 
\hyperlink{update}{\fnupd},
\hyperlink{invalidate}{$\fninv$},
$\fnaddprecrt$,
\hyperlink{settimert}{$\fnsettimert$} or
\hyperlink{exprt}{$\fnexprt$}.
An $\fnupd$ cannot make a valid entry invalid, erase an invalid
entry, or shorten the lifetime of an entry. For this reason, the invariant is maintained under
applications of $\fnupd$. The same applies to applications of $\fnsettimert$.
Applications of $\fnaddprecrt$ have no impact on the invariant.

\hspace{1em}
If the routing table entry to \dval{dip} is invalidated by $\fninv$, the expiration
time of the entry is always set to $\nowN+\deltime$.
Assuming that $\prepkt[\dval{nhip}] \wedge \dval{nhip}\neq\dval{dip}$,
by (\ref{eq:atime},\ref{eq:valtime}), $\atime \leq \nowN + \nodetraversal < \nowN+\deltime = \lti{\dval{nhip}}$.

\hspace{1em}
If the routing table entry to \dval{dip} is invalidated by $\fnexprt$, the expiration
time of the entry is extended by \deltime. This preserves the invariant.

\hspace{1em}
Finally consider the erasure of an entry by $\fnexprt$.
Suppose that right afterwards, and thus also right before, we have 
$\prepkt[\dval{nhip}] \wedge \dval{nhip}\neq\dval{dip}$.\vspace{1pt}
Then, by induction and \Eq{atime}, \plat{$\lti{\dval{nhip}}>\atime\geq \nowN$} holds
when $\fnexprt$ is applied to an invalid route to $\dval{dip}$, or
\begin{align*}
\lti{\dval{nhip}} &>\atime-\deltime\\
&\geq \nowN-\deltime
\end{align*}
when it is applied to an valid one,
so the route is not deleted by $\fnexprt$.

\item 
The only event that can increase the value of $\atime$ is the arrival at node \dval{nhip} of a
data packet destined for \dval{dip}, when another data packet for \dval{dip} is already underway.
When this happens, first Lines~\ref{aodv:line1},~\ref{aodv:line2} and~\ref{aodv:line6} of Pro.~\ref{pro:aodv}
are executed, with $\xi(\dip)=\dval{dip}$.
Then either $\dval{nhip} = \dval{dip}$, so that the invariant remains satisfied,
or Line~\ref{pkt2:line4} of Pro.~\ref{pro:pkt}, with $\xi(\ip)=\dval{nhip}$, is executed in the same
time slice. Assume the latter. In case \dval{nhip} has a valid routing table entry for \dval{dip},
since $\fnexprt$ has been executed in the same time slice, 
$\renewcommand{\deltime}{\keyw{DELETE\_PRD}}
\lti{\dval{nhip}}>\nowN \geq \atime - \nodetraversal > \atime - \deltime$
by (\ref{eq:atime},\ref{eq:valtime}), so the invariant remains satisfied.

\hspace{1em}
Otherwise, Line~\ref{pkt2:line15} of Pro.~\ref{pro:pkt} is executed
in the same time slice. In that case, applying induction on the state
just after this line, $\dval{dip}\in\ikd{\dval{nhip}}$, so Lines~\ref{pkt2:line18} and~\ref{pkt2:line19} of
Pro.~\ref{pro:pkt} are executed in the same time slice.
This results in a state where \plat{$\lti{\dval{nhip}}$} has at least the value $\nowN+\deltime$.
The execution of Line~\ref{pkt2:line19} marks the arrival of the data packet, and thus the state
chance where the value of \plat{$\atime$} is increased. Right afterwards,
$\lti{\dval{nhip}} \geq \nowN+\deltime > \nowN+\nodetraversal \geq \atime$, using
\Eq{valtime} and \Eq{atime}. Hence the invariant is maintained.

\end{enumerate}
  \item We have to check that the invariant is preserved under
(i) changes that validate the condition \plat{$\neg\prepkt[\dval{nhip}]$},
(ii) changes to the routing table of node \dval{nhip},
and (iii) changes to the routing table of node \dval{ip}.
As \plat{$\nowN$} is monotonically increasing, changes to $\nowN$ cannot invalidate the invariant.

\begin{enumerate}[\hspace{-15pt}\bf(i)]
\item Starting with $\prepkt[\dval{nhip}]$,
suppose that $\dval{dip}\mathbin\in\akd{\dval{ip}} \wedge \lti{\dval{ip}}\mathbin>\nowN$,
and a data packet destined for \dval{dip} is handled by node \dval{nhip},
in the sense that Lines~\ref{aodv:line2} and~\ref{aodv:line6} of Pro.~\ref{pro:aodv} are
executed with $\xi(\dip)\mathbin=\dval{dip}$.
Then either $\dval{nhip} = \dval{dip}$, so that the invariant remains satisfied,
or Line~\ref{pkt2:line4} of Pro.~\ref{pro:pkt}, with $\xi(\ip)=\dval{nhip}$, is executed in the same time slice.
Assuming the latter, in case \dval{nhip} has a valid routing table entry for \dval{dip},
since $\fnexprt$ has been executed in the same time slice, 
$\lti{\dval{nhip}}>\nowN > \lti{\dval{ip}} + \nodetraversal -\deltime$,
by \Prop{B}, so the invariant remains satisfied.

\hspace{1em}
Otherwise ($\dval{dip}\not\in\akd{\dval{nhip}}$), Line~\ref{pkt2:line15} of Pro.~\ref{pro:pkt} is executed in the same time slice.
In that case, applying induction on the state just before the data packet arrived,
invariant \Eq{32} yields $\dval{dip}\mathbin\in\ikd{\dval{nhip}}\!\!$,
so Lines~\ref{pkt2:line18} and~\ref{pkt2:line19} of Pro.~\ref{pro:pkt} are executed in the same time slice.
This results in a state where $\lti{\dval{nhip}}$ has at least the value $\nowN+\deltime$.
The execution of Line~\ref{pkt2:line19} marks the arrival of the data packet, and thus the state
chance where the the condition \plat{$\neg\prepkt[\dval{nhip}]$} becomes valid. Right afterwards,
$\lti{\dval{ip}} + \nodetraversal < \nowN + \deltime$, by \Prop{B}.
Hence the invariant is maintained.

\item We now examine changes to the routing table of node \dval{nhip}.
These could be made by the functions 
\hyperlink{update}{\fnupd},
\hyperlink{invalidate}{$\fninv$},
$\fnaddprecrt$,
\hyperlink{settimert}{$\fnsettimert$} or
\hyperlink{exprt}{$\fnexprt$}.
An $\fnupd$ cannot make a valid entry invalid, erase an invalid
entry, or shorten the lifetime of an entry. For this reason, the invariant is maintained under
applications of $\fnupd$. The same applies to applications of $\fnsettimert$.
Applications of $\fnaddprecrt$ have no impact on the invariant.

\hspace{1em}
If the routing table entry to \dval{dip} is invalidated by $\fninv$, its expiration time
$\lti{\dval{nhip}}$ is always set to $\nowN+\deltime$.
Using the assumption $\dval{dip}\in\akd{\dval{ip}}$ and Equation \Eq{vD_leq_time}, we get $\lti{\dval{ip}} + \nodetraversal < \nowN + \deltime$.
Hence the invariant is maintained.

\hspace{1em}
If the routing table entry to \dval{dip} is invalidated by $\fnexprt$, the expiration
time of the entry is extended by \deltime. This preserves the invariant.

\hspace{1em}
Finally consider the erasure of an entry by $\fnexprt$.
Suppose that right afterwards, and thus also right before,
the antecedent of the invariant holds.
Then, by induction,\vspace{3pt}

$\begin{array}{l}
\phantom{\ \vee\ }(\dval{dip}\mathbin\in\akd{\dval{nhip}}\wedge \lti{\dval{nhip}}>\lti{\dval{ip}} - \deltime) \\
\ \vee\ ( \dval{dip}\mathbin\in\ikd{\dval{nhip}}\wedge \lti{\dval{nhip}}>\lti{\dval{ip}} ).
\end{array}$\\[2pt]
Using that $\lti{\dval{ip}} > \nowN$, the route is not deleted by $\fnexprt$.

\item We conclude with changes to the routing table of node \dval{ip}.
Clearly the invariant is maintained under applications of $\fninv$, $\fnaddprecrt$ and $\fnexprt$.
We now go though all occurrences of $\fnupd$ and $\fnsettimert$ in Processes~\ref{pro:aodv}--\ref{pro:queues}.
\begin{description}
 \item[Pro.~\ref{pro:aodv}, Lines~\ref{aodv:line10},~\ref{aodv:line14},~\ref{aodv:line18}:]
   These entries create or update a routing table entry with $\dval{nhip}=\dval{dip}$, so the antecedent of
   the invariant is not met\journalonly{\chP{, in case the routing table is updated; otherwise the invariant remains unchanged}}.
 \item[Pro~\ref{pro:rreq}, Line~\ref{rreq:line6}:]
   If this update results in a change to the routing table, beyond the addition of precursors,
   afterwards \plat{$\dval{oip}\in\akd{\dval{ip}}$} and
   $\dval{nhip}:=\fnnhop_N^{\dval{ip}}(\dval{oip})=\xiN{\dval{ip}}(\sip)$ is the sender of the incoming RREQ
   message that is being processed here.
   We may assume that $\dval{nhip}\neq\dval{oip}$, as otherwise the invariant is maintained.
   By \Prop{preliminaries}(\ref{it:preliminariesi}), this RREQ
   message must have been sent before by $\dval{nhip}$.
%, say in state $N^\dagger$.
   Let  $N^\dagger$ be the state in which the transmission of the message was initiated (by the
   execution of transition \tableref{bc} of \Tab{sos sequential}).
   By \Post{nodetraversal} and \Eq{node-net}
   $\dval{now}_{N^\dagger} \leq \nowN \leq \dval{now}_{N^\dagger}+\nodetraversal \leq\linebreak[3] \dval{now}_{N^\dagger}+\traversal$.
   In state $N^\dagger$, node $\dval{nhip}$ had a valid routing table entry to $\dval{oip}$, with a
   positive
   remaining lifetime, i.e., $\dval{oip}\in\fnakD^{\dval{nhip}}_{N^\dagger}(\dval{oip})$ and
   $\fnltime^{\dval{nhip}}_{N^\dagger}(\dval{oip}) > \nowN[N^{\dagger}]$, by Assumption~\ref{ass:rreqvalid}   \highlightspec{(A3)} 
   and the remark following it. By \Eq{vD_leq_time}, using that $\dval{oip}\in\akd{\dval{ip}}$
   and $\dval{now}_{N^\dagger} \geq \nowN - \traversal$, it follows that the condition\vspace{-2pt}
\begin{align}\label{eq:34}\stepcounter{Hequation}
\renewcommand{\deltime}{\keyw{DEL\_PRD.}}
\begin{array}{l}
     (\dval{oip}\mathbin\in\akd[N^\#]{\dval{nhip}}\wedge \ltiN[\dval{oip}]{\dval{nhip}}{N^\#}\mathbin>\lti[\dval{oip}]{\dval{ip}} {+} \nodetraversal {-}\deltime) \\
     \mbox{} \vee ( \dval{oip}\mathbin\in\ikd[N^\#]{\dval{nhip}}\wedge \ltiN[\dval{oip}]{\dval{nhip}}{N^\#}\mathbin>\lti[\dval{oip}]{\dval{ip}} {+} \nodetraversal )\vspace{-11pt}
\end{array}
\end{align}
   holds in state $N^\#:=N^\dagger$. To see that it still holds in state $N^\#:=N$, we argue that it
   is preserved under changes to the routing table of node \dval{nhip} between states $N^\dagger$ and $N$.
   Since the state $N$ is fixed in \Eq{34}, the value
   $\lti[\dval{oip}]{\dval{ip}}$ does not change, so 
   only changes to $\dval{oip}\mathbin\in\akd[N^\#]{\dval{nhip}}$, $\dval{oip}\mathbin\in\ikd[N^\#]{\dval{nhip}}$ and 
   $\ltiN[\dval{oip}]{\dval{nhip}}{N^\#}$ need to be considered.
These could be made by the functions 
\hyperlink{update}{\fnupd},
\hyperlink{invalidate}{$\fninv$},
$\fnaddprecrt$,
\hyperlink{settimert}{$\fnsettimert$} or
\hyperlink{exprt}{$\fnexprt$}.
An $\fnupd$ cannot make a valid entry invalid, erase an invalid
entry, or shorten the lifetime of an entry. 
For this reason, \Eq{34}
is maintained under applications of $\fnupd$. 
The same applies to applications of $\fnsettimert$.
Applications of $\fnaddprecrt$ have no impact on \Eq{34} either.

\hspace{1em}
If the routing table entry to \dval{oip} is invalidated by $\fninv$, its expiration time
$\ltiN[\dval{oip}]{\dval{nhip}}{N^\#}$ is set to $\nowN[N^\#]+\deltime$.
So Equation \Eq{vD_leq_time}, applied to $\dval{oip}\mathop\in\akd{\dval{ip}}$, yields
$\ltiN[\dval{oip}]{\dval{nhip}}{N^\#} = \nowN[N^\#]+\deltime \geq \nowN[N^\dagger]+\deltime >
\nowN +\deltime\linebreak[1] {-}\traversal \mathbin>\lti{\dval{ip}} {+} \nodetraversal$.
Hence invariant \Eq{34} is maintained.

\hspace{1em}
Using the antecedent of \Eq{33}, $\lti[\dval{oip}]{\dval{ip}} \geq \nowN \geq \nowN[N^\#]$,
so \Eq{34} implies that the routing table entry to \dval{oip} cannot be deleted by $\fnexprt$.
If the routing table entry to \dval{oip} is invalidated by $\fnexprt$, its expiration
time $\ltiN[\dval{oip}]{\dval{nhip}}{N^\#}$ is extended by \deltime. This preserves \Eq{34}.
%     \chR{Upon invalidation of an entry, the expir\rpR{y}{ation} time is always set to $\now+\deltime$, or extended by $\deltime$.
%    Hence in state $N$, if $\dval{nhip}$ no longer has a valid entry for $\dval{oip}$,
%    the expir\rpR{y}{ation} time $\lti[\dval{oip}]{\dval{nhip}}$ is at least $\nowN+\deltime-\traversal$.
%    \comsP{could be timed out before, couldn't it?}
%    By \Prop{B}, $\lti{\dval{ip}} + \nodetraversal < \nowN + \deltime -\traversal$.
%    Hence the invariant is maintained.}
% \comsP{what's with nhip staying valid?}
 \item[Pro~\ref{pro:rrep}, Line~\ref{rrep:line2}:]
%    If this update results in a change to the routing table, beyond the addition of precursors,
%    afterwards $\dval{nhip}:=\fnnhop_N^{\dval{ip}}(\dval{dip})$ is the sender of the incoming RREP
%    message that is being processed here.
%    We may assume that $\dval{nhip}\neq\dval{dip}$, as otherwise the invariant is maintained.
%    By \Prop{preliminaries}(\ref{it:preliminariesi}), this RREP
%    message must have been sent before by $\dval{nhip}$, say in state $N^\dagger$.
%    By \Post{nodetraversal} and \Eq{node-net}
%    $\nowN \leq \dval{now}_{N^\dagger}+\nodetraversal \leq\linebreak[3] \dval{now}_{N^\dagger}+\traversal$.
%    In state $N^\dagger$ node $\dval{nhip}$ had a valid routing table entry to $\dval{dip}$,
%    by \Prop{msgsending}(b).
%    Upon invalidation of an entry, the expiry time is always set to $\now+\deltime$.
%    Hence in state $N$, if $\dval{nhip}$ no longer has a valid entry for $\dval{dip}$,
%    the expiry time $\lti[\dval{dip}]{\dval{nhip}}$ is at least $\nowN+\deltime-\traversal$.
%    By \Lem{B}, $\lti{\dval{ip}} + \nodetraversal < \nowN + \deltime -\traversal$.
%    Hence the invariant is maintained.
   The argument is exactly as in the previous case, but using RREP instead of RREQ and dip instead of oip.
   Moreover, we call \Prop{msgsendingstartuni} instead of Assumption~\ref{ass:rreqvalid}.
 \item[Pro.~\ref{pro:aodv}, Line~\ref{aodv:line26b}; Pro.~\ref{pro:pkt}, Line~\ref{pkt2:line6a}:]
   When this instruction is executed, a data
   packet is underway to $\dval{nhip}:=\fnnhop_N^{\dval{ip}}(\dval{dip})$ (Pro.~\ref{pro:aodv},
   Line~\ref{aodv:line24}, or Pro.~\ref{pro:pkt}, Line~\ref{pkt2:line6}, resp.),
   so the antecedent of the invariant is not satisfied.
 \item[Pro.~\ref{pro:aodv}, Line~\ref{aodv:line35b}; Pro.~\ref{pro:pkt}, Line~\ref{pkt2:line19}:]
   As this affects an invalid route, the invariant is preserved.
 \item[Pro.~\ref{pro:rreq}, Line~\ref{rreq:line7}:]
   Let $\dval{nhip}:=\fnnhop_N^{\dval{ip}}(\dval{oip})$ be the next hop to \dval{oip} (before
   and after the the call of {\fnsettimert}), and let $\dval{osn}:=\fnsqn_N^{\dval{ip}}(\dval{oip})$ be the
   destination sequence number of this route. Then $\dval{osn}\geq\xiN{\dval{ip}}(\osn)$,
   where \plat{$\xiN{\dval{ip}}(\osn)$} is the sequence number for \dval{oip} carried in the route request.
   Let $N^\ddagger$ be the state in which node
   $\dval{oip}$ initiated the route request, and thus incremented its own sequence number to \plat{$\xiN{\dval{ip}}(\osn)$}.
   By \Post{nettraversal}, $\nowN \leq \nowN[N^\ddagger] + \traversal$.
   Assume $\dval{oip}\in\akd{\dval{ip}} \wedge \fnltime^{\dval{ip}}_{N}(\dval{oip}) >\nowN \wedge 
   \dval{nhip}\neq\dval{oip} \wedge \neg \keyw{pkt}_{N}^{\dval{nhip}}(\dval{oip})$
   as otherwise the\linebreak[1] invariant is maintained.
   Then, by induction, we have \plat{$\dval{oip}\mathbin\in\kd{\dval{nhip}}$}
   right before the update, so also right afterwards.

\hspace{1em}
   Suppose $\dval{oip}\in\akd{\dval{nhip}}$. Then, by \Prop{nhip
     lifetime} \highlightspec{(A3)} and the above calculation,
   $\lti[\dval{oip}]{\dval{nhip}} \geq \nowN[N^\ddagger]\geq \nowN - \traversal$.
   Using that $\dval{oip}\in\akd{\dval{ip}}$ in combination with \Eq{vD_leq_time}
   we have $\renewcommand{\deltime}{\keyw{DEL\_PRD}}
   \nowN  - \traversal > \lti{\dval{ip}} + \nodetraversal- \deltime $; hence the invariant 
   is maintained.

\hspace{1em}
   Suppose $\dval{oip}\in\ikd{\dval{nhip}}$. Then
   $\renewcommand{\deltime}{\keyw{DELETE\_PRD}}\lti[\dval{oip}]{\dval{nhip}} \geq \nowN[N^\ddagger]+\deltime$ by \Prop{nhip lifetime}
   \highlightspec{(A3)}, so
   $\lti[\dval{oip}]{\dval{nhip}} \geq \nowN+\deltime\linebreak[1]-\traversal > \lti[\dval{oip}]{\dval{ip}} + \nodetraversal$,
   using \Eq{vD_leq_time} and that $\dval{oip}\in\akd{\dval{ip}}$.

 \item[Pro.~\ref{pro:aodv}, Line~\ref{aodv:line26c}; Pro.~\ref{pro:pkt},
   Lines~\ref{pkt2:line6b},~\ref{pkt2:line6c},~\ref{pkt2:line6d}; Pro.~\ref{pro:rrep},
   Line~\ref{rrep:line12d}:] \mbox{}\\
   \highlightspec{WRONG}.
\qed
\end{description}
\end{enumerate}
\end{enumerate}
\end{proof}

We have shown that, under Assumption~\ref{ass:rreqvalid}, only $5$ lines of the AODV specification 
invalidate Invariant \Eq{33} of the Intended Theorem~\ref{thm:intended}, namely 
   Line~\ref{aodv:line26c} of Pro.~\ref{pro:aodv}, 
   Lines~\ref{pkt2:line6b},~\ref{pkt2:line6c},~\ref{pkt2:line6d} of Pro.~\ref{pro:pkt} and 
   Line~\ref{rrep:line12d} of Pro.~\ref{pro:rrep}.
If these lines were to be changed in a way that 
preserves this invariant, then 
Assumption~\ref{ass:pre} holds as well. 
Assumption~\ref{ass:underway} holds as a consequence of Assumption~\ref{ass:rreqvalid} (\Thm{A3impliesA2}). 
Hence, if A3 can be ensured and \ref{thm:intended} can be repaired then AODV becomes loop free 
(\Thm{loop free}). In the next section we will show how this can be achieved.

\subsection{Six Routing Loops and their Repair}\label{subapp:loops}

The loop freedom proof above broke down in six places:
the unwarranted Assumption~\ref{ass:rreqvalid} we needed to make,
and the five lines that do not preserve our main invariant.
Below we sketch scenarios showing that each of these flaws actually leads to a case of
premature route expiration, and consequently a routing loop.
We also describe how the protocol could be fixed to avoid these loops.

\subsubsection*{Assumption~\ref{ass:rreqvalid}.}
Assume a 5-node linear topology $C{-}B{-}A{-}E{-}D$.
It can happen that $B$ has an invalid routing table entry to $D$ with a sequence
number that is substantially larger than $D$'s own sequence
number.\footnote{This can happen when $B$ maintains a valid route to
  $D$ and the link $D{-}B$ breaks down and reappears multiple times:
  each time a link break occurs, $B$ invalidates the route to $D$,
  thereby incrementing its destination sequence number; and each time
  the link reappears, $D$ forwards a message to $B$, causing $B$ to
  validate its route to $D$ (Pro.~\ref{pro:aodv},
  Lines~\ref{aodv:line10},~\ref{aodv:line14} or~\ref{aodv:line18})
  without changing its destination sequence number.}
By waiting sufficiently long, it can moreover happen that this routing
table entry has almost reached the end of its lifespan.
Assume that in such a state $D$ initiates two route requests, say
RREQ$_{DA}$ with destination $A$ and RREQ$_{DE}$ for $E$. Right after RREQ$_{DA}$
is sent on its way via $E$ to $A$, the link $D{-}C$ emerges, so that
RREQ$_{DE}$ travels via $C$ and $B$ towards $A$.
When RREQ$_{DE}$ is forwarded by node $B$, $B$ will not update its route
to $D$, because it already has an (invalid) route to $D$ with a higher
sequence number than the one carried by the route request. Yet, it extends the
expiration time of its (invalid) route to $D$ to the value
\[\now+2\cdot\traversal-2\cdot(2+1)\cdot\nodetraversal\]
following Pro.~\ref{pro:rreq}, Line~\ref{rreq:line7}.
When the message reaches $A$, $A$ creates a routing table entry for $D$,
with next hop $B$ and the sequence number carried by RREQ$_{DE}$.

Although RREQ$_{DA}$ has to travel only two hops before reaching $A$, it is possible that it arrives
there after RREQ$_{DE}$. When this happens, Line~\ref{rreq:line6} of Pro.~\ref{pro:rreq} does not give
rise to an update of the routing table entry at $A$ for $D$, since that entry has already a higher
destination sequence number than the one carried by RREQ$_{DA}$. Yet, by Pro.~\ref{pro:rreq}, Line~\ref{rreq:line7}
the entry to $D$ (with next hop $B$) has its expiration time extended to at least
\[\now+2\cdot\traversal-2\cdot(2+1)\cdot\nodetraversal\;,\]
which by now is strictly past the expiration time of the route to $D$ maintained by $B$.
A case of premature route expiration, and a possible routing loop,
results.

The repair of this loop is already sketched in Footnote~\ref{repair}: simply make the broadcast
forwarding the route request (Pro.~\ref{pro:rreq}, Line~\ref{rreq:line34}) conditional on the
existence of a valid route to $\oip$. This assures that Assumption~\ref{ass:rreqvalid} is met.
An even better solution is to make execution of all of Pro.~\ref{pro:rreq},
Lines~\ref{rreq:line9}--\ref{rreq:line37} conditional on $\oip\in\akD{\rt}$.
Besides preventing the routing loop indicated above, this is a strict improvement of AODV on all
counts, as none of the actions taken in Pro.~\ref{pro:rreq}, Lines~\ref{rreq:line9}--\ref{rreq:line37}
makes any sense if $\oip\notin\akD{\rt}$.

\subsubsection*{Pro.~\ref{pro:aodv}, Line~\ref{aodv:line26c}.}
We now consider the topology $A\wideparen{{-}B{-}}C{-}D$.
It can happen that a node $A$ has a valid routing table entry to a
destination $D$ with next hop $C$, but 
the routing table entry for $C$ at node $A$ has a hop count 
strictly larger than $1$, and next hop $B\neq C$.
One of the ways this can happen is when the link $A{-}C$ breaks down
(after the route $A{-}C{-}D$ has been established)
and $C$ initiates a route request that reaches $A$ via $B$;
right afterwards, the link $A{-}C$ is restored, before the link break
could impede any unicast.\footnote{The described situation can also
  arise without any link breaks; we leave the ``how'' as a puzzle for the reader.}
In such a situation, whenever $A$ sends a data packet to $D$, via $C$,
Line~\ref{aodv:line26c} extends the lifetime of $A$'s routing table
entry to $C$. Yet the route from $B$ to $C$ is never used and will
eventually expire and be deleted. This gives rise a case of premature route expiration.

The resulting routing loop can be avoided by changing the AODV
specification in such a way that Line~\ref{aodv:line26c} is executed
only when the hop count of the route to $\nhop{\rt}{\dip}$ is 1.
This is the only situation where there is a rationale for this
instruction in the first place. The invariant of the Intended
Theorem~\ref{main} is then preserved by Line~\ref{aodv:line26c}
because afterwards the antecedent $\dval{nhip}\neq\dval{dip}$ is not met.
An alternative is to skip this line altogether; of course this could
yield shorter lifetimes of certain routing table entries.

\subsubsection*{Pro.~\ref{pro:pkt}, Line~\ref{pkt2:line6b}.}
The routing loop arises just as in the previous case, except that the
data packets from $A$ to $D$ are now forwarded by $A$ rather than
originating from $A$. The repair is the same as well.

\subsubsection*{Pro.\,\ref{pro:pkt}, Line\,\ref{pkt2:line6c}.}
Let us now have a look at the following topology: $A{<}{\substack{\textstyle B_{1}\\\textstyle B_{2}}}{>}C{-}{D}$.
It can happen that first a route $A{-}B_1{-}C{-}D$ is established, and
later $C$ finds a fresher route to $A$ via node $B_2$.
A stream of data packets from $A$ to $D$, via $B_1$ and $C$,
will cause node $C$, at Pro.~\ref{pro:pkt}, Line~\ref{pkt2:line6c} to
extend the lifetime of the route to $A$ (via $B_2$).
But the routing table entry for $A$ at $B_2$ will eventually expire
and disappear, giving rise to a case of premature route expiration.

A possible repair is to simply skip Line~\ref{pkt2:line6c}.
For routes need not be bidirectional: if the route to $\oip$ is not
used, a stream of data packets \emph{from} $\oip$ is no reason to keep
the route in the direction $\oip$ alive.

\subsubsection*{Pro.~\ref{pro:pkt}, Line~\ref{pkt2:line6d}.}
This routing loop arises by combining the scenarios of
Lines~\ref{pkt2:line6b} and~\ref{pkt2:line6c}.
The repair is to simply delete this line.

%%DON'T ASK!
\newcommand{\topo}{%
   \hspace{55pt}%
   \wideparen{{-}B{-}}C\hspace{-32pt}\textcolor{white}{\rule[-1.5pt]{16pt}{10.5pt}}\hspace{-7.6pt}\overline{\phantom{B}\rule{0pt}{7.3pt}}%
   \hspace{-70.5pt}%
   B_{2}\wideparen{{-}B_{1}}\hspace{-12pt}\textcolor{white}{\rule[-1.5pt]{15pt}{10pt}}\hspace{-15pt}\overline{B_{1}\rule{0pt}{7.3pt}\cdots A{-}B}%
   \hspace{16pt}%
   {-}D
   }
   
\subsubsection*{Pro.~\ref{pro:rrep}, Line~\ref{rrep:line12d}.}
Assume a topology $\topo$ in which the
link $B_1{-}A$ recently broke, and suppose a route request
from $A$ looking for node $D$ travels via $B$ and $C$. Afterwards the link $B{-}C$ breaks
and during that time $C$ searches for a new route to $A$. Node $B_2$ answers this route
request, and a route $C{-}B_2{-}B_1{-}A$ is established. This can
happen even if the routing table entry for $A$ at $B_1$ is invalid,
namely when $B_2$ missed all RERR messages announcing this. Only
afterwards comes the route reply from $D$, passing through $C$ and
$B_2$ on its way to $A$. At $B_2$ Pro.~\ref{pro:rrep}, Line~\ref{rrep:line12d}
causes the lifetime of the route to $A$ to be extended. However, it
could be arbitrary long ago that the route from $B_1$ to $A$ was ever
used, and this invalid route may be about to expire. By possibly repeating
this scenario various times (since $A$ never finds a route to $D$), one obtains
a valid routing table entry from $B_2$ to $A$ via $B_1$, while the
routing table entry for $A$ at $B_1$ is deleted.

This case of premature route expiration can be prevented by simply
omitting Line~\ref{rrep:line12d}.  The argument for doing this is again
that routes need not be bidirectional: if the route to $\oip$ is not
used, a route reply that is intended to establish a route
\emph{from} $\oip$ is no reason to keep the route in the direction $\oip$ alive.

\end{document}

% Move here by Rob on 3-10-2015:

\subsection{Time}
We assume that every node has a clock; its (integer) value can be read by the primitive 
\[
\current .
\]
The local clock cannot be overwritten; it is write-protected. 
Most likely, each node having a local clock and having a single global clock is equivalent.

The type of clocks is $\tTIME$.
Time will be treated as discrete in our approach; hence we assume
\tTIME\ to be isomorphic to the natural numbers \NN\ (e.g., the time in
  micro seconds past since 01.01.1970).
By doing so we can freely use standard operations on natural numbers, such as $\plus$, $\cdot$ and $\minus$.